\documentclass{lmcs} 
\keywords{Mutual exclusion, safe registers, regular registers, overlapping reads and writes, atomicity, safety, liveness, starvation freedom, justness, model checking, mCRL2.}

\pdfoutput=1

\usepackage{centernot}
\usetikzlibrary{arrows,automata,shapes,decorations,calc,patterns,positioning,math}
\tikzmath{
    \distX=40;
    \distY=28;
    \lineGap=(0.25*\distX) - \distX;
    \lineCut=(0.5*\distX);
    \barX=6;
    \barY=3;
    \textY=8;
    \shiftX=15;
    \shiftXbig=2*\shiftX;
}
\usepackage{algcompatible}
\usepackage[noend]{algpseudocode}
\usepackage{algorithm}
\usepackage{longtable, booktabs}
\usepackage{multirow}
\usepackage{hyperref}
\usepackage[nameinlink,capitalise,noabbrev]{cleveref}
\usepackage[normalem]{ulem}

\newcommand{\lFor}[2]{%
    \State\algorithmicfor\ {#1}\ \algorithmicdo\ {#2}%
  }
\newcommand{\lWhile}[2]{%
    \State\algorithmicwhile\ {#1}\ \algorithmicdo\ {#2}%
  }
\newcommand{\varname}[1]{\ensuremath{\mathit{#1}}}
\newcommand{\varidx}[2]{\ensuremath{\varname{#1}[#2]}}
\newcommand{\writeop}{\ensuremath{\gets}}
\newcommand{\optwriteop}{\ensuremath{~^{\star}\!\!\writeop^{\star}}}
\newcommand{\defeq}{\stackrel{\text{def}}{\equiv}}
\newcommand{\VBar}[2]{($ (0,#1) + #2 $) -- ($(0,-#1)+ #2 $)}
\newcommand{\Above}[2]{($ (0,#1) + #2 $)}
\newcommand{\Below}[2]{($ (0,-#1) + #2 $)}
\newcommand{\concsym}{\smile^{\hspace{-.5ex}\raisebox{-.2ex}{\tiny$\bullet$}}}
\newcommand{\nconcsym}{{\centernot\smile}^{\hspace{-.5ex}\raisebox{-.4ex}{\tiny$\bullet$}}}
\newcommand{\conc}{\ensuremath{\mathbin{\concsym}}}
\newcommand{\nconc}{\ensuremath{\mathbin{\nconcsym}}}
\newcommand{\comp}[1]{\ensuremath{\overline{\mathit{#1}}}}
\newcommand{\co}{\ensuremath{\cdot}}
\newcommand{\states}{\ensuremath{\mathcal{S}}}
\newcommand{\initstate}{\ensuremath{\mathit{init}}}
\newcommand{\actionset}{\ensuremath{\mathit{Act}}}
\newcommand{\transrel}{\ensuremath{\mathit{Trans}}}

\newcommand{\thrlocacts}{\ensuremath{\mathit{TLoc}}}
\newcommand{\block}{\ensuremath{\mathcal{B}}}
\newcommand{\nonblock}{\ensuremath{\comp{\block}}}

\newcommand{\elimf}[2]{\ensuremath{\#_{#1}\{#2\}}}
\newcommand{\thrsym}{\ensuremath{\mathit{thr}}}
\newcommand{\thrmap}[1]{\ensuremath{\thrsym(#1)}}
\newcommand{\regsym}{\ensuremath{\mathit{reg}}}
\newcommand{\regmap}[1]{\ensuremath{\regsym(#1)}}
\newcommand{\justact}[2]{\ensuremath{\mathit{JA}^{#1}_{#2}}}
\newcommand{\truevalsym}{\ensuremath{\mathit{stor}}}
\newcommand{\trueval}[1]{\ensuremath{\truevalsym(#1)}}
\newcommand{\safRep}{\ensuremath{\mathit{saf}}}
\newcommand{\regRep}{\ensuremath{\mathit{reg}}}
\newcommand{\atoRep}{\ensuremath{\mathit{ato}}}
\newcommand{\TID}{\ensuremath{\mathbb{T}}}
\newcommand{\RID}{\ensuremath{\mathbb{R}}}
\newcommand{\Data}[1][\rid]{\ensuremath{\mathbb{D}_{#1}}}
\newcommand{\startread}[1][i,x]{\ensuremath{\mathit{sr_{#1}}}}
\newcommand{\finishread}[2][i,x]{\ensuremath{\mathit{fr_{#1}(#2)}}}
\newcommand{\readop}[2][i,x]{\ensuremath{\mathit{rd_{#1}(#2)}}}
\newcommand{\readopempty}[1][i,x]{\ensuremath{\mathit{rd_{#1}}}}
\newcommand{\startwrite}[2][i,x]{\ensuremath{\mathit{sw_{#1}(#2)}}}
\newcommand{\finishwrite}[1][i,x]{\ensuremath{\mathit{fw_{#1}}}}
\newcommand{\orderwrite}[1][i,x]{\ensuremath{\mathit{ow_{#1}}}}
\newcommand{\orderread}[1][i,x]{\ensuremath{\mathit{or_{#1}}}}
\newcommand{\overlapsym}{\ensuremath{\mathit{ovrl}}}
\newcommand{\overlap}[1]{\ensuremath{\overlapsym(#1)}}
\newcommand{\posvalsym}{\ensuremath{\mathit{posv}}}
\newcommand{\posval}[1]{\ensuremath{\posvalsym(#1)}}
\newcommand{\readerssym}{\ensuremath{\mathit{rds}}}
\newcommand{\readers}[1]{\ensuremath{\readerssym(#1)}}
\newcommand{\writerssym}{\ensuremath{\mathit{wrts}}}
\newcommand{\writers}[1]{\ensuremath{\writerssym(#1)}}
\newcommand{\pendingsym}{\ensuremath{\mathit{pend}}}
\newcommand{\pending}[1]{\ensuremath{\pendingsym(#1)}}
\newcommand{\usrsym}{\ensuremath{\mathit{usr}}}
\newcommand{\usr}[1]{\ensuremath{\usrsym(#1)}}
\newcommand{\ufrsym}{\ensuremath{\mathit{ufr}}}
\newcommand{\ufr}[1]{\ensuremath{\ufrsym(#1)}}
\newcommand{\ursym}{\ensuremath{\mathit{urd}}}
\newcommand{\ur}[1]{\ensuremath{\ursym(#1)}}
\newcommand{\uswsym}{\ensuremath{\mathit{usw}}}
\newcommand{\usw}[1]{\ensuremath{\uswsym(#1)}}
\newcommand{\ufwsym}{\ensuremath{\mathit{ufw}}}
\newcommand{\ufw}[1]{\ensuremath{\ufwsym(#1)}}
\newcommand{\uowsym}{\ensuremath{\mathit{uow}}}
\newcommand{\uow}[1]{\ensuremath{\uowsym(#1)}}
\newcommand{\uorsym}{\ensuremath{\mathit{uor}}}
\newcommand{\uor}[1]{\ensuremath{\uorsym(#1)}}
\newcommand{\valssym}{\ensuremath{\mathit{rec}}}
\newcommand{\vals}[1]{\ensuremath{\valssym(#1)}}
\newcommand{\StatusAll}{\ensuremath{\mathbb{S}}}
\newcommand{\Reg}[1][m]{\ensuremath{\mathit{Reg}_{#1}}}
\newcommand{\undefsymb}{\ensuremath{\bot}}
\newcommand{\false}{\ensuremath{\mathit{false}}}
\newcommand{\true}{\ensuremath{\mathit{true}}}
\newcommand{\critsym}{\ensuremath{\mathit{c}}}
\newcommand{\crit}[1][i]{\ensuremath{\critsym_{#1}}}
\newcommand{\noncritsym}{\ensuremath{\mathit{nc}}}
\newcommand{\noncrit}[1][i]{\ensuremath{\noncritsym_{#1}}}
\newcommand{\idx}[1]{\ensuremath{\#({#1})}}
\newcommand{\tid}{\ensuremath{t}}
\newcommand{\tidtwo}{\ensuremath{t'}}
\newcommand{\rid}{\ensuremath{r}}
\newcommand{\data}{\ensuremath{d}}
\newcommand{\regtype}{\ensuremath{\gamma}}
\newcommand{\parcomp}{\ensuremath{\parallel}}
\newcommand{\then}{\ensuremath{\rightarrow}}
\newcommand{\satnone}{\ensuremath{\mathrm{X}}}
\newcommand{\satmutex}{\ensuremath{\mathrm{M}}}
\newcommand{\satdf}{\ensuremath{\mathrm{D}}}
\newcommand{\satsf}{\ensuremath{\mathrm{S}}}
\newcommand{\tp}{\ensuremath{\mathit{tt}}}

\newcommand{\allact}{\ensuremath{\actionset}}
\newcommand{\diam}[1]{\ensuremath{\langle \mathit{#1} \rangle}}
\newcommand{\boxm}[1]{\ensuremath{[ \mathit{#1} ]}}
\newcommand{\clos}[1]{\ensuremath{\mathit{#1}^\star}}
\newcommand{\imps}{\ensuremath{\Rightarrow}}
\newcommand{\WCT}{\textit{WCT}}
\newcommand{\isreadsym}{\ensuremath{\textit{read?}}}
\newcommand{\isread}[1]{\ensuremath{\isreadsym(#1)}}
\newcommand{\iswritesym}{\ensuremath{\textit{write?}}}
\newcommand{\iswrite}[1]{\ensuremath{\iswritesym(#1)}}
\newcommand{\modeltype}{\ensuremath{m}}
\newcommand{\irRep}{\ensuremath{\mathit{IR}}}
\newcommand{\frRep}{\ensuremath{\mathit{FR}}}
\newcommand{\norm}[1]{\ensuremath{\mathit{norm}(#1)}}
\DeclareMathAlphabet{\mathbbm}{U}{bbm}{m}{n}          
\definecolor{highlightColour}{named}{orange}
\newcommand{\colourname}{orange}
\newcommand{\highlight}[1]{\textcolor{highlightColour}{#1}}
\newcommand{\plat}[1]{\raisebox{0pt}[0pt][0pt]{#1}}  
\newcommand{\actequivsym}{\ensuremath{\equiv}}
\newcommand{\actequiv}[2]{\ensuremath{#1 \mathbin{\actequivsym} #2}}

\newcommand{\proj}{\ensuremath{\mathit{proj}}}
\newcommand{\finish}{\textit{end}}
\newcommand{\ftoi}{\ensuremath{\textit{\frRep2\irRep}}}
\newcommand{\itof}{\ensuremath{\textit{\irRep2\frRep}}}
\newcommand{\pftoi}{\ensuremath{\textit{pre-\ftoi}}}
\newcommand{\ftoistate}{\ensuremath{\textit{\frRep2\irRep}}}
\newcommand{\itofstate}{\ensuremath{\textit{\irRep2\frRep}}}

\crefname{line}{line}{lines}
\crefname{defi}{Definition}{Definitions}
\crefname{lem}{Lemma}{Lemmas}
\crefname{obs}{Observation}{Observations}
\crefname{cor}{Corollary}{Corollaries}
\crefname{prop}{Proposition}{Propositions}
\crefname{exa}{Example}{Examples}
\crefname{thm}{Theorem}{Theorems}
\crefrangeformat{line}{lines~#3#1#4--#5#2#6}
\crefrangeformat{figure}{#3Figures~#1#4--#5#2#6}
\crefrangeformat{defi}{#3Definitions~#1#4--#5#2#6}
\crefrangeformat{lem}{#3Lemmas~#1#4--#5#2#6}
\crefrangeformat{prop}{#3Propositions~#1#4--#5#2#6}
\crefrangeformat{exa}{#3Examples~#1#4--#5#2#6}
\crefrangeformat{algorithm}{#3Algorithms~#1#4--#5#2#6}
\crefname{threadprop}{requirement}{requirements}
\creflabelformat{threadprop}{#2(#1)#3}
\crefrangeformat{threadprop}{#3requirements~(#1)#4--#5(#2)#6}

\begin{document}

\title[Just Verification of Mutual Exclusion Algorithms]
      {Just Verification of Mutual Exclusion Algorithms with (Non-)Blocking and (Non-)Atomic Registers}
\titlecomment{This paper is an extension of \cite{GlabbeekLS25}.}

\author[R.J. van Glabbeek]{Rob {van Glabbeek}\lmcsorcid{0000-0003-4712-7423}}[a]
\address{School of Informatics, University of Edinburgh, UK\newline
School of Computer Science and Engineering, University of New South Wales, Sydney, Australia}
\thanks{Supported by Royal Society Wolfson Fellowship RSWF\textbackslash R1\textbackslash 221008}
\email{rvg@cs.stanford.edu}

\author[B. Luttik]{Bas Luttik\lmcsorcid{0000-0001-6710-8436}}[b]

\address{Eindhoven University of Technology, The Netherlands}
\email{s.p.luttik@tue.nl, m.s.c.spronck@tue.nl}

\author[M.S.C. Spronck]{Myrthe Spronck\rsuper*\lmcsorcid{0000-0003-2909-7515}}[b]

\begin{abstract}
  We verify the correctness of a variety of mutual exclusion algorithms through model checking.
    We look at algorithms where communication is via shared read/write registers, where those registers can be atomic or non-atomic.
    For the verification of liveness properties, it is necessary to assume a completeness criterion to eliminate spurious counterexamples. 
    We use justness as completeness criterion.
    Justness depends on a concurrency relation; we consider several such relations, modelling different assumptions on the working of the shared registers.
    We present executions demonstrating the violation of correctness properties by several algorithms, and in some cases suggest improvements.
\end{abstract}

\maketitle
\renewcommand{\thefootnote}{\fnsymbol{footnote}}
\footnotetext[1]{Corresponding author.\\\mbox{}}
\renewcommand{\thefootnote}{\arabic{footnote}}
\setcounter{footnote}{0}

\section{Introduction}\label{sec:introduction}

The mutual exclusion problem is a fundamental problem in concurrent programming.
Given $N\geq 2$ \emph{threads},\footnote{What we call threads are in the literature frequently referred to as \emph{processes} or \emph{computers}. We use \emph{threads} to distinguish between the real systems and our models of them, expressed in a process algebra.} each of which may occasionally wish to access a \emph{critical section}, a \emph{mutual exclusion algorithm} seeks to ensure that at most one thread accesses its critical section at any given time. Ideally, this is done in such a way that whenever a thread wishes to access its critical section, it eventually succeeds in doing so.
Many mutual exclusion algorithms have been proposed in the literature, and in general their correctness depends on assumptions one can make on the environment in which these algorithms will be running. The present paper aims to make these assumptions explicit, and to verify the correctness of some of the most popular mutual exclusion algorithms as a function of these assumptions.

\paragraph{Correctness properties of mutual exclusion algorithms.}

A thread that does not seek to execute its critical section is said to be executing its \emph{non-critical section}. We regard \emph{leaving the non-critical section} as getting the desire to enter the critical section. After this happens, the thread is executing its \emph{entry protocol}, the part of the mutual exclusion algorithm in which it negotiates with other threads who gets to enter the critical section first. The critical section occurs right after the entry protocol, and is followed by an \emph{exit protocol}, after which the thread returns to its non-critical section.
When in its non-critical section, a thread is not expected to communicate with the other threads for the purpose of negotiating access to the critical section. Moreover, a thread may choose to remain in its non-critical section forever after. However, once a thread gains access to its critical section, it must leave it within a finite time, so as to make space for other threads.

The most crucial correctness property of a mutual exclusion algorithm is \emph{mutual exclusion}: at any given time, at most one thread will be in its critical section.
This is a safety property. In addition, a hierarchy of liveness properties have been considered. The weakest one is \emph{deadlock freedom}: 
Whenever at least one thread is running its entry protocol, eventually some thread will enter its critical section. This need not be one of the threads that was observed to be in its entry protocol.
A stronger property is \emph{starvation freedom}: whenever a thread leaves its non-critical section, it will eventually enter its critical section.
A yet stronger property, called \emph{bounded bypass}, augments starvation freedom with a bound on the number of times other threads can gain access to the critical section before any given thread in its entry protocol.

In this paper we check for over a dozen mutual exclusion protocols, and for six possible assumptions on the environment in which they are running, whether they satisfy mutual exclusion, deadlock freedom and starvation freedom.
We will not investigate bounded bypass, nor other desirable properties of mutual exclusion protocols, such as \emph{first-come-first-served}, \emph{shutdown safety},  \emph{abortion safety}, \emph{fail safety} and \emph{self-stabilisation} \cite{Lamport86Mutex2}.

\paragraph{Memory models.}\hspace{-5pt}\footnote{A \href{https://en.wikipedia.org/wiki/Memory_model_(programming)}{\emph{memory model}} describes the interactions of threads through memory and their shared use of the data. 
The models reviewed here differ in the degree in which different register accesses exclude each other, and in what values a register may return in case of overlapping reads and writes. In this paper, we do not consider \emph{weak memory models}, e.g., those that allow for compiler optimisations or for reads to sometimes fetch values that were already changed by another thread. In \cite{AttiyaGHKMV11} it has been shown that mutual exclusion cannot be realised in weak memory models, unless those models come with \emph{memory fences} or \emph{barriers} that can be used to undermine their weak nature.
}\hspace{6pt}
In the mutual exclusion algorithms considered here, the threads communicate with each other solely by reading from and writing to shared registers. The main assumptions on the environment in which mutual exclusion algorithms will be running concern these registers.
It is frequently assumed that (read and write) operations on registers are ``undividable'', meaning that they cannot overlap or interleave each other: if two threads attempt to perform an operation on the same register at the same time, one operation will be performed before the other. This assumption, sometimes referred to as \emph{atomicity}, is explicitly made in Dijkstra's first paper on mutual exclusion \cite{dijkstra65}.
Atomicity is sometimes conceptualised as operations occurring at a single moment in time. We instead acknowledge that operations have duration: there is a period of time between the \emph{invocation} (start) and \emph{response} (completion) of an operation.
Consequently, if operations cannot overlap in time, then, when multiple operations are attempted simultaneously, the one successfully invoked first must postpone the occurrence of the others until at least its response.
One operation postponing another is called \emph{blocking} \cite{CDV09}.

Deviating from Dijkstra's original presentation, several authors have considered a variation of the mutual exclusion problem where the atomicity assumption is dropped \cite{Lamport74,peterson1983new,Lamport86Mutex1,Lamport86Mutex2,Szy88,Szy90,anderson1993fine,aravind2010yet}.
Attempted operations can then occur immediately, without blocking each other. We say these operations are \emph{non-blocking}. In this context, read and write operations may be \emph{concurrent}, i.e.\ the intervals between their respective invocations and responses may overlap. 
We must then consider the consequences of operations overlapping each other. To address this, Lamport \cite{Lamport86IPCbasic,Lamport86IPCalg} proposes a hierarchy of three memory models, specifically for single-writer multi-reader (SWMR) registers; such registers are owned by one thread, and only that thread is capable of writing to them.
Crucial for these definitions is the assumption that every register has a domain, and a read of that register always yields a value from that domain. It is also important that threads can only perform a single operation at a time, meaning that a thread's operations can never overlap each other.
\begin{itemize}
\item A \textbf{safe} register guarantees merely that a read that is not concurrent with any write returns the most recently written value.
    \item A \textbf{regular} register guarantees that any read returns either the last value written before it started, or the value of any overlapping write, if there is one. 
    \item An \textbf{atomic} register guarantees that reads and writes behave as though they occur in some total order. This total order must comply with the real-time order of the operations: if operation $a$ has its response before operation $b$ is invoked, then $a$ must be ordered before~$b$.
\end{itemize}
These three memory models form a hierarchy, in the sense that any atomic register is regular, and any regular one is safe. 
When we merely know that a register is safe,
a read that overlaps with any write might return any value in the domain of the register.
In the cases of safe and regular registers, it is possible that two reads that overlap with the same write first return the new and then the old value of the register; this phenomenon is called \emph{new-old inversion}.

\begin{exa}\label{ex:SWMR}   
    Consider the execution illustrated in \cref{fig:SWMRexample}, which has three threads, with id's $0$, $1$ and $2$, operating on a register $x$ with domain $\{0,1,2\}$.
    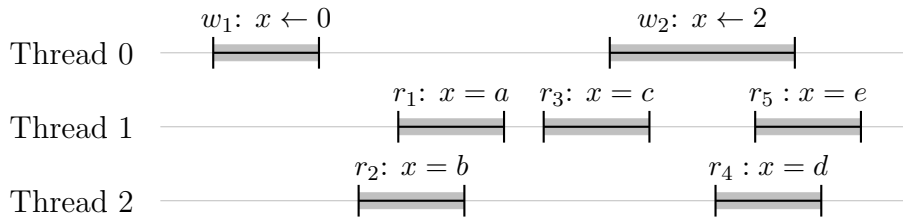
\begin{figure}[ht]
    \centering
    \begin{tikzpicture}[initial text=,inner sep=0pt, outer sep=0pt, minimum size=0pt, node distance=\distY pt and \distX pt]
        \node [draw=none] (0-name) {\large Thread $0$};

        \node [draw=none, right=of 0-name, xshift=\lineGap pt] (0-line-start) {};
        \node [draw=none, below=of 0-line-start] (1-line-start) {};
        \node [draw=none, below=of 1-line-start] (2-line-start) {};

        \node [draw=none, left=of 1-line-start, xshift=-\lineGap pt] (1-name) {\large Thread $1$};
        \node [draw=none, left=of 2-line-start, xshift=-\lineGap pt] (2-name) {\large Thread $2$};

        \node [draw=none, right=of 0-line-start, xshift=-\lineCut pt] (0-op1-s) {};
        \node [draw=none, right=of 0-op1-s] (0-op1-e) {};
        \node (0-op1-m) at ($(0-op1-s)!0.5!(0-op1-e)$) {};
        \node [draw=none, below=of 0-op1-e, xshift=\shiftXbig pt] (1-op2-s) {};
        \node [draw=none, right=of 1-op2-s] (1-op2-e) {};
        \node (1-op2-m) at ($(1-op2-s)!0.5!(1-op2-e)$) {};
        \node [draw=none, below=of 1-op2-s, xshift=-\shiftX pt] (2-op3-s) {};
        \node [draw=none, right=of 2-op3-s] (2-op3-e) {};
        \node (2-op3-m) at ($(2-op3-s)!0.5!(2-op3-e)$) {};
        \node [draw=none, above=of 2-op3-e, xshift= \shiftXbig pt] (1-op4-s) {};
        \node [draw=none, right=of 1-op4-s] (1-op4-e) {};
        \node (1-op4-m) at ($(1-op4-s)!0.5!(1-op4-e)$) {};
        \node [draw=none, above=of 1-op4-e, xshift=-\shiftX pt] (0-op5-s) {};
        \node [draw=none, right=of 0-op5-s, xshift=\shiftXbig pt] (0-op5-e) {}; 
        \node (0-op5-m) at ($(0-op5-s)!0.5!(0-op5-e)$) {};
        \node [draw=none, below=of 0-op5-e, xshift=-\shiftX pt] (1-op6-s) {};
        \node [draw=none, right=of 1-op6-s] (1-op6-e) {};
        \node (1-op6-m) at ($(1-op6-s)!0.5!(1-op6-e)$) {};
        \node [draw=none, below=of 1-op6-s, xshift=-\shiftX pt] (2-op7-s) {};
        \node [draw=none, right=of 2-op7-s] (2-op7-e) {};
        \node (2-op7-m) at ($(2-op7-s)!0.5!(2-op7-e)$) {};

        \node [draw=none, right=of 1-op6-e, xshift=-\lineCut pt] (1-line-end) {};
        \node [draw=none, above=of 1-line-end] (0-line-end) {};
        \node [draw=none, below=of 1-line-end] (2-line-end) {};

        \path[-, opacity=0.25]
            (0-line-start) edge (0-line-end)
            (1-line-start) edge (1-line-end)
            (2-line-start) edge (2-line-end);

        \draw[thick]
            \VBar{\barX pt}{(0-op1-s)}
            \VBar{\barX pt}{(0-op1-e)}
            \VBar{\barX pt}{(1-op2-s)}
            \VBar{\barX pt}{(1-op2-e)}
            \VBar{\barX pt}{(2-op3-s)}
            \VBar{\barX pt}{(2-op3-e)}
            \VBar{\barX pt}{(1-op4-s)}
            \VBar{\barX pt}{(1-op4-e)}
            \VBar{\barX pt}{(0-op5-s)}
            \VBar{\barX pt}{(0-op5-e)}
            \VBar{\barX pt}{(1-op6-s)}
            \VBar{\barX pt}{(1-op6-e)}
            \VBar{\barX pt}{(2-op7-s)}
            \VBar{\barX pt}{(2-op7-e)};

        \filldraw[opacity=0.25]
            \Above{\barY pt}{(0-op1-s)} rectangle \Below{\barY pt}{(0-op1-e)}
            \Above{\barY pt}{(1-op2-s)} rectangle \Below{\barY pt}{(1-op2-e)}
            \Above{\barY pt}{(2-op3-s)} rectangle \Below{\barY pt}{(2-op3-e)}
            \Above{\barY pt}{(1-op4-s)} rectangle \Below{\barY pt}{(1-op4-e)}
            \Above{\barY pt}{(0-op5-s)} rectangle \Below{\barY pt}{(0-op5-e)}
            \Above{\barY pt}{(1-op6-s)} rectangle \Below{\barY pt}{(1-op6-e)}
            \Above{\barY pt}{(2-op7-s)} rectangle \Below{\barY pt}{(2-op7-e)};

        \path[-,thick]
            (0-op1-s) edge (0-op1-e)        
            (1-op2-s) edge (1-op2-e)  
            (2-op3-s) edge (2-op3-e) 
            (1-op4-s) edge (1-op4-e)     
            (0-op5-s) edge (0-op5-e)    
            (1-op6-s) edge (1-op6-e)    
            (2-op7-s) edge (2-op7-e);      

        \draw[]
            (0-op1-m) node[above,yshift=\textY pt] {$w_1$: $x \writeop 0$}
            (1-op2-m) node[above,yshift=\textY pt] {$r_1$: $x = a$}
            (2-op3-m) node[above,yshift=\textY pt] {$r_2$: $x = b$}
            (1-op4-m) node[above,yshift=\textY pt] {$r_3$: $x = c$}
            (0-op5-m) node[above,yshift=\textY pt] {$w_2$: $x \writeop 2$}
            (1-op6-m) node[above,yshift=\textY pt] {$r_5: x = e$}
            (2-op7-m) node[above,yshift=\textY pt] {$r_4: x = d$};
    \end{tikzpicture}
    \caption{Example behaviour of SWMR safe, regular and atomic registers.}
    \label{fig:SWMRexample}
    \end{figure}

    Let us consider which values are possible for $a, b, c, d$ and $e$, depending on the type of register $x$:
    \begin{itemize}
        \item \textbf{Regardless of the type of $x$}, $a = 0$ and $b=0$. This is because both $r_1$ and $r_2$ are not concurrent with any write, hence both reads will return the last written value of the register, regardless of the type of $x$. That the reads overlap each other has no consequences.
        \item \textbf{If $x$ is a safe register}, then $c, d$ and $e$ may all be any value of $0, 1$ or $2$. This is because all three reads overlap with a write. These three assignments are independent of each other, so all 27 different possible combinations of choices for $c, d$ and $e$ are allowed.
        \item \textbf{If $x$ is a regular register}, then $c, d$ and $e$ may all be any value of $0$ or $2$: the old value or the value that is being written concurrently. Here too, the three returned values are independent, so there are 8 different possibilities. In particular, it can be that $c = 2$ but $e=0$; this is due to new-old inversion.
        \item \textbf{If $x$ is an atomic register}, then the values returned by the reads must reflect some order on all operations in the execution, and this order must respect the real-time ordering of the operations. Consequently, there are only five combinations of values of $c, d$ and $e$ allowed: if $c$ is $0$, then any combination of $d$ and $e$ being $0$ or $2$ is accepted, since $r_3$ can be ordered before $w_2$ without affecting whether $r_4$ and $r_5$ are before or after $w_2$. If instead we have $c = 2$, then $r_3$ must be ordered after $w_2$ and hence $r_4$ and $r_5$, which in real time come after $r_3$, must also be ordered after $r_2$, giving us $d=e=2$.  
    \end{itemize}
\end{exa}

In \cref{sec:mwmr-regs} we discuss the generalisation of these memory models to multi-writer multi-reader (MWMR) registers, ones that can be written and read by all threads.

Besides blocking and non-blocking registers, as explained above, we consider two intermediate memory models. The \emph{blocking model with concurrent reads} requires (1) any scheduled read or write to await the completion of any write that is in progress, and (2) any scheduled write to await the completion of any unfinished read. However, reads from different threads need not wait for each other and may overlap in time without ill effects. In the model of \emph{non-blocking reads},\footnote{In this terminology, from \cite{CDV09}, a \emph{blocking read} blocks a write; it does not refer to a read that is blocked.} we have (1) but not (2).
This model, where writes block reads but reads do not block writes, may apply when writes can abort in-progress reads, superseding them.

\expandafter\ifx\csname graph\endcsname\relax
   \csname newbox\expandafter\endcsname\csname graph\endcsname
\fi
\ifx\graphtemp\undefined
  \csname newdimen\endcsname\graphtemp
\fi
\expandafter\setbox\csname graph\endcsname
 =\vtop{\vskip 0pt\hbox{%
    \graphtemp=.5ex
    \advance\graphtemp by 0.429in
    \rlap{\kern 0.000in\lower\graphtemp\hbox to 0pt{\hss \emph{blocking reads and writes}\hss}}%
    \graphtemp=.5ex
    \advance\graphtemp by 0.286in
    \rlap{\kern 0.000in\lower\graphtemp\hbox to 0pt{\hss \emph{blocking model with concurrent reads}\hss}}%
    \graphtemp=.5ex
    \advance\graphtemp by 0.143in
    \rlap{\kern 0.000in\lower\graphtemp\hbox to 0pt{\hss \emph{blocking writes and non-blocking reads}\hss}}%
    \graphtemp=.5ex
    \advance\graphtemp by 0.000in
    \rlap{\kern 0.000in\lower\graphtemp\hbox to 0pt{\hss \emph{non-blocking reads and writes}\hss}}%
    \graphtemp=.5ex
    \advance\graphtemp by 0.429in
    \rlap{\kern 2.857in\lower\graphtemp\hbox to 0pt{\hss \emph{atomic registers}\hss}}%
    \graphtemp=.5ex
    \advance\graphtemp by 0.214in
    \rlap{\kern 2.857in\lower\graphtemp\hbox to 0pt{\hss \emph{regular registers}\hss}}%
    \graphtemp=.5ex
    \advance\graphtemp by 0.000in
    \rlap{\kern 2.857in\lower\graphtemp\hbox to 0pt{\hss \emph{safe registers}\hss}}%
\pdfliteral{
q [] 0 d 1 J 1 j
0.576 w
0.576 w
q [3.6 4.] 0 d
92.592 -30.888 m
164.592 -30.888 l
S Q
q [3.6 4.081382] 0 d
92.592 -20.592 m
164.592 -30.888 l
S Q
q [3.6 3.528678] 0 d
92.592 -10.296 m
164.592 -30.888 l
S Q
92.592 0 m
164.592 -30.888 l
S
92.592 0 m
164.592 -15.408 l
S
92.592 0 m
164.592 0 l
S
Q
}%
    \hbox{\vrule depth0.429in width0pt height 0pt}%
    \kern 2.857in
  }%
}%

\centerline{\box\graph}
\vspace{1em}

We model six different memory models, which are illustrated above. The blocking aspect of our memory models is captured via different concurrency relations (\cref{sec:concurrency}). The distinction between safe, regular and atomic registers is captured via three different process-algebraic models (\cref{sec:modelling}). \hypertarget{justification}{Since the safe/regular/atomic distinction is only relevant in models that allow writes to overlap reads and writes, we only make it for the non-blocking model; for the other three memory models we reuse our atomic register models}.

\paragraph{Completeness criteria.}

In previous work \cite{spronck2023process}, we checked the mutual exclusion property of several algorithms, with safe, regular and atomic MWMR registers, through model checking with the mCRL2 toolset~\cite{mCRL2toolset}.
We did not check the liveness properties at that time; the presence of certain infinite loops in our models introduced spurious counterexamples to such properties, which hindered our verification efforts.
As an example of what we call a ``spurious counterexample'', we frequently found violations to starvation freedom where one thread, $i$, never obtained access to its critical section because a different thread, $j$, was endlessly repeating a busy wait, or some other infinite cycle which should reasonably not prevent $i$ from progressing to its critical section.
Yet, the model checker does not know this, and can therefore only conclude that the property is not satisfied.
In this paper, we extend our previous work by addressing this problem and checking liveness properties as well.

One method for discarding spurious counterexamples from verification results is applying completeness criteria: rules for determining which paths in the model represent real executions of the modelled system.
By ensuring that all spurious paths are classified as incomplete and only taking complete paths into consideration when verifying liveness properties, we can circumvent the spurious counterexamples.
Of course, one must take care not to discard true system executions by classifying those as incomplete.
The completeness criterion must therefore be chosen with care.
Examples of well-known completeness criteria are weak fairness and strong fairness.
Weak fairness assumes that every task\footnote{What constitutes a \emph{task} differs from paper to paper; hence there are multiple flavours of strong and weak fairness; here a task could be a read or write action of a certain thread on a certain register.} that eventually is perpetually enabled must occur infinitely often; strong fairness assumes that if a task is infinitely often enabled it must occur infinitely often \cite{lehmann1981impartiality,apt1983proof,glabbeek2019progress}.
In effect, making a fairness assumption amounts to assuming that if something is tried often enough, it will always eventually succeed \cite{glabbeek2019progress}.
In that sense, these assumptions, even weak fairness, are rather strong, and may well result in true system executions being classified as incomplete.
In this paper, we therefore use the weaker completeness criterion \emph{justness} \cite{glabbeek2019progress,glabbeek2019justness,bouwman2020off}.

Unlike weak and strong fairness, justness takes into account how different actions in the model relate to each other. 
Informally, it says that if an action $a$ can occur, then eventually $a$ occurs itself, or a different action occurs that interferes with the occurrence of $a$.
The underlying idea
of justness 
is that the different components that make up a system must all be capable of making progress: if thread $i$ wants to perform an action entirely independent of the actions performed by thread $j$, then there can be no interference.
However, if both threads are interacting with a shared register, then we may decide that one thread writing to the register can prevent the other from reading it at the same time, or vice versa.
Which actions interfere with each other is a modelling decision, dependent on our understanding of the real underlying system.
It is formalised through a \emph{concurrency relation}, which must adhere to some restrictions. In this paper we propose four concurrency relations, each modelling one of the four major memory models reviewed above: 
non-blocking reads and writes, blocking writes and non-blocking reads, the blocking model with concurrent reads, and blocking reads and writes.

\paragraph{Model checking.}
Traditionally, mutual exclusion algorithms have been verified by pen-and-paper proofs using behavioural reasoning. As remarked by Lamport \cite{Lamport86Mutex2},
``the behavioral reasoning used in our correctness proofs, and in most other published correctness proofs of concurrent algorithms, is inherently unreliable''.  This is especially the case when dealing with the intricacies of non-atomic registers.\footnote{A good illustration of unreliable behavioural reasoning is given in \cite[Section 21]{glabbeek2023modelling}, through a short but fallacious argument that the mutual exclusion property of Peterson's mutual exclusion protocol, which is known to hold for atomic registers, would also hold for safe registers. We challenge the reader to find the fallacy in this argument before looking at the solution.} 
This problem can be alleviated by automated formal verification; here we employ model checking.

While the precise modelling of the algorithms, the registers and the employed completeness criterion requires great care, the subsequent verification requires a mere button-push and some patience.
Since our model checker traverses the entire state-space of a protocol, the verified protocols and all their registers need to be finite.\footnote{Using symbolic tools~\cite{neele2018bakery} might alleviate this problem; we did not do this here.}
 This prevented us from checking the bakery algorithm \cite{Lamport74}, as it is one of the few mutual exclusion protocols that employs an unbounded state space. Moreover, those algorithms that work for $N$ threads, for any $N\mathbin\in\mathbbm{N}$, could be checked for small values of $N$ only; in this paper we take $N\mathbin=3$. Consequently, any failure of a correctness property that shows up only for $>3$ threads will not be caught here.

As stated, we employed these methods in previous work to check mutual exclusion algorithms.
Although there we checked only safety properties, and did not consider the blocking aspects of memory, this already gave interesting results.
For instance, we showed that Szymanski's flag algorithm from \cite{Szy88}, even when adapted to use Booleans, violates mutual exclusion with non-atomic registers.
Here, we expand this previous work by checking deadlock freedom and starvation freedom in addition to mutual exclusion, and by including blocking into our memory models.
In total, we check the three correctness properties of over a dozen mutual exclusion algorithms, for six different memory models.
Among others, we cover Aravind's BLRU algorithm \cite{aravind2010yet}, Dekker's algorithm \cite{dijkstra1962over,alagarsamy2003some} and its RW-safe variant \cite{buhr2016dekker}, and Szymanski's 3-bit linear wait and 4-bit FCFS robust algorithms \cite{Szy90}.
In some cases where we find property violations, we suggest fixes to the algorithms so that the properties are satisfied.

\section{Preliminaries}\label{sec:preliminaries}

A \emph{labelled transition system} (LTS) is a tuple $(\states, \actionset, \initstate, \transrel)$ in which $\states$ is a finite set of states, $\actionset$ is a finite set of actions, $\initstate \in \states$ is the initial state, and $\transrel \subseteq \states \times \actionset \times \states$ is a transition relation.
We write $s \xrightarrow{a} s'$ for $(s, a, s') \in \transrel$.
We say an action $a$ is \emph{enabled} in a state $s$ if there exists a state $s'$ such that $(s, a, s') \in \transrel$.

A \emph{path} $\pi$ is a non-empty, potentially infinite alternating sequence of states and actions
$s_0 a_1 s_{1} a_2 \ldots$, with $s_0, s_1, \ldots \in \states$ and $a_1, a_2, \ldots \in \actionset$,
such that if $\pi$ is finite, then its last element is a state, and for all relevant $i \in \mathbbm{N}$, $s_i \xrightarrow{a_{i+1}} s_{i+1}$. 
The first state of $\pi$ is its \emph{initial state}.
The \emph{length} of $\pi$ is the number of transitions in it.

\hypertarget{parallel}{We use a notion of parallel composition that is taken from Hoare's CSP \cite{Ho85}, where synchronisation between components is enforced on all shared actions. 
It is defined as follows:
    For some $k\geq 1$, let $P_1, \ldots, P_k$ be LTSs, where $P_i = (\states_i, \actionset_i, \initstate_i, \transrel_i)$ for all $1 \leq i \leq k$.
    The \emph{parallel composition} $P_1 \parcomp \ldots \parcomp P_k$ of $P_1, \ldots, P_k$ is the LTS $P = (\states, \actionset, \initstate, \transrel)$ in which $\states = \states_1 \times \ldots \times \states_k$, $\actionset = \bigcup_{1 \leq i \leq k}\actionset_i$, $\initstate = (\initstate_1, \ldots, \initstate_k)$, and a transition $((s_1, \ldots, s_k), a, (s_1', \ldots, s_k'))$ is in $\transrel$ if, and only if, $a \in \actionset$ and the following are true for all $1 \leq i \leq k$:
                   if $a \notin \actionset_i$, then $s_i = s_i'$, and
                   if $a \in \actionset_i$, then $(s_i, a, s_i') \in \transrel_i$.}

Note that by this definition of $\transrel$, if an action is in the action set of a component but not enabled by that component in a particular state of the parallel composition, then the composition cannot perform a transition labelled with that action.

As mentioned in the introduction, the completeness criterion we use for our liveness verification is a variant of justness \cite{glabbeek2019justness,glabbeek2019progress}.
Specifically, while justness is originally defined on transitions, we here define it on action labels, an adaptation we take from \cite{bouwman2020off}. 
As stated earlier, the definition of justness relies on the notion of a concurrency relation.

\hypertarget{second}{
\begin{defi}\rm\label{def:conc}
    Given an LTS $(\states, \actionset, \initstate, \transrel)$ and an equivalence relation $\actequivsym$ on $\actionset$, a relation $\conc \subseteq \actionset \times \actionset$ is a \emph{concurrency relation} if, and only if:
    \begin{itemize}
        \item $\actequivsym$ and $\conc$ are disjoint; in particular, $\conc$ is irreflexive.
        \item For all $a \in \actionset$, if $\pi$ is a path from a state $s \in \states$ to a state $s' \in \states$ such that $a$ is enabled in $s$ and $a \conc b$ for all $b \in \actionset$ occurring on $\pi$, then there exists an action $c \in \actionset$ such that $\actequiv{a}{c}$ and $c$ is enabled in $s'$.
    \end{itemize}
\end{defi}}\vspace{1ex}\noindent
A concurrency relation may be asymmetric. We often reason about the complement of $\conc$, $\nconc$. Read $a \conc b$ as ``$a$ is independent from $b$'' and $a \nconc b$ as ``$b$ interferes with/postpones $a$''. 
The second condition can be understood as: if an action $a$ is enabled in some state, and then some actions occur that do not interfere with $a$, then afterwards an action equivalent with $a$ should still be enabled.

\begin{obs}\rm\label{obs:subset}
Concurrency relations can be refined by removing pairs; a subset of a concurrency relation is still a concurrency relation.
\end{obs}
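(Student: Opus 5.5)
Fix the LTS $(\states, \actionset, \initstate, \transrel)$ and the equivalence $\actequivsym$, let $\conc$ be a concurrency relation in the sense of \cref{def:conc}, and let $R \subseteq \conc$ be arbitrary. The plan is to check the two clauses of \cref{def:conc} for $R$ directly, using that each clause is monotone in the appropriate direction.

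For the first clause, disjointness is inherited by subsets. Since $\actequivsym \cap \conc = \emptyset$ and $R \subseteq \conc$, we get $\actequivsym \cap R = \emptyset$. Because $\actequivsym$ is reflexive, this also gives irreflexivity of $R$.

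For the second clause, I take an action $a$ enabled in a state $s$ and a path $\pi$ from $s$ to $s'$ such that $a \mathrel{R} b$ holds for every action $b$ occurring on $\pi$. Since $R \subseteq \conc$, it follows that $a \conc b$ for every such $b$. Thus the hypothesis of the second clause for $\conc$ is met by the same $a$, $s$, $\pi$ and $s'$. That clause then supplies an action $c$ with $\actequiv{a}{c}$ that is enabled in $s'$, which is exactly the conclusion required for $R$.

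There is no real obstacle. The only point worth stating is the direction of monotonicity: $R$ appears only negatively in the premise of the second clause (shrinking $R$ strengthens the premise) and only in the disjointness constraint of the first clause. The underlying LTS and $\actequivsym$ are unchanged throughout.
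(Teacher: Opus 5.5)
Your proof is correct and is the same argument the paper relies on; the paper states this as an observation without proof because it follows directly from \cref{def:conc}. As you show, disjointness from $\actequivsym$ passes to subsets, and for $R \subseteq {\conc}$ the premise of the second clause for $R$ implies the premise for $\conc$, while the conclusion does not mention the relation at all.
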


Informally, justness says that a path is complete if whenever an action $a$ is enabled along the path, there is eventually an occurrence of an action (possibly $a$ itself) that interferes with it.
This can be weakened by defining a set of \emph{blockable actions}, for which this restriction does not hold; a blockable action may be enabled on a complete path without there being a subsequent occurrence of an interfering action.
In this paper, the action of a thread to leave its non-critical section will be blockable. This way we model that a thread may choose to never take that option.
We give the formal definition of justness, incorporating the blockable actions. We represent the set of blockable actions as $\block$. Its complement, $\nonblock$, is defined as $\actionset \setminus \block$, given a set of actions $\actionset$.

\hypertarget{just}{
\begin{defi}\rm\label{def:justness}
    A path $\pi$ in an LTS $(\states, \actionset, \initstate, \transrel)$
    satisfies $\block$-$\conc$-\emph{justness of actions} ($\justact{\conc}{\block}$) if, and only if, for each suffix $\pi'$ of $\pi$, if an action $a \in \nonblock$ is enabled in the initial state of $\pi'$, then an action $b \in \actionset$ occurs in $\pi'$ such that $a \nconc b$.
\end{defi}}\noindent
We say that a property is satisfied on a model under $\justact{\conc}{\block}$ if it is satisfied on every path of that model, starting from the model's initial state, that satisfies $\justact{\conc}{\block}$. If $\block$ and $\conc$ are clear from the context, we simply say that a path that satisfies $\justact{\conc}{\block}$ is \emph{just}.

\section{MWMR registers}\label{sec:mwmr-regs}
In \cite{spronck2023process}, we presented process-algebraic models of MWMR safe, regular and atomic registers, which through the semantics of the process algebra determine an LTS for each register of a given kind. 
To this end, we also had to define the behaviour of an MWMR register of each type.
We use those same MWMR behaviours in this paper, which we present in this section, supported by a running example. Our modelling decisions for MWMR registers will be discussed in \cref{sec:modelling}.
\begin{exa}\label{ex:MWMR}
    The definitions of MWMR registers we use all reduce to the SWMR case when writes do not overlap. Hence, \cref{ex:SWMR} is still applicable.
    For this running example, we focus on the behaviour when two writes overlap, see \cref{fig:MWMRexample}.
    We still consider the case of three threads with id's $0$, $1$ and $2$, and a register $x$ with domain $\{0,1,2\}$.
    \begin{figure}[ht!]
    \centering
    \begin{tikzpicture}[initial text=,inner sep=0pt, outer sep=0pt, minimum size=0pt, node distance=\distY pt and \distX pt]
        \node [draw=none] (0-name) {\large Thread $0$};

        \node [draw=none, right=of 0-name, xshift=\lineGap pt] (0-line-start) {};
        \node [draw=none, below=of 0-line-start] (1-line-start) {};
        \node [draw=none, below=of 1-line-start] (2-line-start) {};

        \node [draw=none, left=of 1-line-start, xshift=-\lineGap pt] (1-name) {\large Thread $1$};
        \node [draw=none, left=of 2-line-start, xshift=-\lineGap pt] (2-name) {\large Thread $2$};

        \node [draw=none, right=of 0-line-start, xshift=-\lineCut pt] (0-op1-s) {};
        \node [draw=none, right=of 0-op1-s,xshift=-\shiftX pt] (0-op1-e) {};
        \node (0-op1-m) at ($(0-op1-s)!0.5!(0-op1-e)$) {};
        \node [draw=none, below=of 0-op1-e, xshift=\shiftXbig*1.8 pt] (1-op2-s) {};
        \node [draw=none, right=of 1-op2-s, xshift=\shiftX pt] (1-op2-e) {};
        \node (1-op2-m) at ($(1-op2-s)!0.5!(1-op2-e)$) {};
        \node [draw=none, below=of 1-op2-s, xshift=-\shiftX*1.8 pt] (2-op3-s) {};
        \node [draw=none, right=of 2-op3-s] (2-op3-e) {};
        \node (2-op3-m) at ($(2-op3-s)!0.5!(2-op3-e)$) {};
        \node [draw=none, above=of 2-op3-e, xshift=\shiftXbig*2.4 pt] (1-op4-s) {};
        \node [draw=none, right=of 1-op4-s] (1-op4-e) {};
        \node (1-op4-m) at ($(1-op4-s)!0.5!(1-op4-e)$) {};
        \node [draw=none, above=of 1-op4-e, xshift=-\shiftXbig*3.3 pt] (0-op5-s) {};
        \node [draw=none, right=of 0-op5-s] (0-op5-e) {}; 
        \node (0-op5-m) at ($(0-op5-s)!0.5!(0-op5-e)$) {};
        \node [draw=none, right=of 0-op5-e, xshift=\shiftXbig pt] (0-op6-s) {};
        \node [draw=none, right=of 0-op6-s] (0-op6-e) {};
        \node (0-op6-m) at ($(0-op6-s)!0.5!(0-op6-e)$) {};
        \node [draw=none, right=of 2-op3-e, xshift=-\shiftX*1 pt] (2-op7-s) {};
        \node [draw=none, right=of 2-op7-s, xshift=\shiftX*1.5 pt] (2-op7-e) {};
        \node (2-op7-m) at ($(2-op7-s)!0.5!(2-op7-e)$) {};
        \node [draw=none, right=of 2-op7-e, xshift=\shiftXbig*0.8 pt] (2-op8-s) {};
        \node [draw=none, right=of 2-op8-s] (2-op8-e) {};
        \node (2-op8-m) at ($(2-op8-s)!0.5!(2-op8-e)$) {};

        \node [draw=none, right=of 2-op8-e, xshift=-\lineCut pt] (2-line-end) {};
        \node [draw=none, above=of 2-line-end] (1-line-end) {};
        \node [draw=none, above=of 1-line-end] (0-line-end) {};

        \path[-, opacity=0.25]
            (0-line-start) edge (0-line-end)
            (1-line-start) edge (1-line-end)
            (2-line-start) edge (2-line-end);

        \draw[thick]
            \VBar{\barX pt}{(0-op1-s)}
            \VBar{\barX pt}{(0-op1-e)}
            \VBar{\barX pt}{(1-op2-s)}
            \VBar{\barX pt}{(1-op2-e)}
            \VBar{\barX pt}{(2-op3-s)}
            \VBar{\barX pt}{(2-op3-e)}
            \VBar{\barX pt}{(1-op4-s)}
            \VBar{\barX pt}{(1-op4-e)}
            \VBar{\barX pt}{(0-op5-s)}
            \VBar{\barX pt}{(0-op5-e)}
            \VBar{\barX pt}{(0-op6-s)}
            \VBar{\barX pt}{(0-op6-e)}
            \VBar{\barX pt}{(2-op7-s)}
            \VBar{\barX pt}{(2-op7-e)}
            \VBar{\barX pt}{(2-op8-s)}
            \VBar{\barX pt}{(2-op8-e)};

        \filldraw[opacity=0.25]
            \Above{\barY pt}{(0-op1-s)} rectangle \Below{\barY pt}{(0-op1-e)}
            \Above{\barY pt}{(1-op2-s)} rectangle \Below{\barY pt}{(1-op2-e)}
            \Above{\barY pt}{(2-op3-s)} rectangle \Below{\barY pt}{(2-op3-e)}
            \Above{\barY pt}{(1-op4-s)} rectangle \Below{\barY pt}{(1-op4-e)}
            \Above{\barY pt}{(0-op5-s)} rectangle \Below{\barY pt}{(0-op5-e)}
            \Above{\barY pt}{(0-op6-s)} rectangle \Below{\barY pt}{(0-op6-e)}
            \Above{\barY pt}{(2-op7-s)} rectangle \Below{\barY pt}{(2-op7-e)}
            \Above{\barY pt}{(2-op8-s)} rectangle \Below{\barY pt}{(2-op8-e)};

        \path[-,thick]
            (0-op1-s) edge (0-op1-e)        
            (1-op2-s) edge (1-op2-e)  
            (2-op3-s) edge (2-op3-e) 
            (1-op4-s) edge (1-op4-e)     
            (0-op5-s) edge (0-op5-e)    
            (0-op6-s) edge (0-op6-e)    
            (2-op7-s) edge (2-op7-e)
            (2-op8-s) edge (2-op8-e);      

        \draw[]
            (0-op1-m) node[above,yshift=\textY pt] {$w_1$: $x \writeop 0$}
            (1-op2-m) node[above,yshift=\textY pt] {$w_2$: $x \writeop 1$}
            (2-op3-m) node[above,yshift=\textY pt] {$r_1$: $x = a$}
            (1-op4-m) node[above,yshift=\textY pt] {$r_3$: $x = c$}
            (0-op5-m) node[above,yshift=\textY pt] {$r_2$: $x = b$}
            (0-op6-m) node[above,yshift=\textY pt] {$r_4$: $x = d$}
            (2-op7-m) node[above,yshift=\textY pt] {$w_3$: $x \writeop 2$}
            (2-op8-m) node[above,yshift=\textY pt] {$r_5$: $x = e$};
    \end{tikzpicture}
    \caption{Running example to illustrate the behaviour of MWMR registers.}
    \label{fig:MWMRexample}
    \end{figure}
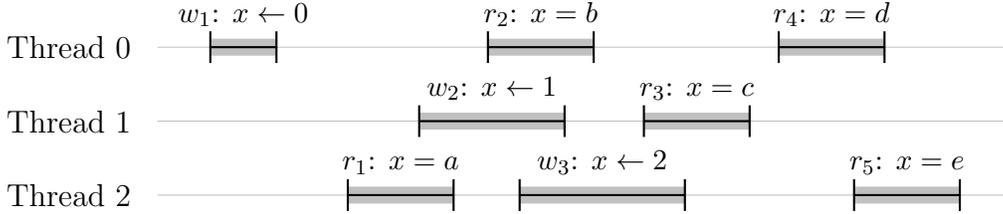
    Note that $r_1$ overlaps with $w_2$, $r_3$ overlaps with $w_3$ and $r_2$ overlaps with both. The other two reads do not overlap with writes, but they come after two writes that overlap each other: $w_2$ and $w_3$. Some of the reads overlap with each other, but this does not play a role in any of our definitions.
\end{exa}

\subsection{Safe MWMR registers}\label{sec:safe}
To extend the single-writer definition of safe registers to a multi-writer one, we follow Lamport in assuming that a concurrent read cannot affect the behaviour of a read or write. Lamport's SWMR definitions consider how concurrent writes affect reads, but not how concurrent writes affect writes.
Here we follow Raynal's approach for safe registers: a write that is concurrent with another write sets the value of the register to some arbitrary value in the domain of the register \cite{Raynal13}.
We can summarise the behaviour of MWMR safe registers with four rules:
\begin{enumerate}
    \item A read not concurrent with any writes on the same register returns the value most recently written into the register.
    \item A read concurrent with one or more writes on the same register returns an arbitrary value in the domain of the register.
    \item A write not concurrent with any other write on the same register results in the intended value being set.
    \item A write concurrent with one or more other writes on the same register results in an arbitrary value in the domain of the register being set.
\end{enumerate}

\begin{exa}\label{ex:MWMR-saf}
    Let us consider the possible return values $a, b, c, d$ and $e$ in \cref{ex:MWMR} when $x$ is a safe register.
    Note that $r_1$, $r_2$ and $r_3$ all overlap with at least one write, hence rule 2 applies and they can all return any value in the domain of the register, so in this case $a, b, c \in \{0,1,2\}$, and these values are independent. 
    The last two reads, $r_4$ and $r_5$, are not concurrent with any writes, so rule 1 applies and they return the most recently written value. However, the most recent write to complete was $w_3$, which overlapped with $w_2$. Thus, rule 4 applies to $w_3$ and so it writes an arbitrary value into the register. Consequently, $d, e \in \{0,1,2\}$ but it must be that $d = e$.
\end{exa}

\subsection{Regular MWMR registers}\label{sec:regular}

We wish to define regular MWMR registers as an extension of Lamport's definition of SWMR regular registers: a read returns either the last written value before the read began, or the value of any concurrent write, if there is one. The extension of this idea to the MWMR case is non-trivial; in \cite{spronck2023process} we present one extension and compare it to four different suggestions from \cite{Shao11}. The complexity comes from determining what the last written value is, given that writes may be concurrent. 
We here follow the proposal of \cite{spronck2023process}, where the reads must all agree on the ordering of the writes, but may still return the value of any write they overlap with.
For the purpose of determining the last written value, the register effectively behaves as though there is a global ordering on all operations, obtained by picking for each write action an arbitrary ordering moment between its invocation and response, while each read is ordered by its invocation.
Given such a global ordering, the \emph{last written value} for a read is the value of the last write occurring before it according to this ordering.

\begin{exa}\label{ex:MWMR-reg}
    Let us consider the possible return values  $a, b, c, d$ and $e$ in \cref{ex:MWMR} when $x$ is a regular register.
    Regardless of the ordering used in a particular execution of the register, $r_1$ can return either $0$ or $1$: the only write that can be ordered before $r_1$ is $w_1$, and $r_1$ overlaps with $w_2$. Thus, $a \in \{0,1\}$.
    For $r_2$, it is a bit more complicated. If $w_2$ is ordered after $r_2$, then $r_2$ can return $0$, $1$ or $2$. 
   For then, since $w_3$'s invocation comes after $r_2$'s invocation, the last written value must come from $w_1$ and thus be $0$. The values of the two overlapping writes are also possible, so $b \in \{0,1,2\}$. However, if $w_2$ is ordered before $r_2$ then $r_2$ can only return $1$ or $2$, since $0$ is no longer the last written value. Then $b \in \{1,2\}$.
    The case of $r_3$ is similarly complex: it can always return $2$ because it overlaps with $w_3$, and it can never return $0$ since $w_2$ has its response before $r_3$'s invocation and must therefore be ordered before it. Hence, $0$ can never be the last written value from the perspective of $r_3$. However, whether $r_3$ can return $1$ or not depends on the ordering used: if $w_3$ is ordered after $w_2$ but before $r_3$, then $c = 2$; otherwise $1$ is the last written value for $r_3$ and so $c \in \{1,2\}$.
    For $a, b$ and $c$, the returned values are independent except to the extend that the sets of possibilities may be smaller based on the ordering. Since the ordering is global, they must agree on that.
    For $r_4$ and $r_5$, it holds that they can both only return the last written value and they must agree on what this last written value is. Dependent on the ordering of $w_2$ and $w_3$, we get $d = e = 1$ or $d = e= 2$.
\end{exa}
    
\subsection{Atomic MWMR registers}\label{sec:atomic}

Lamport's definition of SWMR atomic registers, namely that the register must behave as though reads and writes occur in some strict order, is directly applicable to the MWMR case.

\begin{exa}\label{ex:MWMR-ato}
    Let us consider the possible return values $a, b, c, d$ and $e$ in \cref{ex:MWMR} when $x$ is an atomic register.
    In this case, we must consider all possible ways of ordering the operations in \cref{fig:MWMRexample} such that the real-time order is respected.
    In the case of $r_1$, it may be ordered before or after $w_2$: if it is before then $ a = 0$, if it is after then $a = 1$.
    For $r_2$, it does not only matter how it is ordered with respect to $w_2$ and $w_3$ but also how the two writes are ordered with respect to each other: if $r_2$ is before both writes then $b = 0$, otherwise $b$ is the value of the last write before $r_2$. However, the order that is generated is total, so what $r_2$ sees must agree with what $r_1$ sees. I.e., if $w_2$ is ordered before $r_1$, then since $r_1$ completes before $r_2$ begins, $w_2$ must also be ordered before $r_2$.
    Similarly, $r_3$, $r_4$ and $r_5$ will return $1$ or $2$ depending on how $w_2$ and $w_3$ are ordered, but this must agree with what $r_1$ and $r_2$ see. For example: if $a=1$ because $w_2$ is ordered before $r_1$ and $b = 2$ because $w_3$ is ordered before $r_2$, then because $r_1$ must be ordered before $w_3$ and $r_2$ must be ordered before $r_3$, it must be the case that $c=2$ as well. The return value of $r_3$ can differ from $r_4$ and $r_5$ because $r_3$ can be ordered before $w_3$, but the last two reads will always see the same value, so $d = e$.    
\end{exa}

\subsection{Threads}\label{sec:threads}

When we turn these definitions into models, as we will describe in \cref{sec:modelling}, we must combine models of the registers with models of threads executing the appropriate algorithm. 
Unlike the registers, which are mostly independent of the algorithm that we wish to analyse with them (the algorithm merely dictates the register's identifier, domain and initial value), threads are fully algorithm-dependent.
Yet, we do enforce several basic requirements on how the threads may interact with each other and with the registers:
\begin{enumerate}
    \item Threads may only communicate with each other by reading from and writing to shared registers.\label[threadprop]{thread-prop-1}
    \item Whenever a thread invokes a read, the only thing it does immediately afterwards is await the response of that read. A thread must always be willing to read all values in the domain of the register it is reading from. \label[threadprop]{thread-prop-2}
    \item Whenever a thread invokes a write, the only thing it does immediately afterwards is await the response of that write. A thread may never attempt to write a value outside of the domain of the register it is writing to. \label[threadprop]{thread-prop-3}
    \item A thread can only await the response of an operation if it has invoked that operation immediately before.\label[threadprop]{thread-prop-4}
\end{enumerate}

\section{Concurrency relations}\label{sec:concurrency}

Recall that we wish to use different concurrency relations to capture the models of blocking described in \cref{sec:introduction}.
By defining which operations may interfere with each other, we represent whether it is reasonable for one thread's operations on a register to interfere with (and thereby postpone) another thread's operations on that same register.
Since there are four models of blocking, we use four concurrency relations. These relations should ultimately be defined on the actions in the model, with a suitable equivalence relation, but we here present the idea of them in terms of operations.
The translation to actions, and the proof that the resulting concurrency relations are valid for our models (with respect to \cref{def:conc}), will be given in \cref{sec:justness}.

For the definition of our concurrency relations, we introduce the set $O$ of all operations performed by threads and registers; this includes read and write operations that are used for communication, but also any internal operations that threads and registers may perform.
Additionally, we introduce the following predicates and mappings on operations: $\isread{o}$ returns true if $o$ is a read operation and false otherwise; $\iswrite{o}$ returns true if $o$ is a write operation and false otherwise; $\thrmap{o}$ returns the id of the thread involved with operation $o$, or $\bot$ if no such thread exists; and $\regmap{o}$ returns the id of the register involved with operation $o$, or $\bot$ if no such register exists.
In practice, registers never do anything that does not in some way, even indirectly, involve a thread, so $\thrsym$ never maps to $\bot$.

The \emph{thread interference relation}, $\conc_T$, expresses that every operation is independent from every other operation \emph{unless} they belong to the same thread; every two operations by the same thread interfere with each other. It captures the memory model with non-blocking reads and writes. This is the largest (w.r.t.\ inclusion) concurrency relation  we will use.

\begin{defi}\rm\label{def:concT}
    $\conc_T = \{(a, b) \mid a,b \in O, \thrmap{a} \neq \thrmap{b}\}$
\end{defi}

The model with blocking writes and non-blocking reads is captured by the \emph{signalling reads relation}, $\conc_S$.
\begin{defi}\rm\label{def:concS}
    $\conc_S = \conc_T \setminus \{(a, b) \mid a,b \in O, \isread{a} \lor \iswrite{a},  \iswrite{b}, \regmap{a} = \regmap{b} \}$
\end{defi}
\noindent
Intuitively, this is the same as $\conc_T$ except that one thread writing to a register can interfere with a write to or a read from that same register by another thread. However, a read cannot interfere with another thread's read or write.
This concurrency relation has a precedent in \cite{glabbeeksignals,bouwman2020off}. There, reads are modelled as \emph{signals}, which cannot block anything. Hence the name of this relation.

The blocking model with concurrent reads is captured by the \emph{interfering reads relation}, $\conc_I$.
This is a further refinement from $\conc_S$, where a read operation can interfere with a write operation on the same register, but cannot interfere with another read.

\begin{defi}\rm\label{def:concI}
    $\conc_I = \conc_S \setminus \{(a, b) \mid a,b \in O, \iswrite{a}, \isread{b}, \regmap{a} = \regmap{b} \}$
\end{defi}
\noindent
This goes with the idea that performing a write on a memory location can only be done when the memory is reserved: repeated reads can prevent the memory from being reserved for a write, but as long as there is no write all the reads can take place concurrently.

Finally, the model of blocking reads and writes is captured by the \emph{all interfering relation}, $\conc_{\!A}$, a refinement of $\conc_I$ where a read operation can also interfere with another read operation.
\begin{defi}\rm\label{def:concA}
    $\conc_{\!A}= \conc_I \setminus \{(a, b) \mid a,b \in O, \isread{a},  \isread{b}, \regmap{a} = \regmap{b}\}$
\end{defi}
\noindent
In this model, every operation on a register fully reserves that register for only that operation, and hence can prevent any other operation from taking place at the same time.

\section{Modelling}\label{sec:modelling}
Up until now, we have kept the discussion of the registers, threads and their interaction fairly high-level.
In order to do verifications, we need to translate the definitions from the previous sections to LTSs.
We explain this translation here; the full process-algebraic models are given in \cref{app:registers}.
We will first describe register models that stick relatively closely to the high-level descriptions given thus far. In these models, which we call \emph{full-read} (for: fully modelled reads) models, we explicitly represent invocations and responses of the read and write operations.
Subsequently, we describe models in which read operations are only modelled by a single action. These models, called \emph{instant-read} (for: reads are modelled as occurring instantaneously) models, are further removed from the theory, but by reducing the number of actions each operation takes we reduce the number of transitions in our models, and thus speed up the verifications significantly.
In \cref{app:complete-same}, we prove that the results we obtain from the instant-read models are also valid for the full-read models.

For our definitions, we presuppose two disjoint finite sets: $\TID$ of thread identifiers (thread id's) and $\RID$ of register identifiers (register id's).
Additionally, for every $r\mathbin\in\RID$ we reference the set $\Data$ of all data values that the register $r$ can hold.

\subsection{Full-read models}\label{sec:fullread}

Our full-read models are based on the process-algebraic models of MWMR registers from  \cite{spronck2023process}; we made only slight alterations to make them more efficient, improve their presentation and better facilitate the definition of concurrency relations.

The invocation of a read by thread $\tid \in \TID$ of register $\rid \in \RID$ is modelled by the \emph{start read} action $\startread[\tid,\rid]$. Since at the invocation no value is read yet, there is no value passed along with this action.
The response to this read, returning the value $\data \in \Data[\rid]$, is modelled by the \emph{finish read} action $\finishread[\tid,\rid]{\data}$. 
Similarly, there are \emph{start write} and \emph{finish write} actions $\startwrite[\tid,\rid]{\data}$ and $\finishwrite[\tid,\rid]$.
Together, these actions form the \emph{interface actions} of the registers, which threads use to communicate with the registers.

Additional actions may exists in the models. Actions that belong solely to threads are called \emph{thread-local actions}, and actions that belong solely to registers are called \emph{register-local actions}. Such actions represent internal activity and cannot be used for communication.

A register model requires some finite amount of memory to store a representation of relevant past events. We store this in what we call the \emph{status object}, which features a finite set $\StatusAll$ of possible states. We abstract away from the exact implementation; for this presentation, all that is relevant is which information can be retrieved from it. Amongst others, we use the following \emph{access functions}, which are local to any given register $r$:
\begin{itemize}
    \item $\truevalsym: \StatusAll \rightarrow \Data$, the value that is currently stored in the register.
    \item $\writerssym: \StatusAll \rightarrow 2^{\TID}$, the set of thread id's of threads that have invoked a write operation on this register that has not yet had its response.
\end{itemize}
Any occurrence of a register action $a$ induces a state change $s \xrightarrow{a} s'$, resulting in an \emph{update} to these access functions. For instance, the actions $\startwrite[\tid,\rid]{\data}$ and $\finishwrite[\tid,\rid]$ cause the updates $\writers{s'} = \writers{s} \cup \{\tid\}$ and $\writers{s'} = \writers{s} \setminus \{\tid\}$, respectively.

We now briefly discuss how the three types of register are modelled.

\paragraph{Full-read safe registers.}
In our model of a safe register $r$, its status object maintains a Boolean variable $\overlapsym$ for each thread id, recording whether an ongoing read or write operation of this thread overlapped with a write by another thread. The value of $\overlapsym$ is updated in a straightforward way each time $r$ experiences a register interface action, aided by the access function $\writerssym$.
Using this function, our model can determine which of the four rules from \cref{sec:safe} applies when a read or write finishes, and behave accordingly.

\paragraph{Full-read regular registers.}
Recall that to determine for each read what the last written value is, a regular register needs to create an global ordering on the reads and writes.
Note that even though this ordering orders all operations, regular registers differ from atomic registers because reads can still return the values of all writes they are concurrent with; the ordering is only used to determine the last written value.
In our model, we generate this global ordering through the register-local \emph{order write} action $\orderwrite[\tid,\rid]$, which occurs at an arbitrary point between the start write action $\startwrite[\tid,\rid]{\data}$ and the finish write $\finishwrite[\tid,\rid]$.
The value $\truevalsym$ is updated with the value of a write when the order write action for that write operation happens, instead of at the finish write action. 
Hence, $\truevalsym$ contains the last written value according to the generated order.
While a read is active, so between its start and finish, the register model builds a set of the possible return values on the fly.
When the read starts, this set is initialised to $\truevalsym$ and the intended value of every active write.
Subsequently, whenever a write starts, its intended value is added to the set.
This way, at the finish read, the set will contain exactly those values that the read could return.

\paragraph{Full-read atomic registers}
For the atomic register model, we reuse the order write action from the regular register model, and add the similar \emph{order read} action $\orderread[\tid,\rid]$ for read operations.
This way, we generate an ordering on all operations. In our model, $\truevalsym$ is updated when the $\orderwrite[\tid,\rid]$ occurs, similar to regular registers. For read operations, the value of $\truevalsym$ when $\orderread[\tid,\rid]$ occurs is remembered, and returned at the matching response.

\subsection{Instant-read models}\label{sec:instantread}
In the full-read models we explicitly represent the duration of read and write operations through start and finish actions.
For verification purposes, however, it is sufficient to model a read operation with only a single action, even if this is a further abstraction from reality.
This was previously observed by Lamport for SWMR registers \cite{Lamport22deconstructing}. We now informally argue that it also holds for the MWMR registers; a formal proof can be found in \cref{app:complete-same}.
Whenever a read with duration returns a particular value $d$, there exists at least one instant between the invocation and response of that read such that if the read would occur at exactly that instant, it could also return $d$:
\begin{itemize}
    \item With safe registers, if there is a concurrent write during the read then if the read were to occur at any instant where that write is active, it could return any arbitrary value, including $d$. If there is no concurrent write during the read, then $d$ must be the stored value of the register at the start of the read, so the read may as well have occurred instantaneously at its invocation. 
    \item With regular registers, if $d$ is the value of a concurrent write, then the read may just as well have occurred at any instant where that write is active. If $d$ is instead the last written value when the read started, then the read may as well have occurred instantaneously at its invocation.
    \item With atomic registers, the register behaves as though all operations are totally ordered, so the read may occur instantaneously at the point where it was ordered.
\end{itemize}
Conversely, an instant read cannot return \emph{more} values than one with duration.
Thus, having a read occur instantaneously instead of giving it explicit duration has no impact on the possible range of values the read may return. 
Additionally, with all three register types the presence of a concurrent read has no effect on read or write operations, so other operations are also unaffected by modelling reads as occurring instantaneously.
Note that this is merely an abstraction for the purpose of modelling; in reality, read operations have duration.

To model reads as occurring instantaneously means replacing the actions $\startread[\tid,\rid]$, $\orderread[\tid,\rid]$ and $\finishread[\tid,\rid]{\data}$ from the full-read models with a single \emph{read} action: $\readop[\tid,\rid]{\data}$.
Additionally, we can simplify our status object, since less information needs to be tracked. 
For example, the safe register model no longer needs to use $\overlapsym$ to track whether there is a concurrent write during a read (though it must still track whether there is a concurrent write during a write); instead we can directly check $\writerssym$ when a read occurs. 
Similarly with the regular register model, we no longer need to track a set of possible return values for each active read that grows between its invocation and response; we can directly check the currently stored value and the values of all active writes when a read occurs, and return one of those values.
For the atomic model, the read action takes the place of the order read action.

\subsection{Thread-register models}\label{sec:thrregmodels}

To model a mutual exclusion algorithm, we must model both the threads executing the algorithm and the registers used for communication.
The behaviour of the threads will largely depend on the particular algorithm, but we do make some general assumptions on how threads are modelled. We presuppose the existence of an LTS $T_{\tid} = (\states_{\tid}, \actionset_{\tid}, \initstate_{\tid}, \transrel_{\tid})$ and a set of thread-local actions $\thrlocacts_{\tid}$ for every $\tid \mathbin\in \TID$ such that 
  $\actionset_{\tid} = \mathit{Interface}_{\tid} \cup \thrlocacts_{\tid}$.
Here $\mathit{Interface}_{\tid}$ depends on which type of register models the thread model is combined with:
For the full-read models, $\mathit{Interface}_{\tid} = \{\startread[\tid, \rid], \finishread[\tid, \rid]{\data}, \startwrite[\tid, \rid]{\data}, \finishwrite[\tid, \rid] \mid \rid \mathbin\in \RID, \data \mathbin\in \Data\}$, whereas for the instant-read models we have $\mathit{Interface}_{\tid} = \{\readop[\tid, \rid]{\data}, \startwrite[\tid, \rid]{\data}, \finishwrite[\tid, \rid] \mid \rid \mathbin\in \RID, \data \mathbin\in \Data\}$.
We assume that all sets of thread-local actions are pairwise disjoint; this enforces that threads can only communicate through interacting with the registers.
The other requirements listed in \cref{sec:threads} are also enforced.
When modelling a thread executing a particular algorithm, we ensure that the instant-read and full-read versions of the model are identical, besides using different actions for the read operations. In \cref{app:thread-transform}, we describe how a full-read thread LTS is transformed into an instant-read one.

We combine thread and register LTSs into \emph{thread-register models}.
This combination is possible as long as both the thread and register LTSs are of the same type, i.e.\ both full-read or both instant-read. We sometimes refer to these as full-read thread-register models and instant-read thread-register models.
For the following definition, we let $R_{\rid} = (\states_{\rid}, \actionset_{\rid}, \initstate_{\rid}, \transrel_{\rid})$ be the LTS associated with each $\rid \in \RID$.
\begin{defi}\rm\label{def:tr-model}
    A \emph{thread-register model} is a six-tuple $(\states, \actionset, \initstate, \transrel, \thrsym, \regsym)$, such that $(\states, \actionset, \initstate, \transrel)$ is a parallel composition of thread and register LTSs and
    \begin{itemize}
        \item $\thrsym \!: \actionset \rightarrow \TID$ is a mapping from actions to thread id's; and
        \item $\regsym \!: \actionset \rightarrow \RID \cup \{\undefsymb\}$ is a mapping from actions to register id's and the special value $\undefsymb \mathbin{\notin} \RID$.
    \end{itemize}
\noindent
$\thrsym$ and $\regsym$ are defined in the obvious way, e.g., 
$\thrmap{\startwrite[\tid,\rid]{\data}} = \tid$ and $\regmap{\startwrite[\tid,\rid]{\data}} = \rid$.
Crucially, for a thread-local action $a$, $\regmap{a} = \undefsymb$.
However, for the register-local actions, $\thrsym$ is mapped to a thread id, for example $\thrmap{\orderwrite[\tid,\rid]} = \tid$.
Note that these are the same $\thrsym$ and $\regsym$ we referred to in \cref{sec:concurrency}.
\end{defi}
By our construction of the thread and register LTSs, every action in $\actionset$ appears in at most two components of the parallel composition. Specifically, all thread-local actions appear only in their associated thread LTS, register-local actions appear in only their associated register LTS, and the interface actions appear in the associated thread and register LTS.

\subsection{Justness for thread-register models}\label{sec:justness}
In order to obtain a suitable notion of justness for our thread-register models, we need to choose both $\block$ and $\conc$. 
Which actions are blockable depends on the mutual exclusion problem; we address this in the next section.
For $\conc$, we apply the abstract concurrency relations we gave in \cref{sec:concurrency} to both the full-read and instant-read thread-register models.
To do this, we must adapt the definitions of $O$, $\thrsym$, $\regsym$, $\isreadsym$ and $\iswritesym$, and provide an equivalence relation $\actequivsym$ on actions.
For $O$, we use the actions $\actionset$ of the thread-register model we apply the concurrency relation to. For $\thrsym$ and $\regsym$, we use the mappings from \cref{def:tr-model} of the same name.
To determine $\isreadsym$ and $\iswritesym$, we must decide which actions in a model represent part of an  operation that may be interfered with.
Since all operations must eventually terminate, it would be incorrect for an operation to be invoked but never get its response because of blocking. Thus, we do not allow interference on finish actions or the register-local order actions.
Thus, only the start of an operation or the full operation itself (for the reads of the instant-read models) may be interfered with.
We therefore use the following definitions for the predicates:
\begin{align*}
\isread{a}  &= \exists_{\tid\in\TID,\rid\in\RID,\data\in\Data[\rid]}. (a=\startread[\tid,\rid] \lor a=\readop[\tid,\rid]{\data})\\
\iswrite{a} &= \exists_{\tid\in\TID,\tid\in\RID,\data\in\Data}. (a=\startwrite[\tid,\rid]{\data})
\end{align*}

The equivalence relation on actions we use is the identity relation $\mathit{id}$ for all actions except the instant-read model's read actions.
This is because, whenever an action $\readop[\tid,\rid]{\data}$ is enabled, the thread $\tid$ will continue attempting to read from this register until it is successful, but due to other operations on this register it may happen that the value $\data$ eventually cannot be read any more. Thus, we must allow reads of different values to be equivalent, as long as they are by the same thread and register. We expand on this argument in \cref{app:equivalence}.
We define $\actequivsym$ as $\mathit{id} \cup \{(\readop[\tid,\rid]{\data},\readop[\tid,\rid]{\data'}) \mid  \tid\in\TID, \, \rid \in \RID, \, \data,\data' \in \Data[\rid]\}$.

To establish that the resulting relations are indeed concurrency relations for these models (cf.\ \cref{def:conc}), we first establish a property of our models. We call this property \emph{thread-consistency}.
\begin{defi}\rm\label{def:thr-const}
    An LTS $(\states, \actionset, \initstate, \transrel)$ is \emph{thread-consistent} with respect to a mapping $\thrsym: \actionset \rightarrow \TID$ and an equivalence relation $\actequivsym$ on $\actionset$ if, and only if, for all states $s \in \states$, if an action $a \in \actionset$ is enabled in $s$ and there exists a transition \plat{$s \xrightarrow{b} s'$} for some $s' \in \states, \, b \in \actionset$ such that $\thrmap{a} \neq \thrmap{b}$, then in $s'$ an action $c$ is enabled such that $\actequiv{a}{c}$.
\end{defi}
\noindent
The subsequent lemmas apply to both types of thread-register model.
\begin{restatable}{lem}{thrconsist}\rm\label{lem:thr-consist}
    Let $M = (\states, \actionset, \initstate, \transrel, \thrsym, \regsym)$ be a thread-register model and $\actequivsym$ as above. 
    Then the LTS $(\states, \actionset, \initstate, \transrel)$ is thread-consistent with respect to $\thrsym$ and $\actequivsym$.
\end{restatable}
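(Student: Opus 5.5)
The plan is to go through the components of the parallel composition. An action $a$ enabled in $s$ belongs to at most two components: its own thread $T_{\thrmap{a}}$, and possibly one register $R_{\regmap{a}}$. Thread-local actions live only in the thread, and register-local actions and interface actions involve register $\regmap{a}$. Let $s \xrightarrow{b} s'$ with $\thrmap{b} \neq \thrmap{a}$.

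First I show that the thread component of $\thrmap{a}$ is unchanged. Since the sets of thread-local actions are pairwise disjoint, and every interface or register-local action of a different thread is not in $\actionset_{\thrmap{a}}$, the action $b$ does not occur in $T_{\thrmap{a}}$. Hence the local state of $T_{\thrmap{a}}$ in $s'$ equals that in $s$, and $a$ (if in $\actionset_{\thrmap{a}}$) is still enabled there. If $a$ is thread-local, this already gives enabledness of $a$ in $s'$, so we take $c=a$.

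The remaining case, and the main one, is when $a$ involves a register $r=\regmap{a}$ and $b$ changes the state of $R_r$. If $\regmap{b}\neq r$, then $R_r$ is untouched and again $c=a$. Otherwise I inspect each register model (safe, regular, atomic; full-read and instant-read) from \cref{app:registers}. The claim to verify is that, in every register model, the enabledness of an action of thread $\tid$ depends only on the per-thread part of the status object for $\tid$: whether $\tid$ is idle, has a pending read or write, or has been ordered. The one exception is the value returned by a read. Operations of another thread $\tid'$ only update $\truevalsym$, $\writerssym$ and the other threads' overlap flags or sets of possible values; they never alter $\tid$'s control status. Moreover, the registers accept invocations $\startread$, $\startwrite{\data}$ from an idle thread unconditionally, and accept $\finishwrite$ and the order actions whenever they are pending. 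So all these actions stay enabled, and we take $c=a$.

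The obstacle lies with actions whose enabledness depends on the data. For the full-read $\finishread[\tid,\rid]{\data}$, I must check that $b$ cannot remove $\data$ from the possible-return set of $\tid$'s read. Safe overlap flags only become true, regular sets only grow under $\startwrite$, and the atomic remembered value is fixed at $\orderread$; so $c=a$ works. For the instant-read $\readop[\tid,\rid]{\data}$, the value $\data$ may indeed become unreadable, for example when $\truevalsym$ changes via $\orderwrite$ or $\finishwrite$ of another thread. Here I argue that some value is always readable: $\truevalsym(s')$ is always a possible return value in every model. Thus $\readop[\tid,\rid]{\data'}$ is enabled in $s'$ for some $\data'$. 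The thread $T_\tid$ accepts every value of the domain, by \cref{thread-prop-2}, and the action is $\actequivsym$-equivalent to $a$. This case is exactly why $\actequivsym$ was chosen as it was, and it completes the proof.
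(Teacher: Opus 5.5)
Your structure is the same as the paper's: the component of $\thrmap{a}$ is untouched, the case $\regmap{b}\neq\regmap{a}$ is trivial, and the rest is a per-model inspection. The instant-read read is handled through $\actequivsym$ by reading $\trueval{s'}$. The gap is in the full-read safe case $a=\finishread[\tid,\rid]{\data}$.

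When $\neg\overlap{s,\tid}$, the only value $\tid$ can return is $\data=\trueval{s}$. Now take a step $b=\finishwrite[\tid',\rid]$ with $\tid'\neq\tid$. It overwrites $\truevalsym$, either with $\vals{s,\tid'}$ or arbitrarily, and leaves $\tid$'s overlap flag false. So $a$ could be disabled in $s'$. Since $\actequivsym$ is the identity on finish reads, no other action can stand in for it.

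Your remark that safe overlap flags only ever become true under other threads' steps covers only the case $\overlap{s,\tid}$. It says nothing about $\truevalsym$ changing underneath a read whose flag is false. You point out exactly this phenomenon for instant reads, but you do not apply it here.

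What is missing is an invariant of \emph{reachable} register states: if $\tid\in\readers{s}$ and $\writers{s}\neq\emptyset$, then $\overlap{s,\tid}$ holds. This is \cref{lem:overlap}, and the paper invokes it at exactly this point. It is proved along a path from the initial state, by looking at whichever of $\startread[\tid,\rid]$ and $\startwrite[\tid',\rid]{\data'}$ occurred last. If $\tid'$ was already writing when $\tid$'s read started, $\usrsym$ set the flag; otherwise $\uswsym$ set it. Only $\tid$'s own start actions can reset it.

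Because $b=\finishwrite[\tid',\rid]$ is enabled in $s$, we have $\tid'\in\writers{s}$. The invariant then rules out $\neg\overlap{s,\tid}$, and the problematic case disappears. This case therefore genuinely needs reachability. A purely local inspection of which summands are enabled, which is all your argument uses, cannot supply it.
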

\noindent
This lemma is proven in \cref{app:thrconsist}.

For the subsequent lemmas, fix a thread-register model $M = (\states, \actionset, \initstate, \transrel, \thrsym, \regsym)$.%
\begin{restatable}{lem}{concTval}\rm\label{lem:concT-val}
   $\conc_T$ is a concurrency relation for $M$.
\end{restatable}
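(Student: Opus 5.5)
We need to check the two conditions of \cref{def:conc} for $\conc_T$ with the equivalence $\actequivsym$ fixed in \cref{sec:justness}. The idea is to reduce the second condition to thread-consistency (\cref{lem:thr-consist}) by induction on the length of the path.

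\emph{First condition.} Suppose $\actequiv{a}{b}$. Then either $a=b$, or $a=\readop[\tid,\rid]{\data}$ and $b=\readop[\tid,\rid]{\data'}$ for the same $\tid$ and $\rid$. In both cases $\thrmap{a}=\thrmap{b}$, so $(a,b)\notin\conc_T$ by \cref{def:concT}. Hence $\actequivsym$ and $\conc_T$ are disjoint, and irreflexivity follows since $\actequivsym$ is reflexive. The fact needed here, that equivalent actions belong to the same thread, follows directly from the definition of $\actequivsym$ and of $\thrsym$ in \cref{def:tr-model}.

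\emph{Second condition.} Let $a$ be enabled in $s$, and let $\pi = s_0 b_1 s_1 \ldots b_n s_n$ be a path with $s_0=s$ and $s_n=s'$, such that $a\conc_T b_i$ for all $i$, i.e.\ $\thrmap{b_i}\neq\thrmap{a}$. We prove by induction on $k$ that some $c_k$ with $\actequiv{a}{c_k}$ is enabled in $s_k$.

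For $k=0$, take $c_0=a$. For the step, $c_k$ is enabled in $s_k$ and $s_k\xrightarrow{b_{k+1}}s_{k+1}$. Since $\actequiv{a}{c_k}$, the first part gives $\thrmap{c_k}=\thrmap{a}\neq\thrmap{b_{k+1}}$. \cref{lem:thr-consist} then yields an action $c_{k+1}$ enabled in $s_{k+1}$ with $\actequiv{c_k}{c_{k+1}}$. Transitivity of the equivalence gives $\actequiv{a}{c_{k+1}}$, and $c_n$ is the required action $c$.

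The only subtle point is that the induction must carry along equivalence to the original $a$, not just enabledness of some action of the same thread. This works because $\actequivsym$ is an equivalence relation that preserves $\thrsym$. All the real work about the register and thread LTSs is delegated to \cref{lem:thr-consist}, which we may assume, so no further obstacle is expected.
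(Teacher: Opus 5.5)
Your proposal is correct and follows essentially the same route as the paper's proof: disjointness because $\actequivsym$-equivalent actions share a thread, then induction along the path using thread-consistency (\cref{lem:thr-consist}) and transitivity of $\actequivsym$ to carry equivalence back to the original $a$. Indexing the induction by position $k$ along a fixed path rather than by path length is only a cosmetic difference.
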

\noindent
This follows by a straightforward application of \cref{lem:thr-consist}.
The details are in \cref{app:concTproof}.

\begin{lem}\rm\label{lem:ai-val}
    $\conc_S$, $\conc_I$ and $\conc_{\!A}$ are concurrency relations for $M$.
\end{lem}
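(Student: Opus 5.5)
The plan is to reduce everything to \cref{obs:subset} together with \cref{lem:concT-val}. By \cref{def:concS,def:concI,def:concA}, each relation is obtained from the previous one by removing pairs, so we have the chain
\[
\conc_{\!A} \subseteq \conc_I \subseteq \conc_S \subseteq \conc_T .
\]
By \cref{lem:concT-val}, $\conc_T$ is a concurrency relation for $M$ with respect to $\actequivsym$. \cref{obs:subset} says that any subset of a concurrency relation is again one. It therefore suffices to check the two conditions of \cref{def:conc} once more, briefly, to confirm that \cref{obs:subset} really holds in this setting.

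\textbf{First condition.} Disjointness from $\actequivsym$ is inherited by subsets. Since $\conc_T \cap \actequivsym = \emptyset$, the same holds for $\conc_S$, $\conc_I$ and $\conc_{\!A}$. Concretely, every pair in $\actequivsym$ shares the same thread, so none of these pairs lies in $\conc_T$.

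\textbf{Second condition.} This condition is monotone under shrinking the relation. Suppose $\conc' \subseteq \conc$, the action $a$ is enabled in $s$, and $\pi$ leads from $s$ to $s'$ with $a \conc' b$ for every $b$ occurring on $\pi$. Then also $a \conc b$ for every such $b$. Since $\conc$ is a concurrency relation, some $c$ with $\actequiv{a}{c}$ is enabled in $s'$. Shrinking $\conc$ only strengthens the hypothesis of the condition, while the conclusion is unchanged.

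\textbf{Where the content lies.} The only nontrivial ingredient is \cref{lem:concT-val}, which in turn rests on thread-consistency (\cref{lem:thr-consist}): along a path containing only actions of other threads, an action equivalent to $a$ stays enabled, by an induction on the length of the path. Since that result is already available, I expect no real obstacle in the present proof.

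The one point to be careful about is that the sets removed in \cref{def:concS,def:concI,def:concA} are intersected with $O \times O = \actionset \times \actionset$. The resulting relations are thus genuine subsets of $\conc_T$ over the same action set and the same $\actequivsym$, so that \cref{obs:subset} applies verbatim.
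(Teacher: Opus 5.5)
Your proposal is correct and matches the paper's proof, which is exactly the one-line appeal to \cref{lem:concT-val} and \cref{obs:subset} via the inclusions $\conc_{\!A} \subseteq \conc_I \subseteq \conc_S \subseteq \conc_T$. Your re-check of the two conditions of \cref{def:conc} is simply an unpacking of why \cref{obs:subset} holds.
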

\begin{proof}
    This follows from \cref{lem:concT-val} and \cref{obs:subset}.
\end{proof}
\noindent
As stated \hyperlink{justification}{in the introduction}, we capture six memory models. We obtain three variants of the non-blocking model by combining the $\conc_T$ relation with the safe, regular and atomic register models. The remaining three memory models are represented by combining $\conc_S$, $\conc_I$ and $\conc_{\!A}$ with the atomic register model. 
In \cref{app:characterise}, we formally characterise just paths in both types of thread-register models for all six variants. 
In \cref{app:complete-same} we prove that using the instant-read and full-read models results in the same verification outcomes; we extend this in \cref{app:avoidoverlap} to prove that using the instant-read atomic register model, which allows overlapping writes, for the three memory models with blocking writes is sound for verification purposes.

\subsection{Modelling mutual exclusion algorithms}\label{sec:modelmutex}
Thus far, we have not touched on how we translate the pseudocode of the mutual exclusion algorithms to thread models. We here explain the crucial choices made.

The only operations on registers we allow are reading and writing.
Hence, more complicated statements that may have been present in the original presentation of an algorithm, such as compare-and-swap instructions, are converted into these primitive operations. A statement like ``\textbf{await} $\forall_{i \in \TID}: x(i)$'', where $x$ is some condition on a thread id $i$, is modelled as a recursive process that checks each thread id from smallest to largest, waiting for each until $x(i)$ is satisfied.
In general, ``\textbf{await}'' statements are modelled as busy waiting loops.

Besides the behaviour of the threads themselves, mutual exclusion algorithms also dictate the domains and initial values of the registers.
Where an algorithm does not specify the initial value of a register, we allow all possible initial values.

We use $\crit[\tid]$ and $\noncrit[\tid]$, with $\tid\in\TID$, as thread-local actions.
These actions represent a thread entering its critical section and leaving its non-critical section, respectively. We define $\block = \{\noncrit[\tid] \mid \tid \in \TID\}$ to capture that a thread may always choose to remain in its non-critical section indefinitely; this is an important assumption underlying the correctness of mutual exclusion protocols.

\section{Verification}\label{sec:verification}
Below, and in \cref{app:pseudocode}, we collect over a dozen mutual exclusion algorithms from the literature.
We also propose alternative versions of several algorithms satisfying more of the checked properties than the original ones.
We did our verifications with the mCRL2 toolset~\cite{mCRL2toolset}, using the instant-read models, which allow faster verification.
To this end, we had to encode our register and thread models in the mCRL2 process algebra \cite{mCRL2language}. Comments on this translation are given in \cref{app:mcrl2}.
We additionally encoded mutual exclusion, deadlock freedom, and starvation freedom in the modal $\mu$-calculus, the logic used to represent properties in the mCRL2 toolset.
We used the patterns from \cite{spronck2024progress} to incorporate the justness assumption into our formulae for deadlock freedom and starvation freedom.
The modal $\mu$-calculus formulae appear in \cref{app:mucalc}. \href{https://doi.org/10.5281/zenodo.19048046}{The mCRL2 and modal $\mu$-calculus files are available as supplementary material \cite{GLSzenodo}.\footnote{This link will only work once the record has been finalised, until then \uline{\href{https://zenodo.org/records/19048046?preview=1&token=eyJhbGciOiJIUzUxMiJ9.eyJpZCI6ImJkODMzZDcwLTU1ZDgtNDM3MS05MDNiLWM1NWVjZTg5MDI5NSIsImRhdGEiOnt9LCJyYW5kb20iOiIwNDliOTc2N2M2MzY1MzJjNjUyNDZkMDE3NjdhZDliZiJ9.DE1MxBd6bmOjV_g56IGKr0gYClKTL6JOmlH95Wh2nqayzCmkx_epk39faxFNhdN17THWpcPikYyxTGkBeBCmPw}{click here}} to preview the draft.}}

Besides the correctness properties discussed in \cref{sec:introduction}, Dijkstra \cite{dijkstra65} requires that the correctness of the algorithms may not depend on the relative speeds of the threads. This requirement is automatically satisfied in our approach, since we allow all possible interleavings of thread actions in our models.

We checked mutual exclusion, deadlock freedom, and starvation freedom. 
If mutual exclusion is not satisfied, we do not care about the other two properties.
Additionally, if deadlock freedom is not satisfied, we know that starvation freedom is not satisfied either.
We can therefore summarise our results in a single table: $\satnone$ if none of the three properties are satisfied, $\satmutex$ if only mutual exclusion is satisfied, $\satdf$ if only mutual exclusion and deadlock freedom hold, and $\satsf$ if all three are satisfied. See \cref{tab:results}.
As stated previously, we verify liveness properties under justness, where we employ $\conc_T$ for safe and regular registers and all four concurrency relations $\conc_T$, $\conc_S$, $\conc_I$ and $\conc_{\!A}$ for atomic registers.
We checked with 2 threads for algorithms designed for 2 threads, and with 3 for all others. We restrict ourselves to at most 3 threads because, due to the state-space explosion problem, even models with only 3 threads frequently take hours or even days to check these properties on. 

\begin{table}[tbp]
\caption{Verification results}
\label{tab:results}
\vspace{1em}
\resizebox{\textwidth}{!}{%
\begin{tabular}{@{}lc|cccccc@{}}
\toprule
\multirow{2}{*}{\textit{Algorithm}}              & \multirow{2}{*}{\textit{$\#$ threads}} & \textit{Safe} & \textit{Regular} & \multicolumn{4}{c}{\textit{Atomic}} \\
                                                 &                                                   & $T$             & $T$                & $T$       & $S$       & $I$      & $A$      \\ \midrule
    \hyperref[sec:anderson]{Anderson} \cite{anderson1993fine}                 & 2 &  $\satsf$     & $\satsf$ &  $\satsf$ & $\satsf$ & $\satmutex$ &  $\satmutex$ \\* \midrule
    \hyperref[sec:aravind]{Aravind BLRU} \cite{aravind2010yet}             & 3 & $\satsf$ & $\satsf$ & $\satsf$ & $\satmutex$ & $\satmutex$ & $\satmutex$ \\
    \hyperref[sec:aravind]{Aravind BLRU (alt.)}                             & 3  & $\satsf$ & $\satsf$ & $\satsf$ & $\satsf$ & $\satmutex$ & $\satmutex$ \\* \midrule
    \hyperref[sec:attiya-welch]{Attiya-Welch (orig.)} \cite{AttiyaWelch04}        & 2  &  $\satdf$     & $\satsf$    &  $\satsf$ & $\satdf$ & $\satmutex$ & $\satmutex $ \\
    \hyperref[sec:attiya-welch]{Attiya-Welch (orig., alt.)}                       & 2  &  $\satsf$     & $\satsf$    &   $\satsf$ & $\satdf$ & $\satmutex$ & $\satmutex $ \\
    \hyperref[sec:attiya-welch]{Attiya-Welch (var.)} \cite{Shao11}                & 2  &  $\satmutex$  & $\satmutex$ &  $\satsf$ & $\satdf$ & $\satmutex$ & $\satmutex $ \\
    \hyperref[sec:attiya-welch]{Attiya-Welch (var., alt.)}                        & 2  &  $\satsf$     & $\satsf$    & $\satsf$ & $\satdf$ & $\satmutex$ & $\satmutex$  \\* \midrule
    \hyperref[sec:burns-lynch]{Burns-Lynch} \cite{burns1993bounds}               & 3  &  $\satdf$     & $\satdf$    &  $\satdf$ & $\satdf$ & $\satmutex$ & $\satmutex$  \\* \midrule
    \hyperref[sec:dekker]{Dekker} \cite{alagarsamy2003some}                 & 2  &  $\satmutex$  & $\satmutex$ &  $\satsf$ & $\satdf$ & $\satmutex$ & $\satmutex $ \\
    \hyperref[sec:dekker]{Dekker (alt.)}                                   & 2  &  $\satmutex$  & $\satmutex$ &  $\satsf$ & $\satsf$ & $\satmutex$ & $\satmutex $ \\
    \hyperref[sec:dekker]{Dekker RW-safe} \cite{buhr2016dekker}           & 2  &  $\satsf$     & $\satsf$    &  $\satsf$ & $\satdf$ & $\satmutex$ & $\satmutex $ \\
    \hyperref[sec:dekker]{Dekker RW-safe} + \hyperref[sec:dftosf]{Bar-David} & 2 & $\satsf$ & $\satsf$ & $\satsf$ & $\satsf$ & $\satmutex$ & $\satmutex$  \\* \midrule
    \hyperref[sec:dijkstra]{Dijkstra} \cite{dijkstra65}                       & 3  &  $\satmutex$  & $\satdf$ &  $\satdf$ & $\satmutex$ & $\satmutex$ & $\satmutex$  \\
    \hyperref[sec:dijkstra]{Dijkstra (alt.)} & 3 & $\satmutex$ & $\satdf$ & $\satdf$ & $\satdf$ & $\satmutex$ & $\satmutex$ \\* \midrule
    \hyperref[sec:kessels]{Kessels} \cite{kessels1982arbitration}            & 2  &  $\satnone$   & $\satnone$  &  $\satsf$ & $\satsf$ & $\satmutex$ & $\satmutex $ \\* \midrule
    \hyperref[sec:knuth]{Knuth} \cite{knuth1966additional}                 & 3  &  $\satmutex$  &  $\satsf$ & $\satsf$ & $\satmutex$ & $\satmutex$ & $\satmutex$ \\* \midrule
    \hyperref[sec:burns-lynch]{Lamport 1-bit} \cite{Lamport86Mutex2}             & 3  &  $\satdf$     &   $\satdf$  &  $\satdf$ & $\satdf$ & $\satmutex$ & $\satmutex$  \\
    \hyperref[sec:burns-lynch]{Lamport 1-bit} + \hyperref[sec:dftosf]{Bar-David} & 3 & $\satsf$ & $\satsf$ & $\satsf$ & $\satsf$ & $\satmutex$ & $\satmutex$ \\
    \hyperref[sec:lamport-3bit]{Lamport 3-bit} \cite{Lamport86Mutex2}             & 3  &  $\satsf$ & $\satsf$ & $\satsf$ & $\satsf$ & $\satmutex$ & $\satmutex$ \\* \midrule
    \hyperref[sec:kessels]{Peterson} \cite{Peterson81}                       & 2  &  $\satnone$   & $\satnone$  &  $\satsf$ & $\satsf$ & $\satmutex$ & $\satmutex $ \\
    \hyperref[sec:peterson-new]{Peterson new solution (int.)} \cite{peterson1983new} & 3 & $\satdf$ & $\satsf$ & $\satsf$ & $\satdf$ & $\satmutex$ & $\satmutex$ \\
    \hyperref[sec:peterson-new]{Peterson new solution (bit)} \cite{peterson1983new} & 3 & $\satsf$& $\satsf$& $\satsf$& $\satdf$&$\satmutex$ & $\satmutex$\\*
    \midrule
    \hyperref[sec:szymanski-flag]{Szymanski flag (int.)} \cite{Szy88}                    & 3  &  $\satnone$   & $\satnone$  &  $\satsf$ & $\satsf$ & $\satmutex$ & $\satmutex$ \\
    \hyperref[sec:szymanski-flag]{Szymanski flag (bit)} \cite{Szy88}                & 3  &  $\satnone$   & $\satnone$ &   $\satnone$ & $\satnone$ & $\satnone$ & $\satnone$    \\
    \hyperref[sec:szymanski-flag]{Szymanski flag (bit, alt.)} & 3 &  $\satnone$   & $\satnone$  &  $\satsf$ & $\satsf$ & $\satmutex$ & $\satmutex$ \\
    \hyperref[sec:Szymanski]{Szymanski 3-bit lin.\ wait} \cite{Szy90}         & 3  &  $\satnone$   & $\satnone$ &   $\satnone$ & $\satnone$ & $\satnone$ & $\satnone$   \\
    \hyperref[sec:Szymanski]{Szymanski 3-bit lin.\ wait (alt.)}                & 2  &  $\satsf$     & $\satsf$    &  $\satsf$ & $\satsf$ & $\satmutex$ & $\satmutex$     \\
    \hyperref[sec:szymanski-4bit]{Szymanski 4-bit robust} \cite{Szy90} & 3 & $\satnone$ & $\satnone$ & $\satsf^{\star}$ &$\satsf^{\star}$ & $\satmutex$&$\satmutex$\\*
    \bottomrule
\end{tabular}%
}
\end{table}

We list the origin of each algorithm in the table; the results of verifying our proposed alternate versions are indicated by ``alt.''.
We give pseudocode for the algorithms.
Therein we merely present the entry and exit protocols of an algorithm, separated by the instruction \textbf{critical section}. Implicitly these instructions alternate with the non-critical section, and may be repeated indefinitely.
\hypertarget{threadorder}{We use $N$ for the number of threads.
As identifiers for threads}, we use the integers $0\ldots N{-}1$. So $\TID = \{0,\ldots,N{-}1\}$. When presenting pseudocode, we give the algorithm for an arbitrary thread $i$. When $N \mathop= 2$, we use the shorthand notation $j = 1\mathop{-}i$.

In the subsequent sections, we discuss the results summarised in the table.
Where relevant for the discussion, we include the pseudocode here.
The remaining pseudocode can be found in \cref{app:pseudocode}.

\subsection[Impossibility of liveness with reads interference]{Impossibility of liveness with \texorpdfstring{$\conc_I$}{reads interference}}
Perhaps the most notable result in \cref{tab:results} is that no algorithm satisfies either liveness property under $\justact{\conc_I}{\block}$ or $\justact{\conc_{\!A}}{\block}$.
Since $\conc_{\!A}$ is a refinement of $\conc_I$, we focus on the behaviour for $\conc_I$.
When we take $\conc_I$ as our concurrency relation, then one thread's read of a register can interfere with another thread's write to that same register.
It turns out that when this is the case, starvation freedom is impossible for algorithms that rely on communication via registers. The following argument is adapted from \cite{glabbeek2018speed,glabbeek2023modelling}.
Assume that $\mathit{Alg}$ is an algorithm that satisfies starvation freedom.
Let $i$ and $j$ be different threads, and assume that all other threads, if any, stay in their non-critical section forever.
Since $\mathit{Alg}$ is starvation-free, thread $i$ must be capable of freely entering the critical section if thread $j$ is not competing for access.
Hence, 
thread $j$ must communicate its interest in the critical section to thread $i$ as part of its entry protocol.
Since reading from and writing to registers is the only form of communication we allow, 
thread $j$ must, in its entry protocol, write to some register $\varname{reg}$, which $i$ must read in its own entry protocol.
As long as $i$ does not read $j$'s interest from $\varname{reg}$, thread $i$ can enter the critical section freely.
Therefore, if thread $i$'s read of $\varname{reg}$ can block thread $j$'s write to $\varname{reg}$, thread $i$ can infinitely often access the critical section without ever letting thread $j$ communicate its interest, thus never letting thread $j$ enter.

For this argument it is crucial that right after $i$'s read of $\varname{reg}$, thread $i$ enters and then leaves the critical section and returns to its entry protocol, where it engages in another read of $\varname{reg}$, so quickly that thread $j$ has not yet started its write to $\varname{reg}$ in the meantime. 
This uses the requirement on mutual exclusion protocols that their correctness may not depend on the relative speeds of the
threads.
Without that requirement one can easily achieve starvation freedom even with blocking reads, as demonstrated in \cite{glabbeek2023modelling}.

The argument above explains why starvation freedom is never satisfied for $\conc_I$ or $\conc_{\!A}$.
However, it does not explain why we also never observe deadlock freedom.
After all, in the execution sketched above, while thread $j$ is stuck in its entry protocol, thread $i$ infinitely often accesses the critical section.
While we do not (yet) have an argument that deadlock freedom is impossible to satisfy if reads can block writes for all possible algorithms, we do observe this to be the case for all algorithms we have analysed.

For many algorithms, it is possible for both competing threads to become stuck in their entry protocol. Consider, for example, Peterson's algorithm from \cite{Peterson81}, here given as \cref{alg:peterson}.
If $\varname{turn}$ is initially $0$, and thread 1 manages to set $\varidx{flag}{1}$ to $\true$ before thread 0 starts the competition, then on line 3 thread 0 will get stuck in a busy waiting loop. Thread 1 needs to set $\varname{turn}$ to $1$ to let thread 0 pass line 3, but thread 0's repeated reads of $\varname{turn}$ prevent this write from taking place, resulting in both threads being trapped in the entry protocol.
An alternative way to get a deadlock freedom violation is via the exit protocol.
Once a thread has finished its critical section access, it needs to communicate that it no longer requires access to the other thread. In Peterson's, this is done on line 5 by setting the thread's $\varname{flag}$ to $\false$.
However, if the other thread is repeatedly reading this register, such as is done on line 3, then the completion of the exit protocol can be blocked, once again preventing both threads from accessing their critical sections.

We see similar behaviour in all algorithms we analyse.
Frequently, although not always, the problem lies in busy waiting loops.
Given this behaviour, it would be interesting to modify our models to treat busy waiting reads differently from normal reads and allow only normal reads to interfere with writes, or to model \textbf{await}-statements without busy waiting.
This would give us greater insight into whether for some of the algorithms it is truly the busy waiting that is the source of the deadlock freedom violation.
We leave this as future work.

\subsection{Anderson's algorithm}\label{sec:anderson}
Anderson's algorithm \cite{anderson1993fine}, given in \cref{app:anderson}, is specifically designed for two threads.
In accordance with the claims from the paper, it satisfies all three properties with non-atomic as well as atomic registers.
It is worth noting that Anderson specifically states that the \textbf{await}-statements should not be interpreted as busy wait loops, in contrast to how we model them.
Therefore, it is possible that when \textbf{await}-statements are modelled differently, Anderson's may also satisfy liveness properties under $\conc_I$ or $\conc_A$. 
We do not explore that option in this paper.

\subsection{Aravind's BLRU algorithm}\label{sec:aravind}

Aravind's BLRU algorithm \cite{aravind2010yet}, here given as \cref{alg:aravind}, is designed for an arbitrary number of threads $N$.
\begin{table}[t]
\vspace{-1em}
\noindent\begin{minipage}[t]{0.38\textwidth}
\begin{algorithm}[H]
\caption{Peterson's algorithm}\label{alg:peterson}
\begin{algorithmic}[1]
    \State{$\varidx{flag}{i} \writeop \true$}
    \State{$\varname{turn} \writeop i$}
    \State{\textbf{await} $\varidx{flag}{j} = \false \lor \varname{turn} = j$}
    \State{\textbf{critical section}}
    \State{$\varidx{flag}{i} \writeop \false$}
\end{algorithmic}
\end{algorithm}
\end{minipage}
\hfill
\begin{minipage}[t]{0.58\textwidth}
\begin{algorithm}[H]
\caption{Aravind's BLRU algorithm}\label{alg:aravind}
\begin{algorithmic}[1]
    \State{$\varidx{flag}{i} \writeop \true$}
    \Repeat
        \State{$\varidx{stage}{i} \writeop \false$}
        \State{\textbf{await} $\forall_{j \neq i}: \varidx{flag}{j} = \false \lor \varidx{date}{i} < \varidx{date}{j}$}
        \State{$\varidx{stage}{i} \writeop \true$}
    \Until{$\forall_{j \neq i}:$ $\varidx{stage}{j} = \false$}
    \State{\textbf{critical section}}
    \State{$\varidx{date}{i} \writeop \max(\varidx{date}{0}, ..., \varidx{date}{N-1}) + 1$}
    \If{$\varidx{date}{i} \geq 2N-1$}
        \State{$\forall_{j \in [0...N-1]}: \varidx{date}{j} \writeop j$}
    \EndIf
    \State{$\varidx{stage}{i} \writeop \false$}
    \State{$\varidx{flag}{i} \writeop \false$}
\end{algorithmic}
\end{algorithm}
\end{minipage}
\end{table}
Every thread has three registers: $\varname{flag}$ and $\varname{stage}$, Booleans that are initialised at $\false$, and a natural number $\varname{date}$, initialised at the thread's own id.
We observe that this algorithm satisfies all three properties with safe and regular registers, as claimed in \cite{aravind2010yet}.
However, with atomic registers, deadlock freedom is violated under $\justact{\conc_S}{\block}$.
The following execution for two threads demonstrates this violation:
\begin{itemize}
    \item Thread 1 moves through lines 1 through 5, setting $\varidx{flag}{1}$ and $\varidx{stage}{1}$ to $\true$. Note that thread 1 can go through line 4 because $\varidx{flag}{0} = \false$.
    \item Thread 0 can similarly move through lines 1 through 5; while $\varidx{flag}{1} = \true$, we do have that $\varidx{date}{0} = 0 < 1 = \varidx{date}{1}$, so it can pass through line 4. However, on line 6, thread 0 observes $\varidx{stage}{1} = \true$, so it has to return to line 2. 
\end{itemize}
At this point, thread 0 can repeat lines 2 through 5 endlessly, as long as thread 1 does not set $\varidx{stage}{1}$ to $\false$.
Note that the resulting infinite execution satisfies $\justact{\conc_S}{\block}$: thread 1's read of $\varidx{stage}{0}$, which it has to perform on line 6, is repeatedly blocked by thread 0's writes to $\varidx{stage}{0}$ on lines 3 and 5.

This violation can easily be fixed by preventing a thread from endlessly repeating the loop in the entry protocol while the other thread's $\varname{stage}$ is $\false$.
This can be done by altering line 4 to instead say $\textbf{await } \forall_{j \neq i}:\varidx{flag}{j} = \false \lor (\varidx{date}{i} < \varidx{date}{j} \land \varidx{stage}{j} = \false)$.
While this makes it more difficult to progress through line 4, it is impossible for all threads to get stuck there: if all threads are on line 4, then $\varname{stage}$ is $\false$ for all of them, and so the one with the lowest $\varname{date}$ can go to line 5.
Indeed, as is shown in \cref{tab:results}, with this modification Aravind's algorithm now satisfies starvation freedom under $\justact{\conc_S}{\block}$.

\subsection{Attiya-Welch's algorithm}\label{sec:attiya-welch}
What we call the Attiya-Welch algorithm is presented in both \cite{AttiyaWelch04} (original presentation, \cref{alg:attiya-welch}) and \cite{Shao11} (variant presentation, \cref{alg:attiya-welch-var}) as a variant of Peterson's algorithm.
The two presentations of the algorithm have different behaviour, so we present both.
This algorithm is only defined for $N = 2$.
Each thread $i$ has a Boolean $\varidx{flag}{i}$, initialised to $\false$.
There is also a global variable $\varname{turn}$ over $\TID$, initialised to $0$.

\begin{table}[ht!]
\vspace{-1em}
\noindent\begin{minipage}[t]{0.47\textwidth}
\begin{algorithm}[H]
\caption{Attiya-Welch algorithm, orig.}\label{alg:attiya-welch}
\begin{algorithmic}[1]
        \State{$\varidx{flag}{i} \writeop \false$}\label{attiya-welch-entry2}
        \State{\textbf{await} $\varidx{flag}{j} = \false \lor \varname{turn} = j$}
        \State{$\varidx{flag}{i} \writeop \true$}
        \If{$\varname{turn} = i$}\label[line]{attiya-welch-if}
            \If{$\varidx{flag}{j} = \true$}
                \State{\textbf{goto} \cref{attiya-welch-entry2}}\label{attiya-welch-goto}
            \EndIf
        \Else
            \State{\textbf{await} $\varidx{flag}{j} = \false$}\label{attiya-welch-wait}
        \EndIf
        \State{\textbf{critical section}}
        \State{$\varname{turn} \writeop i$}
        \State{$\varidx{flag}{i} \writeop \false$}
\end{algorithmic}
\end{algorithm}
\end{minipage}
\hfill
\begin{minipage}[t]{0.47\textwidth}
\begin{algorithm}[H]
\caption{Attiya-Welch algorithm, var.}\label{alg:attiya-welch-var}
\begin{algorithmic}[1]
    \Repeat
    \State{$\varidx{flag}{i} \writeop \false$}
    \State{\textbf{await} $\varidx{flag}{j} = \false \lor \varname{turn} = j$}
    \State{$\varidx{flag}{i} \writeop \true$}
    \Until{$\varname{turn} = j \lor \varidx{flag}{j} = \false$}\label{attiya-welch-var-until}
    \If{$\varname{turn} = j$}\label{attiya-welch-var-check}
    \State{\textbf{await} $\varidx{flag}{j} = \false$}
    \EndIf
    \State{\textbf{critical section}}\label{attiya-welch-var-crit}
    \State{$\varname{turn} \writeop i$}
    \State{$\varidx{flag}{i} \writeop \false$}
\end{algorithmic}
\end{algorithm}
\end{minipage}
\end{table}

In the original presentation of the Attiya-Welch algorithm, no claims are made about its correctness with non-atomic registers.
It is therefore perhaps surprising to note that it does satisfy many properties with non-atomic registers.
In \cite{spronck2023process}, we showed that the original presentation of the Attiya-Welch algorithm satisfies reachability of the critical section, a property weaker than deadlock freedom, with both safe and regular registers.
Here, we show that while it satisfies starvation freedom with regular registers, it only satisfies deadlock freedom with safe registers.
An execution violating starvation freedom with safe registers is the following:
\begin{itemize}
    \item Thread 0 runs through lines 1 through 9 without competition; since $\varidx{flag}{1} = \false$ it can reach line 9 without problem. On line 10, it starts writing $0$ to $\varname{turn}$, which is already $0$.
    \item Thread 1 executes line 1, setting $\varidx{flag}{1}$ to $\false$, which it already was. On line 2, it reads $\varidx{flag}{0} = \true$ and $\varname{turn} = 1$. Note that the value $1$ has never been written to $\varname{turn}$, but due to the read overlapping with thread 0's write, the value can still be read. Thread 1 can therefore not proceed through line 2.
    \item Thread 0 finishes the exit protocol, setting $\varname{turn}$ to $0$ and $\varidx{flag}{0}$ to $\false$.
\end{itemize}
At this point, we  have $\varidx{flag}{0} = \varidx{flag}{1} = \false$ and $\varname{turn}=0$, the same values the variables had at the start. Hence, thread 0 can reach line 10 again, without interference by thread 1. By having thread 1 always read $\varidx{flag}{0}$ and $\varname{turn}$ at exactly the wrong time, it will remain forever in line 2, without ever reaching the critical section.

Note that this execution relies on reading a value that has never been written. We can adjust the algorithm so that $\varname{turn}$ is read before it is written, and only updated if it does not already have the intended value. Starvation freedom is then satisfied with safe registers.

We find that the Attiya-Welch algorithm does not satisfy starvation freedom with atomic registers under $\justact{\conc_S}{\block}$. 
This violation is rather trivial: as long as $\varidx{flag}{1} = \false$, thread 0 can infinitely often execute the algorithm to get to its critical section. During this execution, it will write to $\varidx{flag}{0}$ repeatedly, preventing thread 1 from reading $\varidx{flag}{0}$ on line 2. Hence, thread 1 can be prevented from ever reaching the critical section if a write can block a read. 
Note that on line 2 of the algorithm, a thread already wants access to its critical section but has in no way communicated this to the other thread.
This is not something we can fix by merely slightly altering a condition or reading a variable before writing; it would require more significant alterations to the algorithm so that a thread has to communicate its intention before attempting to read a register another thread can infinitely often write to.
We do not explore such alterations here.

Instead, we turn our attention to the variant presentation from \cite{Shao11}.
In this version, the goto statement has been eliminated.

In \cite{Shao11}, it is claimed that the algorithm satisfies all three properties for all four interpretations of MWMR regular registers proposed in that paper.
A comparison between their definitions and our regular register model is given in \cite{spronck2023process}; here it suffices to observe that their weakest definition is weaker than our regular register model.
Hence, we would expect all three properties to be satisfied by our regular model.
As can be observed in \cref{tab:results}, this is not the case.
While the two presentations of the algorithm are seemingly equivalent, the altered pseudocode suggests the $\varname{turn}$ variable needs to be read twice in a row, where it is read only once on line 4 of the original presentation.
This allows new-old inversion to occur with the non-atomic register models, and causes the variant presentation to no longer satisfy deadlock freedom with non-atomic registers.
An execution demonstrating the violation is given in \cite{spronck2023process}.
If we alter the model so that the value of $\varname{turn}$ is only read once for the two conditions, we see the expected behaviour, given the results in \cite{Shao11}.
If we then also make the adjustment that $\varname{turn}$ is only written to when its value would actually change, we see the same behaviour as our altered version of the original presentation.
The issue of starvation freedom not being satisfied for atomic registers with $\justact{\conc_S}{\block}$ is also present in this variant.

\subsection{Bar-David's algorithm}\label{sec:dftosf}
In \cite[Section 2.2.2]{Raynal13}, an algorithm is presented to turn any mutual exclusion algorithm that satisfies mutual exclusion and deadlock freedom into one that satisfies starvation freedom as well. It is due to Yoah Bar-David (1998) and first appears in \cite{taubenfeld2006synchronization}.
We take the pseudocode from \cite{GG}, where it is proven that this algorithm works for safe, regular and atomic registers. This algorithm works for arbitrary $N$.
In addition to the registers used by the deadlock-free algorithm, this algorithm uses a Boolean $\varname{flag}$ for every thread, initialised to $\false$, and a shared register over $\TID$ called $\varname{turn}$, initialised to an arbitrary thread id.
Naturally, these registers must be distinct from those used in the deadlock-free algorithm.

To confirm the correctness of the algorithm experimentally, we applied it
to Lamport's 1-bit algorithm \cite{Lamport86Mutex2}, which (by design) does not satisfy starvation freedom at all, and to the RW-safe version of Dekker's algorithm \cite{buhr2016dekker}, where starvation freedom under $\justact{\conc_S}{\block}$ fails due to writes interfering with reads.
The results are given in the table with ``+ Bar-David''. 
We indeed find that starvation freedom is now satisfied where previously only deadlock freedom was.

\begin{table}[t!]
\vspace{-1em}
\noindent\begin{minipage}[t]{0.55\textwidth}
\begin{algorithm}[H]
\caption{Bar-David's algorithm}\label{alg:GG}
\begin{algorithmic}[1]
    \State{$\varidx{flag}{i} \writeop \true$}
    \Repeat
        \State{$\varname{tmp} \writeop \varname{turn}$}
    \Until{$\varname{tmp} = i \lor \varidx{flag}{tmp} = \false$}
    \State{\textbf{entry protocol of deadlock-free algorithm}}
    \State{\textbf{critical section}}
    \State{$\varidx{flag}{i} \writeop \false$}
    \State{$\varname{tmp} \writeop \varname{turn}$}
    \If{$\varidx{flag}{tmp} = \false$}
        \State{$\varname{turn} \writeop (\varname{tmp} + 1) \mod N$}
    \EndIf
    \State{\textbf{exit protocol of deadlock-free algorithm}}
\end{algorithmic}
\end{algorithm}
\end{minipage}
\hfill
\noindent\begin{minipage}[t]{0.4\textwidth}
\begin{algorithm}[H]
\caption{Dekker's algorithm}\label{alg:dekker}
\begin{algorithmic}[1]
    \State{$\varidx{flag}{i} \writeop \true$}
    \While{$\varidx{flag}{j} = \true$}
        \If{$\varname{turn} = j$}
            \State{$\varidx{flag}{i} \writeop \false$}
            \State{\textbf{await} $\varname{turn} = i$}
            \State{$\varidx{flag}{i} \writeop \true$}
        \EndIf
    \EndWhile
    \State{\textbf{critical section}}
    \State{$\varname{turn} \writeop j$}
    \State{$\varidx{flag}{i} \writeop \false$}
\end{algorithmic}
\end{algorithm}
\end{minipage}
\end{table}

\subsection{Burns-Lynch's algorithm and Lamport's 1-bit algorithm}\label{sec:burns-lynch}
The Burns-Lynch algorithm \cite{burns1993bounds} and Lamport's 1-bit algorithm \cite{Lamport86Mutex2} are two very similar algorithms, albeit with slightly different presentations. Both algorithms are designed for an arbitrary $N$, and both use only a single shared Boolean per thread.
In \cite{Lamport86Mutex2}, Lamport makes the explicit claim that this algorithm satisfies mutual exclusion and deadlock freedom with safe registers;
neither algorithm was designed to satisfy starvation freedom.
In \cite{buhr2015high}, it is claimed that both work with non-atomic registers, albeit without satisfying starvation freedom.
With the Attiya-Welch algorithm, we saw that minor differences in presentation can impact the correctness of an algorithm.
This does not appear to be the case with these two algorithms; they show the same, and the expected, behaviour.

\subsection{Dekker's algorithm}\label{sec:dekker}

Dekker's algorithm originally appears in \cite{dijkstra1962over}. There is no clear pseudocode given there, so we use the pseudocode from \cite{alagarsamy2003some}, here given as \cref{alg:dekker}.
The algorithm uses a Boolean $\varname{flag}$ per thread, initially $\false$, and a multi-writer register $\varname{turn}$, initially an arbitrary thread id.
An execution showing that Dekker's algorithm does not satisfy starvation freedom with safe registers is reported in~\cite{buhr2016dekker}.
This same execution can be found by mCRL2. Let $\varname{turn}$ be initially $0$:
\begin{itemize}
    \item Thread 0 goes through the algorithm without competition, and starts setting $\varidx{flag}{0}$ to $\false$ in the exit protocol, when it is currently $\true$.
    \item Thread 1 starts the competition and reads $\varidx{flag}{0} = \false$, the new value, on line 2. It can therefore go to the critical section, and set $\varname{turn}$ to $0$ in the exit protocol.
    \item Thread 1 then starts the competition again, now reading $\varidx{flag}{0} = \true$, the old value, on line 2. Since $\varname{turn} = 0$, it goes to line 5 and starts waiting for $\varname{turn}$ to be $1$.
    \item Thread 0 finishes the exit protocol and never re-attempts to enter the critical section.
\end{itemize}
Since thread 0 will never set $\varname{turn}$ to $1$, thread 1 can never escape line 5.
This execution violates deadlock freedom as well as starvation freedom, and is also applicable to regular registers.
The phenomenon where two reads concurrent with the same write return first the new and then the old value is called new-old inversion, and is explicitly allowed by Lamport in his definitions of safe and regular registers \cite{Lamport86IPCalg}.
An interesting quality of this execution is that it relies on thread 0 only finitely often executing the algorithm. If we did not define the actions $\noncrit[\tid]$ for all $\tid \in \TID$ to be blockable, this execution would be missed.

In \cite{buhr2016dekker} the following improvements are suggested to make the algorithm ``RW-safe'', i.e., correct with safe registers: on line 5, \textbf{await} $\varname{turn} = i \lor \varidx{flag}{j} = \false$, and on line 8, only write to $\varname{turn}$ if its value would be changed.
Our model checking confirms that with these alterations, starvation freedom is satisfied with both safe and regular registers.

In \cite{nigro2024modeling}, it is claimed that Dekker's algorithm without alterations is correct with non-atomic registers.
Instead of dealing with the spurious violations of liveness properties via completeness criteria, they use the model checking tool UPPAAL \cite{Behrmann2004} to compute the maximum number of times a thread may be overtaken by another thread. They determine that bound to be finite and conclude starvation freedom is satisfied.
However, the deadlock freedom violation observed here and in \cite{buhr2016dekker} shows a thread never gaining access to the critical section while only being overtaken once.
Hence, finding a finite upper bound to the number of overtakes is insufficient to establish deadlock freedom.

As can be observed in \cref{tab:results}, both the version of Dekker's algorithm presented in \cite{alagarsamy2003some} and the RW-safe version from \cite{buhr2016dekker} are starvation-free with atomic registers under $\justact{\conc_T}{\block}$, but only deadlock-free under $\justact{\conc_S}{\block}$. In both variants, this is because one thread, say $0$, can remain stuck on line 5 trying to perform a read, while the other thread, in this case $1$, repeatedly executes the full algorithm without having to wait on thread $0$, since $\varidx{flag}{0} = \false$. In the process, thread $1$ writes to the variable that thread $0$ is trying to read, meaning this execution is just under $\justact{\conc_S}{\block}$. The observation that Dekker's algorithm satisfies starvation freedom only when reads are non-blocking and writes that do not change a value are non-blocking is made in \cite{CDV09}.

That it is those writes specifically that lead to the starvation freedom violation can be observed by altering the algorithm so that the relevant write, namely to $\varname{turn}$ on line 8, is only performed when it would change the value; we can then observe that starvation freedom is satisfied.
This change is part of the changes suggested in \cite{buhr2016dekker}, yet that version of the algorithm is not starvation-free under $\justact{\conc_S}{\block}$.
This is due to the other change: on line 5, thread $0$ now also has to read $\varidx{flag}{1}$, the value of which thread $1$ does change every time it executes the algorithm. 
This violation therefore cannot be fixed so easily.
We discussed in \cref{sec:dftosf} that this algorithm does satisfy starvation freedom when combined with the Bar-David algorithm, but this is a significantly more involved process.

\subsection{Dijkstra's algorithm}\label{sec:dijkstra}
Dijkstra's algorithm \cite{dijkstra65} is given as \cref{alg:dijkstra}.
It works for arbitrary $N$.
Every thread $i$ has two Booleans: $\varidx{b}{i}$ and $\varidx{c}{i}$, both initialised to $\true$.
There is also a global register $\varname{k}$ over $\TID$, which is initially an arbitrary thread id.

\begin{table}[ht!]
\vspace{-1em}
\noindent\begin{minipage}[t]{0.54\textwidth}
\begin{algorithm}[H]
\caption{Dijkstra's algorithm}\label{alg:dijkstra}
\begin{algorithmic}[1]
    \State{$\varidx{b}{i} \writeop \false$}\label{dijkstra-Li0}
    \If{$\varname{k} \neq i$}\label[line]{dijkstra-Li1}
        \State{$\varidx{c}{i} \writeop \true$}\label{dijkstra-Li2}
        \If{$\varidx{b}{k} = \true$}\label{dijkstra-Li3}
            \State{$\varname{k} \writeop i$}
        \EndIf
        \State{\textbf{goto} \cref{dijkstra-Li1}}
    \Else
        \State{$\varidx{c}{i} \writeop \false$}\label{dijkstra-Li4}
        \For{$j$ \textbf{from} $0$ \textbf{to} $N - 1$}
            \If{$j \neq i \land \varidx{c}{j} = \false$}
                \State{\textbf{goto} \cref{dijkstra-Li1}}
            \EndIf
        \EndFor
    \EndIf
    \State{\textbf{critical section}}
    \State{$\varidx{c}{i} \writeop \true$}
    \State{$\varidx{b}{i} \writeop \true$}
\end{algorithmic}
\end{algorithm}
\vspace{-1em}
\begin{algorithm}[H]
\caption{Kessels's algorithm}\label{alg:kessels}
\begin{algorithmic}[1]
    \State{$\varidx{q}{j} \writeop \true$}
    \State{$\varidx{r}{j} \writeop (\varidx{r}{i} + j) \mod 2$}
    \State{\textbf{await} $\varidx{q}{i} = \false \lor (\varidx{r}{j} \neq ((\varidx{r}{i} + j) \mod 2))$}
    \State{\textbf{critical section}}
    \State{$\varidx{q}{j} \writeop \false$}    
\end{algorithmic}
\end{algorithm}
\end{minipage}
\hfill
\noindent\begin{minipage}[t]{0.43\textwidth}
\begin{algorithm}[H]
  \caption{Knuth's algorithm}\label{alg:knuth}
  \begin{algorithmic}[1]
    \State{$\varidx{control}{i} \writeop 1$}\label{knuth-L0}
    \For{$j$ \textbf{from} $\varname{k}$ \textbf{downto} $0$}\label{knuth-L1}
        \If{$j = i$}
            \State{\textbf{goto} \cref{knuth-L2}}
        \EndIf
        \If{$\varidx{control}{j} \neq 0$}
            \State{\textbf{goto} \cref{knuth-L1}}
        \EndIf
    \EndFor
    \For{$j$ \textbf{from} $N{-}1$ \textbf{downto} $0$}
        \If{$j = i$}
            \State{\textbf{goto} \cref{knuth-L2}}
        \EndIf
        \If{$\varidx{control}{j} \neq 0$}
            \State{\textbf{goto} \cref{knuth-L1}}
        \EndIf
    \EndFor    
    \State{$\varidx{control}{i} \writeop 2$}\label{knuth-L2}
    \For{$j$ \textbf{from} $N{-}1$ \textbf{downto} $0$}
        \If{$j \neq i \land \varidx{control}{j} = 2$}
            \State{\textbf{goto} \cref{knuth-L0}}
        \EndIf
    \EndFor
    \State{$\varname{k} \writeop i$}\label{knuth-L3}
    \State{\textbf{critical section}}
    \If{$i = 0$}
        \State{$\varname{k} \writeop N{-}1$}
    \Else  
        \State{$\varname{k} \writeop i {-} 1$}
    \EndIf
    \State{$\varidx{control}{i} \writeop 0$}\label{knuth-L4}
    \end{algorithmic}
\end{algorithm}
\end{minipage}
\end{table}

That Dijkstra's algorithm does not satisfy starvation freedom for any memory model is unsurprising, since Dijkstra did not consider this property in his original presentation of the mutual exclusion problem.
He also did not consider non-atomic registers or the impact of blocking, so our results do not contradict any of his claims.
For completeness, we do present executions witnessing the deadlock freedom violations we found. 

The violation with safe registers is rather straightforward: consider the case that $k$ is initially $2$, but thread 2 never leaves its non-critical section. Instead, thread 0 and thread 1 both loop through lines 2 through 6 endlessly: they read $k = 2$ on line 2 and $\varidx{b}{2} = \true$ on line 4 simultaneously, and they simultaneously attempt to set $k$ to their own id. Due to the overlapping writes, the result is that $k$ is set to $2$ again, thus allowing the loop to continue.
This execution is dependent on writing an unintended value, hence why it does not occur with regular registers.

Additionally, there is the following execution with two threads and atomic registers under $\conc_S$, where $k$ is initially $0$:
\begin{itemize}
    \item Thread 0 starts the algorithm; since $k = 0$ it can immediately go to line 8 after executing line 1. Here it sets $\varidx{c}{0}$ to $\false$. On line 10, it needs to read $\varidx{c}{1}$, but this read is blocked by thread 1.
    \item Thread 1 starts the algorithm. It sees $k =0$ so it goes to line 3, where it writes to $\varidx{c}{1}$. Since $\varidx{b}{0} = \false$, it then immediately goes back to line 2, at which point it loops while repeatedly setting $\varidx{c}{1}$ to $\true$.
\end{itemize}
This counterexample can easily be fixed by only performing the writes on lines 3 and 8 when they would change the value.

\subsection{Kessels' algorithm and Peterson's algorithm}\label{sec:kessels}
Kessels's algorithm (\cref{alg:kessels}) is a variant of Peterson's (\cref{alg:peterson}), presented in \cite{kessels1982arbitration}.
It only uses single-writer registers, where Peterson's also uses a multi-writer register.
Kessels' algorithm's basic form is designed for $N = 2$, and each thread $i$ has two variables: a Boolean $\varidx{q}{i}$, initialised at $\false$, and $\varidx{r}{i}$ over $\TID$, which has an arbitrary initial value.

Since Kessels' algorithm is a variant of Peterson's, merely with only single-writer registers, it is unsurprising that the two algorithms give the same results to our verification.

In \cite{spronck2023process}, we already gave a violating execution showing that Peterson's algorithm does not satisfy mutual exclusion with non-atomic registers.
In the current paper, we extend our verification by letting the initial value of $\varname{turn}$ be an arbitrary thread id.
We do give an execution that witnesses that Kessels' algorithm violates mutual exclusion with regular (and thus also with safe) registers. Like many of the executions discussed in this section, this execution relies on new-old inversion. Let the initial values of $\varidx{r}{i}$ be $0$ for all $i \in \TID$.
\begin{itemize}
    \item Thread 0 sets $\varidx{q}{1}$ to $\true$, and then reads $\varidx{r}{0} = 0$. It therefore starts writing $1$ to $\varidx{r}{1}$, which was $0$.
    \item Thread 1 sets $\varidx{q}{0}$ to $\true$, and then reads $\varidx{r}{1}$ with overlap, reading the new value $1$. It therefore writes $1$ to $\varidx{r}{0}$. It then reads $\varidx{r}{0} = 1$ and $\varidx{r}{1} = 0$, the latter being the old value. Since $1 \neq (0 + 0) \mod 2$, it can reach its critical section.
    \item Thread 0 finishes its write to $\varidx{r}{1}$, and on line 3 reads $\varidx{r}{1} = 1$ and $\varidx{r}{0} = 1$; since $1 \neq (1 + 1)\mod 2$, it can reach its critical section.
\end{itemize}

These results do not contradict any claims made by Kessels or Peterson, since they did not claim their algorithms were correct with non-atomic registers.
It is interesting to note, however, that merely replacing multi-writer registers with single-writer Booleans does not automatically mean that an algorithm is robust for non-atomic registers.

\subsection{Knuth's algorithm}\label{sec:knuth}
Knuth's algorithm (\cref{alg:knuth}) comes from \cite{knuth1966additional}.
It works for arbitrary $N$.
Every thread $i$ has a register $\varidx{control}{i}$, which is over $\{0,1,2\}$, initialised to $0$.
Additionally, there is a global register $\varname{k}$ over $\TID$, also initialised to $0$.\footnote{In Knuth's original presentation \cite{knuth1966additional}, $\varname{k}$ was initialised to $-1$, which there was called $0$, as the threads ranged from $1$ to $N$. This difference is immaterial; when thread 1 goes once through  the algorithm without contention, $k$ is set to $0$, so this could just as well be the initial value.}

The goal of Knuth's algorithm is to improve upon Dijkstra's result by having starvation freedom in addition to deadlock freedom. This goal is indeed accomplished with atomic registers, as we see in \cref{tab:results}. Knuth makes no claims on this algorithm's behaviour with non-atomic registers. However, it is interesting that there is a difference between the behaviour with safe and regular registers.
An execution violating deadlock freedom for two threads and safe registers is as follows:
\begin{itemize}
    \item Thread 1 starts setting $\varidx{control}{1}$ to $1$
    \item Thread 0 starts the competition; since $\varname{k} = 0$, it goes to line 12 and sets $\varidx{control}{0}$ to $2$. When it reads $\varidx{control}{1}$ on line 14, it reads a $2$, which has never been written. Hence, it has to return to line 1, and hence start writing to $\varidx{control}{0}$.
    \item Thread 1 finishes line 1 and goes to line 2, where it has to read $\varidx{control}{0}$. It reads $\varidx{control}{0} = 0$, not the new or old value, and hence goes to line 7. There, since $N - 1 = 1$, it goes to line 12 and sets $\varidx{control}{1}$ to $2$. It then goes to line 13, where it finds that $\varidx{control}{0} = 2$, due to the overlap with thread 0's write. Hence, it has to return to line 1, where it starts writing to $\varidx{control}{1}$ again.
\end{itemize}
At this point, we are back where we started, with both threads at the beginning of the algorithm. We see there is a loop where both threads continually make progress, but always get sent back on line 15.
Since the value of $\varname{control}$ changes when written, we cannot fix this by merely ensuring that a register is only written to when its value would change.

\subsection{Lamport's 3-bit algorithm}\label{sec:lamport-3bit}
Lamport's 3-bit algorithm \cite{Lamport86Mutex2} is designed to improve upon the 1-bit algorithm by satisfying starvation freedom as well.
As we can see in \cref{tab:results}, this goal is accomplished.
We discussed Lamport's 3-bit algorithm in detail in \cite{spronck2023process}. The most important observation is that, similar to the variant of the Attiya-Welch algorithm, it is very important that certain registers are only read once when their value is referenced in multiple comparisons, since otherwise new-old inversion can occur.

\begin{table}[ht!]
\vspace{-1.5em}
\noindent\begin{minipage}[t]{0.48\textwidth}
\begin{algorithm}[H]
\caption{Peterson's new solution}\label{alg:peterson-new-int}
\begin{algorithmic}[1]
    \State{$\varidx{c}{i} \writeop \mathit{left}_i$}
    \State{$\mathit{tick}_i$}
    \State{$\mathit{tick}_i$}
    \State{$s \writeop \varidx{c}{i}$}
    \State{$\varidx{c}{i} \writeop s$}\label{new-reset}
    \For{$j$ \textbf{from} $i + 1$ \textbf{to} $N - 1$}
        \State{\textbf{await} $\varidx{c}{j} \neq 3$}
    \EndFor
    \State{$\varidx{c}{i} \writeop 3$}
    \For{$j$ \textbf{from} $i + 1$ \textbf{to} $N - 1$}
        \If{$\varidx{c}{j} = 3$}
            \State{\textbf{goto} \cref{new-reset}}
        \EndIf
    \EndFor
    \For{$j$ \textbf{from} $0$ \textbf{to} $i - 1$}
        \State{\textbf{await} $\varidx{c}{j} \neq 3$}
    \EndFor
    \State{\textbf{critical section}}
    \State{$\varidx{c}{i} \writeop 0$}
\end{algorithmic}
\end{algorithm}
\vspace{-1em}
\begin{algorithm}[H]
\caption{The $\mathit{tick}_i$ algorithm}\label{alg:tick}
\begin{algorithmic}[1]
    \State{\textbf{await} $\varidx{c}{i} \neq \mathit{left}_i$}\label{tick-wait}
    \For{$j$ \textbf{from} $0$ \textbf{to} $i - 1$}
        \State{$t \writeop \varidx{c}{j}$}
        \If{$t = \varidx{c}{i} \lor t = 3$}
            \State{\textbf{goto} \cref{tick-wait}}
        \EndIf 
    \EndFor 
    \State{$\varidx{c}{i} \writeop 3 - \varidx{c}{i}$}
\end{algorithmic}
\end{algorithm}
\vspace{-1em}
\begin{algorithm}[H]
\caption{The $\mathit{left}_i$ algorithm}\label{alg:left}
\begin{algorithmic}[1]
    \For{$j$ \textbf{from} $i - 1$ \textbf{down to} $0$}
        \State{$\varname{t} \writeop \varidx{c}{j}$}
        \If{$\varname{t} = 1 \lor \varname{t} = 3$}
            \State{\textbf{return} $1$}
        \EndIf
        \If{$\varname{t} = 2$}
            \State{\textbf{return} $2$}
        \EndIf
    \EndFor
    \For{$j$ \textbf{from} $N - 1$ \textbf{down to} $i$}
        \State{$t \writeop \varidx{c}{j}$}
        \If{$t = 1 \lor t = 3$}
            \State{\textbf{return} $2$}
        \EndIf
        \If{$t = 2$}
            \State{\textbf{return} $1$}
        \EndIf 
    \EndFor
    \State{\textbf{return} $1$}
\end{algorithmic}
\end{algorithm}
\end{minipage}
\hfill
\noindent\begin{minipage}[t]{0.47\textwidth}
\begin{algorithm}[H]
\caption{The $\mathit{read\_c}(i)$ algorithm}\label{alg:read-bits}
\begin{algorithmic}[1]
    \For{$k$ \textbf{from} $1$ \textbf{to} $2$}
        \If{$\varidx{c_1}{i}$}
            \If{$\varidx{c_2}{i}$}
                \State{\textbf{return} $3$}
            \Else
                \State{\textbf{return} $1$}
            \EndIf
        \EndIf
        \If{$\varidx{c_2}{i}$}
            \State{\textbf{return} $2$}
        \EndIf
    \EndFor
    \State{\textbf{return} $0$}
\end{algorithmic}
\end{algorithm}
\vspace{-1em}
\begin{algorithm}[H]
\caption{The $\mathit{write\_c}(i, \varname{old}, \varname{new})$\\ algorithm}\label{alg:write-bits}
\begin{algorithmic}[1]
    \If{$\varname{old} = 0$}
        \If{$\varname{new} = 1$}
            \State{$\varidx{c_1}{i} \writeop \true$}
        \ElsIf{$\varname{new} = 2$}
            \State{$\varidx{c_2}{i} \writeop \true$}
        \EndIf
    \ElsIf{$\varname{old} = 1$}
        \If{$\varname{new} = 2$}
            \State{$\varidx{c_2}{i} \writeop \true$}
            \State{$\varidx{c_1}{i} \writeop \false$}
        \ElsIf{$\varname{new} = 3$}
            \State{$\varidx{c_2}{i} \writeop \true$}
        \EndIf
    \ElsIf{$\varname{old} = 2$}
        \If{$\varname{new} = 1$}
            \State{$\varidx{c_1}{i} \writeop \true$}
            \State{$\varidx{c_2}{i} \writeop \false$}
        \ElsIf{$\varname{new} = 3$}
            \State{$\varidx{c_1}{i} \writeop \true$}
        \EndIf
    \ElsIf{$\varname{old} = 3$}
        \If{$\varname{new} = 0$}
            \State{$\varidx{c_1}{i} \writeop \false$}
            \State{$\varidx{c_2}{i} \writeop \false$}
        \ElsIf{$\varname{new} = 1$}
            \State{$\varidx{c_2}{i} \writeop \false$}
        \ElsIf{$\varname{new} = 2$}
            \State{$\varidx{c_1}{i} \writeop \false$}
        \EndIf
    \EndIf
\end{algorithmic}
\end{algorithm}
\end{minipage}
\end{table}

\subsection{Peterson's new solution}\label{sec:peterson-new}
In \cite{peterson1983new}, Peterson provides a new solution to the mutual exclusion problem, one that also considers Lamport's extension of the problem (i.e.\ considering non-atomic registers). He gives two variants: an integer version and a Boolean version, both for arbitrary $N$.
He claims that only the Boolean version is immune to non-atomic registers.

We first discuss the integer version (\cref{alg:peterson-new-int}).
Every thread has a single shared variable $\varidx{c}{i}$ which ranges over the domain $\{0, 1, 2, 3\}$, initially $0$.
The solution uses two sub-algorithms, which get passed the thread id as well: $\mathit{left}_i$ (\cref{alg:left}), which returns either $1$ or $2$, and $\mathit{tick}_i$ (\cref{alg:tick}), which has no return value.

Interestingly enough, the integer version of the algorithm already satisfies all three properties with regular registers, and with safe registers it only violates starvation freedom.
We can see this starvation freedom violation with two threads with the following execution:
\begin{itemize}
    \item Thread 0 executes the first 8 lines of the main algorithm without thread 1 doing anything, which results in $\varidx{c}{0} = 3$, with $s$ for thread 0 being $1$. It then starts reading $\varidx{c}{1}$ on line 10. 
    \item Thread 1 starts writing $1$ to $\varidx{c}{1}$ on line 1.
    \item Thread 0, due to overlap, can now read $3$ in $\varidx{c}{1}$, meaning it has to return to line 5 to reset $\varidx{c}{0}$ to $1$. It then gets through line 7 by reading $\varidx{c}{1} \neq 3$, and starts writing $3$ to $\varidx{c}{0}$ again on line 8. 
    \item Thread 1 finishes its write on line 1. By reading the right values for $\varidx{c}{0}$ at the right times, possible due to overlap, it can get through both calls to $\mathit{tick}_1$, ending with $\varidx{c}{1} = 1$. It executes lines 4 and 5 and can skip lines 6 and 7 because it has the highest thread id, and then begins setting $\varidx{c}{1}$ to $3$ on line 8.
    \item Thread 0 can complete its write on line 8 now. On line 10, it sees $\varidx{c}{1} = 3$, so it returns to line 5. It then proceeds back to line 8, since due to overlap it can read $\varidx{c}{1} \neq 3$ on line 7. It starts the write to $\varidx{c}{0}$ on line 8.
    \item Thread 1 finishes its write on line 8. It can once again skip lines 9 to 11 because it has the highest id. On lines 12 and 13, it can read $\varidx{c}{0} \neq 3$ due to overlap, allowing it to enter its critical section. In its exit protocol, it sets $\varidx{c}{1}$ back to $0$.
    \item Thread 1 can now once again attempt to enter the critical section. Due to overlapping reads on $\varidx{c}{0}$, it can get through lines 1 to 3, ending with $\varidx{c}{1} = 1$.
\end{itemize}
We have now reached a situation that was also encountered previously in this execution (in step 4): thread 0 is actively writing to $\varidx{c}{0}$ on line 8, and thread 1 is about to execute line 4. 
We have therefore found a loop where thread 1 endlessly enters the critical section while thread 0 stays in its entry protocol. This execution is just because thread 0 is still performing actions, even though it never reaches the critical section.

Note that this scenario is only possible with safe registers. For example, thread 0 reads a $3$ in $\varidx{c}{1}$ while thread 1 is changing the value from $0$ to $1$.
This is not possible with regular registers, hence  this violation does not appear there.
Every write that causes a problem is a write that truly changes the value of the register, so we cannot fix this by preventing unnecessary writes.
Hence, there does not appear to be an easier solution here than the one suggested by Peterson of changing the integers to Booleans.

To make the Boolean version, Peterson replaces $\varidx{c}{i}$ with Booleans $\varidx{c_1}{i}$ and $\varidx{c_2}{i}$, both initialised to $\false$. 
The resulting algorithm is the same, but reads from and writes to $\varidx{c}{i}$ are replaced with function calls $\mathit{read\_c}(i)$ and $\mathit{write\_c}(i, \varname{old}, \varname{new})$ instead. 
To be more precise: whenever a read of a register $\varidx{c}{i}$ is performed in \cref{alg:peterson-new-int,alg:tick,alg:left}, the function $\mathit{read\_c}(i)$ is called instead; this function reads variables $\varidx{c_1}{i}$ and $\varidx{c_2}{i}$ and returns the appropriate value for $\varidx{c}{i}$. When a write of value $x$ to $\varidx{c}{i}$ is performed, the thread first performs a read of $\varidx{c}{i}$ (using $\mathit{read\_c}(i)$) to determine the current value $y$, and then calls $\mathit{write\_c}(i, y, x)$ to correctly update $\varidx{c_1}{i}$ and $\varidx{c_2}{i}$.
The read and write functions are presented in \cref{alg:read-bits} and \cref{alg:write-bits} respectively.
The change from integers to Booleans makes the algorithm satisfy starvation freedom with safe registers.

The integer and Boolean variants both violate starvation freedom with atomic registers under the concurrency relation $\conc_S$. 
This violation already appears with two threads: the first thing thread 1 must do when it begins the algorithm is read $\varidx{c}{0}$ (resp.\ $\varidx{c_1}{0}$) as part of $\mathit{left}_1$. At this point, it has not performed any writes yet and so it cannot prevent thread 0 from running through the whole algorithm. Whenever thread 0 runs through the algorithm, it performs various writes to $\varidx{c}{0}$ (resp.\ $\varidx{c_1}{0}$). Hence, if writes can block reads, then thread 1 can be indefinitely prevented from completing its first operation.

\subsection{Szymanski's flag algorithm}\label{sec:szymanski-flag}
In \cite{Szy88}, Szymanski presents the flag algorithm, which we here give as \cref{alg:szymanski-flag}.
It is designed for arbitrary $N$.
We give the pseudocode with a minor fix of an obvious typo: on \cref{alg:Szy-flag:exit}, we use a $\lor$ instead of a $\land$. 
In its original presentation, the flag algorithm uses a single integer variable $\varname{flag}$ per thread ranging over $\{0, 1, 2, 3,4\}$, initially $0$. 

\begin{table}[t!]
\vspace{-1em}
\noindent\begin{minipage}[t]{0.48\textwidth}
\begin{algorithm}[H]
\caption{Szymanski's flag algorithm}\label{alg:szymanski-flag}
\begin{algorithmic}[1]
  \State{$\varidx{flag}{i} \writeop 1$}
  \State{\textbf{await} $\forall_j:\ \varidx{flag}{j} < 3$}
  \State{$\varidx{flag}{i} \writeop 3$}
  \If{$\exists_j:\ \varidx{flag}{j}=1$}
    \State{$\varidx{flag}{i} \writeop 2$}
    \State{\textbf{await} $\exists_j:\ \varidx{flag}{j}=4$}
  \EndIf
  \State{$\varidx{flag}{i} \writeop 4$}
  \State{\textbf{await} $\forall_{j < i}:\ \varidx{flag}{j} < 2$}
  \State{\textbf{critical section}}
  \State{\textbf{await} $\forall_{j>i}:\ \varidx{flag}{j} < 2 \lor \varidx{flag}{j} > 3$}\label{alg:Szy-flag:exit}
  \State{$\varidx{flag}{i} \writeop 0$}
\end{algorithmic}
\end{algorithm}
\end{minipage}
\hfill
\noindent\begin{minipage}[t]{0.48\textwidth}
\vspace{-0.6em}
\begin{table}[H]
\caption{Translating $\mathit{flag}$ to three Booleans}
\label{tab:szy-flag2bits}
\centering
\begin{tabular}{@{}c|ccc@{}}
\toprule
\textit{flag} & \textit{intent} & \textit{door\_in} & \textit{door\_out} \\ 
\midrule
0             & $\false$              & $\false$                & $\false$                 \\ 
1             & $\true$               & $\false$                & $\false$                 \\ 
2             & $\false$              & $\true$                 & $\false$                  \\ 
3             & $\true$               & $\true$                 & $\false$                  \\ 
4             & $\true$               & $\true$                 & $\true$                  \\ 
\bottomrule
\end{tabular}
\end{table}
\end{minipage}
\end{table}

It is claimed that, while the integer version of the algorithm is not correct with non-atomic registers, by converting the integers to three Booleans each, $\varname{door\_in}$, $\varname{door\_out}$ and $\varname{intent}$, the algorithm can be made correct for non-atomic registers.
According to \cite{Szy88}, this translation should be done according to \cref{tab:szy-flag2bits}.
This results in \cref{alg:Szy-flag-bits}. The three Booleans are all initialised as $\false$.

\begin{table}[ht!]
\vspace{-1em}
\noindent\begin{minipage}[t]{0.8\textwidth}
\begin{algorithm}[H]
\caption{Szymanski's flag algorithm implemented with Booleans}\label{alg:Szy-flag-bits}
\begin{algorithmic}[1]
  \State{$\varidx{intent}{i} \writeop \true$}\label{szy-bits-1}
  \State{\textbf{await} $\forall_j:\ \varidx{intent}{j} = \false \lor \varidx{door\_in}{j} = \false$}\label{szy-bits-2}
  \State{$\varidx{door\_in}{i}\writeop \true$}\label{szy-bits-3}
  \If{$\exists_j:\ \varidx{intent}{j} = \true \land \varidx{door\_in}{j} = \false$}\label{szy-bits-4}
    \State{$\varidx{intent}{i} \writeop \false$}\label{szy-bits-5}
    \State{\textbf{await} $\exists_j:\ \varidx{door\_out}{j} = \true$}\label{szy-bits-6}
  \EndIf
  \If{$\varidx{intent}{i} = \false$} {$\varidx{intent}{i} \writeop \true$}\EndIf
  \State{$\varidx{door\_out}{i} \writeop \true$}\label{szy-bits-9}
  \State{\textbf{await} $\forall_{j < i}:\ \varidx{door\_in}{j} = \false$}\label{szy-bits-10}
  \State{\textbf{critical section}}\label{szy-bits-11}
  \State{\textbf{await} $\forall_{j>i}:\ \varidx{door\_in}{j} = \false \lor \varidx{door\_out}{j} = \true$}\label{szy-bits-12}
  \State{$\varidx{intent}{i} \writeop \false$}\label{szy-bits-13}
  \State{$\varidx{door\_in}{i} \writeop \false$}\label{szy-bits-14}
  \State{$\varidx{door\_out}{i}\writeop \false$}\label{szy-bits-15}
\end{algorithmic}
\label{alg:szymanski-flag-bits}
\end{algorithm}
\end{minipage}
\end{table}

We analysed both variants in \cite{spronck2023process}, and found that neither algorithm is correct with non-atomic registers. Executions demonstrating these violations are provided there.
Additionally, we observed in \cite{spronck2023process} that the Boolean variant of the algorithm violates mutual exclusion even with atomic registers. We observed that the violating execution reported by mCRL2, given in full in \cite{spronck2023process}, relies on the exit protocol resetting the three Booleans in the order $\varname{intent} - \varname{door\_in} - \varname{door\_out}$. This is the order that is suggested by the final pseudocode provided by Szymanski in \cite[Figure 3]{Szy88}. We posited that the violation of mutual exclusion could be fixed by changing the order to be $\varname{door\_out} -\varname{intent} - \varname{door\_in}$ instead. Now that we can verify the liveness properties as well, we can confirm that this is indeed true: if this alternate exit protocol is used, the Boolean version of the flag protocol is equivalent to the integer version with respect to the properties verified in this paper.

\subsection{Szymanski's 3-bit linear wait algorithm}\label{sec:Szymanski}

\begin{table}[ht!]
\vspace{-1em}
\noindent\begin{minipage}[t]{0.8\textwidth}
\begin{algorithm}[H]
    \caption{Szymanski's 3-bit linear wait algorithm}
    \label{alg:szymanski-3bit}
    \begin{algorithmic}[1]
      \State{$\varidx{a}{i} \writeop \true$}\label{szy-3bit-1}
      \lFor{$j$ \textbf{from} $0$\ \textbf{to}\ $N{-}1$} {\textbf{await} $\varidx{s}{j} = \false$}\label{szy-3bit-2}
      \State{$\varidx{w}{i} \writeop \true$}\label{szy-3bit-3}
      \State{$\varidx{a}{i} \writeop \false$}\label{szy-3bit-4}
      \While{$\varidx{s}{i} = \false$}\label{szy-3bit-5}
        \State{$j \writeop 0$}\label{szy-3bit-6}
        \lWhile{$j < N \land \varidx{a}{j} = \false$}{$j\writeop j+1$}\label{szy-3bit-7}
          \If{$j=N$}\label{szy-3bit-8}
            \State{$\varidx{s}{i} \writeop \true$}\label{szy-3bit-9}
            \State{$j\writeop 0$}         \label{szy-3bit-10}
            \lWhile{$j<N\land \varidx{a}{j} = \false$}{$j\writeop j+1$}\label{szy-3bit-11}
            \If{$j< N$} {$\varidx{s}{i} \writeop \false$}\label{szy-3bit-12}
           \Else\label{szy-3bit-13}
             \State{$\varidx{w}{i} \writeop \false$}\label{szy-3bit-14}
             \lFor{$j$ \textbf{from} $0$\ \textbf{to} $N-1$}{\textbf{await} $\varidx{w}{j} = \false$}\label{szy-3bit-15}
           \EndIf
         \EndIf
         \If{$j<N$}\label{szy-3bit-16}
           \State{$j\writeop 0$}\label{szy-3bit-17}
           \lWhile{$j<N \land (\varidx{w}{j} = \true \lor \varidx{s}{j} = \false)$}{$j\writeop j+1$}\label{szy-3bit-18}
         \EndIf
         \If{$j \neq i \land j<N$}\label{szy-3bit-19}
         \State{$\varidx{s}{i} \writeop \true$}\label{szy-3bit-20}
         \State{$\varidx{w}{i} \writeop \false$}\label{szy-3bit-21}
         \EndIf
     \EndWhile
      \lFor{$j$ \textbf{from} $0$ \textbf{to} $i-1$}{\textbf{await} $\varidx{s}{j} = \false$}\label{szy-3bit-22}
      \State{\textbf{critical section}}\label{szy-3bit-23}
      \State{$\varidx{s}{i} \writeop \false$}\label{szy-3bit-24}
     \end{algorithmic}
  \end{algorithm}
\end{minipage}
\end{table}
  
In \cite{Szy90}, Szymanski proposes four mutual exclusion algorithms. Here, we discuss the first: the 3-bit linear wait algorithm, which is claimed to be correct with non-atomic registers.
See \cref{alg:szymanski-3bit}. Each thread has three Booleans, $\varname{a}$, $\varname{w}$, and $\varname{s}$, all initially $\false$.
An execution showing a mutual exclusion violation for safe, regular, and atomic registers with three threads for the 3-bit linear wait algorithm is given in \cite{spronck2023process}.

With two threads, we do still get a mutual exclusion violation with safe and regular registers, given in detail in \cite{spronck2023process}, but not with atomic registers.
The violation specifically relies on reading $\varidx{w}{j}$ before $\varidx{s}{j}$ on line 18, so that there can be new-old inversion on $\varidx{s}{j}$, while still obtaining the value of $\varidx{w}{j}$ from before the other thread started writing to $\varidx{s}{j}$.
We therefore considered the alternative, where $\varidx{s}{j}$ is read before $\varidx{w}{j}$ on line 18.
With this change and only two threads, the algorithm satisfies all expected properties.

The observation that Szymanski's 3-bit linear wait algorithm violates mutual exclusion with three threads is also made in \cite{nigro2024Verifying}. 
They suggest a different way to make the algorithm correct for two threads: instead of swapping the reads on line 18, they require a thread to also read $\varidx{w}{j}$ on line 22.
We did not investigate this suggested change further, as we prefer the solution that does not require an additional read.

\subsection{Szymanski's 4-bit robust algorithm}\label{sec:szymanski-4bit}
    We discussed the first of the four algorithms given in \cite{Szy90} in the previous section. After the 3-bit linear wait algorithm, a variant is given that is claimed to be robust to premature termination, abortion, failure and transient malfunction; then a 4-bit extension of the 3-bit algorithm that is claimed to satisfy the first-come first-serve priority property; and finally a variant of the 4-bit algorithm that is claimed to be robust to all aforementioned failures, including non-atomic registers.
  We elected to not cover all four algorithms, as we already established that their base, \cref{alg:szymanski-3bit}, is not correct with non-atomic registers.
  However, out of interest we do cover the final, 4-bit robust algorithm as well.

  The 4-bit robust algorithm comes from \cite[Figure 5]{Szy90}. We give it as \cref{alg:szymanski-4bitR}.
  Here, every thread has four Booleans $a$, $w$, $p$ and $s$, all initialised at $\false$.
  Additionally each thread has a private variable $c$ as a counter, which is initialised to $N {-} 1$. While this is not stated explicitly in the algorithm as presented by Szymanski, we enforce that the counter resets to $N {-} 1$ before \cref{Szy-4bitR-do}, since otherwise it could become negative on \cref{Szy-4bitR-45}.
  Every thread also has private variables $\varidx{la}{j}$ and $\varidx{lp}{j}$ for all $j \in \TID$.
  We write $r \optwriteop x$ as shorthand notation for reading $r$ and then performing $r \writeop x$ only if $r$ is not already $x$.

  With safe and regular registers, mutual exclusion is violated. As is to be expected, this issue originates from new-old inversion.
Consider the following execution with two threads:
\begin{itemize}
    \item Thread 0 and 1 both perform lines 1 through 4. Crucially, they both set $\varidx{la}{0} = \varidx{la}{1} = \false$.
    \item Thread 1 continues with lines 5 through 9. Here, it sets $\varidx{p}{1}$ and $\varidx{w}{1}$ to $\true$. On lines 10 to 14, it sees $\varidx{s}{0} = \varidx{s}{1} = \false$, so nothing significant happens. On line 15, it sees $\varidx{s}{1} = \false$, so it performs line 16 and 17 and sets $\varidx{a}{1}$ to $\false$. 
    \item Thread 0 performs lines 5 through 16. Here, it sets $\varidx{p}{0}$ and $\varidx{w}{0}$ to $\true$. On line 15, it sees $\varidx{s}{0} = \false$, so it continues into the while-loop. On line 17, it begins setting $\varidx{a}{0}$ to $\false$, but it does not complete this write yet. 
    \item Thread 1 continues from line 18. Due to new-old inversion, it can first read the new value of $\varidx{a}{0}$, which is $\false$. Thus, it can see $\varidx{a}{0} = \varidx{a}{1} = \false$ and so terminate line 19 with $j = 2 = N$. Hence, it sets $\varidx{s}{1}$ to $\true$. On line 23, it again sees $\varidx{a}{0} = \varidx{a}{1} = \false$, so terminates with $j = N$ and thus sets $\varidx{w}{1}$ to $\false$. Lines 28 to 36 are all skipped because $j = N$, so thread 1 returns to line 15 where it sees $\varidx{s}{1} = \true$. Hence, thread 1 continues to line 37. Here, it sees $\varidx{a}{0} = \true$, the old value of thread 0's write, instead. It also sees $\varidx{w}{1} = \false$, so it does not get caught in the while loop and can continue to line 41.
    \item Thread 0 completes its write to $\varidx{a}{0}$ on line 17, setting it to $\false$. At this point, $\varidx{a}{0} = \varidx{a}{1} = \false$, so line 19 terminates with $j = N$. Thread 0 therefore goes to line 21, but does not execute it yet.  
    \item Thread 1 continues its execution from line 41. Note that $\varidx{la}{0} =  \varidx{la}{1} = \false$, so line 43 terminates with $k = N$. Hence, it skips lines 44 to 54 and continues on line 55. Here it sees $\varidx{s}{0} = \false$, so it can enter its critical section. 
    \item Thread 0 continues on line 21, setting $\varidx{s}{0}$ to $\true$. Since $\varidx{a}{0} = \varidx{a}{1} = \false$, it terminates line 23 with $j = N$ and so executes line 27, setting $\varidx{w}{0}$ to $\false$. Since $j = N$, it then immediately returns to line 15 where it sees $\varidx{s}{0} = \true$. So it continues on line 37. Note that $\varidx{w}{0} = \varidx{w}{1} = \false$, so it does not change its variables and immediately goes to line 41. Since $\varidx{la}{0} = \varidx{la}{1} = \false$, it goes right to line 55. Here, since $0$ is the lowest thread id, it can immediately go to line 59 where it enters its critical section.
    \item Now thread 0 and thread 1 can both enter their critical sections.
\end{itemize}
This way, mutual exclusion is violated.

In the table, we give the results for atomic registers with $\conc_T$ and $\conc_S$ as $\satsf^{\star}$; this is because we were unable to confirm these results with three threads, due to the length of the algorithm and subsequent size of the state space. 
With 2 threads, starvation freedom is satisfied in these cases. 
Regardless, in terms of the properties we check the 4-bit robust algorithm is at most as good as the flag algorithm.

   \begin{algorithm}[hb!]
    \caption{Szymanski's 4 bit robust FCFS algorithm}
    \label{alg:szymanski-4bitR}
    \begin{algorithmic}[1]
      \State{$\varidx{a}{i} \writeop \true$}
      \For{$j$ \textbf{from} $0$ \textbf{to} $N - 1$}
        \State{$\varidx{la}{j} \writeop \varidx{w}{j}$}
        \State{$\varidx{lp}{j} \writeop \varidx{p}{j}$}
      \EndFor
      \If{$\varidx{p}{i} = \true$}
        \State{$\varidx{p}{i} \writeop \false$}
      \Else
        \State{$\varidx{p}{i} \writeop \true$}
      \EndIf
      \State{$\varidx{w}{i} \writeop \true$}
      \For{$k$ \textbf{from} $1$ \textbf{to} $N-1$}
        \For{$j$ \textbf{from} $0$ \textbf{to} $N - 1$}
            \While{$\varidx{s}{j} = \true$}
                \State{$\varidx{a}{i} \optwriteop \true$}
                \State{$\varidx{s}{i} \optwriteop \false$}
            \EndWhile
        \EndFor
      \EndFor
      \While{$\varidx{s}{i} = \false$}
        \State{$\varidx{w}{i} \optwriteop \true$}
        \State{$\varidx{a}{i} \optwriteop \false$}
        \State{$j \writeop 0$}
        \lWhile{$j < N \land \varidx{a}{j} = \false$} {$j \writeop j + 1$}
        \If{$j = N$}
            \State{$\varidx{s}{i} \writeop \true$}
            \State{$j \writeop 0$}
            \lWhile{$j < N \land \varidx{a}{j} = \false$} {$j \writeop j + 1$}
            \If{$j < N$}
                \State{$\varidx{s}{i} \writeop \false$}
            \Else
                \State{$\varidx{w}{i} \writeop \false$}
            \EndIf
        \EndIf
        \If{$j < N$}
            \State{$j \writeop 0$}
            \lWhile{$j < N \land (\varidx{w}{j} = \true \lor \varidx{s}{j} = \false)$} {$j \writeop j + 1$}
        \EndIf
        \If{$j \neq i \land j < N$}
            \State{$\varidx{s}{i} \writeop \true$}
            \If{$\varidx{s}{j} = \false$}
      \algstore{szy-4bit-split}
    \end{algorithmic}
\end{algorithm}
\clearpage

\begin{algorithm}[H]
    \begin{algorithmic}[1]
    \algrestore{szy-4bit-split}
                \State{$\varidx{s}{i} \writeop \false$}
            \Else
                \State{$\varidx{w}{i} \writeop \false$}
            \EndIf
        \EndIf
      \EndWhile
    \For{$j$ \textbf{from} $0$ \textbf{to} $N - 1$}
        \While{$\varidx{w}{j} = \true \land \varidx{a}{j} = \false$}
            \State{$\varidx{a}{i} \optwriteop \false$}
            \State{$\varidx{w}{i} \optwriteop \false$}
        \EndWhile
      \EndFor
      \State{$c \writeop N - 1$}
      \State{$k \writeop 0$}\label{Szy-4bitR-do}
      \lWhile{$k < N \land (\varidx{la}{k} = \false \lor \varidx{p}{k} \neq \varidx{lp}{k} \lor \varidx{s}{k} = \false)$} {$k \writeop k+1$}
      \If{$k < N$}
        \State{$c \writeop c - 1$}\label{Szy-4bitR-45}
        \For{$j$ \textbf{from} $0$ \textbf{to} $N - 1$}
            \While{$\varidx{a}{j} = \false \land \varidx{s}{j} = \true$}
                \State{$\varidx{a}{i} \optwriteop (j \geq i)$}
                \State{$\varidx{w}{i} \optwriteop \false$}
            \EndWhile
        \EndFor
        \For{$j$ \textbf{from} $0$ \textbf{to} $N - 1$}
            \While{$\varidx{a}{j} = \true \land \varidx{s}{j} = \true$}
                \State{$\varidx{a}{i} \optwriteop (j < i)$}
                \State{$\varidx{w}{i} \optwriteop \false$}
            \EndWhile
        \EndFor
      \EndIf
      \If{$k < N \land c > 0$} {\textbf{goto} \cref{Szy-4bitR-do}} \EndIf
      \For{$j$ \textbf{from} $0$ \textbf{to} $i - 1$}
        \While{$\varidx{a}{j} = \false \land \varidx{s}{j} = \true$}
            \State{$\varidx{a}{i} \optwriteop \false$}
            \State{$\varidx{w}{i} \optwriteop \false$}
        \EndWhile
      \EndFor
      \State{\textbf{critical section}}
      \State{$\varidx{s}{i} \writeop \false$}
     \end{algorithmic}
  \end{algorithm}  

\section{Related work}\label{sec:relwork}
To the best of our knowledge, we are the first to do automatic verification of mutual exclusion algorithms while incorporating both which operations can block each other and the effects of overlapping write operations. The two elements have been considered separately previously.

We have previously cited \cite{CDV09} as a source of the term ``blocking'' used in the same way as we do.\footnote{Subtly different notions of blocking are widespread; see for example \url{https://en.wikipedia.org/wiki/Blocking_(computing)} (accessed March 2026).} That work specifically considers blocking vs.\ non-blocking reading by introducing a process algebra, PAFAS$_s$, that allows for non-blocking reading, similar to read arcs in Petri nets. It also allows writes to be non-blocking when they do not change the value of a variable. Note that this situation is between our $\conc_T$, where all writes are non-blocking, and $\conc_S$, where all writes are blocking. However, where the distinction between writes that change a value and writes that do not is relevant, we have altered the algorithms to read registers first and only perform a write if that would change the value, thus allowing us to consider the same case as \cite{CDV09} using $\conc_S$. Dekker's algorithm is analysed in \cite{CDV09}, as discussed in \cref{sec:dekker}.

In \cite{Buti2011automated}, PAFAS$_s$ is used in conjunction with the tool FASE to check starvation freedom of Peterson's, Lamport's 1-bit, Dijkstra's and Knuth's algorithms under a fairness assumption that turned out to be equivalent \cite{glabbeek2019progress} to our notion of justness, with the same non-blocking actions as used in \cite{CDV09}.
They show that Peterson's requires non-blocking reading to satisfy liveness properties. 
For Lamport's, they show that with two threads with id's $1$ and $2$, thread $1$ satisfies starvation freedom but thread $2$ does not. 
They check Dijkstra's and Knuth's with 2 threads and find that starvation freedom is violated for both. 
These results all agree with ours.

In \cite{bouwman2020off}, starvation-freedom of Peterson's algorithm with atomic registers is checked using mCRL2 under the justness assumption. Two different concurrency relations are considered, which are similar to our $\conc_S$ and $\conc_{\!A}$.

There have been many formal verifications of mutual exclusion algorithms with atomic registers \cite{benari2008principles,mateescu2013model,groote2021tutorial}.
Non-atomic registers have been covered less frequently, and their verification is often restricted to single-writer registers \cite{lamportHyperook,buhr2016dekker}.
The safety properties of mutual exclusion algorithms with MWMR non-atomic registers were verified using mCRL2 in~\cite{spronck2023process}. In several papers by Nigro, including \cite{nigro2024Verifying,nigro2024modeling}, safety and liveness verification with MWMR non-atomic registers has been done using UPPAAL.

A major drawback of our approach is that we only consider a small number of parallel threads.
There has been much work in the literature on parametrised verification, where the correctness of an algorithm is established for an arbitrary number of threads \cite{bouajjani2000regular,fisman2001beyond,zuck2004model,bloem2016decidability}.
Where these techniques handle liveness, it is under forms of weak or strong fairness, not justness.
To our knowledge, these techniques have not yet been applied in the context of non-atomic registers. 
While several papers on parametrised verification, such as \cite{abdulla2008handling,abdulla2016parameterized}, mention dropping the atomicity assumption, this refers to evaluating existential and universal conditions on all threads in a single atomic operation, rather than atomicity of the registers.

In \cite{AravindH11,Hesselink13a,hesselink2015mutual} and other work by Hesselink, interactive theorem proving with the proof assistant PVS is used to check safety and liveness properties (under fairness) of mutual exclusion algorithms for an arbitrary number of threads with non-atomic registers. To our knowledge, the case of writes overlapping each other has not been covered with this technique.

\section{Conclusion}\label{sec:conclusion}
When it comes to analysing the correctness of algorithms, particularly when considering non-atomic registers, behavioural reasoning is often insufficient.
Mistakes can be subtle, and may depend on edge-cases that are easily overlooked.
Model checking is a solution here; we formally model the threads executing the algorithm, as well as the registers through which they communicate, and the entire state-space is searched for possibly violations of correctness properties.
In this work, we verify a large number of mutual exclusion algorithms using the model checking toolset mCRL2.
We expand on previous work by checking liveness properties -- deadlock freedom and starvation freedom -- in addition to the main safety property.
To circumvent spurious violating paths in our models, we incorporate the completeness criterion justness into our verifications.
We checked algorithms under six different memory models, where a memory model is a combination of a register model (safe, regular, or atomic) and a model of which register access operations can block each other.
The former dimension we capture in the models themselves, by modelling the behaviour of the three types of register.
The latter, we capture in the concurrency relations employed as part of justness.
We found a number of interesting violations of correctness properties, and in some cases could suggest improvements to algorithms to fix these violations.
We find that there are several algorithms that satisfy all three properties for four out of six memory models.
For three threads, this is accomplished by the fixed version of Aravind's BLRU algorithm and Lamport's 3-bit algorithm. If we also consider algorithms for just two threads, then Anderson's algorithm and the fixed version of Szymanski's 3-bit linear wait algorithm also meet this bar. We also considered Bar-David's algorithm for turning deadlock-free algorithms into starvation-free ones, and experimentally confirmed that it indeed works for Dekker's algorithm made RW-safe and Lamport's 1-bit algorithm.

\bibliographystyle{alphaurl}
\bibliography{bib2doi}

\newpage
\appendix

\section{Register models}\label{app:registers}
In this section, we expand on the material of \cref{sec:fullread} and \cref{sec:instantread}, by giving the formal process-algebraic models of both the full-read and instant-read registers.
First, we present the general structure of such models and explain how to translate them to LTSs.
We then give the safe, regular and atomic models, as well as the definitions of all the access and update functions for both model types.
We leave the status objects abstract: it is not necessary to define the data structures themselves, as long as it is clear what information can be retrieved from them.

\subsection{Structure and translation to LTSs}\label{app:structure}
Recall that we use $\TID$ for the thread identifiers, $\RID$ for register identifiers, and that for every $\rid \in \RID$, the set $\Data$ contains all values that $\rid$ may hold.
Additionally, we use a status object as the finite memory of a register. 
The full-read and instant-read models use different status objects, but for the purpose of presenting their shared structure we denote for both the set of possible statuses as $\StatusAll$.

We define the following structure, shared by all register models.
Let $\regtype \mathbin\in \{\safRep, \regRep, \atoRep\}$ and $\modeltype \in \{\frRep, \irRep\}$; then each model looks as follows, for some natural number $n$:
\begin{equation}\label{eq:structure}
    \Reg[\modeltype,\regtype](\rid : \RID, s: \StatusAll) = \sum_{\tid \in \TID}\sum_{\data \in \Data}\sum_{0 \leq j < n} (c_j(s, \tid, \data) \rightarrow a_j(\tid,\data) \cdot \Reg[\modeltype,\regtype](\rid, u_j(s,\tid,\data)))
\end{equation}
This represents a register with id $\rid$ and status $s$.
The process first sums over $\tid \in \TID$ and $\data \in \Data$, allowing interaction by all threads and with all possible data parameters.
Furthermore, it has $n$ summands, each of the form $c_j(s,\tid,\data) \rightarrow a_j(\tid,\data) \cdot \Reg[\modeltype,\regtype](\rid, u_j(s,\tid,\data))$ where  $c_j(s,\tid,\data)$ is a Boolean condition, $a_j(\tid,\data)$ is an action, and $u_j$ is an \emph{update function} that takes $s$, $\tid$ and $\data$ and returns the updated status.

Such a process equation gives rise to an LTS. 
Given a predefined initial state $\initstate \in \StatusAll$, the LTS of $\Reg[\modeltype,\regtype](\rid, \initstate)$ is:
\begin{equation}\label{eq:LTS}
    (\StatusAll, \!\!\bigcup_{0 \leq j < n}\!\!\{a_j(\tid,\data) \mid \tid \mathbin\in \TID, \data \mathbin\in \Data\}, \initstate, \!\!\bigcup_{0 \leq j < n}\!\!\{(s, a_j(\tid,\data), u_j(s,\tid,\data)) \mid \tid \mathbin\in \TID, \data \mathbin\in \Data \land c_j(s,\tid,\data)\})
\end{equation}

\subsection{Full-read models}\label{app:fullread}
We here describe the full-read models. To this end, we first define the status object by describing the initial state and update functions. We then give the process-algebraic definitions of the three types of register.

\paragraph{Full-read status object.}\label{app:fullread-status}

As stated previously, we do not wish to go into the implementation details of the status object.
Instead, we define a collection of access functions (one of which is really a predicate) to retrieve information from the current state. These are sufficient to describe the relevant aspects the initial state and update functions. 
We use the following access functions, which are local to any given register $r$:
\begin{itemize}
    \item The function $\truevalsym: \StatusAll \rightarrow \Data$ returns the value that is currently stored in the register.
    \item The function $\readerssym: \StatusAll \rightarrow 2^{\TID}$ returns the set of thread id's of threads that have invoked a read operation on this register that has not yet had its response.
    \item The function $\writerssym: \StatusAll \rightarrow 2^{\TID}$ returns the set of thread id's of threads that have invoked a write operation on this register that has not yet had its response.
    \item The function $\pendingsym: \StatusAll \rightarrow 2^{\TID}$ returns the set of thread id's of threads that have invoked an operation that has not been ordered yet (only used by the regular and atomic models).
    \item The function $\valssym: \StatusAll \times \TID \rightarrow \Data$ allows us to record a single data value per thread; it is, for instance, used to remember what value was passed with a start write action.
    \item The predicate $\overlapsym$ on $\StatusAll \times \TID$ reflects whether an ongoing read or write operation of a thread has encountered an overlapping write (only used by the safe model
    ).
    \item The function $\posvalsym: \StatusAll \times \TID \rightarrow 2^{\Data}$ stores a set of values per thread, representing the possible return values of an ongoing read by a thread (only used by the regular model).
\end{itemize}

Let $\data_{\initstate}$ be an appropriate initial value of the register, which depends on the modelled algorithm. The initial state $\initstate$ is characterised by specifying, for all $\tid \in \TID$, the results of the access functions:
\[\begin{array}{r@{~}c@{~}lr@{~}c@{~}lr@{~}c@{~}lr@{~}c@{~}l}
    \trueval{\initstate} &=& \data_{\initstate} & \writers{\initstate} &=& \emptyset & \vals{\initstate, \tid} &=& \data_{\initstate}  & \posval{\initstate, \tid} &=& \emptyset\\
    \readers{\initstate} &=& \emptyset & \pending{\initstate} &=& \emptyset & \overlap{\initstate, \tid} &=& \false & &\\[-1ex]
\end{array}\]

We now define the update functions by showing how the return values of the access functions are altered by the update function.
Each update function corresponds to an action and is applied when that action occurs; if the action's name is $a$, the update function is called $\mathit{ua}$.
Not every update function uses the data parameter that is passed to it according to \labelcref{eq:structure}; in these cases we only give the thread id and status parameters.
If an access function is not mentioned, then its return value after the update is the same as before.
Given an arbitrary state $s \in \StatusAll$, thread id $\tid \in \TID$ and data value $\data \in \Data$:\\
If $s' = \usr{s, \tid}$, then:\vspace{-1.5ex}
\begin{align*}
    \readers{s'} &= \readers{s} \cup \{\tid\}\\
    \pending{s'} &= \pending{s} \cup \{\tid\}\\
    \overlap{s', \tid} &= (\writers{s} > 0)\\
    \posval{s', \tid} &= \{\trueval{s}\} \cup \{\data' \mid \exists_{\tidtwo \in \writers{s}}.\vals{s, \tidtwo} = \data'\}
\end{align*}
If $s' = \ufr{s, \tid}$, then:\vspace{-1.5ex}
\begin{align*}
    \readers{s'} &= \readers{s} \setminus \{\tid\}
\end{align*}
If $s' = \usw{s, \tid, \data}$, then for all $\tidtwo \neq \tid$:\vspace{-1.5ex}
\begin{align*}
    \writers{s'} &= \writers{s} \cup \{\tid\}\\
    \pending{s'} &= \pending{s} \cup \{\tid\}\\
    \vals{s', \tid} &= \data\\
    \overlap{s', \tid} &= (\writers{s} > 0)\\
    \overlap{s', \tidtwo} &= \true\\
    \posval{s', \tidtwo} &= \posval{s, \tidtwo} \cup \{\data\}
\end{align*}
If $s' = \ufw{s, \tid, \data}$, then:\vspace{-1.5ex}
\begin{align*}
    \trueval{s'} &= \data\\
    \writers{s'} &= \writers{s} \setminus \{\tid\}
\end{align*}
If $s' = \uor{s, \tid}$, then:\vspace{-1.5ex}
\begin{align*}
    \pending{s'} &= \pending{s} \setminus \{\tid\}\\
    \vals{s', \tid} &= \trueval{s}
\end{align*}
If $s' = \uow{s, \tid, \data}$, then:\vspace{-1.5ex}
\begin{align*}
    \trueval{s'} &= \data\\
    \pending{s'} &= \pending{s} \setminus \{\tid\}
\end{align*}

\noindent
These formal definitions correspond to the intuitive descriptions given above. Of note is that $\overlap{s', \tidtwo}$ is set to $\true$ whenever a thread $\tid \neq \tidtwo$ starts a write, even if $\tidtwo$ is not actively reading or writing. This is done for simplicity of the definition: when $\tidtwo$ starts reading or writing, it will reset its own $\overlapsym$ to the correct value, depending on whether there is an overlapping write active at that point.
Something similar is done with $\posvalsym$: when a write is started, its value gets added to the $\posvalsym$ sets of every other thread, even if they are not actively reading. When a thread starts reading, it sets its own $\posvalsym$ correctly.

\paragraph{Full-read safe MWMR registers.}\label{app:fullread-safe}

See \cref{fig:fullread-procsafe} for the process equation representing our full-read safe register model.
\begin{figure}[ht]
    \begin{multline*}
        \Reg[\frRep, \safRep](\rid: \RID, s:\StatusAll) = \\
        \sum_{\tid\in\TID}\sum_{\data \in \Data}
        \left(\begin{array}{ll}
            & (\tid\notin(\readers{s} \cup \writers{s})) \then
                 \startread[\tid,\rid]\co\Reg[\frRep, \safRep](\rid,\usr{s,\tid}) \\
          + & (\tid\in\readers{s}\wedge\neg\overlap{s,\tid}) \then
                 \finishread[\tid,\rid]{\trueval{s}}\co\Reg[\frRep, \safRep](\rid,\ufr{s,\tid}) \\
          + & (\tid\in\readers{s}\wedge\overlap{s,\tid}) \then
                 \finishread[\tid,\rid]{\data}\co\Reg[\frRep, \safRep](\rid,\ufr{s,\tid})\\
          + & (\tid\notin(\readers{s} \cup \writers{s})) \then
                 \startwrite[\tid,\rid]{\data}\co\Reg[\frRep, \safRep](\rid,\usw{s,\tid,\data})\\
          + & (\tid\in\writers{s}\wedge\neg\overlap{s,\tid}) \then
                 \finishwrite[\tid,\rid]\co\Reg[\frRep, \safRep](\rid,\ufw{s,\tid,\vals{s, \tid}})\\
          + & (\tid\in\writers{s}\wedge\overlap{s,\tid}) \then
                 \finishwrite[\tid,\rid]\co\Reg[\frRep, \safRep](\rid,\ufw{s,\tid,\data})
        \end{array}
        \right)
    \end{multline*} \caption{Full-read safe register process}\label{fig:fullread-procsafe}
\end{figure}

The correspondence between the process and the four rules given for MWMR safe registers in \cref{sec:safe} is rather direct: the first and fourth summands allow a thread that is not currently reading or writing to begin a read or a write, and the remaining four each represent one of the four rules, in order. 
Note that, in the case of a finish write without overlap, we use $\valssym$ to retrieve which value this thread intended to write so that the register state can be appropriately updated.

\paragraph{Full-read regular MWMR registers.}\label{app:fullread-regular}
See \cref{fig:fullread-procregular} for the process equation representing our full-read regular register model.
Recall that regular registers use the order write action to generate a global ordering on all write operations on a register on the fly.

\begin{figure}[htb]
    \begin{multline*}
        \Reg[\frRep,\regRep](\rid: \RID, s:\StatusAll) = \\
        \sum_{\tid\in\TID}\sum_{\data \in \Data}
        \left(\begin{array}{ll}
            & (\tid\notin(\readers{s} \cup \writers{s})) \then
                 \startread[\tid,\rid]\co\Reg[\frRep,\regRep](\rid,\usr{s,\tid}) \\
          + & (\tid\in\readers{s} \land \data \in \posval{s,\tid}) \then
                 \finishread[\tid,\rid]{\data}\co\Reg[\frRep,\regRep](\rid,\ufr{s,\tid})\\
          + & (\tid\notin(\readers{s} \cup \writers{s})) \then
                 \startwrite[\tid,\rid]{\data}\co\Reg[\frRep,\regRep](\rid,\usw{s,\tid,\data})\\
          + & (\tid \in \writers{s} \land \tid\in\pending{s}) \then \orderwrite[\tid,\rid]\co\Reg[\frRep,\regRep](\rid,\uow{s,\tid,\vals{s,\tid}})\\
          + & (\tid\in\writers{s}\wedge \tid\notin\pending{s}) \then
                 \finishwrite[\tid,\rid]\co\Reg[\frRep,\regRep](\rid,\ufw{s,\tid,\trueval{s}})
        \end{array}
        \right)
    \end{multline*}
    \caption{Full-read regular register process}\label{fig:fullread-procregular}
\end{figure}

Similar to the full-read safe register process, the first and third summands are merely allowing an idle thread to begin a read or write operation. 
The second summand corresponds to finishing a read by returning a value that is in the set of possible values to be returned for this read.
Recall that, by the definition of $\posvalsym$, this set is constructed as follows: when a read starts, the set is initialised to the current stored value of the register and the intended write value of every active write. 
Subsequently, whenever a write occurs, its intended value is added to the set.
This way, at the finish read, the set will contain exactly those values that the read could return.
The fourth summand allows the occurrence of the order write action; at this time the intended value of the write, which was temporary stored in the access function $\valssym$, is logged as the stored value of the register. 
The final summand describes the ending of a write operation. The $\ufwsym$ update function will set the stored value to whatever data value is passed. In this case, the stored value should not change at the finish write, since it was already changed at the order write. Hence, we simply pass $\trueval{s}$.
Note that we use $\pendingsym$ to determine if the order write action still has to occur.

\paragraph{Full-read atomic MWMR registers.}\label{app:fullread-atomic}
See \cref{fig:fullread-procatomic} for our model of full-read MWMR atomic registers.
Recall that, in addition to the order write action, the atomic register model also uses the order read action. This way, it generates an ordering on all operations on a register.

\begin{figure}[ht]
    \begin{multline*}
        \Reg[\frRep,\atoRep](\rid:\RID, s:\StatusAll) = \\
        \sum_{\tid\in\TID}\sum_{\data \in \Data}
        \left(\begin{array}{ll}
            & (\tid\notin(\readers{s}\cup\writers{s})) \then
                 \startread[\tid,\rid]\co\Reg[\frRep,\atoRep](\rid,\usr{s,\tid}) \\
          + & (\tid\in\readers{s} \wedge \tid \in \pending{s}) \then\orderread[\tid,\rid]\co\Reg[\frRep,\atoRep](\rid,\uor{s,\tid}) \\
          + & (\tid\in\readers{s}\wedge\tid \notin \pending{s})\then \finishread[\tid,\rid]{\vals{s,\tid}}\co\Reg[\frRep,\atoRep](\rid,\ufr{s,\tid}) \\
          + & (\tid\notin(\readers{s}\cup\writers{s})) \then
                 \startwrite[\tid,\rid]{\data}\co\Reg[\frRep,\atoRep](\rid,\usw{s,\tid,\data})\\
          + & (\tid\in\writers{s}\wedge\tid \in \pending{s}) \then \orderwrite[\tid,\rid]\co\Reg[\frRep,\atoRep](\rid,\uow{s,\tid,\vals{s,\tid}}) \\
          + & (\tid\in\writers{s}\wedge\tid \notin \pending{s})\then\finishwrite[\tid,\rid]\co\Reg[\frRep,\atoRep](\rid,\ufw{s,\tid,\trueval{s}})
          \end{array}
        \right)
    \end{multline*}
    \caption{Full-read atomic register process}\label{fig:fullread-procatomic}
\end{figure}

The first three summands are the invocation, ordering and response of a read operation respectively.
Similarly, the last three summands are the invocation, ordering and response of a write operation.
We use $\pendingsym$ to determine whether an operation's order action still has to occur. When a read is ordered, we save the current stored value of the register in $\valssym$, so that this value can be returned when the read ends.
For writes, we already update $\truevalsym$ when the write is ordered, meaning that when the write ends we do not want to change $\truevalsym$ further and just keep the current value.

\subsection{Instant-read models}\label{app:instantread}
We now give the status object and process-algebraic models for the instant-read models. We focus on their differences with the full-read models.

\paragraph{Instant-read status object.}\label{app:instantread-status}
For the instant-read models, we use the same access functions as for the full-read models, with the following minor differences:

\begin{itemize}
    \item The function $\readerssym$ has been removed, since we no longer have a time between the invocation and response of a read.
    \item The function $\pendingsym$ is now only used for write operations, since read operations are never pending.
    \item The function $\valssym$ is still used to record a single data value per thread, but we no longer use it to remember which value a read should return in the regular register model.
    \item The predicate $\overlapsym$ is now only used for write operations, since read operations are never ``ongoing''.
    \item The function $\posvalsym$ has been removed, since we can now determine all possible values a read can return at the moment it occurs, rather than building up a set over time.
\end{itemize}

The initial state $\initstate$ is defined similar to how it is done for the full-read models, with the two removed access functions no longer needing to be initialised. I.e., for all $\tid \in \TID$ and a pre-defined initial value $\data_{\initstate}$:
\[\begin{array}{r@{~}c@{~}lr@{~}c@{~}lr@{~}c@{~}lr@{~}c@{~}l}
    \trueval{\initstate} &=& \data_{\initstate} & \writers{\initstate} &=& \emptyset & \vals{\initstate, \tid} &=& \data_{\initstate}  & &&\\
    &&& \pending{\initstate} &=& \emptyset & \overlap{\initstate, \tid} &=& \false & &\\[-1ex]
\end{array}\]

Since the instant-read models use different actions from the full-read models, the update functions $\usrsym, \uorsym$ and $\ufrsym$ do not exist; we instead have an update function $\ursym$. 
However, the $\ursym$ update function does not change the result of any access function; that a read has taken place has no impact on any other operations, and therefore needs not be remembered in any way.
The update functions $\uswsym$, $\ufwsym$ and $\uowsym$ have the same effects as in the full-read models except $\uswsym$ no longer updates $\posvalsym$; we refer to \cref{app:fullread-status} for their definitions.

\paragraph{Instant-read safe MWMR registers.}\label{app:instantread-safe}
See \cref{fig:instantread-procsafe} for the process equation representing our instant-read safe register model.

\begin{figure}[ht]
    \begin{multline*}
        \Reg[\irRep,\safRep](\rid: \RID, s:\StatusAll) = \\
        \sum_{\tid\in\TID}\sum_{\data \in \Data}
        \left(\begin{array}{ll}
            & (\tid\notin\writers{s}\wedge\writers{s} = \emptyset) \then
                 \readop[\tid,\rid]{\trueval{s}}\co\Reg[\irRep,\safRep](\rid,\ur{s,\tid}) \\
          + & (\tid\notin\writers{s}\wedge\writers{s} \neq \emptyset) \then
                 \readop[\tid,\rid]{\data}\co\Reg[\irRep,\safRep](\rid,\ur{s,\tid}) \\
          +  & (\tid\notin \writers{s}) \then
                 \startwrite[\tid,\rid]{\data}\co\Reg[\irRep,\safRep](\rid,\usw{s,\tid,\data})\\
          + & (\tid\in\writers{s}\wedge\neg\overlap{s,\tid}) \then
                 \finishwrite[\tid,\rid]\co\Reg[\irRep,\safRep](\rid,\ufw{s,\tid,\vals{s, \tid}})\\
          + & (\tid\in\writers{s}\wedge\overlap{s,\tid}) \then
                 \finishwrite[\tid,\rid]\co\Reg[\irRep,\safRep](\rid,\ufw{s,\tid,\data})
        \end{array}
        \right)
    \end{multline*} \caption{Instant-read safe register process}\label{fig:instantread-procsafe}
\end{figure}

The main difference with \cref{fig:fullread-procsafe} is that we do not use $\overlapsym$ to determine whether the read operation returns the stored value of the register or an arbitrary value. Instead, we check whether any writes are active at the moment the read takes place.

\paragraph{Instant-read regular MWMR registers.}\label{app:instantread-regular}
See \cref{fig:instantread-procregular} for the process equation representing our instant-read regular register model.

\begin{figure}[ht]
    \begin{multline*}
        \Reg[\irRep,\regRep](\rid: \RID, s:\StatusAll) = \\
        \sum_{\tid\in\TID}\sum_{\data \in \Data}
        \left(\begin{array}{ll}
          & (\tid\notin\writers{s} \land \data \in (\{\trueval{s}\} \cup \{d' \mid \exists_{t' \in \writers{s}}.\vals{s,t'} = d'\})) \then\\
          & \qquad \readop[\tid,\rid]{\data}\co\Reg[\irRep,\regRep](\rid,\ur{s,\tid})\\
          +  & (\tid\notin\writers{s}) \then
                 \startwrite[\tid,\rid]{\data}\co\Reg[\irRep,\regRep](\rid,\usw{s,\tid,\data})\\
          + & (\tid \in \writers{s} \land \tid\in\pending{s}) \then \orderwrite[\tid,\rid]\co\Reg[\irRep,\regRep](\rid,\uow{s,\tid,\vals{s,\tid}})\\
          + & (\tid\in\writers{s}\wedge \tid\notin\pending{s}) \then
                 \finishwrite[\tid,\rid]\co\Reg[\irRep,\regRep](\rid,\ufw{s,\tid,\trueval{s}})
        \end{array}
        \right)
    \end{multline*}
    \caption{Instant-read regular register process}\label{fig:instantread-procregular}
\end{figure}

When the read occurs, we require that the returned value is either the stored value of the register or a value that is being written by a currently active writer. This differs from \cref{fig:fullread-procregular}, where this set is how $\posvalsym$ is initialised when a thread $\tid$ starts a read, but the set is then subsequently grown when new writes start. Since the read in \cref{fig:instantread-procregular} happens instantaneously, we can immediately use the initial set of possible values.

\paragraph{Instant-read atomic MWMR registers.}\label{app:instantread-atomic}
See \cref{fig:instantread-procatomic} for our instant-read model of MWMR atomic registers.

\begin{figure}[hb]
    \begin{multline*}
        \Reg[\irRep, \atoRep](\rid:\RID, s:\StatusAll) = \\
        \sum_{\tid\in\TID}\sum_{\data \in \Data}
        \left(\begin{array}{ll}
           & (\tid\notin\writers{s})\then \readop[\tid,\rid]{\trueval{s}}\co\Reg[\irRep, \atoRep](\rid,\ur{s,t}) \\
          +  & (\tid\notin\writers{s}) \then
                 \startwrite[\tid,\rid]{\data}\co\Reg[\irRep, \atoRep](\rid,\usw{s,\tid,\data})\\
          + & (\tid\in\writers{s}\wedge\tid \in \pending{s}) \then \orderwrite[\tid,\rid]\co\Reg[\irRep, \atoRep](\rid,\uow{s,\tid,\vals{s,\tid}}) \\
          + & (\tid\in\writers{s}\wedge\tid \notin \pending{s})\then\finishwrite[\tid,\rid]\co\Reg[\irRep, \atoRep](\rid,\ufw{s,\tid,\trueval{s}})
          \end{array}
        \right)
    \end{multline*}
    \caption{Instant-read atomic register process}\label{fig:instantread-procatomic}
\end{figure}

Effectively, the read action here takes the place of the order read action in \cref{fig:fullread-procatomic}, but instead of remembering the currently stored value of the register for later, it is immediately returned.

\subsection{Additional results on thread-register models}\label{app:additionalresults}
In future proofs, the following results on thread-register models, which depend on the full definitions of the registers as provided in this appendix, will be used.

First, we provide an alternate perspective on the definitions of the access functions.
When it comes to applying these definitions, we often find it easier to reason from the perspective of transitions: given a transition $s \xrightarrow{a} s'$ in an LTS that is derived from a register model as described in \cref{app:fullread,app:instantread}, the relation of the output of an access function on $s'$ to the output on $s$, depending on the action $a$.
This perspective can be derived directly from the definitions of the update functions, and the tight coupling between update functions and actions.
We make this observation for both the full-read and instant-read models.
\begin{obs}\rm\label{obs:access-fullread}
Consider the LTS $(\states, \actionset, \initstate, \transrel)$ associated with a full-read register $\rid \in \RID$. Let $s, s' \in \states$ and $a \in \actionset$ such that $s \xrightarrow{a} s'$, and let $\tid \in \TID$. Then the following equations hold:
{\allowdisplaybreaks
\begin{align*}
    \trueval{s'} &= \begin{cases}
        \data & \text{if $a =\finishwrite[\tid,\rid]$ and $ s' = \ufw{s,\tid,\data}$ for $\tid \in \TID, \data \in \Data$}\\
        \data & \text{if $a =\orderwrite[\tid,\rid]$ and $s' = \uow{s,\tid,\data}$ for $\tid \in \TID, \data \in \Data$}\\
        \trueval{s} & \text{otherwise}
    \end{cases}\\
\hspace{-8pt}
    \readers{s'} &= \begin{cases}
        \readers{s} \cup \{\tid\} & \text{if $a =\startread[\tid,\rid]$ for $\tid \in \TID$}\\
        \readers{s} \setminus \{\tid\} & \text{if $a =\finishread[\tid,\rid]{\data}$ for $\tid \in \TID, \data \in \Data$}\\
        \readers{s} & \text{otherwise}
    \end{cases}\\
\hspace{-8pt}
    \writers{s'} &= \begin{cases}
        \writers{s} \cup \{\tid\} & \text{if $a = \startwrite[\tid,\rid]{\data}$ for $\tid \in \TID, \data \in \Data$}\\
        \writers{s} \setminus \{\tid\} & \text{if $a = \finishwrite[\tid,\rid]$ for $\tid \in \TID$}\\
        \writers{s} & \text{otherwise}
    \end{cases}\\
\hspace{-8pt}
    \pending{s'} &= \begin{cases}
        \pending{s} \cup \{\tid\} & \text{if $a = \startread[\tid,\rid]$ or $a = \startwrite[\tid,\rid]{\data}$ for $\tid \in \TID, \data \in \Data$}\\
        \pending{s} \setminus \{\tid\} & \text{if $a = \orderread[\tid,\rid]$ or $a = \orderwrite[\tid,\rid]$ for $\tid \in \TID$}\\
        \pending{s} & \text{otherwise}
    \end{cases}\\
\hspace{-8pt}
    \vals{s', \tid} &= \begin{cases}
        \data & \text{if $a = \startwrite[\tid,\rid]{\data}$ for $\data \in \Data$}\\
        \trueval{s} & \text{if $a = \orderread[\tid,\rid]$}\\
        \vals{s, \tid} & \text{otherwise}        
    \end{cases}\\
\hspace{-8pt}
    \overlap{s', \tid} &= \begin{cases}
        (\writers{s} > 0) & \text{if $a = \startread[\tid,\rid]$ or $a = \startwrite[\tid,\rid]{\data}$ for $\data \in \Data$}\\
        \true & \text{if $a = \startwrite[\tidtwo,\rid]{\data}$ for $\tidtwo\in\TID, \tidtwo \neq \tid$ and $\data \in \Data$}\\
        \overlap{s, \tid} & \text{otherwise}
    \end{cases}\\
\hspace{-8pt}
    \posval{s', \tid} &= \begin{cases}
        \{\trueval{s}\} \cup \{\data \mid \exists_{\tidtwo \in \writers{s}}:\vals{s, \tidtwo} = d\} & \text{if $a = \startread[\tid,\rid]$}\\
        \posval{s, \tid} \cup \{\data\} & \hspace{-35pt}\text{if $a = \startwrite[\tidtwo,\rid]{\data}$ for $\tidtwo\in\TID, \tid\neq\tidtwo, \data \in \Data$}\\
        \posval{s, \tid} & \hspace{-35pt}\text{otherwise}
    \end{cases}
\end{align*}
}
\end{obs}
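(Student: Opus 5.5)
The observation simply restates the update functions of \cref{app:fullread-status} from the point of view of a single transition, so I would verify it by direct inspection. Fix a transition $s \xrightarrow{a} s'$ in the LTS of a full-read register $\rid$. By \labelcref{eq:LTS}, there are a summand $j$ of the relevant process equation (\cref{fig:fullread-procsafe}, \cref{fig:fullread-procregular} or \cref{fig:fullread-procatomic}), a thread id $\tidtwo\in\TID$ and a datum $\data\in\Data$ such that $c_j(s,\tidtwo,\data)$ holds, $a=a_j(\tidtwo,\data)$ and $s'=u_j(s,\tidtwo,\data)$. A glance at the three equations shows that in every summand the update function is the one tied to the action name: $\startread$ with $\usrsym$, $\finishread$ with $\ufrsym$, $\startwrite$ with $\uswsym$, $\finishwrite$ with $\ufwsym$, $\orderread$ with $\uorsym$, and $\orderwrite$ with $\uowsym$. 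Moreover, the thread index of the action always equals the thread argument of the update. I would establish this correspondence first, since every later step depends on it.

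Next, for each access function I would go through the six action kinds. For each one I read off the update clause from \cref{app:fullread-status}, and I use the convention that any access function not mentioned in a clause keeps its value. Take $\truevalsym$: only $\ufwsym$ and $\uowsym$ assign to it, and both set it to their data argument, which gives the first two cases. Every other action leaves it unchanged, which gives the ``otherwise'' case. The same pattern gives $\readerssym$ (changed by $\usrsym$ and $\ufrsym$), $\writerssym$ (changed by $\uswsym$ and $\ufwsym$), and $\pendingsym$ (extended by $\usrsym$ and $\uswsym$, reduced by $\uorsym$ and $\uowsym$).

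The per-thread functions need one extra distinction: whether the action's thread $\tidtwo$ equals the argument $\tid$. For $\valssym$, $\uswsym$ sets $\vals{s',\tidtwo}=\data$ and $\uorsym$ sets $\vals{s',\tidtwo}=\trueval{s}$. For $\overlapsym$, the acting thread gets $(\writers{s}>0)$ on $\usrsym$ and $\uswsym$, and every other thread gets $\true$ on $\uswsym$. For $\posvalsym$, the acting thread is re-initialised on $\usrsym$, and every other thread has $\data$ added on $\uswsym$. In all remaining combinations the value is unchanged. Together these cases reproduce the displayed equations.

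The only delicate point is bookkeeping rather than mathematics. The observation quantifies over a fixed $\tid$ while the action carries its own thread id, so each case has to be matched with the right thread. One must also check that the data argument passed to $\ufwsym$ and $\uowsym$ (for example $\vals{s,\tid}$ or $\trueval{s}$) is exactly the $\data$ referred to by the condition ``$s'=\ufw{s,\tid,\data}$''. That holds by construction, so the observation follows by this exhaustive case inspection.
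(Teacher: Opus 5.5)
Your proposal is correct and takes essentially the same approach as the paper. The paper justifies the observation only by noting that it can be read off directly from the definitions of the update functions, together with the tight coupling between actions and update functions in the three full-read process equations; your check that each action name is always paired with its own update function, with the same thread argument (and, for start write, the same data argument), makes that coupling explicit, after which your case-by-case reading of the update clauses yields all seven equations.
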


\begin{obs}\rm\label{obs:access-instantread}
Consider the LTS $(\states, \actionset, \initstate, \transrel)$ associated with an instant-read register $\rid \in \RID$. Let $s, s' \in \states$ and $a \in \actionset$ such that $s \xrightarrow{a} s'$, and let $\tid \in \TID$. Then the following equations hold:
{\allowdisplaybreaks
\begin{align*}
    \trueval{s'} &= \begin{cases}
        \data & \text{if $a =\finishwrite[\tid,\rid]$ and $ s' = \ufw{s,\tid,\data}$ for $\tid \in \TID, \data \in \Data$}\\
        \data & \text{if $a =\orderwrite[\tid,\rid]$ and $s' = \uow{s,\tid,\data}$ for $\tid \in \TID, \data \in \Data$}\\
        \trueval{s} & \text{otherwise}
    \end{cases}\\
\hspace{-8pt}
    \writers{s'} &= \begin{cases}
        \writers{s} \cup \{\tid\} & \text{if $a = \startwrite[\tid,\rid]{\data}$ for $\tid \in \TID, \data \in \Data$}\\
        \writers{s} \setminus \{\tid\} & \text{if $a = \finishwrite[\tid,\rid]$ for $\tid \in \TID$}\\
        \writers{s} & \text{otherwise}
    \end{cases}\\
\hspace{-8pt}
    \pending{s'} &= \begin{cases}
        \pending{s} \cup \{\tid\} & \text{if $a = \startwrite[\tid,\rid]{\data}$ for $\tid \in \TID, \data \in \Data$}\\
        \pending{s} \setminus \{\tid\} & \text{if $a = \orderwrite[\tid,\rid]$ for $\tid \in \TID$}\\
        \pending{s} & \text{otherwise}
    \end{cases}\\
\hspace{-8pt}
    \vals{s', \tid} &= \begin{cases}
        \data & \text{if $a = \startwrite[\tid,\rid]{\data}$ for $\data \in \Data$}\\
        \vals{s, \tid} & \text{otherwise}        
    \end{cases}\\
\hspace{-8pt}
    \overlap{s', \tid} &= \begin{cases}
        (\writers{s} > 0) & \text{if $a = \startwrite[\tid,\rid]{\data}$ for $\data \in \Data$}\\
        \true & \text{if $a = \startwrite[\tidtwo,\rid]{\data}$ for $\tidtwo\in\TID, \tidtwo \neq \tid$ and $\data \in \Data$}\\
        \overlap{s, \tid} & \text{otherwise}
    \end{cases}
\end{align*}
}
\end{obs}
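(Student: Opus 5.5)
The plan is a direct unfolding of the instant-read register processes into their LTSs via \labelcref{eq:LTS}, followed by a case analysis on the action label. Every transition $s \xrightarrow{a} s'$ of the LTS of $\Reg[\irRep,\regtype](\rid,\initstate)$ comes from some summand $j$ of \cref{fig:instantread-procsafe}, \cref{fig:instantread-procregular} or \cref{fig:instantread-procatomic}. So there are $\tidtwo\in\TID$ and $\data\in\Data$ with $c_j(s,\tidtwo,\data)$ true, $a = a_j(\tidtwo,\data)$ and $s' = u_j(s,\tidtwo,\data)$.

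The first step is to record the tight coupling between action names and update functions. In all three instant-read processes, each summand whose action is $\readop[\tidtwo,\rid]{\cdot}$ applies $\ursym$. Each summand with $\startwrite[\tidtwo,\rid]{\data}$ applies $\uswsym$ with that same $\tidtwo$ and $\data$. Each summand with $\orderwrite[\tidtwo,\rid]$ applies $\uowsym$ with thread $\tidtwo$. Each summand with $\finishwrite[\tidtwo,\rid]$ applies $\ufwsym$ with thread $\tidtwo$ and some data value, namely $\vals{s,\tidtwo}$, $\trueval{s}$, or an arbitrary $\data$. Hence the action $a$ determines the update function and its thread argument. For $\finishwrite$ and $\orderwrite$ the data argument is not visible in $a$. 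This is why the statement phrases those cases as ``$s' = \ufw{s,\tid,\data}$'' and ``$s' = \uow{s,\tid,\data}$'' rather than reading $\data$ off the label.

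The second step is a case split on $a \in \{\readop[\tidtwo,\rid]{\data}, \startwrite[\tidtwo,\rid]{\data}, \orderwrite[\tidtwo,\rid], \finishwrite[\tidtwo,\rid]\}$. In each case I read off the effect on $\truevalsym$, $\writerssym$, $\pendingsym$, $\valssym$ and $\overlapsym$ from the definitions in \cref{app:fullread-status}. I use the stated facts that $\ursym$ changes nothing and that $\uswsym$, $\ufwsym$, $\uowsym$ act as in the full-read case except for $\posvalsym$. For instance:
\begin{itemize}
\item $\truevalsym$ changes only under $\ufwsym$ and $\uowsym$, each setting it to their data argument.
\item $\writerssym$ gains $\tidtwo$ under $\uswsym$ and loses it under $\ufwsym$.
\item $\pendingsym$ gains $\tidtwo$ under $\uswsym$ and loses it under $\uowsym$; there is no read case, since $\uorsym$ does not exist.
\item $\valssym$ changes only at $\uswsym$, and only for the writing thread.
\end{itemize}

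The only point needing care is distinguishing the thread $\tid$ fixed in the statement from the acting thread $\tidtwo$. This matters for $\valssym$ and $\overlapsym$, where $\uswsym$ updates the writer's own entry differently from other threads' entries. So I split into $\tidtwo = \tid$, which yields $\vals{s',\tid} = \data$ and $\overlap{s',\tid} = (\writers{s} > 0)$, and $\tidtwo \neq \tid$, which yields an unchanged $\valssym$ and $\overlap{s',\tid} = \true$. All remaining combinations fall into the ``otherwise'' clauses, because the update function in question leaves that access function untouched.

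I expect no real obstacle: the argument is bookkeeping. The step needing the most attention is the first one, justifying that the action label determines the summand's update function. That has to be checked uniformly across the three register types, including the two alternative $\finishwrite$ summands of the safe model, which differ only in the data passed to $\ufwsym$.
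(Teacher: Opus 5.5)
Your proposal is correct and follows the paper's own justification, which just says the observation is read off from the definitions of the update functions together with the tight coupling between actions and update functions. Your case analysis on the action label spells this out, including separating the acting thread from the fixed $\tid$ for $\valssym$ and $\overlapsym$ and noting why the $\ufwsym$ and $\uowsym$ cases are stated conditionally.
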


Next, we consider when two states of a register model are effectively identical. We prove for the atomic register model that in many cases, two actions that follow each other can be swapped around, while ending up in an equivalent state.

\begin{defi}
Consider the LTS $(\states, \actionset, \initstate, \transrel)$ associated with a full-read register $\rid \in \RID$. For two states $s,s' \in \states$ write $s \cong s'$ if the access functions that are relevant for $\rid$ cannot tell the difference between them, that is: if $\trueval{s} = \trueval{s'}$, $\readers{s} = \readers{s'}$, $\writers{s} = \writers{s'}$, for all $t\in\TID$ one has $\vals{s,\tid} = \vals{s',\tid}$, and:
\begin{itemize}
    \item If $\rid$ is a safe register, $\overlap{s,\tid} \Leftrightarrow \overlap{s',\tid}$ for all $t\in\TID$.
    \item If $\rid$ is a regular register, then $\pending{s}=\pending{s'}$, and  $\posval{s,\tid} = \posval{s',\tid}$ for all $t\in\TID$.
    \item If $\rid$ is an atomic register, then $\pending{s}=\pending{s'}$.
\end{itemize}
\noindent
Now consider a full-read thread-register model $(\states, \actionset, \initstate, \transrel, \thrsym, \regsym)$.
For two states $s=(s_1,\dots,s_k)$ and $s'=(s'_1,\dots,s'_k)$ write $s \cong s'$ if $s_{\idx{\tid}}=s'_{\idx{\tid}}$
for each thread $t \in\TID$ and \plat{$s_{\idx{\rid}}\cong s'_{\idx{\rid}}$} for each register $r\in\RID$.
\end{defi}
\noindent
In case $s \cong s'$, these states are for our purposes indistinguishable. 
\begin{obs}\rm\label{obs:trivial-cong}
    Trivially, $s_1 \cong s'_1$ and $s_1\xrightarrow{a} s_2$ implies that $s'_1\xrightarrow{a} s'_2$ for some $s'_2$ with $s_2 \cong s'_2$.
\end{obs}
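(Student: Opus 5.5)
The plan is to argue componentwise, using the structure of the parallel composition and the fact that each register process of \cref{app:fullread} refers to its status only through the access functions that $\cong$ compares. Write $s_1=(u_1,\dots,u_k)$, $s'_1=(u'_1,\dots,u'_k)$, and let $s_1\xrightarrow{a}s_2$ with $s_2=(v_1,\dots,v_k)$. By the definition of parallel composition, each component $i$ with $a\notin\actionset_i$ has $v_i=u_i$. Each component with $a\in\actionset_i$ has $u_i\xrightarrow{a}v_i$. I will construct $s'_2=(v'_1,\dots,v'_k)$ component by component so that $u'_i\xrightarrow{a}v'_i$ for the participating components and $v'_i=u'_i$ for the others. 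The resulting tuple is then a transition of the composition by the same definition.

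For a thread component, $u'_{\idx{\tid}}=u_{\idx{\tid}}$, so I take $v'_{\idx{\tid}}=v_{\idx{\tid}}$, and equality is preserved. For non-participating register components, $v'_i=u'_i\cong u_i=v_i$. The remaining case is a participating register $\rid$ with $u\cong u'$ and $u\xrightarrow{a}v$.

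By \eqref{eq:LTS}, the transition $u\xrightarrow{a}v$ arises from some summand $j$ and parameters $\tid,\data$ with $c_j(u,\tid,\data)$ true, $a=a_j(\tid,\data)$ and $v=u_j(u,\tid,\data)$. I will inspect \cref{fig:fullread-procsafe,fig:fullread-procregular,fig:fullread-procatomic} and check two things. First, the guard $c_j$ and the label $a_j$ mention only access functions that $\cong$ compares for that register type; this includes data in labels such as $\trueval{s}$ and $\vals{s,\tid}$. Second, by \cref{obs:access-fullread}, every access function compared by $\cong$ is updated in $u_j$ using only compared access functions of the old state, together with $\tid$ and $\data$. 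The per-type bookkeeping is as follows.
\begin{itemize}
  \item \emph{Safe.} The guards use $\readerssym$, $\writerssym$ and $\overlapsym$. The labels and the data passed to $\ufwsym$ use $\truevalsym$ and $\valssym$. The update of $\overlapsym$ depends only on whether $\writerssym$ is nonempty. The safe process never consults $\pendingsym$ or $\posvalsym$, so it does not matter that $\cong$ ignores them.
  \item \emph{Regular.} The guards additionally use $\pendingsym$ and $\posvalsym$. The update of $\posvalsym$ uses $\truevalsym$, $\writerssym$ and $\valssym$, all of which are compared.
  \item \emph{Atomic.} The guards use $\pendingsym$, and the labels use $\valssym$. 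The updates of $\valssym$ and $\truevalsym$ use $\truevalsym$ and $\valssym$.
\end{itemize}
Given these checks, $c_j(u',\tid,\data)$ holds and $a_j(\tid,\data)$ is literally the same action. Setting $v'=u_j(u',\tid,\data)$ gives $u'\xrightarrow{a}v'$ and $v\cong v'$.

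The only real work is this case-by-case verification. The one point needing care is that an irrelevant access function (say $\overlapsym$ in a regular register) never influences a relevant one. This is visible directly from \cref{obs:access-fullread}: each case there refers only to $\truevalsym$, $\writerssym$, $\valssym$ and the old value of the function itself.
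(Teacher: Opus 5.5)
Your proof is correct. It takes the same approach as the paper, which simply asserts the observation as trivial: you lift componentwise through the parallel composition, then check summand by summand in the three full-read register processes that guards, labels and the updated compared access functions depend only on access functions that $\cong$ compares, which is exactly the reasoning behind that ``Trivially''.
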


\begin{lem}\rm\label{lem:swapping}
  Consider the LTS $(\states, \actionset, \initstate, \transrel)$\vspace{-2pt} associated with an atomic full-read register $\rid \mathbin\in \RID$. Let $s_0, s_1, s_2 \in \states$ and $a,b \in \actionset$ such that $s_0 \xrightarrow{a} s_1 \xrightarrow{b} s_2$ and $\thrmap{a} \neq \thrmap{b}$.\linebreak[3] Discard the options
  (i) $a \mathbin= \orderread[\tid,\rid] \wedge b \mathbin= \orderwrite[\tid',\rid]$,
  (ii) $a \mathbin= \orderwrite[\tid,\rid] \wedge b \mathbin= \orderread[\tid',\rid]$ and
  (iii) $a \mathbin= \orderwrite[\tid,\rid] \wedge b \mathbin= \orderwrite[\tid',\rid]$\vspace{-2pt} for some $t,t'\in \TID$.
Then there exists $s_3,s_4 \in \states$ such that $s_0 \xrightarrow{b} s_3 \xrightarrow{a} s_4$ and $s_2 \cong s_4$.
\end{lem}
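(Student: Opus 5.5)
We argue by direct case analysis on the pair $(a,b)$, using the process equation for $\Reg[\frRep,\atoRep]$ in \cref{fig:fullread-procatomic} together with \cref{obs:access-fullread}. Write $\tid = \thrmap{a}$ and $\tidtwo = \thrmap{b}$, so $\tid \neq \tidtwo$. Each of $a,b$ is one of the six action kinds $\startread$, $\orderread$, $\finishread{\cdot}$, $\startwrite{\cdot}$, $\orderwrite$, $\finishwrite$; excluding (i)--(iii) leaves the remaining combinations. For each we must show two things. First, \emph{enabledness}: the guard of $b$ holds in $s_0$, and after performing $b$ the guard of $a$ still holds, with the same data parameter. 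Second, \emph{commutation of updates}: applying the two update functions in either order yields states agreeing on $\truevalsym$, $\readerssym$, $\writerssym$, $\pendingsym$ and $\valssym$.

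For enabledness, the key observation is that every guard for thread $\tid$ mentions only the membership of $\tid$ in $\readerssym$, $\writerssym$ and $\pendingsym$. By \cref{obs:access-fullread}, an action of thread $\tidtwo$ changes only the memberships of $\tidtwo$. So the guard of $b$ evaluates identically in $s_0$ and $s_1$, and the guard of $a$ evaluates identically in $s_0$ and in the state reached by $b$. The data parameters also need care. The parameter of $\startwrite[\tid,\rid]{\data}$ is free. The finish actions carry either nothing or $\vals{s,\tid}$, which is untouched by the other thread because $\valssym$ is only updated at index $\thrmap{}$ of the acting thread. The update of $\finishwrite$ passes $\trueval{s}$, which leaves $\truevalsym$ unchanged whatever its value, so this too is harmless.

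For commutation, each update modifies only (a) entries indexed by its own thread in $\readerssym$, $\writerssym$, $\pendingsym$ and $\valssym$, and (b) possibly $\truevalsym$. The thread-indexed parts are disjoint for $\tid \neq \tidtwo$ and hence commute. The $\overlapsym$ and $\posvalsym$ components are irrelevant for atomic registers under $\cong$. What remains is the interaction through $\truevalsym$. It is written only by $\orderwrite$ (and trivially by $\finishwrite$), and read only by $\orderread$ (into $\valssym$) and by $\orderwrite$ (which overwrites it). The reading and writing of $\truevalsym$ therefore conflict only when both actions are order actions with at least one $\orderwrite$. These are exactly the excluded options (i)--(iii), and the pair $\orderread/\orderread$ commutes because neither action writes $\truevalsym$.

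The main obstacle is bookkeeping rather than mathematics: one must verify that no non-excluded pair hides a dependency on $\truevalsym$. The place to check carefully is $\finishread[\tid,\rid]{\vals{s,\tid}}$ following an $\orderwrite$ of the other thread, since the returned value is $\vals{}$ rather than $\truevalsym$. With that settled, we obtain $s_3$ by applying $b$'s update to $s_0$ and $s_4$ by applying $a$'s update to $s_3$, and conclude $s_2 \cong s_4$.
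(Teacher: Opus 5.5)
Your proof is correct and follows essentially the paper's route. In both, $b$ is enabled in $s_0$ because every guard depends only on the acting thread's membership in $\readerssym$, $\writerssym$, $\pendingsym$ (plus $\vals{s,\tidtwo}$ for a finish read), and then a check shows the two updates commute up to $\cong$, with only the order-action interactions on $\truevalsym$ causing trouble. The differences are cosmetic: the paper gets enabledness of $a$ in $s_3$ by citing thread-consistency (\cref{lem:thr-consist}) rather than repeating your guard argument, and it leaves the $\truevalsym$/$\valssym$ case distinction implicit, whereas you organise it by which updates read or write $\truevalsym$.
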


\begin{proof}
Let $t=\thrmap{a}$ and $t'=\thrmap{b}$.
By inspection of \cref{fig:fullread-procatomic} one sees that whether the action $b$ is enabled in a state $s \in \states$ depends on a Boolean combination of the conditions $t' \in\readers{s}$, $t' \in\writers{s}$ and $t' \in\pending{s}$---except in the case that $b$ is of the form $\finishread[\tid',\rid]{d}$, in which case it also depends on the value of $\vals{s,\tid'}$. Since $\thrmap{a}\neq t'$ we have $\vals{s_0,\tid'}=\vals{s_1,\tid'}$ as well as
$t' \in\readers{s_0} \Leftrightarrow t' \in\readers{s_1}$---see \cref{obs:access-fullread}---and likewise for $\writerssym$ and $\pendingsym$.
Since $b$ is enabled in $s_1$, it is thus also enabled in $s_0$, and hence there exists a transition \plat{$s_0 \xrightarrow{b} s_3$}. By \cref{lem:thr-consist} (thread-consistency) $a$ must be enabled in $s_3$, so there exists a transition \plat{$s_3 \xrightarrow{a} s_4$}.

Since $\thrmap{a} \neq \thrmap{b}$, performing the update functions associated with $a$ and $b$ in either order turns out to make no difference to the relevant access functions: it follows directly from \cref{obs:access-fullread} that $\readers{s_2} = \readers{s_4}$, $\writers{s_2} = \writers{s_4}$ and $\pending{s_2}=\pending{s_4}$.
A careful case distinction on $a$ and $b$ combined with \cref{obs:access-fullread} and \cref{fig:fullread-procatomic} shows that
$\trueval{s_2}=\trueval{s_4}$, and that
$\vals{s_2,\tid''}=\vals{s_4,\tid''}$.
\end{proof}
\noindent
In this proof the statement $\trueval{s_2}=\trueval{s_4}$ would fail if option (iii) were allowed,
the statement $\vals{s_2,\tid}=\vals{s_4,\tid}$ if option (i) were allowed, and
the statement $\vals{s_2,\tid'}=\vals{s_4,\tid'}$ if option (ii) were allowed.

\begin{prop}\rm\label{prop:swapping}
  Consider a full-read thread-register model $(\states, \actionset, \initstate, \transrel, \thrsym, \regsym)$\vspace{-2pt} with only atomic registers. Let $s_0, s_1, s_2 \in \states$ and $a,b \in \actionset$ such that $s_0 \xrightarrow{a} s_1 \xrightarrow{b} s_2$ and $\thrmap{a} \neq \thrmap{b}$. Discard the options
  (i) $a = \orderread[\tid,\rid] \wedge b = \orderwrite[\tid',\rid]$,
  (ii) $a = \orderwrite[\tid,\rid] \wedge b = \orderread[\tid',\rid]$ and
  (iii) $a = \orderwrite[\tid,\rid] \wedge b = \orderwrite[\tid',\rid]$ for some $t,t'\in \TID$.\vspace{-2pt}
Then there exists $s_3,s_4 \in \states$ such that $s_0 \xrightarrow{b} s_3 \xrightarrow{a} s_4$ and $s_2 \cong s_4$.
\end{prop}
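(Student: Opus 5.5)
To prove \cref{prop:swapping} I lift \cref{lem:swapping} from a single register to the whole parallel composition. Write $s_0=(s_{0,1},\dots,s_{0,k})$ and similarly for $s_1,s_2$. By the definition of parallel composition, each of $a$ and $b$ changes only the components whose action sets contain it. Every action belongs to at most two components: the thread $\thrmap{\cdot}$ and, if $\regmap{\cdot}\neq\undefsymb$, the register $\regmap{\cdot}$. Since $\thrmap{a}\neq\thrmap{b}$, $a$ and $b$ never share a thread component. So the only way their component sets can overlap is that $a$ and $b$ act on the same register $\rid=\regmap{a}=\regmap{b}$. I split into two cases.

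\emph{Case 1: $a$ and $b$ have disjoint sets of participating components.} Enabledness of $b$ in $s_0$ depends only on the components of $b$, and these are identical in $s_0$ and $s_1$. Hence $s_0\xrightarrow{b}s_3$, where $s_3$ agrees with $s_0$ except that $b$'s components take their values from $s_2$. By the same reasoning, $a$ is enabled in $s_3$ with the same local moves as in $s_0\xrightarrow{a}s_1$. The resulting $s_4$ is then componentwise equal to $s_2$, so trivially $s_2\cong s_4$. One could alternatively get $a$ enabled in $s_3$ from \cref{lem:thr-consist}, but the componentwise reasoning is needed anyway to pin down $s_4$.

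\emph{Case 2: $a$ and $b$ both involve register $\rid$.} The excluded options (i)--(iii) of the proposition are exactly those of \cref{lem:swapping}. So, projecting the path onto register $\rid$, \cref{lem:swapping} gives $s_{0,\rid}\xrightarrow{b}s'_{3}\xrightarrow{a}s'_{4}$ with $s'_4\cong s_{2,\rid}$. The thread components of $a$ and $b$ are distinct, so they move independently as in Case 1. Here I must check that each thread's local transition is still available in the reordered path. This holds because thread $\thrmap{b}$'s component is unchanged by $a$, and thread $\thrmap{a}$'s component is unchanged by $b$.

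The delicate point is that the thread transitions in the swapped order must carry the same labels. \cref{lem:swapping} reorders the same actions $a,b$, and in particular preserves data parameters such as the value $d$ in $\finishread[\tid',\rid]{d}$, so the thread components can synchronise unchanged. Composing these, I obtain $s_0\xrightarrow{b}s_3\xrightarrow{a}s_4$, where the register $\rid$ component of $s_4$ is $s'_4$, every other register component equals that of $s_2$, and every thread component equals that of $s_2$. By definition of $\cong$ on thread-register models this gives $s_2\cong s_4$. I expect the main obstacle to be the bookkeeping that the combined transitions really exist in the product. This comes down to the CSP-style composition rule: each component containing the action moves, and all others stay fixed.
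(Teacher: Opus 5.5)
Your proposal is correct and follows the same route as the paper. You split on whether $a$ and $b$ act on a common register: the disjoint case, including $\regmap{a}=\regmap{b}=\bot$, is trivial by the componentwise definition of parallel composition, and the shared-register case is a corollary of \cref{lem:swapping}. The only difference is that you spell out the bookkeeping the paper leaves implicit (each thread component is untouched by the other action, labels are identical so synchronisation is preserved, and $\cong$ is componentwise).
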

\begin{proof}
  In case $\regmap{a} = \regmap{b} = \bot$ or $\regmap{a} \neq \regmap{b}$, the statement follows trivially.
  In case $\regmap{a} = \regmap{b} \in\RID$ the statement arises as a corollary of \cref{lem:swapping}.
\end{proof}

\section{Transforming full-read thread LTSs into instant-read ones}\label{app:thread-transform}

As stated in \cref{sec:threads} and \cref{sec:thrregmodels}, we leave the exact modelling of threads as LTSs mostly open, so as to allow a wide variety of algorithms to be modelled. We only impose a number of restrictions to ensure they behave reasonably.

It is important, however, that the full-read and instant-read variants of a thread LTS are identical outside of the different approach to modelling read operations.
In this section, we describe how a full-read thread LTS can be transformed into its instant-read version.

Let $T_{\frRep} = (\states_{\frRep}, \actionset_{\frRep}, \initstate_{\frRep},\transrel_{\frRep})$ be an LTS modelling a thread $\tid\in\TID$ in the full-read approach. We create the instant-read variant $T_{\irRep}= (\states_{\irRep}, \actionset_{\irRep}, \initstate_{\irRep},\transrel_{\irRep})$. Here, $\actionset_{\irRep} = \actionset_{\frRep} \setminus \{\startread[\tid,\rid],\finishread[\tid,\rid]{\data} \mid \rid\in\RID,\,\data\in\Data[\rid]\} \cup \{\readop[\tid,\rid]{\data}\mid\rid\in\RID,\,\data\in\Data[\rid]\}$ and $\initstate_{\irRep} = \initstate_{\frRep}$.
To create $\states_{\irRep}$ and $\transrel{\irRep}$, start with $\states_{\frRep}$ and $\transrel_{\frRep}$ and execute the following steps for each transition $u\in\transrel_{\frRep}$ with label $\startread[\tid,\rid]$ for some $\rid\in\RID$:
\begin{itemize}
    \item Let $s$ be the source of $u$ and $s'$ the target. By \cref{thread-prop-2} of \cref{sec:threads}, $s'$ enables only transitions labelled with $\finishread[\tid,\rid]{\data}$ and does so for all $\data\in\Data[\rid]$.
    \item Drop each transition $(s', \finishread[\tid,\rid]{\data}, s'')$ and add transitions $(s, \readop[\tid,\rid]{\data},s'')$.
    \item Drop $u$ and $s'$ from $\transrel_{\irRep}$ and $\states_{\irRep}$ respectively.
\end{itemize}
Effectively, the target of transition $u$ is merged into its source, and then the start read transition is dropped while the finish read transitions are relabelled to read transitions.

This transformation can be reversed to obtain the full-read variant of a thread LTS back from its instant-read version.
Let $T_{\irRep} = (\states_{\irRep}, \actionset_{\irRep}, \initstate_{\irRep},\transrel_{\irRep})$ be an LTS modelling a thread $\tid\in\TID$ in the instant-read approach. We create full-read variant $T_{\frRep}= (\states_{\frRep}, \actionset_{\frRep}, \initstate_{\frRep},\transrel_{\frRep})$. Here, $\actionset_{\frRep} = \actionset_{\irRep} \setminus \{\readop[\tid,\rid]{\data}\mid\rid\in\RID,\,\data\in\Data[\rid]\} \cup\linebreak[3] \{\startread[\tid,\rid],\finishread[\tid,\rid]{\data} \mid
\rid\in\RID,\,\data\in\Data[\rid]\}$ and $\initstate_{\frRep} = \initstate_{\irRep}$.
To create $\states_{\frRep}$ and $\transrel_{\frRep}$, start with $\states_{\irRep}$ and $\transrel_{\irRep}$ and execute the following steps for each state $s\in\states_{\irRep}$ and each register $\rid\in\RID$ such that $s$ enables an action $\readop[\tid,\rid]{\data}$ for some (and thus all) $\data\in\Data[\rid]$:
\begin{itemize}
    \item Add a fresh state $s_r$ and a transition $(s,\startread[\tid,\rid],s_r)$.
    \item Replace each transition $(s, \readop[\tid,\rid]{\data},s'')$ by
      $(s_r, \finishread[\tid,\rid]{\data}, s'')$.
\end{itemize}

From these transformations, we can derive a mapping from states in the model of one type into sets of states in the model of another type; we name these mappings $\itofstate$ and $\ftoistate$.
\begin{itemize}
    \item To go from a full-read to an instant-read LTS, we only drop states. Thus, every state that is in $\states_{\irRep}$ also exists in $\states_{\frRep}$. Given a state $s\in\states_{\irRep}$, let $\itofstate(s)$ be the singleton set containing only that same state $s\in\states_{\frRep}$.
    \item To go from an instant-read to a full-read LTS, we only add new states. Given a state $s\in\states_{\frRep}$ that also exists in $\states_{\irRep}$, let $\ftoistate(s)$ be the singleton set containing only that same state. Given a state $s\in\states_{\frRep}$ such that $s\notin\states_{\irRep}$, there is only a single transition $(s_b,\startread[\tid,\rid],s)$, for some $\rid\in\RID$, in $\transrel_{\frRep}$ that has $s$ as its target. Additionally, $s$ enables exclusively finish read actions, let $S_a$ be the set of all targets of all transitions enabled in $s$. Note that $\{s_b\} \cup S_a \subseteq  \states_{\irRep}$. Let $\ftoistate(s)$ be $\{s_b\} \cup S_a$.
\end{itemize}

When working with these transformations, it is useful to establish that certain properties are preserved, specifically with regards to which actions are enabled.

\begin{lem}\rm\label{lem:itof-preserve}
    Let $s$ be a state in $T_{\irRep}$ and let $s'$ be a state in $T_{\frRep}$ such that $s' \in \itofstate(s)$.
    If $s$ enables an action $a \notin \{\readop[\tid,\rid]{\data}\mid \tid\in\TID,\rid\in\RID,\data\in\Data[\rid]\}$ then $s'$ also enables $a$.
    Additionally, if $s'$ enables an action $a \notin\{\startread[\tid,\rid],\finishread[\tid,\rid]{\data}\mid\tid\in\TID,\rid\in\RID,\data\in\Data[\rid]\}$ then $s$ also enables $a$.
\end{lem}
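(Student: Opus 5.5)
The plan is to work directly from the construction of $T_{\frRep}$ out of $T_{\irRep}$, the reverse transformation in \cref{app:thread-transform}. The key observation is that this transformation never alters any transition not labelled with a read action of $T_{\irRep}$. It only adds fresh states $s_r$, adds the transitions $(s,\startread[\tid,\rid],s_r)$, and redirects read transitions $(s,\readop[\tid,\rid]{\data},s'')$ so that they start from $s_r$ and are relabelled $\finishread[\tid,\rid]{\data}$. Since $s' \in \itofstate(s)$ means that $s'$ is the very same state $s$, regarded as an element of $\states_{\frRep}$, both claims reduce to comparing the outgoing transitions of one state in the two LTSs.

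For the first claim, suppose $s$ enables $a$ with $a$ not a read action. I will take a transition $(s,a,s'')\in\transrel_{\irRep}$ and argue that it is untouched by the transformation. The steps of the construction only remove transitions labelled $\readop[\tid,\rid]{\data}$, so $(s,a,s'')$ survives into $\transrel_{\frRep}$. Hence $s'=s$ enables $a$ in $T_{\frRep}$.

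For the second claim, suppose $s'$ enables some $a$ in $T_{\frRep}$, where $a$ is neither a start read nor a finish read. I will take a transition $(s',a,s'')\in\transrel_{\frRep}$ and check which transitions the construction introduces. These are exactly the start read transitions from original states and the finish read transitions from fresh states. Since $a$ is of neither kind, the transition must have been present in $\transrel_{\irRep}$, and its source $s'$ is not fresh. So the transition comes from $s$, and $s$ enables $a$ in $T_{\irRep}$.

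The only delicate point is bookkeeping. First, I need that $s'$ is never one of the fresh states $s_r$; this holds because $\itofstate$ maps into original states only. Second, I need that the transformation's steps for different pairs $(s,\rid)$ do not interfere with one another: each step removes only read transitions out of its own state $s$, and adds only transitions with labels excluded by the hypotheses. I expect no real obstacle beyond stating these facts carefully by inspecting the construction.
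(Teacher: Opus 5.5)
Your proposal is correct and is essentially the paper's argument. Both proofs inspect the thread transformation of \cref{app:thread-transform}, use that $\itofstate(s)$ is the singleton containing $s$ itself, and note that only read-related transitions are dropped or added, so every other outgoing transition of $s$ is preserved in both directions. The one cosmetic difference is that you read this off the instant-read-to-full-read construction, whereas the paper uses the full-read-to-instant-read construction (the one that defines $\itofstate$); the paper states that the two constructions are mutually inverse, so this makes no difference.
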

\begin{proof}
    Note that the $\itofstate(s)$ is a singleton set, thus we know exactly which state $s'$ is: the state that becomes $s$ when we transform $T_{\frRep}$ into $T_{\irRep}$ using the transformation described above.
    In this transformation, the only transitions that are dropped are ones labelled with start read and finish read actions, and the only transitions that are added are those with read actions.
    Thus, for all other actions, the transitions are preserved.
    Hence when an action $a \notin \{\startread[\tid,\rid],\finishread[\tid,\rid]{\data},\readop[\tid,\rid]{\data}\mid\tid\in\TID,\rid\in\RID,\data\in\Data[\rid]\}$ is enabled in state $s$ (resp. $s'$), it is also enabled in $s'$ (resp. $s$).
\end{proof}

\begin{lem}\rm\label{lem:itof-read}
    Let $s$ be a state in $T_{\irRep}$ and let $s'$ be a state in $T_{\frRep}$ such that $s' \in \itofstate(s)$.
    If $s$ enables an action $a = \readop[\tid,\rid]{\data}$ for some $\tid\in\tid,\rid\in\RID,\data\in\Data[\rid]$ then $s'$ enables a transition $(s', \startread[\tid,\rid],s'')$ and $s''$ enables $\finishread[\tid,\rid]{\data}$.
    Additionally, if $s'$ enables an action $a = \startread[\tid,\rid]$ for some $\tid\in\TID,\rid\in\RID$, then $s$ enables $\readop[\tid,\rid]{\data}$ for all $\data\in\Data[\rid]$.
\end{lem}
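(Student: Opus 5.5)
The plan is to argue directly from the two-way transformation in \cref{app:thread-transform}, as in the proof of \cref{lem:itof-preserve}. Since $\itofstate(s)$ is a singleton, $s'$ is the state of $T_{\frRep}$ that survives as $s$ when $T_{\frRep}$ is turned into $T_{\irRep}$. Equivalently, it is the original state $s$ to which the reverse transformation attaches fresh states $s_r$. I will use the reverse transformation (instant-read to full-read), because it states explicitly which transitions are added. The two transformations are mutually inverse, so describing $T_{\frRep}$ in either way gives the same LTS up to naming of the intermediate states.

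For the first claim, suppose $s$ enables $\readop[\tid,\rid]{\data}$. Then $s$ is a state enabling some read action on register $\rid$. The reverse transformation therefore adds a fresh state $s_r$ together with the transition $(s,\startread[\tid,\rid],s_r)$. It also replaces every transition $(s,\readop[\tid,\rid]{\data'},s''')$ by $(s_r,\finishread[\tid,\rid]{\data'},s''')$. In particular the transition labelled $\readop[\tid,\rid]{\data}$ becomes a transition from $s_r$ labelled $\finishread[\tid,\rid]{\data}$. Taking $s''=s_r$ gives both required facts. The thread id must be the one owning the LTS, since thread LTSs only contain their own interface actions.

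For the second claim, suppose $s'$ enables $\startread[\tid,\rid]$ via some transition $u=(s',\startread[\tid,\rid],s_r)$. I will use the forward transformation. By \cref{thread-prop-2}, $s_r$ enables exactly the actions $\finishread[\tid,\rid]{\data}$, and it does so for every $\data\in\Data[\rid]$, say via transitions to states $s''_{\data}$. The forward step applied to $u$ adds $(s',\readop[\tid,\rid]{\data},s''_{\data})$ for every $\data$, and $s'$ is identified with $s$. Hence $s$ enables $\readop[\tid,\rid]{\data}$ for all $\data\in\Data[\rid]$.

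The only delicate point is the bookkeeping: the forward step treats each start-read transition separately, and other steps drop only start-read and finish-read transitions together with the intermediate states. I need to confirm that no step removes the newly added read transitions. I also need to confirm that $s'$ is never itself a dropped intermediate state, which holds because $s'$ lies in the image of $\itofstate$. Once the correspondence of states is fixed, both directions follow by unfolding the definitions.
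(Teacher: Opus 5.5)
Your proof is correct, and your argument for the second claim is the same as the paper's: apply the forward step to the start-read transition $u$, and use \cref{thread-prop-2} to get a finish read for every $\data\in\Data[\rid]$.

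For the first claim you take a different route. You go through the reverse (instant-read to full-read) construction and then rely on the two transformations being mutually inverse. The paper instead stays with the forward transformation, which is also what defines $\itofstate$. In $T_{\irRep}$, a transition $(s,\readop[\tid,\rid]{\data},x)$ exists only because the forward step created it. It was created from a pair $(s,\startread[\tid,\rid],s_m)$, $(s_m,\finishread[\tid,\rid]{\data},x)$ of $T_{\frRep}$. Hence $s'=s$ enables $\startread[\tid,\rid]$, and the intermediate state $s_m$ enables $\finishread[\tid,\rid]{\data}$. No inverseness is needed.

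The forward reading is both shorter and more robust. The paper asserts the inverseness you invoke but never proves it, and it holds literally only under extra well-formedness of the thread LTS. For example, if some thread state had two distinct $\startread[\tid,\rid]$-transitions, forward-then-reverse would merge the two intermediate states into a single fresh state. The reconstructed LTS would then not be $T_{\frRep}$ up to renaming. The forward argument only uses how read transitions of $T_{\irRep}$ are generated, so this issue does not arise. If you keep your version, state that well-formedness assumption explicitly; otherwise just replace the first half by the forward argument.
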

\begin{proof}
    Note that the $\itofstate(s)$ is a singleton set, thus we know exactly which state $s'$ is: the state that becomes $s$ when we transform $T_{\frRep}$ into $T_{\irRep}$ using the transformation described above.

    By this transformation, if $s'$ enables $\startread[\tid,\rid]$ then it is preserved, but the $\startread[\tid,\rid]$ transition is replaced with transitions labelled with $\readop[\tid,\rid]{\data}$ for all $\data\in\Data[\rid]$. Thus, if $s'$ enables $\startread[\tid,\rid]$ then $s$ enables $\readop[\tid,\rid]{\data}$ for all $\data\in\Data[\rid]$.

    Note that the state originally reached by $\startread[\tid,\rid]$ enabled $\finishread[\tid,\rid]{\data}$ for all $\data\in\Data[\rid]$; thus also proving that when $s$ enables $\readop[\tid,\rid]{\data}$, $s'$ enables $\startread[\tid,\rid]$ and the state after enables $\finishread[\tid,\rid]{\data}$.
\end{proof}

\begin{lem}\rm\label{lem:ftoi-preserve}
    Let $s$ be a state in $T_{\frRep}$ and let $s'$ be a state in $T_{\irRep}$ such that $s' \in \ftoistate(s)$.
    If $s$ enables an action $a \notin\{\startread[\tid,\rid],\finishread[\tid,\rid]{\data}\mid\tid\in\TID,\rid\in\RID,\data\in\Data[\rid]\}$ then $s'$ also enables $a$.
    Additionally, if $s'$ enables an action $a \notin \{\readop[\tid,\rid]{\data}\mid \tid\in\TID,\rid\in\RID,\data\in\Data[\rid]\}$ then $s$ also enables $a$ or it enables some action $\finishread[\tid,\rid]{\data}$.
\end{lem}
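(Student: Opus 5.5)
The plan is a case distinction on whether $s$ already exists in $T_{\irRep}$, following the pattern of the proofs of \cref{lem:itof-preserve,lem:itof-read}. Everything rests on one fact about the transformation of \cref{app:thread-transform}. It only ever deletes transitions labelled $\startread[\tid,\rid]$ or $\finishread[\tid,\rid]{\data}$, together with the intermediate states they pass through. It only ever adds transitions labelled $\readop[\tid,\rid]{\data}$. Every other transition between surviving states is preserved unchanged.

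\emph{Case 1: $s\in\states_{\irRep}$.} Then $\ftoistate(s)=\{s\}$, so $s'=s$. If $s$ enables some $a$ that is not a start read or finish read action in $T_{\frRep}$, then the corresponding transition is not touched by the transformation, and $s'$ enables $a$. Conversely, if $s'$ enables some $a$ that is not a read action, then that transition was not added by the transformation. It therefore already existed in $T_{\frRep}$, so $s$ enables $a$. This gives both directions, in the first alternative of the second claim.

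\emph{Case 2: $s\notin\states_{\irRep}$.} Then $s$ is the target of a unique transition $(s_b,\startread[\tid,\rid],s)$, and $\ftoistate(s)=\{s_b\}\cup S_a$. By \cref{thread-prop-2}, $s$ enables only the actions $\finishread[\tid,\rid]{\data}$.
\begin{itemize}
\item The first claim holds vacuously, since $s$ enables no action outside the start read and finish read actions.
\item The second claim holds because $s$ enables some $\finishread[\tid,\rid]{\data}$; in fact it enables all of them. So for every $s'\in\ftoistate(s)$ and every action $a$ that $s'$ enables, the second alternative of the disjunction is satisfied.
\end{itemize}

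No step is a real obstacle. The only care needed is in Case 1: I must check that the transformation never removes a non-read transition whose source survives, so that the transitions leaving $s$ are preserved. The transformation only drops $u$ itself, the state $s'$, and the finish read transitions leaving $s'$. Since $s'$ is removed from the LTS, none of these is an outgoing transition of a surviving state with a label outside the start read and finish read actions. This settles it.
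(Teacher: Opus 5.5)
Your proof is correct and is essentially the paper's argument. It uses the same case split on whether $s$ survives in $\states_{\irRep}$: in the first case $s'=s$ and all non-read transitions are preserved; in the second, $s$ enables only finish-read actions, so the first claim is vacuous and the second holds via its second disjunct, with the paper organising it by claim rather than by case.
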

\begin{proof}
    The mapping $\ftoistate(s)$ is a singleton set unless $s$ does not also exist in $\states_{\irRep}$. The states that exist in $\states_{\frRep}$ but not $\states_{\irRep}$ are exactly those states that are reached by a start read action and enable finish read actions.

    If $s$ enables an action $a \notin\{\startread[\tid,\rid],\finishread[\tid,\rid]{\data}\mid\tid\in\TID,\rid\in\RID,\data\in\Data[\rid]\}$ then $s \in \states_{\irRep}$. Thus, if we transform $T_{\irRep}$ into $T_{\frRep}$ using the transformation above, $s'$ becomes $s$. Since this transformation only drops states labelled with read actions, all actions that are enabled in $s$ that are not start or finish read actions are also enabled in $s'$.

    If $s'$ enables an action $a \notin \{\readop[\tid,\rid]{\data}\mid \tid\in\TID,\rid\in\RID,\data\in\Data[\rid]\}$, then both $s\in\states_{\irRep}$ and $s\notin\states_{\irRep}$ are possible.
    In the former case, $\ftoistate(s)$ is a singleton set and a similar argument as above gives us that $a$ is enabled in $s$.
    In the latter case, $s$ is a state that enables exclusively finish read actions, so it enables some action $\finishread[\tid,\rid]{\data}$.
\end{proof}

\begin{lem}\rm\label{lem:ftoi-read}
    Let $s$ be a state in $T_{\frRep}$ and let $s'$ be a state in $T_{\irRep}$ such that $s' \in \ftoistate(s)$.
     If $s$ enables an action $a=\startread[\tid,\rid]$ for some $\tid\in\TID,\rid\in\RID$, then $s'$ enables $\readop[\tid,\rid]{\data}$ for all $\data\in\Data[\rid]$.
     If $s'$ enables an action $a=\readop[\tid,\rid]{\data}$ for some $\tid\in\TID,\rid\in\RID,\data\in\Data[\rid]$, then $s$ enables $\startread[\tid,\rid]$ or $\finishread[\tid,\rid]{\data}$.
\end{lem}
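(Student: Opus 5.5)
The plan is a case split on whether $s$ also belongs to $\states_{\irRep}$, i.e.\ on whether $\ftoistate(s)$ is a singleton. Throughout, I rely on the explicit transformation from $T_{\frRep}$ to $T_{\irRep}$ given above and on \cref{thread-prop-2} of \cref{sec:threads}.

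\emph{Case $s \in \states_{\irRep}$.} Then $\ftoistate(s)=\{s\}$, so $s'=s$. For the first claim, suppose $s$ enables $\startread[\tid,\rid]$ via a transition $u=(s,\startread[\tid,\rid],s_r)$. By \cref{thread-prop-2}, $s_r$ enables $\finishread[\tid,\rid]{\data}$ for every $\data\in\Data[\rid]$. The transformation replaces each such transition $(s_r,\finishread[\tid,\rid]{\data},s'')$ by $(s,\readop[\tid,\rid]{\data},s'')$. Hence $s'=s$ enables $\readop[\tid,\rid]{\data}$ for all $\data$.

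For the second claim, suppose $s'$ enables $\readop[\tid,\rid]{\data}$. Read transitions arise in $\transrel_{\irRep}$ only from the transformation step, which adds $(s,\readop[\tid,\rid]{\data},s'')$ exactly when there was a start-read transition from $s$ in $T_{\frRep}$ followed by the matching finish read. So in $T_{\frRep}$ the state $s$ enables $\startread[\tid,\rid]$.

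\emph{Case $s \notin \states_{\irRep}$.} Then $s$ is the target of a unique transition $(s_b,\startread[\tid_0,\rid_0],s)$, and by \cref{thread-prop-2} it enables exactly the actions $\finishread[\tid_0,\rid_0]{\data}$ for all $\data\in\Data[\rid_0]$. Here $\tid_0$ is the thread owning the LTS.

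The first claim is then vacuous, since $s$ enables no start read. For the second claim, $s'$ lies either equal to $s_b$ or in $S_a$.

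If $s'=s_b$, I need to show that whenever $s_b$ enables $\readop[\tid,\rid]{\data}$, the state $s$ enables $\startread[\tid,\rid]$ or $\finishread[\tid,\rid]{\data}$. By the first case applied to $s_b$, the state $s_b$ enables $\startread[\tid,\rid]$ in $T_{\frRep}$. By \cref{thread-prop-2}, however, $s_b$ (being a thread state enabling a start read) cannot enable start reads on two different registers that then continue differently. This step is the main obstacle. I would argue that $\rid=\rid_0$, since a thread LTS invoking a read is committed to that register. Granting this, $s$ enables $\finishread[\tid,\rid]{\data}$ for every $\data$, in particular the required one.

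If $s'\in S_a$, then $s'$ is the target of a finish read from $s$. Here I would check whether the lemma is really intended for such $s'$. If $s'$ enables a read action, it is not clear that $s$ enables the corresponding start or finish read. I expect the intended reading is that $\ftoistate$ relates $s$ to $s_b$ for the purposes of read enabledness. Alternatively, one can restrict to the natural case where the read of $s'$ is the pending one; the pending read is exactly the one $s$ finishes, which gives the claim.

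Concretely, I would first try to establish that the only read enabled in any $s'\in\ftoistate(s)$ on the relevant register is the one $s$ finishes. Failing that, I would note the restriction explicitly.
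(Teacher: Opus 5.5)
Your case split, your argument for $s\in\states_{\irRep}$, and your observation that the first claim is vacuous when $s\notin\states_{\irRep}$ coincide with the paper's proof. The gap is the second claim when $s\notin\states_{\irRep}$, and it cannot be closed as stated.

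Your step ``$r=r_0$ by requirement (2)'' does not follow. That requirement only constrains what a thread does \emph{after} invoking a read. Nothing prevents $s_b$ from offering start reads on two different registers, and then $s'=s_b$ enables $\readop[t,r_2]{d}$ while $s$ only finishes a read on $r_1$.

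For $s'\in S_a$ no branching is even needed. Take a thread with $q_0\xrightarrow{\startread[t,r_1]}q_1\xrightarrow{\finishread[t,r_1]{d}}q_2$ for all $d\in\Data[r_1]$, and $q_2\xrightarrow{\startread[t,r_2]}q_3\xrightarrow{\finishread[t,r_2]{e}}q_4$ for all $e\in\Data[r_2]$, with $r_2\neq r_1$. This satisfies requirements (1)--(4), and $\ftoistate(q_1)=\{q_0,q_2\}$. In $T_{\irRep}$ the state $q_2$ enables $\readop[t,r_2]{e}$, yet $q_1$ enables neither $\startread[t,r_2]$ nor $\finishread[t,r_2]{e}$. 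So the second claim is false in this subcase. The paper's own proof has the same hole: it simply asserts that $s$ enables all $\finishread[t,r]{d'}$, ``including $d$'', tacitly identifying the register of the read enabled at $s'$ with the register of the read pending at $s$.

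Rather than granting $r=r_0$, or trying to show that the only read enabled in $S_a$ is the pending one (which the example refutes), state and prove a corrected lemma. Your case split proves verbatim that if $s'$ enables $\readop[t,r]{d}$, then $s$ enables $\startread[t,r]$ or \emph{some} finish read action:
\begin{itemize}
\item in the singleton case $s'=s$, and the transformation yields the start read on $r$;
\item otherwise $s$ enables $\finishread[t,r_0]{d'}$ for every $d'$.
\end{itemize}
Alternatively, restrict the original second claim to $r=r_0$ when $s\notin\states_{\irRep}$. The weaker form is all the later arguments actually use. In \cref{lem:thread-enabled-frtoir}, and hence in \cref{lem:conc-frtoir}, the finish-read alternative only serves as a thread-enabled non-blockable action outside every $\textit{start}(r)$. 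In \cref{lem:fr2ir} the thread is known by construction to be midway through the read on $r$.
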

\begin{proof}
    The mapping $\ftoistate(s)$ is a singleton set unless $s$ does not also exist in $\states_{\irRep}$. The states that exist in $\states_{\frRep}$ but not $\states_{\irRep}$ are exactly those states that are reached by a start read action and enable finish read actions.

    If $s$ enables an action $\startread[\tid,\rid]$, then it does not enable finish read actions because a thread cannot perform two operations at the same time. Thus, $s\in\states_{\irRep}$ and $s'$ is the same state in $T_{\irRep}$. In the transformation from $T_{\irRep}$ into $T_{\frRep}$, start read transitions are added in place of read transitions, thus $s'$ enables $\readop[\tid,\rid]{\data}$ for all $\data\in\Data[\rid]$.

    If $s'$ enables an action $\readop[\tid,\rid]{\data}$, then $s\in\states_{\irRep}$ and $s\notin\states_{\irRep}$ are both possible.
    In the former case, it is the state where $\startread[\tid,\rid]$ is enabled.
    In the latter case, it enables all $\finishread[\tid,\rid]{\data'}$ for all $\data'\in\Data[\rid]$, including $\data$.
\end{proof}

\section{Equivalence on actions for instant-reads}\label{app:equivalence}
In \cref{sec:justness}, we briefly argued that the equivalence relation on actions we use to define justness must equate all actions $\readop[\tid,\rid]{\data}$ and $\readop[\tid,\rid]{\data'}$ for $\tid\in\TID,\rid\in\RID,\data,\data'\in\Data[\rid]$.
Here, we expand on the argument that using $\mathit{id}$ for $\actequivsym$ would result in the instant-read models not being thread-consistent.

Consider a state $s$ in which the action $a= \readop[\tid,\rid]{\data}$ is enabled and a transition $(s, b, s')$ such that $\thrmap{b} \neq \thrmap{a} = \tid$. In all three instant-read register models, it is possible that $a$ is no longer enabled in $s'$:
\begin{itemize}
    \item 
    For the safe register model, consider $b = \finishwrite[\tid',\rid]$ where $\tid' \neq \tid$ and $\tid'$ is the only active writer to $\rid$ in $s$. That means that in state $s$ the value $\data$ that is read by $a = \readop[\tid,\rid]{\data}$ may be an arbitrary value in the domain of the register due to there being an overlapping write. However, in state $s'$ there are no more active writes to $\rid$, and hence there is only one value that $\tid$ can read from $\rid$ in $s'$, which need not be $\data$. Thus, it is possible for $a$ to be disabled in $s'$.
    \item 
    For the regular register model, consider $b = \orderwrite[\tid', \rid]$ where $\tid' \neq \tid$, $\tid'$ is the only active writer to $\rid$ in $s$, the value $\data$ is the stored value of the register in $s$, and the write by $\tid'$ is writing a different value $\data'$. Then $\tid$ can read both $\data$ and $\data'$ from $\rid$ in $s$, but only $\data'$ in $s'$. Hence, $a$ is disabled in $s'$.
    \item 
    For the atomic register model, consider the same scenario as described for the regular register model above. Then $\tid$ can read only $\data$ from $\rid$ in $s$, and only $\data'$ in $s'$. Hence, $a$ is disabled in $s'$.
\end{itemize}
\noindent
We prove in \cref{app:thrconsist} that by defining $\actequivsym$ as we do in \cref{sec:justness}, the instant-read models are thread-consistent.
From this it follows (as argued in \cref{app:concTproof} and \cref{obs:subset}) that our concurrency relations are valid for the instant-read models.

It may seem strange to equate reads that return different values, since the returned value is important to how a thread behaves afterwards.
However, note that we only use this equivalence relation for the purpose of determining which paths are just, and \cref{def:concT,def:concS,def:concI,def:concA} treat these equivalent actions identically regardless: if $a = \readop[\tid,\rid]{\data}$ and $b=\readop[\tid,\rid]{\data'}$ with $\data \neq \data'$, then we have $\thrmap{a} = \thrmap{b} = \tid$, $\isread{a} = \isread{b} = \true$, $\iswrite{a} = \iswrite{b} = \false$ and $\regmap{a}=\regmap{b} = \rid$.
It therefore holds for $C \in \{T,S,I,A\}$ that $\actequiv{a}{b} \land b \nconc_C c \land \actequiv{c}{d} \Rightarrow a \nconc_C d$ for all actions $a,b,c,d$.
Thus, as far as justness with our four concurrency relation is concerned, there is no difference between using $\actequivsym$ or $\mathit{id}$, beyond making the concurrency relations valid for our models.

\section{Thread-consistency proof}\label{app:thrconsist}

In this appendix, we prove \cref{lem:thr-consist} for both full-read and instant-read thread-register models.
We use the following lemma in our proof of \cref{lem:thr-consist} for full-read thread-register models.
\begin{lem}\rm\label{lem:overlap}
    Let $s$ be the state of a full-read safe register that is reachable from its initial state, and let $\tid \in \TID$ be a thread identifier. If $\tid \in \readers{s}$ and $\writers{s} \neq \emptyset$, then $\overlap{s,\tid}$.
\end{lem}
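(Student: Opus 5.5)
The plan is to prove an invariant by induction on the length of a path from the initial state $\initstate$ of the full-read safe register to $s$. The invariant $P(s)$ is: for all $\tid \in \TID$, if $\tid \in \readers{s}$ and $\writers{s} \neq \emptyset$, then $\overlap{s,\tid}$. In the base case, $\readers{\initstate} = \emptyset$, so $P(\initstate)$ holds vacuously. For the inductive step, assume $P(s)$ and a transition $s \xrightarrow{a} s'$. I then do a case analysis on $a$, reading off the effect on $\readerssym$, $\writerssym$ and $\overlapsym$ from \cref{obs:access-fullread} (restricted to the actions present in \cref{fig:fullread-procsafe}: start/finish read and start/finish write).

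The cases are as follows. Fix $\tid$ with $\tid \in \readers{s'}$ and $\writers{s'} \neq \emptyset$.
\begin{itemize}
\item If $a = \startread[\tid,\rid]$, then $\writers{s'} = \writers{s}$ and $\overlap{s',\tid} = (\writers{s} > 0)$. The latter is exactly the nonemptiness of $\writers{s'}$, so it holds. Here I read ``$\writers{s} > 0$'' as ``the set is nonempty''.
\item If $a = \startread[\tidtwo,\rid]$ with $\tidtwo \neq \tid$, or $a = \finishread[\tidtwo,\rid]{\data}$ for any $\tidtwo$, nothing relevant to $\tid$ changes. The reader set only shrinks or gains $\tidtwo$, $\writerssym$ is unchanged, and $\overlap{\cdot,\tid}$ is unchanged. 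The claim therefore transfers from $P(s)$.
\item If $a = \startwrite[\tidtwo,\rid]{\data}$, the guard gives $\tidtwo \notin \readers{s}$, and the action does not add to $\readerssym$, so $\tid \neq \tidtwo$. Hence $\overlap{s',\tid} = \true$ directly by the update $\overlap{s',\tidtwo'} = \true$ for every $\tidtwo' \neq \tidtwo$.
\item If $a = \finishwrite[\tidtwo,\rid]$, then $\writers{s'} \subseteq \writers{s}$, so $\writers{s} \neq \emptyset$. Also $\readers{s'} = \readers{s}$ and $\overlap{s',\tid} = \overlap{s,\tid}$, so $P(s)$ gives the result.
\end{itemize}

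The only delicate point is the start-write case. There I need that a thread cannot be reading and writing simultaneously, which is guaranteed by the guard $\tid \notin (\readers{s} \cup \writers{s})$. I also need that the start-write update sets $\overlapsym$ to $\true$ for all other threads, not just for active readers. Both are immediate from the definitions, so I expect no real obstacle; the main care needed is reading the update functions correctly, including that $\ufwsym$ does not touch $\overlapsym$.
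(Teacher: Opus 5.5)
Your proof is correct, but it takes a different route from the paper's.

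\textbf{Your approach.} You prove the statement as a one-step inductive invariant over all reachable states, quantified over all threads at once, with a case split on the last transition.

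\textbf{The paper's approach.} The paper fixes a reachable $s$, a witness $\tidtwo \in \writers{s}$ and a path to $s$. It locates the \emph{last} occurrence on that path of $\startread[\tid,\rid]$ or $\startwrite[\tidtwo,\rid]{\data}$. It then shows that $\overlap{\cdot,\tid}$ becomes true at that transition: in the read case because $\tidtwo$ was already writing, in the write case because $\tid \neq \tidtwo$. Finally it argues that nothing in the remaining suffix can reset the flag, since only $\startread[\tid,\rid]$ and $\startwrite[\tid,\rid]{\data'}$ can do so.

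\textbf{Comparison.} Your version is more modular: each case is a local check against the update functions. In the start-write case you get $\tid \neq \tidtwo$ directly from the register's own guard $\tidtwo \notin \readers{s} \cup \writers{s}$. You also never need the auxiliary facts the paper establishes about the suffix, namely no finish read and no start write by $\tid$. The paper justifies the second of these via the thread requirement \cref{thread-prop-2}, even though the lemma concerns the register LTS in isolation, where the register guard already suffices. In exchange, the paper's argument pinpoints the concrete event that sets the overlap flag. This mirrors the informal reading ``either the read began during $\tidtwo$'s write, or $\tidtwo$'s write began during the read.''
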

\begin{proof}
  Let $\rid$ be a full-read safe register process with LTS $(\states, \actionset, \initstate, \transrel)$ and let $s \in \states$ be a state reachable from $\initstate$. Let $\tid$ be an element of $\TID$ and assume that $\tid \in \readers{s}$ and $\writers{s} \neq \emptyset$. Then there exists $\tidtwo \in \TID$ such that $\tidtwo \in \writers{s}$. Since a thread cannot read and write simultaneously, we know that $\tid \neq \tidtwo$. We prove $\overlap{s,\tid}$. Recall that $\readers{\initstate} = \writers{\initstate} = \emptyset$. Since $\tid \in \readers{s}$ and $\tidtwo \in \writers{s}$, it follows from the definitions of $\readerssym$ and $\writerssym$ that $s \neq \initstate$ and that every path from $\initstate$ to $s$ must contain occurrences of $\startread[\tid,\rid]$ and $\startwrite[\tidtwo,\rid]{\data}$ for some $\data \in \Data$. Let $\pi$ be an arbitrary path from $\initstate$ to $s$, and consider the last occurrence $oc$ of one of these actions on $\pi$. Let $s_b \startread[\tid,\rid] s_a$ or $s_b \startwrite[\tidtwo,\rid]{\data} s_a$ be the fragment $\pi$ containing $oc$, and let $\pi'$ be suffix of $\pi$ starting right after $oc$; so $\pi'$ goes from $s_a$ to $s$ and has no occurrences of $\startread[\tid,\rid]$ and $\startwrite[\tidtwo,\rid]{\data'}$ for some $\data' \in \Data$.
Thus, in case $oc$ is an occurrence of $\startread[\tid,\rid]$, we have $\tidtwo \in \writers{s_b}$ by \cref{obs:access-fullread}.
If $\pi'$ would contain an occurrence of $\finishread[\tid,\rid]{\data'}$ for some $\data' \in \Data$, then $\tid \notin \readers{s}$, so there is no such occurrence.
By Property (\ref{thread-prop-2}) in \cref{sec:threads},
$\pi$ cannot have occurrences of $\startwrite[\tid,\rid]{\data'}$ for some $\data' \in \Data$ either.
So \cref{obs:access-fullread} implies that $\overlap{s_a,\tid} \Rightarrow \overlap{s,\tid}$. Thus it suffices to show $\overlap{s_a,\tid}$.
For both cases of $oc$ this follows immediately from the definition of $\overlapsym$ as given in \cref{obs:access-fullread}. \qedhere
\end{proof}

We prove \cref{lem:thr-consist} separately for full-read and instant-read models. From that, we can conclude that \cref{lem:thr-consist} holds in both cases.
The two proofs are highly similar, since the two modelling approaches are identical save for how they model reads. 
To reduce the repetitiveness of the proofs, we highlight the full-read specific parts in the full-read proof in \highlight{\colourname}. For the instant-read proof, we merely state how those parts differ.

\begin{lem}\rm\label{lem:thr-consist-fullread}
    Let $M = (\states,\actionset,\initstate,\transrel,\thrsym,\regsym)$ be a full-read thread-register model. Then the LTS $(\states, \actionset, \initstate,\transrel)$ is thread-consistent with respect to the mapping $\thrsym$, and $\actequivsym$ as defined in \cref{sec:justness}.
\end{lem}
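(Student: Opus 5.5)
We must show: if $a$ is enabled in $s$ and $s \xrightarrow{b} s'$ with $\thrmap{a} \neq \thrmap{b}$, then $a$ itself (the full-read equivalence is the identity, so $c=a$ is forced) is enabled in $s'$. The plan is a case analysis on the components of the parallel composition that participate in $a$ and in $b$.

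Since the model is a CSP-style parallel composition, $a$ is enabled in $s$ exactly when every component whose alphabet contains $a$ enables it in its local state. By construction, $a$ belongs to the alphabet of thread $\thrmap{a}$ (if it is thread-local or an interface action) and/or of register $\regmap{a}$ (if it is an interface or register-local action). The same holds for $b$. Because thread alphabets are pairwise disjoint and $\thrmap{b}\neq\thrmap{a}$, the thread component $T_{\thrmap{a}}$ does not participate in $b$, so its local state is unchanged and it still enables $a$. If $\regmap{a}=\undefsymb$ or $\regmap{a}\neq\regmap{b}$, then no component involved in $a$ changes, and we are done.

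The remaining case is $\regmap{a}=\regmap{b}=\rid$. Here I would inspect the guards in \cref{fig:fullread-procsafe,fig:fullread-procregular,fig:fullread-procatomic}. For every summand, the guard for an action of thread $\tid$ depends only on the following:
\begin{itemize}
\item membership of $\tid$ in $\readerssym$, $\writerssym$ and $\pendingsym$;
\item $\overlap{s,\tid}$ (safe model);
\item $\posval{s,\tid}$ (regular model);
\item $\vals{s,\tid}$ and $\trueval{s}$, which determine the data parameter of $\finishread[\tid,\rid]{\cdot}$ in the atomic model and in the non-overlapping safe case.
\end{itemize}
Using \cref{obs:access-fullread} I would check that an action of another thread $\tid'$ never changes $\tid$'s membership in these sets, nor $\vals{\cdot,\tid}$. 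In the regular model, $\posval{\cdot,\tid}$ can only grow, and growth preserves enabledness of $\finishread[\tid,\rid]{\data}$. In the safe model, $\overlap{\cdot,\tid}$ can only switch from $\false$ to $\true$.

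That monotone switch in the safe model is the main obstacle. If $a=\finishread[\tid,\rid]{\trueval{s}}$ was enabled with $\neg\overlap{s,\tid}$, then after $b$ either $\overlapsym$ becomes true, in which case all values are readable and $a$ remains enabled, or $\overlapsym$ stays false. In the latter case we must show $\trueval{}$ is unchanged, i.e. that $b$ is not a $\finishwrite[\tid',\rid]$. This is exactly where \cref{lem:overlap} is used: $\tid\in\readers{s}$ and $\neg\overlap{s,\tid}$ imply $\writers{s}=\emptyset$, so no finish write by $\tid'$ is enabled. The analogous issue for finish writes with $\neg\overlapsym$ is harmless, since $\finishwrite$ carries no data. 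A finish write with $\overlapsym$ true stays enabled because $\overlapsym$ never reverts. In the atomic model, the parameter $\vals{s,\tid}$ of a pending finish read is untouched by $\tid'$, so that case also goes through.
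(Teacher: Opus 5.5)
Your proposal is correct and follows the paper's proof essentially step for step. You reduce via the parallel composition to the case $\regmap{a}=\regmap{b}=\rid$. You observe that another thread's action leaves $\tid$'s membership in $\readerssym$, $\writerssym$, $\pendingsym$ and the value $\vals{\cdot,\tid}$ unchanged, while $\posval{\cdot,\tid}$ can only grow and $\overlap{\cdot,\tid}$ can only become true. You use \cref{lem:overlap} to exclude a $\finishwrite[\tid',\rid]$ in the safe finish-read case with $\neg\overlap{s,\tid}$, which is where the paper uses it as well.

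The only difference is organisational: you group by what the guards depend on, whereas the paper splits on the form of $a$. Exactly as in the paper, the appeal to \cref{lem:overlap} tacitly assumes the register state is reachable.
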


\begin{proof}
    \hypertarget{thrconsist}{Let $M = (\states, \actionset, \initstate, \transrel, \thrsym, \regsym)$ be a \highlight{full-read} thread-register model constructed as the parallel composition $P_1 \parcomp \ldots \parcomp P_k$ with $k = |\TID| + |\RID|$, where $P_1$ to $P_{|\TID|}$ are thread LTSs and $P_{|\TID| + 1}$ to $P_k$ are register LTSs. Let $P_x = (\states_x, \actionset_x, \initstate_x, \transrel_x)$ for all $1 \leq x \leq k$. Let $\#$ be a mapping from thread and register id's to natural numbers, which yields the index of that thread or register in the parallel composition.}

    Let $s = (s_1, \ldots, s_k)$ be a state in $\states$ and $a$ an action in $\actionset$ such that $a$ is enabled in $s$ and there exists an action $b \in \actionset$ and a state $s' = (s_1', \ldots, s_k') \in \states$ such that $\thrmap{a} \neq \thrmap{b}$ and $(s, b, s') \in \transrel$. We prove that there exists an action $c$ such that $\actequiv{a}{c}$ and $c$ is enabled in $s'$.
    
    Let $\thrmap{a} = \tid$ and $\regmap{a} = \rid$, with $\tid \in \TID$ and $\rid \in \RID \cup \{\undefsymb\}$.
    From the definition of parallel composition, specifically the construction of $\transrel$, it follows that the target state of $(s, b, s')$ can only differ from $s$ for the parallel components that are involved in $b$: $P_{\idx{\thrmap{b}}}$ and, if $\regmap{b} \neq \undefsymb$, $P_{\idx{\regmap{b}}}$. Since $\thrmap{b} \neq \tid$, $s_{\idx{\tid}} = s'_{\idx{\tid}}$.
    We do a case distinction on whether $\regmap{a} = \undefsymb$. 
    \begin{itemize}
        \item If $\regmap{a} = \undefsymb$, then $a$ is in $\thrlocacts_{\tid}$. Since such an action is enabled in a state $s''$ iff it is enabled in $s''_{\idx{\tid}}$, and $s_{\idx{\tid}} \mathbin= s'_{\idx{\tid}}$, it follows that $a$ is enabled in $s'$. Trivially, $\actequiv{a}{a}$.

        \item If $\regmap{a} \neq \undefsymb$, then $a$ is not a thread-local action, and must be a register interface or register-local action. In both cases, it is an action in $\actionset_{\idx{\rid}}$. If it is a register-local action, it solely remains to prove that an action $c$ with $\actequiv{a}{c}$ is enabled in $s'_{\idx{\rid}}$; if it is a register interface action, we must also prove that $c$ is enabled in $s'_{\idx{\tid}}$. \highlight{In the full-read model, the only action equivalent to $a$ is $a$ itself, so we must show that $a$ is enabled in $s'_{\idx{\rid}}$ and, if $a$ is an interface action, $a$ is enabled in $s'_{\idx{\tid}}$.} \highlight{The latter follows immediately from the observation that $s_{\idx{\tid}} = s'_{\idx{\tid}}$ and $a$ being enabled in $s_{\idx{\tid}}$. In both cases, it therefore suffices to prove that \highlight{$a$} is enabled in $s'_{\idx{\rid}}$}.
        From the construction of $\transrel$ it follows that $a$ is enabled in $s_{\idx{\rid}}$. Note that if $\regmap{b} \neq \rid$, then $s_{\idx{\rid}} = s'_{\idx{\rid}}$ and so $a$ is trivially enabled in $s'_{\idx{\rid}}$. Therefore, we continue under the assumption that $\regmap{b} = \rid$. Since $b$ is the action label of a transition from $s$ to $s'$, and $\regmap{b} = \rid \neq \undefsymb$, we know that $(s_{\idx{\rid}}, b, s'_{\idx{\rid}}) \in \transrel_{\idx{\rid}}$. 
        We do a further case distinction on what action $a$ is:
        \begin{itemize}
            \item If \highlight{$a = \startread[\tid,\rid]$ or} $a=\startwrite[\tid,\rid]{\data}$ for some $\data \in \Data$, then for all three register models \highlight{(\cref{fig:fullread-procsafe,fig:fullread-procregular,fig:fullread-procatomic})}, it follows that \highlight{$\tid \notin (\readers{s_{\idx{\rid}}} \cup \writers{s_{\idx{\rid}}})$}. We need to show that \highlight{$\tid \notin (\readers{s'_{\idx{\rid}}} \cup \writers{s'_{\idx{\rid}}})$} as well. From $\thrmap{b} \neq \tid$ it follows that \highlight{$b \neq \startread[\tid,\rid]$ and} $b \neq \startwrite[\tid,\rid]{\data'}$ for any $\data' \in \Data$. We know by \highlight{\cref{obs:access-fullread}} that if \highlight{$\tid \notin \readers{s_{\idx{\rid}}}$ then $\tid \notin \readers{s'_{\idx{\rid}}}$ and that if} $\tid \notin \writers{s_{\idx{\rid}}}$ then $\tid \notin \writers{s'_{\idx{\rid}}}$. Therefore, \highlight{$\tid \notin (\readers{s'_{\idx{\rid}}} \cup \writers{s'_{\idx{rid}}})$}. 
            Hence, it follows for all three register models that $a$ is enabled in $s'_{\idx{\rid}}$ and thus enabled in $s'$. 

            \item \highlight{If $a = \finishread[\tid,\rid]{\data}$ for some $\data \in \Data$, then to prove that $a$ is enabled in $s'_{\idx{\rid}}$, we do a case distinction on which type of register $\rid$ is:}
            \begin{itemize}
                \item \highlight{If $\rid$ is a safe register, then it follows from $a$ being enabled in $s_{\idx{\rid}}$ that $\tid \in \readers{s_{\idx{\rid}}}$ and either $\neg \overlap{s_{\idx{\rid}}, \tid}$ and $\data =\trueval{s_{\idx{\rid}}}$, or $\overlap{s_{\idx{\rid}}, \tid}$. To show that $a$ is enabled in $s'_{\idx{\rid}}$, we need to show $\tid \in \readers{s'_{\idx{\rid}}}$, and $\data = \trueval{s'_{\idx{\rid}}}$ or $\overlap{s'_{\idx{\rid}}, \tid}$.\linebreak[3] From $\thrmap{b} \neq \tid$ it follows that $b \neq \finishread[\tid,\rid]{\data'}$ for some $\data' \in \Data$. Hence, it follows from \cref{obs:access-fullread} that if $\tid \in \readers{s_{\idx{\rid}}}$ then $\tid \in \readers{s'_{\idx{\rid}}}$. Additionally, if $\overlap{s_{\idx{\rid}}, \tid}$ then $\neg\overlap{s'_{\idx{\rid}}, \tid}$ would only be possible if $b = \startread[\tid,\rid]$ or $b = \startwrite[\tid,\rid]{\data'}$ for some $\data' \in \Data$. Since $\thrmap{b} \neq \tid$, we can conclude that this is not the case and therefore that if $\overlap{s_{\idx{\rid}}, \tid}$, then $\overlap{s'_{\idx{\rid}}, \tid}$. Hence, if $\overlap{s_{\idx{\rid}}, \tid}$ then $a$ is enabled in $s'_{\idx{\rid}}$ and thus $a$ is enabled in $s'$. We continue under the assumption that $\neg\overlap{s_{\idx{\rid}}, \tid}$, and thus that $\data= \trueval{s_{\idx{\rid}}}$. Assume towards a contradiction that $\neg\overlap{s'_{\idx{\rid}}, \tid}$ and $\data \neq \trueval{s'_{\idx{\rid}}}$. If $\trueval{s_{\idx{\rid}}} \neq \trueval{s'_{\idx{\rid}}}$, then $b$ must be $\finishwrite[\tidtwo,\rid]$ or $\orderwrite[\tidtwo,\rid]$ for some $\tidtwo \neq \tid$. Since there are no order write actions in the safe register model, it must be the case that $b$ is a finish write action. As $b$ is enabled in $s_{\idx{\rid}}$, this means that $\tidtwo \in \writers{s_{\idx{\rid}}} \neq \emptyset$. Since $\tid \in \readers{s_{\idx{\rid}}}$, it follows from \cref{lem:overlap} that $\overlap{ s_{\idx{\rid}}, \tid}$. This contradicts our assumption that $\neg\overlap{s_{\idx{\rid}}, \tid}$; thus we can conclude that we must have $\overlap{s'_{\idx{\rid}}, \tid}$ or $\data = \trueval{s'_{\idx{\rid}}}$, and consequently $a$ is enabled in $s'_{\idx{\rid}}$. Hence, $a$ is enabled in $s'$.}

                \item \highlight{If $\rid$ is a regular register, then it follows from $a$ being enabled in $s_{\idx{\rid}}$ that $\tid \in \readers{s_{\idx{\rid}}}$ and  $\data \in \posval{s_{\idx{\rid}}, \tid}$. To show $a$ is enabled in $s'_{\idx{\rid}}$, it suffices to show that $\tid \in \readers{s'_{\idx{\rid}}}$ and $\data \in \posval{s'_{\idx{\rid}}. \tid}$. From \cref{obs:access-fullread} and $\thrmap{b} \neq \tid$, it follows that if $\tid \in \readers{s_{\idx{\rid}}}$, then $\tid \in \readers{s'_{\idx{\rid}}}$. Additionally, as can be seen in \cref{obs:access-fullread}, the set $\posvalsym$ for a specific thread id only grows over time until it is reset when that thread starts a read. Therefore, it is only possible that $\data \in \posval{s_{\idx{\rid}}, \tid}$ and $\data \notin \posval{s'_{\idx{\rid}}, \tid}$ are both true if $b = \startread[\tid,\rid]$. Since $\thrmap{b} \neq \tid$, we can conclude that $\data \in \posval{s'_{\idx{\rid}}, \tid}$. Thus, $a$ is enabled in $s'_{\idx{\rid}}$ and therefore $a$ is enabled in $s'$.}

                \item \highlight{If $r$ is an atomic register, then it follows from $a$ being enabled in $s_{\idx{\rid}}$ that $\tid \in \readers{s_{\idx{\rid}}}$, $\tid \notin \pending{s_{\idx{\rid}}}$ and $\data = \vals{s_{\idx{\rid}}, \tid}$. From \cref{obs:access-fullread} it follows that when $\thrmap{b} \neq \tid$, then $\tid \in \readers{s_{\idx{\rid}}}$ implies $\tid \in \readers{s'_{\idx{\rid}}}$ and $\tid \notin \pending{s_{\idx{\rid}}}$ implies $\tid \notin \pending{s'_{\idx{\rid}}}$. Similarly, since $\thrmap{b} \neq \tid$, $\vals{s_{\idx{\rid}}, \tid} = \vals{s'_{\idx{\rid}}, \tid}$. Thus, $a$ is enabled in $s'_{\idx{\rid}}$ and therefore also in $s'$.}
            \end{itemize}
            \highlight{In all three cases, $a$ is enabled in $s'$. }

            \item If $a = \finishwrite[\tid,\rid]$, then what we need to show about $s'_{\idx{\rid}}$ to establish that $a$ is enabled differs depending on which type of register $\rid$ is. If $\rid$ is safe, then it suffices to show $\tid \in \writers{s'_{\idx{\rid}}}$. If $\rid$ is regular or atomic, we need to show $\tid \in \writers{s'_{\idx{\rid}}}$ and $\tid \notin \pending{s'_{\idx{\rid}}}$. From the definitions of these access function as given in \highlight{\cref{obs:access-fullread}}, and $\thrmap{b} \neq \tid$, it follows that $\tid \in \writers{s_{\idx{\rid}}}$ implies $\tid \in \writers{s'_{\idx{\rid}}}$ and $\tid \notin \pending{s_{\idx{\rid}}}$ implies $\tid \notin \pending{s'_{\idx{\rid}}}$. Consequently, if $a$ is enabled in $s_{\idx{\rid}}$ then it is also enabled in $s'_{\idx{\rid}}$. Thus, we can conclude that $a$ is enabled in $s'_{\idx{\rid}}$ and therefore also in $s'$. 

            \item If $a = \orderwrite[\tid,\rid]$, then $\rid$ could be a regular or atomic register. From the two models, we know that since $a$ is enabled in $s_{\idx{\rid}}$, we have $\tid \in \writers{s_{\idx{\rid}}}$ and $\tid \in \pending{s_{\idx{\rid}}}$. For both models, to show $a$ is enabled in $s'_{\idx{\rid}}$, it suffices to show $\tid \in \writers{s'_{\idx{\rid}}}$ and $\tid \in \pending{s'_{\idx{\rid}}}$. From \highlight{\cref{obs:access-fullread}} and $\thrmap{b} \neq \tid$, it follows that if $\tid \in \writers{s_{\idx{\rid}}}$ then $\tid \in \writers{s'_{\idx{\rid}}}$ and if $\tid \in \pending{s_{\idx{\rid}}}$ then  $\tid \in \pending{s'_{\idx{\rid}}}$. Thus, $a$ is enabled in $s'_{\idx{\rid}}$ and therefore $a$ is enabled in $s'$. 

            \item \highlight{If $a = \orderread[\tid,\rid]$ then $\rid$ must be an atomic register.  From the atomic register model and $a$ being enabled in $s_{\idx{\rid}}$, we know that $\tid \in \readers{s_{\idx{\rid}}}$ and $\tid \in \pending{s_{\idx{\rid}}}$. From the definitions of the access functions as given in \cref{obs:access-fullread} and $\thrmap{b}\neq \tid$, it follows that $\tid \in \readers{s_{\idx{\rid}}}$ implies $\tid \in \readers{s'_{\idx{\rid}}}$ and $\tid \in \pending{s_{\idx{\rid}}}$ implies $\tid \in \pending{s'_{\idx{\rid}}}$. Thus, when $a$ is enabled in $s_{\idx{\rid}}$ it is also enabled in $s'_{\idx{\rid}}$. We conclude that $a$ is enabled in $s'$.}
        \end{itemize}
    \end{itemize}
    We have shown in all cases that \highlight{$a$} is enabled in $s'$. Hence, our \highlight{full-read} thread-register models are thread-consistent with respect to the mapping $\thrsym$ and equivalence relation $\actequivsym$.
\end{proof}

\begin{lem}\rm\label{lem:thr-consist-instantread}
Let $M = (\states,\actionset,\initstate,\transrel,\thrsym,\regsym)$ be a instant-read thread-register model. Then the LTS $(\states, \actionset, \initstate,\transrel)$ is thread-consistent with respect to the mapping $\thrsym$, and $\actequivsym$ as defined in \cref{sec:justness}.
\end{lem}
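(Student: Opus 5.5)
I will follow the proof of \cref{lem:thr-consist-fullread} step by step, replacing only its full-read-specific (highlighted) parts. As there, let $s \xrightarrow{b} s'$ with $\thrmap{a} = \tid \neq \thrmap{b}$ and $a$ enabled in $s$. Since $\thrmap{b}\neq\tid$, we have $s_{\idx{\tid}} = s'_{\idx{\tid}}$. If $\regmap{a}=\undefsymb$, the argument for thread-local actions carries over verbatim. Otherwise $a$ is an action of register $\rid$. If $\regmap{b}\neq\rid$, then $s_{\idx{\rid}}=s'_{\idx{\rid}}$, so $a$ itself stays enabled. So assume $\regmap{b}=\rid$.

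For $a$ of the form $\startwrite[\tid,\rid]{\data}$, $\finishwrite[\tid,\rid]$ or $\orderwrite[\tid,\rid]$, I reuse the full-read arguments, now citing \cref{obs:access-instantread} and \cref{fig:instantread-procsafe,fig:instantread-procregular,fig:instantread-procatomic}. The enabling conditions depend only on $\tid\in\writers{\cdot}$ and $\tid\in\pending{\cdot}$, and by \cref{obs:access-instantread} these are only changed by actions of thread $\tid$. Hence we may take $c=a$. There are no $\startread$, $\finishread$ or $\orderread$ cases. For these interface actions I also note, as before, that $a$ stays enabled in $s'_{\idx{\tid}} = s_{\idx{\tid}}$.

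The only genuinely new case is $a = \readop[\tid,\rid]{\data}$, where in general $c \neq a$, and this is where the equivalence $\actequivsym$ is needed. Here I pick $c = \readop[\tid,\rid]{\data'}$ for a suitable $\data'$, and two things must be shown.

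\begin{enumerate}
\item The register enables such a $c$ in $s'_{\idx{\rid}}$. In every instant-read model, the reading guard first requires $\tid\notin\writers{\cdot}$, which is preserved because $b$ is not by $\tid$. Then:
\begin{itemize}
\item for the safe model, some value is always readable, namely $\trueval{s'_{\idx{\rid}}}$ if $\writers{s'_{\idx{\rid}}}=\emptyset$ and any value otherwise;
\item for the regular model, $\data'=\trueval{s'_{\idx{\rid}}}$ always lies in the allowed set;
\item for the atomic model, $\data'=\trueval{s'_{\idx{\rid}}}$ is the value returned.
\end{itemize}
\item The thread enables $c$ in $s'_{\idx{\tid}}=s_{\idx{\tid}}$. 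Since $s_{\idx{\tid}}$ enables $\readop[\tid,\rid]{\data}$, it arose from a full-read state enabling $\startread[\tid,\rid]$ via the transformation of \cref{app:thread-transform} (cf.\ \cref{lem:itof-read}). By \cref{thread-prop-2}, the thread is willing to read every value, so it enables $\readop[\tid,\rid]{\data'}$ for all $\data'\in\Data[\rid]$.
\end{enumerate}

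Since $\actequiv{a}{c}$, this closes the case. I expect the main obstacle to be making the second step rigorous. The thread's willingness to accept all values has to be derived from the transformation together with \cref{thread-prop-2}, rather than assumed directly. The register part is a routine inspection of the guards.
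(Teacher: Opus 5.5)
Your proposal is correct and takes essentially the same route as the paper's proof. Like the paper, you reuse the full-read argument for the start-write, finish-write and order-write cases, and you handle $\readop[\tid,\rid]{\data}$ by exhibiting an equivalent read $\readop[\tid,\rid]{\data'}$ enabled in $s'$; this uses that $\tid\notin\writers{\cdot}$ is preserved and that $\trueval{\cdot}$ always supplies a readable value. The only difference is minor: the paper takes the thread's willingness to read every value directly from \cref{thread-prop-2}, which is also imposed on instant-read threads, so your detour through \cref{app:thread-transform} and \cref{lem:itof-read} is valid but not needed.
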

\begin{proof}
    As announced, this proof is in many places identical to the proof of \cref{lem:thr-consist-fullread}. We highlight the differences.
    Firstly, any reference to full-read models should be replaced with \highlight{instant-read}. Additionally, references to \cref{obs:access-fullread} must be replaced with \highlight{\cref{obs:access-instantread}}.

    We can no longer conclude in the $\regmap{a} \neq \bot$ case that we must necessarily show that $a$ itself is enabled in $s'_{\idx{\rid}}$. Our final conclusion after covering the different cases for $a$ is therefore no longer that we have shown that $a$ is enabled in $s'$ in all cases, but rather that \highlight{$a$ or an action equivalent to $a$ is enabled in $s'$}.

We cannot hope to show that $a$ is enabled in $s'$, as we showed in \cref{app:equivalence} that this is not necessarily the case with the instant-read model. However, \highlight{if $a$ is not a read action, then we prove that $a$ is enabled in $s'$. It follows from $s_{\idx{\tid}} = s'_{\idx{\tid}}$ and $a$ being enabled in $s_{\idx{\tid}}$ that $a$ is enabled in $s'_{\idx{\tid}}$, so it is sufficient to prove $a$ is enabled in $s'_{\idx{\rid}}$}. 

    The cases that $a$ is a start read, order read or finish read action are not applicable to the instant-read case; instead, we must prove the case that $a$ is a read action.
    \highlight{If $a = \readop[\tid,\rid]{\data}$ for some $\data \in \Data[\rid]$ then we will show that there exists a $\data' \in \Data[\rid]$ such that action $c=\readop[\tid,\rid]{\data'}$ is enabled in $s'_{\idx{\tid}}$ and $s'_{\idx{\rid}}$. From \cref{thread-prop-2} of \cref{sec:threads}, $a$ being enabled in $s_{\idx{\tid}}$ and $\thrmap{b}\neq\tid$, it follows that $\readop[\tid,\rid]{\data'}$ is enabled in $s'_{\idx{\tid}} =s_{\idx{\tid}}$ for all $\data' \in \Data[\rid]$. Thus, it remains to prove that $c=\readop[\tid,\rid]{\data'}$ is enabled in $s'_{\idx{\rid}}$ for some $\data'\in \Data[\rid]$. 
    What we need to show to prove this depends on the type of register $\rid$: if $\rid$ is safe, we need to show that $\writers{s'_{\idx{\rid}}} \neq \emptyset$ or there exists a $\data'\in\Data[\rid]$ such that $\trueval{s'_{\idx{\rid}}} = \data'$ (\cref{fig:instantread-procsafe}); if $\rid$ is regular then we need to show there exists a $\data'\in\Data[\rid]$ such that $\data' \in \{\trueval{s'_{\idx{\rid}}}\} \cup \{\data'' \mid \exists_{\tid''\in\writers{s'_{\idx{\rid}}}}.\vals{s'_{\idx{\rid}},\tid''} = \data''\}$ (\cref{fig:instantread-procregular}); and if $\rid$ is atomic we need to show there exists $\data'\in\Data[\rid]$ such that $\trueval{s'_{\idx{\rid}}} = \data'$ (\cref{fig:instantread-procatomic}). Additionally, in all three cases we need to show $\tid \notin \writers{s'_{\idx{\rid}}}$.
    Note that the first condition, for safe, regular, and atomic registers, is satisfied if there exists a $\data' \in \Data[\rid]$ such that $\trueval{s'_{\idx{\rid}}} = \data'$. Since $\truevalsym$ is never undefined, this is trivially true.
    It remains to show that $\tid \notin \writers{s'_{\idx{\rid}}}$.     
    In all three register models, we know from $\readop[\tid,\rid]{\data}$ being enabled in $s_{\idx{\rid}}$ that $\tid \notin \writers{s_{\idx{\rid}}}$. Since $\thrmap{b} \neq \tid$, by \cref{obs:access-instantread}, $\tid \notin \writers{s_{\idx{\rid}}}$ implies $\tid \notin \writers{s'_{\idx{\rid}}}$. Thus, there exists some $\data' \in \Data[\rid]$ such that $\readop[\tid,\rid]{\data'}$ is enabled in $s'_{\idx{\rid}}$. 
    Hence, there exists an action $c$ that is equivalent to $a$ and is enabled in $s'$.}

    Finally, note that the proof that $a$ is enabled in $s'$ in the start write case references $\readerssym$, since this is part of the condition of start write actions being enabled in the full-read models. For the instant-read models, the condition on $\readerssym$ is not present. The argument for $\writerssym$ is identical. The reference to \cref{fig:fullread-procsafe,fig:fullread-procregular,fig:fullread-procatomic} must be replaced with \highlight{\cref{fig:instantread-procsafe,fig:instantread-procregular,fig:instantread-procatomic}}.
\end{proof}

\section{The thread interference relation is a concurrency relation}\label{app:concTproof}

\concTval*
\begin{proof}
To prove that $\conc_T$ is a concurrency relation, we need to prove the two properties of concurrency relations (\cref{def:conc}).
    \begin{enumerate}
        \item We must establish that $\actequivsym$ and $\conc_T$ are disjoint. Recall that $\actequivsym$ is defined as $\mathit{id} \cup \{(\readop[\tid,\rid]{\data},\readop[\tid,\rid]{\data'}) \mid  \tid\in\TID,\, \rid \in \RID,\, \data,\data' \in \Data[\rid]\}$. It is trivial that $\conc_T$ is irreflexive, since $\thrmap{a} = \thrmap{a}$ for all $a \in \actionset$. Additionally, since $\thrmap{\readop[\tid,\rid]{\data}} = \thrmap{\readop[\tid,\rid]{\data'}}$ for all $\tid\in\TID,\rid\in\RID,\data,\data'\in\Data[\rid]$, $\conc_T$ is indeed disjoint from $\actequivsym$.
        \item Let $a$ be an arbitrary action in $\actionset$ and let $s$ be an arbitrary state in $\states$. Assume that $a$ is enabled in $s$ and that there exists a path $\pi$ from $s$ to some $s'$ such that $a \conc_T b$ for all $b$ occurring on $\pi$. We prove that there exists an action $c$ such that $\actequiv{a}{c}$ and $c$ is enabled in $s'$. 
        We do induction on the length of $\pi$.
        \begin{itemize}
            \item For the base case, $|\pi| = 0$, we observe that if $\pi$ has length $0$, then $s = s'$ and hence trivially $a$ is enabled in $s'$. Trivially, $\actequiv{a}{a}$.
            \item Let $i \geq 0$ and assume that the claim holds for all paths from $s$ of length $i$; we prove that it also holds for all paths from $s$ of length $i + 1$. Let $\pi$ be a path from $s$ to some state $s' \in \states$ with $|\pi| = i + 1$ such that $a \conc_T b$ for all $b$ occurring on $\pi$. Then there exists a path $\pi'$ from $s$ to some state $s''$ of length $i$ such that $a \conc_T b$ for all $b$ occurring on $\pi'$, and $\pi = \pi'ds'$ for some $d \in \actionset$. Note that by assumption on $\pi$, $a \conc_T d$. By the induction hypothesis, some action $e$ is enabled in $s''$ such that $\actequiv{a}{e}$. Since $a \conc_T d$, it follows by definition of $\conc_T$ that  $\thrmap{a} \neq \thrmap{d}$. By our choice of $\actequivsym$, we have $\thrmap{e} = \thrmap{a}$ and thus $\thrmap{e} \neq \thrmap{d}$. Since $M$ is thread-consistent by \cref{lem:thr-consist}, it follows that there exists an action $c$ that is enabled in $s'$ such that $\actequiv{e}{c}$ and thus also $\actequiv{a}{c}$.
        \end{itemize}
        We conclude that the property holds for all paths $\pi$.\qedhere
    \end{enumerate}
\end{proof}

\section{An explicit characterisation of complete paths}\label{app:characterise}

In this appendix, we provide a characterisation of which paths in a thread-register model are complete with respect to justness and one of the four concurrency relations we consider. This characterisation is largely the same for full-read and instant-read models.

For a given mutual exclusion algorithm, let $M$ be the LTS of its full-read or instant-read thread-register model from \cref{sec:thrregmodels}, using one of the register models employed in this paper---cf.\ \cref{app:registers}. Recalling that $M$ is a parallel composition of thread and register processes, each state in $M$ is a tuple $s=(s_1,\dots,s_k)$, where each of the indices $i \in \{1,\dots,k\}$ corresponds to a thread $t\in\TID$ or a register $r\in\RID$; as in \hyperlink{thrconsist}{the proof of} \cref{lem:thr-consist-fullread} in
\cref{app:thrconsist} we denote the corresponding component $s_i$ as $s_{\idx{\tid}}$ or $s_{\idx{\rid}}$; it is a state in the LTS $T_t$ or $R_r$.

\begin{lem}\rm\label{lem:1}
If $s$ is a reachable state of $M$ such that $s_{\idx{r}}$ enables an action $\orderwrite[\tid,\rid]$ or $\orderread[\tid,\rid]$ in $R_r$, then $s_{\idx{t}}$ enables an action $\finishwrite[\tid,\rid]$ or $\finishread[\tid,\rid]{d}$, respectively, in $T_t$.
Moreover, if  $s_{\idx{r}}$ enables $c = \finishwrite[\tid,\rid]$ or $c = \finishread[\tid,\rid]{d}$, then also  $s_{\idx{t}}$ enables $c$, and thus $s$ enables $c$ in $M$.
\end{lem}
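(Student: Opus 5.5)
This is a reachability invariant, so I will prove it by induction on the length of a path from $\initstate$ to $s$. The invariant I will establish is slightly stronger, and it is stated per pair $(\tid,\rid)$:
\begin{quote}
if $\tid\in\writers{s_{\idx{\rid}}}$ (resp.\ $\tid\in\readers{s_{\idx{\rid}}}$), then the thread component $s_{\idx{\tid}}$ is exactly in a state reached by the most recent $\startwrite[\tid,\rid]{\data}$ (resp.\ $\startread[\tid,\rid]$) transition of $T_\tid$, and no further action of $T_\tid$ has occurred since.
\end{quote}
By \cref{thread-prop-3} (resp.\ \cref{thread-prop-2}), such a thread state enables $\finishwrite[\tid,\rid]$ (resp.\ $\finishread[\tid,\rid]{\data}$ for every $\data\in\Data[\rid]$).

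Once the invariant holds, the lemma follows by inspecting the register processes. In all of \cref{fig:fullread-procsafe,fig:fullread-procregular,fig:fullread-procatomic} (and the instant-read figures), $\orderwrite[\tid,\rid]$ and $\finishwrite[\tid,\rid]$ require $\tid\in\writers{s_{\idx{\rid}}}$, and $\orderread[\tid,\rid]$ and $\finishread[\tid,\rid]{\data}$ require $\tid\in\readers{s_{\idx{\rid}}}$. The invariant then gives that $s_{\idx{\tid}}$ enables the finish action. For finish reads the thread accepts every value, so $s_{\idx{\tid}}$ enables the specific $c$ offered by the register. Since $c$ belongs only to $T_\tid$ and $R_\rid$, the definition of parallel composition yields that $c$ is enabled in $s$.

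For the induction step, the base case is trivial because $\readers{\initstate}=\writers{\initstate}=\emptyset$. Let $s\xrightarrow{b}s'$. By \cref{obs:access-fullread}/\cref{obs:access-instantread}, $\tid$ is added to $\writers{}$ only by $\startwrite[\tid,\rid]{\data}$. That action is synchronised with $T_\tid$, so the thread component moves to exactly the post-invocation state, and the invariant is established. Conversely, $\tid$ is removed from $\writers{}$ only by $\finishwrite[\tid,\rid]$. While $\tid\in\writers{}$ persists, any $b$ with $\thrmap{b}\neq\tid$ leaves $s_{\idx{\tid}}$ unchanged. The remaining case is an action $b$ of thread $\tid$ itself. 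By \cref{thread-prop-3} the post-invocation state enables only $\finishwrite[\tid,\rid]$, which removes $\tid$ from $\writers{}$. A register-local $\orderwrite[\tid,\rid]$ does not involve $T_\tid$ and leaves it unchanged. The read case is symmetric, using \cref{thread-prop-2}.

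The main obstacle is ruling out that thread $\tid$ performs some other interface action while still registered as reader or writer at $\rid$, for example an operation on another register $\rid'$. The requirements are phrased about ``immediately afterwards'', so I will use them literally: the post-invocation state enables only the matching responses. In the instant-read case the read part is vacuous, since there are no $\readerssym$ or $\orderread[\tid,\rid]$.
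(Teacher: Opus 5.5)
Your proof is correct and follows essentially the same route as the paper. The paper argues that the register guards force the actions of the pair $(\tid,\rid)$ to occur in the order start--order--finish, so the last synchronisation between $\rid$ and $\tid$ must have been the start action. Requirement (3) (resp.\ (2)) then keeps $T_\tid$ in a state enabling the matching finish action. Your inductive invariant, which ties $\tid\in\writers{s_{\idx{\rid}}}$ (resp.\ membership in $\readerssym$) to $T_\tid$ sitting in its post-invocation state, is a more explicit version of that step. In particular, it spells out why $T_\tid$ cannot have started an operation on another register in the meantime.
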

\begin{proof}
We will provide the proof for the cases that $s_{\idx{r}}$ enables an action $\orderwrite[\tid,\rid]$ or $\finishwrite[\tid,\rid]$. This case applies to both full-read and instant-read models. The case for read actions, which only applies to full-read models, proceeds along the exact same lines.

Given a register $r\in \RID$ and a thread $t\in \TID$, in any execution path for any of the register models from \cref{app:registers}, the actions $c$ with $\thrmap{c}=t$ and $\regmap{c}=r$ occur strictly in the order
$\startwrite[\tid,\rid]{d}$ -- $\orderwrite[\tid,\rid]$ -- $\finishwrite[\tid,\rid]$, with the occurrence of $\orderwrite[\tid,\rid]$ only being present in the regular and atomic models. This is a direct consequence of the conditions $\tid \in \writers{s}$ and $\tid \in \pending{s}$ in \cref{fig:fullread-procsafe,fig:fullread-procregular,fig:fullread-procatomic,fig:instantread-procsafe,fig:instantread-procregular,fig:instantread-procatomic}. Moreover, the actions $\startwrite[\tid,\rid]{d}$ for $d \in \Data$ and $\finishwrite[\tid,\rid]$ can occur only in synchronisation between the register $r$ and the thread $t$. Thus, if $s$ is a reachable state of $M$ such that $s_{\idx{r}}$ enables an action $\orderwrite[\tid,\rid]$ or $\finishwrite[\tid,\rid]$, then the last synchronisation between $r$ and $t$ must have been an action $\startwrite[\tid,\rid]{d}$.

In \cref{sec:threads}, we required that after a start write action the thread must enable only the appropriate finish write action (\cref{thread-prop-3}). Since the last synchronisation between $r$ and $t$ was on an action $\startwrite[\tid,\rid]{\data}$, this $\finishwrite[\tid,\rid]$ cannot have occurred yet. Hence $s_{\idx{t}}$ enables the action $\finishwrite[\tid,\rid]$.

In case $s_{\idx{r}}$ enables $\finishwrite[\tid,\rid]$, then (by the above) both $s_{\idx{r}}$ and $s_{\idx{t}}$ enable this action, and therefore it is also enabled by $s$.
\end{proof}

\begin{lem}\rm\label{lem:2}
If $s$ is a reachable state of $M$ such that $s_{\idx{t}}$ enables an action $\finishwrite[\tid,\rid]$ or $\finishread[\tid,\rid]{d}$, then either $\orderwrite[\tid,\rid]$ or $\orderread[\tid,\rid]$ or 
$\finishwrite[\tid,\rid]$ or $\finishread[\tid,\rid]{d'}$ for some $d'\in\Data$ is enabled by $s_{\idx{r}}$. 
\end{lem}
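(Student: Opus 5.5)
The plan is to mirror the proof of \cref{lem:1}, running the argument in the opposite direction: from the thread's local state we infer the register's status, and then read off from the register process which of the listed actions it enables. I give the argument for writes, which covers both full-read and instant-read models; the read case (full-read only) goes along the same lines with $\readerssym$ in place of $\writerssym$.

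First, suppose $s_{\idx{t}}$ enables $\finishwrite[\tid,\rid]$. By \cref{thread-prop-4} of \cref{sec:threads}, the thread only awaits a response to an operation it has just invoked. Finish actions occur only in synchronisation between $t$ and $r$, and reading and writing alternate with their responses in the thread. Hence the last synchronisation between $t$ and $r$ on a path from $\initstate$ to $s$ must have been an occurrence of $\startwrite[\tid,\rid]{d}$, and no $\finishwrite[\tid,\rid]$ has happened since. This is where reachability of $s$ is used. I also use the observation from the proof of \cref{lem:1}: actions with thread $t$ and register $r$ occur in the order start, then (optionally) order, then finish.

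Next, I track the register component along the path using \cref{obs:access-fullread} (respectively \cref{obs:access-instantread}). After $\startwrite[\tid,\rid]{d}$ we have $\tid\in\writers{\cdot}$ and, for regular and atomic registers, $\tid\in\pending{\cdot}$. Only $\finishwrite[\tid,\rid]$ can remove $\tid$ from $\writerssym$, and only $\orderwrite[\tid,\rid]$ can remove it from $\pendingsym$. Actions of other threads leave both memberships unchanged. So $\tid\in\writers{s_{\idx{r}}}$. I then inspect the summands in \cref{fig:fullread-procsafe,fig:fullread-procregular,fig:fullread-procatomic,fig:instantread-procsafe,fig:instantread-procregular,fig:instantread-procatomic}:
\begin{itemize}
\item For a safe register, one of the two finish-write summands applies, depending on $\overlap{s_{\idx{r}},\tid}$, so $\finishwrite[\tid,\rid]$ is enabled.
\item For a regular or atomic register, $\tid\in\pending{s_{\idx{r}}}$ enables $\orderwrite[\tid,\rid]$, and $\tid\notin\pending{s_{\idx{r}}}$ enables $\finishwrite[\tid,\rid]$.
\end{itemize}

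The read case in full-read models is analogous. The register tracks $\tid\in\readers{s_{\idx{r}}}$, and pending status for atomic registers, and the finish-read summands are examined per model.
\begin{itemize}
\item For a safe register, if $\overlap{s_{\idx{r}},\tid}$ holds then every value is returnable; otherwise the value $\trueval{s_{\idx{r}}}$ is.
\item For an atomic register, either $\orderread[\tid,\rid]$ is enabled, or $\finishread[\tid,\rid]{\vals{s_{\idx{r}},\tid}}$ is.
\item For a regular register, I need $\posval{s_{\idx{r}},\tid}\neq\emptyset$. This holds because $\startread[\tid,\rid]$ initialises it to contain $\trueval{\cdot}$, and it only grows until the next start read by $\tid$.
\end{itemize}
This is why the statement allows some $d'$ rather than $d$ itself. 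The main obstacle is making the ``last synchronisation'' argument rigorous for reachable states: one must check that no interleaving of other threads' actions, in particular overlapping writes and order actions of other threads, can change $\tid$'s membership in $\writerssym$, $\readerssym$ or $\pendingsym$. This follows from the case analysis in \cref{obs:access-fullread}.
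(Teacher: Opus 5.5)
Your proof is correct and follows the paper's approach. You use \cref{thread-prop-4} and reachability to conclude that the last $t$--$r$ synchronisation was the matching start action, so $\tid$ lies in $\writerssym$ (resp.\ $\readerssym$) of $s_{\idx{\rid}}$, and you then read off the enabled order or finish action from the register processes; the only differences are that the paper writes out the read case and calls the write case analogous (you do the reverse), and that it leaves implicit the nonemptiness of $\posval{s_{\idx{\rid}},\tid}$ for regular registers, which you make explicit and which the paper only states as an invariant later, in the proof of \cref{lem:register enablings}.
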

\begin{proof}
We give the argument only for the read actions, which is applicable only to the full-read model. The argument for the write actions, which is applicable to both the full-read and instant-read models, proceeds along the same lines.

In \cref{sec:threads}, we required that a thread can only enable a finish read immediately after starting that read (\cref{thread-prop-4}). As in $M$ the  $\startread[\tid,\rid]$-transition must have been a synchronisation between thread $t$ and register $r$, and the register cannot execute an action $\finishread[\tid,\rid]{\data}$ without synchronising with $t$,
in each of our register models we have $\tid \in \readers{s_{\idx{\rid}}}$. Consequently, by the register models in \cref{app:registers}, either $\orderread[\tid,\rid]$ or $\finishread[\tid,\rid]{d'}$ for some $d'\in\Data$ is enabled by $s_{\idx{r}}$.
\end{proof}

\begin{defi}\label{def:thread-enabled}\rm
Given any path $\pi$ starting in the initial state of $M$, and given any thread $t\in \TID$, if $\pi$ has a suffix $\pi'$ on which no actions $b$ with $\thrmap{b} = t$ occur, starting from a state $s$ of $\pi$, then $s'_{\idx{t}} = s_{\idx{t}}$ for all states $s'$ in $\pi'$ and we define $\finish_t (\pi) = s_{\idx{t}}$.

Call action $a\in Act$ \emph{thread-enabled} by $\pi$ if, for $t= \thrmap{a}$,
$\pi$ contains only finitely many actions $b$ with $\thrmap{b} = t$ and 
$a$ is enabled in the state $\finish_t(\pi)$ of the LTS $T_t$.
\end{defi}
\noindent
Recall from \cref{def:concT} that $a \nconc_T b$ iff $\thrmap{a}=\thrmap{b}$. Hence a path $\pi$ is \hyperlink{just}{$\block$-$\conc_T$-unjust} if, and only if, it has a suffix $\pi'$ such that an action $a \in\nonblock$ is enabled in the initial state of $\pi'$, but $\pi$ does not contain any action $b$ with $\thrmap{a}=\thrmap{b}$. Using this, we obtain the following characterisation of the $\block$-$\conc_T$-just paths of $M$.

\begin{prop}\rm\label{pr:T-just paths}
  A path $\pi$ starting in the initial state of $M$, is \hyperlink{just}{$\block$-$\conc_T$-just} if, and only if, $\pi$ thread-enables no actions $a \in \nonblock$.
\end{prop}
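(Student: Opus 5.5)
The plan is to prove both directions by contraposition, using the reformulation stated just before the proposition: $\pi$ is $\block$-$\conc_T$-unjust iff it has a suffix $\pi'$ whose initial state $s$ enables some $a\in\nonblock$, while $\pi'$ contains no action $b$ with $\thrmap{b}=\thrmap{a}$. Throughout, let $t=\thrmap{a}$. I will use that $\block$ contains only the thread-local actions $\noncrit[\tid]$, so every interface action and every register-local action lies in $\nonblock$.

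\emph{If $\pi$ is unjust, then it thread-enables some $a'\in\nonblock$.} Take $\pi'$, $s$ and $a$ as above. Since $\pi'$ has no $t$-actions, $\pi$ has only finitely many of them, and $\finish_t(\pi)=s_{\idx{t}}$. If $a$ is thread-local or an interface action, then $a\in\actionset_t$. Because $a$ is enabled in $s$, the definition of parallel composition gives that $a$ is enabled in $s_{\idx{t}}$, so $\pi$ thread-enables $a$. If $a$ is register-local, it is $\orderwrite[\tid,\rid]$ or $\orderread[\tid,\rid]$ enabled in $s_{\idx{r}}$. By \cref{lem:1}, $s_{\idx{t}}$ then enables a matching finish action. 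That finish action lies in $\nonblock$ and is thread-enabled by $\pi$.

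\emph{If $\pi$ thread-enables $a\in\nonblock$, then it is unjust.} Let $\pi'$ be the suffix after the last $t$-action, starting in $s$, so that $a$ is enabled in $s_{\idx{t}}=\finish_t(\pi)$. I will exhibit an action $c$ with $\thrmap{c}=t$, $c\in\nonblock$, and $c$ enabled in $s$ in $M$. Since $\pi'$ contains no $t$-actions, this witnesses unjustness. I split on the type of $a$:
\begin{itemize}
\item[(i)] If $a$ is thread-local, take $c=a$; it is enabled in $s$ because only $T_t$ participates in it.
\item[(ii)] If $a$ is a finish action $\finishwrite[\tid,\rid]$ or $\finishread[\tid,\rid]{d}$, then \cref{lem:2} gives that $s_{\idx{r}}$ enables the corresponding order action or some finish action. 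An order action is register-local, so it is enabled in $M$. A finish action is enabled in $M$ by the second part of \cref{lem:1}.
\item[(iii)] If $a$ is a start action, or an instant read $\readop[\tid,\rid]{d}$, I must show that $s_{\idx{r}}$ enables it, or in the read case some $\readop[\tid,\rid]{d'}$. The latter suffices, because by requirement (2) the thread enables reads of all values.
\end{itemize}

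The step I expect to be the main obstacle is case (iii). There I must show $\tid\notin\writers{s_{\idx{r}}}$, and in the full-read case also $\tid\notin\readers{s_{\idx{r}}}$. The argument mirrors \cref{lem:1}. Start and finish actions of $t$ on $r$ are synchronisations between $T_t$ and $R_r$, and by requirements (2)--(4) a thread enabling a start or read action is not awaiting any response. Hence every start by $t$ on $r$ has already been matched by its finish, and so $t$ is not in $\readers{}$ or $\writers{}$ of $s_{\idx{r}}$. The enabledness conditions in \cref{app:registers} then hold. For instant reads, some value is always readable, since $\trueval{s_{\idx{r}}}$ is defined.
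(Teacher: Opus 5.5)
Your proof is correct. The direction from unjustness to thread-enabledness is the same as the paper's, including the use of \cref{lem:1} to turn an enabled order action into a thread-enabled finish action.

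For the converse you take a different decomposition. After the thread-local case, the paper does not split on the type of $a$. It splits on what the register state $s_{\idx{r}}$ offers to thread $t$: (i) all start actions (and some read), (ii) an order action of $t$, or (iii) a finish action of $t$. In cases (ii) and (iii), that order or finish action is enabled in $s$ (by register-locality, respectively \cref{lem:1}), and it is itself the witness of unjustness, whatever $a$ is. Only in the idle case (i) does the paper look at $a$. There it disposes of finish actions via \cref{lem:2}, and reads off enabledness of start and read actions directly.

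So the paper never has to prove your ``main obstacle'', namely that a thread enabling a start or read action on $r$ is idle at $r$. It relies instead on the register-side trichotomy, read off from the process definitions in \cref{app:registers}. Your route needs the extra thread-side invariant, but handles each action type directly. Your sketch via requirements (2)--(4) does establish that invariant: after an unmatched start, the thread can only perform the matching finish. So it would still be in a state enabling nothing but that finish, contradicting that it enables $a$.

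The paper's organisation buys reusability. The same case split is reread for $\conc_{\!A}$, $\conc_I$ and $\conc_S$, because the witnesses in cases (ii) and (iii) lie outside $\textit{start}(r)$. A variant of it (\cref{lem:register enablings}) also still works for the modified blocking registers of \cref{app:avoidoverlap}. There, idleness of $t$ alone would no longer guarantee that $s_{\idx{r}}$ enables $a$.
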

\begin{proof}
Suppose $\pi$ thread-enables an action $a \in\nonblock$. We have to show that $\pi$ is \hyperlink{just}{not just}. Let $t= \thrmap{a}$ and $r= \regmap{a}$.
Let $\pi'$ be a suffix of $\pi$ on which no actions $b$ with $\thrmap{b}=t$ occur. Let $s$ be the initial state of $\pi'$. So $s_{\idx{t}} = \finish_t(\pi)$ and $s_{\idx{t}}$ enables $a$.

In case $r \mathbin=\undefsymb$, as \plat{$s_{\idx{t}}$} enables $a$ and $a$ does not require synchronisation with any register, also $s$ enables $a$. As $\pi'$ does not contain actions $b$ with $a \nconc_T b$, the path $\pi$ is \hyperlink{just}{not just}.\footnote{When rereading this proof as part of the proof of \cref{pr:A-just paths} we also use that $\forall r\in\RID.~ a \notin \textit{start}(r)$.}

So assume that $r \in \RID$.
Based on the process-algebraic definitions from \cref{app:registers}, any state $s'_r$ of $r$ enables either
(i) $\startwrite[\tid,\rid]{d}$ for all $d\in\Data$ and either (ia) $\startread[\tid,\rid]$ or (ib) $\readop[\tid,\rid]{d'}$ for at least one $d' \in \Data$,
(ii) $\orderread[\tid,\rid]$ or $\orderwrite[\tid,\rid]$, or
(iii) $\finishwrite[\tid,\rid]$ or $\finishread[\tid,\rid]{d}$ for some $d \in \Data$. 

If state $s_{\idx{r}}$ in $R_r$ enables $c=\orderread[\tid,\rid]$ or $c=\orderwrite[\tid,\rid]$, also $s$ enables $c$, since $c$ does not require synchronisation with thread $t$. As $\pi'$ contains no actions $b$ with $c \nconc_T b$, using that $\thrmap{c}=t$, the path $\pi$ is \hyperlink{just}{not just}.\footnote{When rereading this proof as part of the proof of \cref{pr:A-just paths} we also use that $c \notin \textit{start}(r)$.}

In case state $s_{\idx{r}}$ in $R_r$ enables an action $c=\finishread[\tid,\rid]{d}$ or $c=\finishwrite[\tid,\rid]$, by \cref{lem:1} above also state $s$ enables $c$. As $\pi'$ contains no actions $b$ with $c \nconc_T b$, the path $\pi$ is \hyperlink{just}{not just}.$^{\rm \thefootnote}$

We may now restrict attention to case (i) above, that $s_{\idx{r}}$ enables $\startwrite[\tid,\rid]{d}$ for all $d \in \Data$ and either $\startread[\tid,\rid]$ or $\readop[\tid,\rid]{d'}$ for some $d'\in\Data$. 
In this setting we proceed with a case distinction on the action $a$, which must be of the form $\startread[\tid,\rid]$, $\startwrite[\tid,\rid]{d}$, $\finishread[\tid,\rid]{d}$, $\finishwrite[\tid,\rid]$ or $\readop[\tid,\rid]{d}$, since it is an action in LTS $T_{\tid}$ and $r \neq \undefsymb$. By \cref{lem:2}, the case that $a = \finishwrite[\tid,\rid]$ or $a= \finishread[\tid,\rid]{d}$ for some $d \in \Data$ is already subsumed by the cases considered above. Thus we may restrict attention to the case that $a = \startread[\tid,\rid]$, $a = \startwrite[\tid,\rid]{d}$ or $a = \readop[\tid,\rid]{d}$ for some $d \in \Data$. In the first two cases (using (i) above) also $s$ enables $a$. In the case that $a = \readop[\tid,\rid]{d}$, $s$ may not enable $a$ itself but it will enable $\readop[\tid,\rid]{d'}$ for some $d' \in \Data$, as follows by (i) and \cref{thread-prop-2} of \cref{sec:threads}; let this action be called $c$ and note that $\thrmap{a} = \thrmap{c}$. As $\pi'$ does not contain actions $b$ with $a \nconc_T b$ and thus also no actions $b$ with $c \nconc_T b$, the path $\pi$ is \hyperlink{just}{not just}.\footnote{When reusing this in the proof of \cref{pr:A-just paths}, we also use that $\pi'$ contains no actions $b \in \textit{start}(r)$.}

For the other direction, suppose that $\pi$ is not $\block$-$\conc_T$-just. We have to establish that $\pi$ thread-enables an action $a \in \nonblock$. Let $\pi'$ be a suffix of $\pi$ such that an action $a \in \nonblock$ is enabled in the initial state $s$ of $\pi'$, but $\pi'$ contains no action $b$ such that $a \nconc_T b$. Let $t=\thrmap{a}$. Then $\pi'$ contains no action $b$ with $\thrmap{b}=t$. In case $a=\orderwrite[\tid,\rid]$ or $a=\orderread[\tid,\rid]$, then by \cref{lem:1} above \plat{$\finish_t(\pi)=s_{\idx{t}}$} enables an action $c = \finishwrite[\tid,\rid]$ or $c = \finishread[\tid,\rid]{d}$, and thus $\pi$ thread-enables the action $c \in \nonblock$. Otherwise, $\finish_t(\pi)=s_{\idx{t}}$ enables $a$ and thus $\pi$ thread-enables $a \in \nonblock$.
\end{proof}
Interestingly, this explicit characterisation of the \hyperlink{just}{$\block$-$\conc_T$-just} paths in our thread-register models is independent on whether we employ safe, regular or atomic registers.

Next, we provide similar characterisations of the $\block$-$\conc_C$-\hyperlink{just}{just} paths, for $C \in \{A, I, S\}$.
For these definitions, we use the following shorthand notation: given a register $r\in\RID$, $\textit{start}(r) = \{\startread[\tid,\rid],\startwrite[\tid,\rid]{d},\readop[\tid,\rid]{d}\mid\tid\in\TID,d\in\Data\}$.
Recall from \cref{def:concT,def:concS,def:concI,def:concA}, as adapted in \cref{sec:justness}, that $a \nconc_{\!\!A} b$ iff either $\thrmap{a}=\thrmap{b}$ or $a,b\in\textit{start}(r)$ for some $r\in\RID$. Hence a path $\pi$ is $\block$-$\conc_{\!\!A}$-unjust if, and only if, it has a suffix $\pi'$ such that an action $a \in\nonblock$ is enabled in the initial state of $\pi'$, but $\pi'$ does not contain any action $b$ such that $\thrmap{a}=\thrmap{b}$ or $a,b\in\textit{start}(r)$ for some $r\in\RID$.

\begin{prop}\rm\label{pr:A-just paths}
  A path $\pi$ starting in the initial state of $M$ is $\block$-$\conc_{\!\!A}$-just if, and only if,
  \begin{enumerate}[a)]
  \item $\pi$ thread-enables no actions $a \in \nonblock$ other than 
actions from $\textit{start}(r)$ for some $r\in\RID$, and 
  \item if, for some $r\in\RID$, an action $a \in \textit{start}(r)$ is thread-enabled by $\pi$, then $\pi$ contains infinitely many occurrences of actions $b\in\textit{start}(r)$.
  \end{enumerate}
\end{prop}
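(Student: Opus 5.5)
The plan is to reuse the structure of the proof of \cref{pr:T-just paths}, now with the interference relation $a \nconc_{\!\!A} b$ iff $\thrmap{a}=\thrmap{b}$ or $a,b\in\textit{start}(r)$ for some $r\in\RID$. I prove both directions by contraposition.

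\emph{If $\pi$ violates (a) or (b), then $\pi$ is unjust.} Suppose first that (a) fails. Then $\pi$ thread-enables some $a\in\nonblock$ with $a\notin\textit{start}(r)$ for every $r$. Let $t=\thrmap{a}$, and take the suffix $\pi'$ on which $t$ is idle, starting in state $s$. I rerun the case analysis of \cref{pr:T-just paths}: the case $\regmap{a}=\undefsymb$, the case where $s_{\idx{r}}$ enables an order action, and the case where it enables a finish action (via \cref{lem:1}). In each case this yields an action $c\notin\textit{start}(r)$ with $\thrmap{c}=t$ that is enabled in $s$. No $b$ on $\pi'$ satisfies $c\nconc_{\!\!A} b$, because no action on $\pi'$ belongs to thread $t$ and $c$ is not a start action. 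Hence $\pi$ is unjust. The remaining subcase is (i), where $a$ must be a finish action. By \cref{lem:2} this is subsumed by the previous cases, since $a$ itself cannot be a start action.

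Now suppose (b) fails. Then some $a\in\textit{start}(r)$ is thread-enabled, and from some point on there are no occurrences of $\textit{start}(r)$ actions. I choose $\pi'$ late enough that it avoids both actions of thread $t$ and $\textit{start}(r)$-actions. Again I distinguish cases on $s_{\idx{r}}$. If it enables an order or finish action $c$ of $t$, then $c\notin\textit{start}(r)$ and I argue exactly as before. Otherwise we are in case (i), and $s$ enables $a$ itself, or, when $a$ is a read, some $c=\readop[\tid,\rid]{d'}$ (using \cref{thread-prop-2}). The action $c$ lies in $\textit{start}(r)$, but nothing on $\pi'$ is in $\textit{start}(r)$ or belongs to thread $t$, so $\pi$ is unjust. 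These are exactly the extra facts noted in the footnotes of \cref{pr:T-just paths}.

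\emph{If $\pi$ is unjust, then (a) or (b) fails.} Take a suffix $\pi'$ starting in $s$ and an action $a\in\nonblock$ enabled in $s$ such that no $b$ on $\pi'$ satisfies $a\nconc_{\!\!A} b$. In particular, $\pi'$ contains no action of $t=\thrmap{a}$, so $\finish_t(\pi)=s_{\idx{t}}$. If $a$ is an order action, then \cref{lem:1} shows that $\pi$ thread-enables a finish action $c\in\nonblock$ with $c\notin\textit{start}(r)$, so (a) fails. Otherwise $s_{\idx{t}}$ enables $a$, so $\pi$ thread-enables $a$. If $a\notin\textit{start}(r)$ for all $r$, then (a) fails. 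If instead $a\in\textit{start}(r)$, then $\pi'$ contains no $\textit{start}(r)$-actions, so only finitely many occur on $\pi$ and (b) fails.

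The main obstacle I expect is bookkeeping rather than new ideas: in the first direction for (b), I have to make sure the chosen suffix simultaneously avoids the actions of $t$ and the $\textit{start}(r)$-actions, and that the witness action produced in each case has the right membership status with respect to $\textit{start}(r)$, in particular in the instant-read case where the read action is replaced by an equivalent one.
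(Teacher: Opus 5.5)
Your proposal is correct and takes essentially the same route as the paper. Like you, the paper reruns the case analysis of \cref{pr:T-just paths} with $\nconc_{\!\!A}$ in place of $\nconc_T$, and chooses the suffix $\pi'$ to avoid both the actions of thread $t$ and, when $a\in\textit{start}(r)$, all $\textit{start}(r)$-actions. In the converse direction it likewise obtains (b) as an immediate corollary of $a\nconc_{\!\!A} b$ holding for all $b\in\textit{start}(r)$. The only difference is presentational: you handle failures of (a) and (b) as separate cases, whereas the paper treats them in a single pass.
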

\begin{proof}
Suppose $\pi$ thread-enables an action $a \in \nonblock$, say with $t= \thrmap{a}$ and $r= \regmap{a}$, such that if $a \in \textit{start}(r)$ then $\pi$ contains only finitely many actions $b\in\textit{start}(r)$. We have to show that $\pi$ is \hyperlink{just}{not $\block$-$\conc_{\!\!A}$-just}. Let $\pi'$ be a suffix of $\pi$ in which no actions $b$ with $\thrmap{b}=t$ occur; in case $a \in \textit{start}(r)$ we moreover choose $\pi'$ such that it contains no actions $b\in\textit{start}(r)$. From here on, the proof proceeds exactly as the one of \cref{pr:T-just paths}, but reading $\nconc_{\!\!A}$ for $\nconc_T$.

For the other direction, suppose that $\pi$ is \hyperlink{just}{not $\block$-$\conc_{\!\!A}$-}just. We have to establish that\linebreak[3] 
(a) $\pi$ thread-enables an action $a \in \nonblock$, and (b) in case $a \in \textit{start}(r)$ then $\pi$ contains only finitely many actions $b\in\textit{start}(r)$. Let $\pi'$ be a suffix of $\pi$ such that an action $a \in \nonblock$ is enabled in the initial state $s$ of $\pi'$, but $\pi'$ contains no action $b$ such that $a \nconc_{\!\!A} b$, that is, $\pi'$ contains no action $b$ with $\thrmap{a}=\thrmap{b}$ or $a,b\in\textit{start}(r)$ for some $r\in\RID$.
The proof of (a) above proceeds as in the proof of \cref{pr:T-just paths}, and (b) is now a trivial corollary.
\end{proof}

Recall from \cref{def:concT,def:concS,def:concI} that $a \nconc_{\!I} b$ iff either $\thrmap{a}=\thrmap{b}$ or $a,b\in\textit{start}(r)$ for some $r\in\RID$, with $\iswrite{s} \vee \iswrite{b}$\footnote{Recall from \cref{sec:justness} that for our models, $\iswrite{a}$ is true iff $a$ is a start write action}. Hence a path $\pi$ is $\block$-$\conc_{\!I}$-unjust if, and only if, it has a suffix $\pi'$ such that an action $a \in\nonblock$ is enabled in the initial state of $\pi'$, but $\pi$ does not contain any action $b$ such that $\thrmap{a}=\thrmap{b}$ or $a,b\in\textit{start}(r)$ for some $r\in\RID$, with $\iswrite{a} \vee \iswrite{b}$.

Similarly, $a \nconc_{S} b$ iff either $\thrmap{a}=\thrmap{b}$ or $a,b\in\textit{start}(r)$ for some $r\in\RID$, with $\iswrite{b}$. Hence a path $\pi$ is $\block$-$\conc_{S}$-unjust if, and only if, it has a suffix $\pi'$ such that an action $a \in\nonblock$ is enabled in the initial state of $\pi'$, but $\pi$ does not contain any action $b$ such that $\thrmap{a}=\thrmap{b}$ or $a,b\in\textit{start}(r)$ for some $r\in\RID$, with $\iswrite{b}$.

\begin{prop}\rm\label{pr:I-just paths}
  A path $\pi$ starting in the initial state of $M$, is $\block$-$\conc_{\!I}$-just if, and only if,
  \begin{enumerate}[a)]
  \item $\pi$ thread-enables no actions $a \in \nonblock$ other than actions from $\textit{start}(r)$ for some $r\in\RID$,
  \item if an action $\startwrite[\tid,\rid]{d}$ is thread-enabled by $\pi$, then $\pi$ contains infinitely many occurrences of actions $b\in\textit{start}(r)$, and
  \item if an action $\startread[\tid,\rid]$ or $\readop[\tid,\rid]{\data}$ is thread-enabled by $\pi$, then $\pi$ contains infinitely many occurrences of actions $b$ of the form $\startwrite[\tid',\rid]{d}$ for some $\tid'\in\TID$ and $d'\in\Data$.
  \end{enumerate}
\end{prop}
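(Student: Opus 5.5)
The plan is to mirror the proof of \cref{pr:A-just paths}, which in turn reuses the proof of \cref{pr:T-just paths}. The only difference lies in which occurrences of $\textit{start}(r)$-actions count as interfering. Recall the characterisation of $\nconc_{\!I}$: $a \nconc_{\!I} b$ iff $\thrmap{a}=\thrmap{b}$, or $a,b\in\textit{start}(r)$ for some $r$ with $\iswrite{a}\vee\iswrite{b}$. So for a start write $a$, every $b\in\textit{start}(r)$ interferes. For a read-type action $a$ ($\startread[\tid,\rid]$ or $\readop[\tid,\rid]{\data}$), only start writes on $r$ interfere. Conditions (b) and (c) encode exactly this split.

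\emph{Unjust implies a violation.} Suppose $\pi$ thread-enables some $a\in\nonblock$ with $t=\thrmap{a}$ and $r=\regmap{a}$, and one of (a)--(c) fails for $a$. That is, either $a\notin\textit{start}(r)$ for every $r$; or $a$ is a start write and $\pi$ has only finitely many $\textit{start}(r)$-actions; or $a$ is a read-type start and $\pi$ has only finitely many start writes on $r$. I choose a suffix $\pi'$ with no actions of thread $t$, and additionally, in the second case, no $\textit{start}(r)$-actions, and in the third case, no start writes on $r$. Then I rerun the case analysis of \cref{pr:T-just paths} on the initial state $s$ of $\pi'$:
\begin{itemize}
\item If $r=\undefsymb$, or $s_{\idx{r}}$ enables an order action or a finish action for $t$, the action exhibited (possibly via \cref{lem:1}) is not in $\textit{start}(r)$. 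Its only interferers are therefore actions of thread $t$, and none occur in $\pi'$.
\item If $s_{\idx{r}}$ is in case (i), then by \cref{lem:2} the action $a$ is a start action. Either $a$ itself or an equivalent $c=\readop[\tid,\rid]{d'}$ is enabled in $s$. The action $c$ has the same $\thrsym$, $\regsym$, $\isreadsym$ and $\iswritesym$ values as $a$, as noted in \cref{app:equivalence}. So any $b$ with $c\nconc_{\!I} b$ must be of thread $t$, or a $\textit{start}(r)$-action, or a start write on $r$, matching the respective case, and the choice of $\pi'$ excludes all of these.
\end{itemize}
In every case $\pi$ is not $\block$-$\conc_{\!I}$-just.

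\emph{A violation implies unjust.} Take a suffix $\pi'$ from a state $s$ enabling $a\in\nonblock$ such that $\pi'$ contains no $b$ with $a\nconc_{\!I} b$. Then $\pi'$ has no actions of thread $\thrmap{a}$. As in \cref{pr:T-just paths}, $\pi$ thread-enables either $a$ itself or, when $a$ is an order action, a finish action $c\notin\textit{start}(r)$, which violates (a). If $a\in\textit{start}(r)$ is thread-enabled, the lack of interferers in $\pi'$ gives the conclusion directly. When $a$ is a start write, $\pi'$ has no $\textit{start}(r)$-actions, so $\pi$ has only finitely many and (b) fails. When $a$ is a read start, $\pi'$ has no start writes on $r$, and (c) fails.

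\emph{Main obstacle.} The delicate step is the case of an instant-read $a$. There $s$ may enable only an equivalent read $c$ rather than $a$ itself, and I must check that the interference sets of $a$ and $c$ coincide under $\conc_{\!I}$; \cref{app:equivalence} supplies exactly this. A smaller point is that the footnoted remarks in \cref{pr:T-just paths} (order and finish actions lie outside $\textit{start}(r)$) carry over verbatim. Condition (c) should be read with $\tid'$ ranging over all of $\TID$, including $t$; this is harmless, since actions of thread $t$ interfere anyway.
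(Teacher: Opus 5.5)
Your proposal is correct and follows essentially the paper's route. The paper proves this proposition by rerunning the argument for \cref{pr:A-just paths}, which is itself a rerun of \cref{pr:T-just paths}. As you do, it chooses the suffix $\pi'$ free of thread-$t$ actions, and additionally free of the appropriate interfering $\textit{start}(r)$-actions: all of them for a start write, only start writes on $r$ for a read-type action. One cosmetic slip: your two paragraph headings are swapped. The paragraph titled ``Unjust implies a violation'' actually shows that a violation of (a)--(c) makes $\pi$ unjust, and the other paragraph proves the converse.
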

\noindent
The proof of this proposition, and the next one, proceeds just like the one of \cref{pr:A-just paths}.

\begin{prop}\rm\label{pr:S-just paths}
  A path $\pi$ starting in the initial state of $M$, is $\block$-$\conc_{S}$-just if, and only if,
  \begin{enumerate}[a)]
  \item $\pi$ thread-enables no actions $a \in \nonblock$ other than actions from $\textit{start}(r)$ for some $r\in\RID$,
  \item if an action $a \in \textit{start}(r)$ is thread-enabled by $\pi$, then $\pi$ contains infinitely many occurrences of actions $b$ of the form $\startwrite[\tid',\rid]{d}$ for some $\tid'\in\TID$ and $d'\in\Data$.
  \end{enumerate}
\end{prop}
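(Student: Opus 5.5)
We follow the proof of \cref{pr:A-just paths}, which in turn reuses the proof of \cref{pr:T-just paths}. The ingredient specific to $\conc_S$ is the characterisation recalled just before the statement: $a \nconc_S b$ holds iff $\thrmap{a}=\thrmap{b}$, or $a,b\in\textit{start}(r)$ for some $r\in\RID$ with $\iswrite{b}$, i.e.\ $b$ is a start write on $r$. The difference from $\conc_{\!A}$ is only in which actions $b$ can interfere with a start action $a\in\textit{start}(r)$: now these are exactly the actions of thread $\thrmap{a}$ together with the start writes on $r$ by any thread.

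For the ``if'' direction we argue by contraposition. Suppose (a) or (b) fails. Then $\pi$ thread-enables some $a\in\nonblock$ with $t=\thrmap{a}$ and $r=\regmap{a}$. Moreover, if $a\in\textit{start}(r)$, then $\pi$ contains only finitely many start writes on $r$.

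We pick a suffix $\pi'$ of $\pi$ containing no action of thread $t$. In the case $a\in\textit{start}(r)$ we additionally require that $\pi'$ contains no start write on $r$. With the initial state $s$ of $\pi'$, we then replay the case analysis of \cref{pr:T-just paths} verbatim.
\begin{itemize}
\item If $s_{\idx{r}}$ enables an order action, or a finish action, we use \cref{lem:1}. In either case the action $c$ enabled in $s$ lies outside $\textit{start}(r)$, so only actions of thread $t$ can interfere with it. Since $\pi'$ contains none, $\pi$ is not just.
\item Otherwise $s_{\idx{r}}$ is in case (i), and $a$ (or an equivalent read $\readop[\tid,\rid]{d'}$, obtained via \cref{thread-prop-2}) is enabled in $s$. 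The only possible interferers are actions of $t$ and start writes on $r$, and $\pi'$ contains neither.
\end{itemize}
The footnoted remarks in that proof are exactly what makes this replay go through. Where the footnotes ask that $\pi'$ contain no $b\in\textit{start}(r)$, we here only need that $\pi'$ contain no start write on $r$.

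For the ``only if'' direction, assume $\pi$ is not $\block$-$\conc_S$-just. Then there is a suffix $\pi'$ whose initial state $s$ enables some $a\in\nonblock$, while $\pi'$ contains no action $b$ with $a\nconc_S b$. In particular $\pi'$ contains no action of thread $t=\thrmap{a}$, so thread $t$ performs only finitely many actions and $\finish_t(\pi)=s_{\idx{t}}$.
\begin{itemize}
\item If $a$ is an order action, \cref{lem:1} shows that $\pi$ thread-enables the corresponding finish action. That action is in $\nonblock$ and not in any $\textit{start}(r)$, so (a) fails.
\item Otherwise $\pi$ thread-enables $a$ itself. If $a\notin\textit{start}(r)$ for every $r$, then (a) fails.
\item If $a\in\textit{start}(r)$, the absence of interferers means $\pi'$ contains no start write on $r$. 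Hence $\pi$ has only finitely many such actions, and (b) fails.
\end{itemize}

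The main point of care is the first direction in the start case. There, enabledness in $s$ must be derived from thread-enabledness while $\pi'$ is allowed to contain reads on $r$ and actions of other threads. This is not an obstacle, because \cref{pr:T-just paths} only uses the state $s$ at the beginning of $\pi'$ and not the contents of $\pi'$. The remainder is bookkeeping identical to \cref{pr:I-just paths}.
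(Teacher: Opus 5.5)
Your proposal is correct and takes the paper's own route. The paper only says that the proof proceeds like that of \cref{pr:A-just paths}, i.e.\ a replay of \cref{pr:T-just paths} with the footnoted adjustments; you carry out exactly this replay for $\nconc_S$, correctly noting that when $a\in\textit{start}(r)$ the suffix $\pi'$ need only avoid start writes on $r$ rather than all of $\textit{start}(r)$.
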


\noindent
An interesting consequence of these propositions is that whether or not a given path $\pi$ is
$\block$-$\conc_{C}$-just, for some $C \in \{T,S,I,A\}$, is completely determined by the set of
actions that are thread-enabled by $\pi$, and by the function $\textit{occ}_\pi:Act \rightarrow \mathbbm{N} \cup \{\infty\}$ that tells for each action how often it occurs in $\pi$.

\section{Complete paths in full-read and instant-read models}\label{app:complete-same}
In this appendix, we establish the correspondence between the instant-read and full-read models.
Ideally, we would be able to establish a standard equivalence between our models, such as bisimilarity or trace equivalence.
However, since the two approaches differ in which actions they use, and in the points at which the decision is made what value a read returns, such equivalences do not hold.
Instead, we prove that the verification results we obtain for instant-read models when checking the properties we cover in this paper are also valid for the full-read models. We also show the reverse; if we had done our verifications on the full-read models, those results would also be valid for the instant-read models.
Thus, with respect to the properties we check, the models are equivalent.

All our correctness properties for mutual exclusion protocols are linear-time properties: they hold for a process, modelled as an LTS, iff they hold for all complete paths starting in the initial state of that LTS. Moreover, whether they hold for a particular path depends solely on the labels of the transitions in that path, and in fact only on those transitions labelled $\crit[\tid]$ or $\noncrit[\tid]$ (with $\tid\in\TID$). This is witnessed by the modal $\mu$-calculus formulae for these properties given in \cref{app:mucalc}.
For the (sole) purpose of deciding whether mutual exclusion, deadlock freedom or starvation freedom hold for a given LTS, once it is determined  which paths are complete, all other actions may be considered internal, or hidden.

\begin{defi}\label{def:wct}\rm
  Given a path $\pi=s_0 a_1 s_1 a_2 s_2 \dots$, let $\ell(\pi)=a_1 a_2 \dots$ be the sequence of actions occurring on that path, and let $\ell^-(\pi)$ be the result of omitting from $\ell(\pi)$ all actions other than $\crit[\tid]$ or $\noncrit[\tid]$ (with $\tid\in\TID$).

  Given a completeness criterion $C$, the set of \emph{weak $C$-complete traces} $\WCT_C(P)$ of an LTS $P$ consists of those finite and infinite strings $\ell^-(\pi)$ for $\pi$ a $C$-complete path starting in the initial state of $P$.
  Two LTSs $P$ and $Q$ are \emph{weak completed trace equivalent} w.r.t.\ $C$, notation $P =^C_\WCT Q$ if, and only if, $\WCT_C(P) = \WCT_C(Q)$.
\end{defi}

It now follows that, given a completeness criterion $C$, two weak completed trace equivalent LTSs satisfy the same correctness properties for mutual exclusion protocols. Thus, we proceed by proving that for all six memory models we consider, the full-read thread-register models and instant-read thread register models are weak completed trace equivalent.
We use $x$ as shorthand notation for $\conc_x$ for $x \in \{T,S,I,A\}$.

Let $M_{\modeltype} = (\states_{\modeltype}, \actionset_{\modeltype}, \initstate_{\modeltype}, \transrel_{\modeltype}, \thrsym_{\modeltype}, \regsym_{\modeltype})$ with
$\modeltype \in \{\frRep, \irRep\}$ be the LTS of the thread-register model for a given mutual exclusion algorithm, using the appropriate
(with respect to
$\modeltype$ and the types of the registers) register model from \cref{fig:fullread-procsafe,fig:fullread-procregular,fig:fullread-procatomic,fig:instantread-procsafe,fig:instantread-procregular,fig:instantread-procatomic}. The two models $M_{\irRep}$ and $M_{\frRep}$ capture the same types of registers, with the same identifiers and domains, and the same threads, except that they use different action(s) for performing a read operation.

A state $s$ in $\states_{\modeltype}$ must have the form $(s_1,\dots,s_k)$, where each of the indices $i \in \{1,\dots,k\}$ corresponds to a thread $t\in\TID$ or a register $r\in\RID$; as in \hyperlink{thrconsist}{the proof of} \cref{lem:thr-consist-fullread} in \cref{app:thrconsist} we denote the corresponding component $s_i$ as $s_{\idx{\tid}}$ or $s_{\idx{\rid}}$; it is a state in the LTS $T^{\modeltype}_t$ or $R^{\regtype,\modeltype}_r$ for $\regtype\in\{\safRep,\regRep,\atoRep\}$.
Let $R^{\regtype,\modeltype}_{\rid} = (\states^{\regtype,\modeltype}_{\rid}, \actionset^{\regtype,\modeltype}_{\rid}, \initstate^{\regtype,\modeltype}_{\rid}, \transrel^{\regtype,\modeltype}_{\rid})$ be the LTS associated with each $\rid \in \RID$, and likewise for $T^{\modeltype}_t$.
In our arguments, we use $\thrsym$ to denote both $\thrsym_{\frRep}$ and $\thrsym_{\irRep}$; we do the same for $\regsym$.
Given a path $\pi$ in $M_{\modeltype}$, let $\proj_{\tid}(\pi)$ be the sequence of all thread-local and interface action labels $a$ in $\pi$ such that $\thrmap{a} =\tid$ for all $\tid\in\TID$. 
Similarly, let $\proj_{\rid}(\pi)$ be the sequence of all the register-local and interface action labels $a$ in $\pi$ such that $\regmap{a}=\rid$ for all $\rid\in\RID$.

We first prove that whenever a path $\pi$ exists in a full-read model, a path $\pi'$ exists in the corresponding instant-read model such that $\ell^-(\pi) = \ell^-(\pi')$, and vice versa. We then prove that $\pi'$ is $C$-complete iff $\pi$ is.

\subsection{Paths: instant-read to full-read}\label{app:path-itof}
We introduce the transformation $\itof$ on sequences of actions. Note that this $\itof$ is different from the $\itofstate$ defined on states in \cref{app:thread-transform}.
Let $\pi$ be a path from the initial state of $M_{\irRep}$, then $\itof(\ell(\pi))$ is defined as $\ell(\pi)$ but with every $\readop[\tid,\rid]{\data}$ replaced with $\startread[\tid,\rid]\finishread[\tid,\rid]{\data}$ if $\rid$ is safe or regular, or $\startread[\tid,\rid]\orderread[\tid,\rid]\finishread[\tid,\rid]{\data}$ if $\rid$ is atomic, for all $\tid,\rid$ and $\data$.

Our main result of this subsection is that whenever a path $\pi$ exists in $M_{\irRep}$, a path $\pi'$ with $\ell(\pi') = \itof(\ell(\pi))$ exists in $M_{\frRep}$. We first prove some supporting lemmas on the transformation $\itof$.

\begin{lem}\rm\label{lem:irtofr-registers}
     Let $\pi$ be a finite path from the initial state of $M_{\irRep}$, and let $\pi'$ be a finite path from the initial state of $M_{\frRep}$ such that $\ell(\pi') = \itof(\ell(\pi))$. Let $s = (s_{\irRep,1},\ldots,s_{\irRep,k})$ be the last state of $\pi$ and  let $s' = (s'_{\frRep,1},\ldots,s'_{\frRep,k})$ be the last state of $\pi'$. Then it holds for all $\rid\in\RID$ that $\writers{s_{\irRep,\idx{\rid}}} = \writers{s'_{\frRep,\idx{\rid}}}$ and $\pending{s_{\irRep,\idx{\rid}}} = \pending{s'_{\frRep,\idx{\rid}}}\cap\writers{s'_{\frRep,\idx{\rid}}}$.
\end{lem}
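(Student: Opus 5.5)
The plan is to argue per register $\rid$, by induction on the length of $\pi$, using only the update equations from \cref{obs:access-instantread} and \cref{obs:access-fullread}. Since $\writerssym$ and $\pendingsym$ of register $\rid$ change only on actions $a$ with $\regmap{a}=\rid$, it suffices to compare $\proj_{\rid}(\pi)$ with $\proj_{\rid}(\pi')$. Because $\ell(\pi')=\itof(\ell(\pi))$, we have $\proj_{\rid}(\pi')=\itof(\proj_{\rid}(\pi))$. The induction therefore runs over the actions of $\ell(\pi)$ one at a time. Each action $a$ of $\pi$ corresponds to a block $\itof(a)$ of one or more actions in $\pi'$, and we track the states reached after each prefix.

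In the base case both registers are in their initial states. There $\writerssym$ and $\pendingsym$ are empty, so both equalities hold. For the step, we take the last action $a$ of $\pi$ and assume the claim for the prefix. If $\regmap{a}\neq\rid$, then neither $a$ nor any action of $\itof(a)$ touches $\rid$, so nothing changes. Otherwise we distinguish cases on $a$.

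If $a=\startwrite[\tid,\rid]{\data}$, both sides add $\tid$ to $\writerssym$ and to $\pendingsym$, so both equalities are preserved. If $a=\orderwrite[\tid,\rid]$, both sides remove $\tid$ from $\pendingsym$ and leave $\writerssym$ unchanged. If $a=\finishwrite[\tid,\rid]$, both sides remove $\tid$ from $\writerssym$ and leave $\pendingsym$ alone. This is where the intersection in the statement is needed. In the safe model $\pendingsym$ is never updated by writes in either variant, so equality of $\pendingsym$ alone would fail only through read-induced differences; intersecting with $\writerssym$ removes exactly the thread ids whose presence in $\pendingsym$ of the full-read model stems from reads. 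For the write cases we use that the set operations commute with intersecting by $\writerssym$. For the finish-write case we note that removing $\tid$ from both sets keeps the intersection equal.

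The main obstacle is the case $a=\readop[\tid,\rid]{\data}$. In the instant-read model this action leaves $\writerssym$ and $\pendingsym$ unchanged. In the full-read model it becomes $\startread[\tid,\rid]$, possibly followed by $\orderread[\tid,\rid]$, and then $\finishread[\tid,\rid]{\data}$. The start read adds $\tid$ to $\pendingsym$, and only the order read removes it again. So for safe and regular registers, $\tid$ remains in $\pendingsym$ of the full-read state. We must show that $\tid\notin\writers{s'_{\frRep,\idx{\rid}}}$, so that the intersection discards it. This follows because $\readop[\tid,\rid]{\data}$ is enabled only if $\tid\notin\writerssym$ in the instant-read model, which by the induction hypothesis is the same set as in the full-read model. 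None of $\startread$, $\orderread$ and $\finishread$ modifies $\writerssym$. Moreover, any later start write by $\tid$ re-adds $\tid$ to $\pendingsym$ in both models anyway. Hence the intersection is $\pending{s_{\irRep,\idx{\rid}}}\cap\writerssym$, which equals $\pending{s_{\irRep,\idx{\rid}}}$ because instant-read $\pendingsym\subseteq\writerssym$. That last inclusion is a small invariant, proved alongside the main claim, stating that a thread is pending in the instant-read model only between its start write and its order write.
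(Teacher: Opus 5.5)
The step that fails is the $\finishwrite[\tid,\rid]$ step, together with your treatment of safe registers.

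On the instant-read side a finish write leaves $\pendingsym$ unchanged. On the full-read side it removes $\tid$ from $\writerssym$, and hence from the intersection. So ``removing $\tid$ from both sets keeps the intersection equal'' needs $\tid\notin\pending{s_{\irRep,\idx{\rid}}}$ beforehand. For regular and atomic registers this is exactly the guard $\tid\notin\pendingsym$ of the finish-write summand, and you should invoke it.

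For safe registers it is false. Contrary to your claim, $\uswsym$ adds $\tid$ to $\pendingsym$ in every register model (\cref{obs:access-fullread,obs:access-instantread}). The safe models have no order actions, so nothing ever removes it. After $\startwrite[\tid,\rid]{\data}\,\finishwrite[\tid,\rid]$ the instant-read register has $\tid\in\pendingsym$, while the full-read intersection is empty. So the $\pendingsym$-equation genuinely fails for safe registers, and so does your invariant $\pendingsym\subseteq\writerssym$.

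Your own premise, that safe writes leave $\pendingsym$ untouched, does not rescue the argument. Under it, a safe full-read $\startread[\tid,\rid]$ puts $\tid$ into $\pendingsym$ for good. A later start write by $\tid$ then places it in the full-read intersection but not in the instant-read $\pendingsym$. That contradicts your remark that $\tid$ is ``re-added in both models''.

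The repair is to claim the $\writerssym$-equation for all register types and the $\pendingsym$-equation only for regular and atomic registers. This loses nothing: $\pendingsym$ is never consulted by the safe models, and \cref{lem:ir2fr} uses it only for the order-write guard and the regular/atomic finish-write guard. With that restriction your induction is correct. The read case needs only the induction hypothesis, since $\tid\notin\writerssym$ there.

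Your proof is a stepwise version of the paper's argument. The paper observes globally that $\writerssym$ and $\pendingsym\cap\writerssym$ on the full-read side, and $\writerssym$ and $\pendingsym$ on the instant-read side, are determined by the same write actions in $\proj_{\rid}(\pi)$ and $\proj_{\rid}(\pi')$ under identical update functions. That global argument tacitly needs $\pendingsym\subseteq\writerssym$ on the instant-read side. Your per-step bookkeeping makes this visible, and it too holds only for regular and atomic registers.
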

\begin{proof}
    From the definition of $\itof$, it follows that $\proj_{\rid}(\pi)$ and $\proj_{\rid}(\pi')$ are identical, save for those actions that are tied to read operations.
    As can be seen by the definitions of the update functions in \cref{app:fullread}, $\writers{s'_{\frRep,\idx{\rid}}}$ and $\pending{s'_{\frRep,\idx{\rid}}} \cap \writers{s'_{\frRep,\idx{\rid}}}$ are fully determined by the calls to $\uswsym$, $\uowsym$ and $\ufwsym$ and thus the occurrences of start, order and finish write actions in $\proj_{\rid}(\pi')$. The same holds for $\writers{s_{\irRep,\idx{\rid}}}$ and $\pending{s_{\irRep,\idx{\rid}}}$ and $\proj_{\rid}(\pi)$, as can be seen in \cref{app:instantread}.
    Additionally, these update functions are defined identically between the two model types with respect to $\writerssym$ and $\pendingsym$.
    Thus, since the occurrences of actions tied to write operations are the same between $\proj_{\rid}(\pi)$ and $\proj_{\rid}(\pi')$, the claim holds.
\end{proof}

\begin{prop}\rm\label{lem:ir2fr}
    Given a path $\pi$ from the initial state of $M_{\irRep}$, there exists a path $\pi'$ from the initial state of $M_{\frRep}$ such that $\ell(\pi') = \itof(\ell(\pi))$, and for all prefixes $\pi_i$ consisting of the first $i$ transitions of $\pi$, ending in a state $s=(s_1,\ldots,s_k)$, there exists a prefix $\pi'_i$ of $\pi'$, ending in a state $s'=(s'_1,\ldots,s'_k)$, such that $\ell(\pi'_i) = \itof(\ell(\pi_i))$ and $s'_{\idx{\tid}} \in \itofstate(s_{\idx{\tid}})$ for all $\tid\in\TID$. 
\end{prop}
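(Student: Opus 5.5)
We build $\pi'$ by induction on the length $i$ of the prefix $\pi_i$, maintaining the invariant stated in the proposition together with the register invariant of \cref{lem:irtofr-registers}. In fact we strengthen it: for every register $\rid$, the full-read register state $s'_{\idx{\rid}}$ has $\readers{s'_{\idx{\rid}}}=\emptyset$. It agrees with $s_{\idx{\rid}}$ on $\truevalsym$, $\writerssym$ and $\vals{\cdot,\tid}$ for all $\tid\in\writers{s_{\idx{\rid}}}$, and on $\pendingsym$ restricted to writers. For safe registers it also agrees on $\overlap{\cdot,\tid}$ for every active writer $\tid$. Each step of $\pi'$ then consists of either one transition (for every non-read action) or the block $\startread[\tid,\rid]\finishread[\tid,\rid]{\data}$ (resp.\ $\startread[\tid,\rid]\orderread[\tid,\rid]\finishread[\tid,\rid]{\data}$ for atomic $\rid$). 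After each block all readers have finished again, so the strengthened invariant survives. The base case is immediate, since both models start in corresponding initial states with identical thread states and $\readerssym=\emptyset$.

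For the inductive step, let the $(i{+}1)$-th transition of $\pi$ be $s\xrightarrow{a}\hat s$. If $a$ is a thread-local action, it involves only $T_{\thrmap{a}}$. By \cref{lem:itof-preserve} and $s'_{\idx{\tid}}\in\itofstate(s_{\idx{\tid}})$, the state $s'_{\idx{\tid}}$ enables $a$, and the construction of $\itofstate$ gives a target in $\itofstate(\hat s_{\idx{\tid}})$.

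If $a$ is a write-related action ($\startwrite[\tid,\rid]{\data}$, $\orderwrite[\tid,\rid]$ or $\finishwrite[\tid,\rid]$), compare the guards in \cref{fig:instantread-procsafe,fig:instantread-procregular,fig:instantread-procatomic} with those in \cref{fig:fullread-procsafe,fig:fullread-procregular,fig:fullread-procatomic}. The only extra conjunct is $\tid\notin\readers{\cdot}$ in the start-write guard, which holds since $\readerssym=\emptyset$. The remaining guard conditions and the update functions $\uswsym,\uowsym,\ufwsym$ depend only on the components preserved by the invariant. In particular, the data value chosen in a safe overlapping finish write can be copied. The thread side is handled by \cref{lem:itof-preserve} as before.

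The main case is $a=\readop[\tid,\rid]{\data}$. By \cref{lem:itof-read}, $s'_{\idx{\tid}}$ enables $\startread[\tid,\rid]$ to a state that enables $\finishread[\tid,\rid]{\data}$, and after that finish read the thread is in $\itofstate(\hat s_{\idx{\tid}})$. On the register side, $\tid\notin\writers{s'_{\idx{\rid}}}$ (by the invariant) and $\readerssym=\emptyset$, so $\startread[\tid,\rid]$ is enabled. We then check that $\finishread[\tid,\rid]{\data}$ can follow, by cases on the register type. For a safe register, $\usrsym$ sets $\overlap{\cdot,\tid}$ to $\writers{s}\neq\emptyset$, so the two summands match the two instant-read summands exactly. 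For a regular register, $\usrsym$ initialises $\posval{\cdot,\tid}$ to $\{\trueval{s}\}\cup\{\vals{s,\tidtwo}\mid\tidtwo\in\writers{s}\}$, which is precisely the instant-read guard set. For an atomic register, $\orderread[\tid,\rid]$ stores $\trueval{s}$ in $\vals{\cdot,\tid}$, and the finish read then returns it.

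After the block, $\readerssym$ is empty again and $\writerssym,\truevalsym,\pendingsym$ are unchanged. The value $\vals{\cdot,\tid}$ and the safe $\overlap{\cdot,\tid}$ were modified, but only for $\tid$, which is not a writer, so the invariant holds. The main obstacle is exactly this bookkeeping: confirming that the read block perturbs only the $\tid$-indexed fields that the invariant ignores. Finally, the infinite path $\pi'$ is obtained as the limit of the increasing sequence of prefixes $\pi'_i$.
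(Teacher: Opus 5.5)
Your proof is correct and follows the paper's route. Both arguments induct over prefixes of $\pi$, replace each $\readop[\tid,\rid]{\data}$ by its start/(order)/finish block, and handle the thread components via \cref{lem:itof-preserve,lem:itof-read}. The only real difference is your stronger register invariant, whereas the paper relies on \cref{lem:irtofr-registers} (agreement on $\writerssym$ and $\pendingsym\cap\writerssym$ only) plus an informal appeal to ``the same overlapping writes'' in the read case. Your invariant adds $\readerssym=\emptyset$ at block boundaries, agreement on $\truevalsym$, agreement on $\valssym$ and $\overlapsym$ for active writers, and copying the nondeterministically chosen value of a safe overlapping $\finishwrite[\tid,\rid]$. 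That is what actually makes the read case rigorous, since equal action labels alone do not pin down $\truevalsym$.
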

\begin{proof}
    Let $\pi$ be a path from the initial state of $M_{\irRep}$.    
    We do induction on the first $n$ actions in $\ell(\pi)$. To this end, let $\pi_n$ be the prefix of $\pi$ containing the first $n$ actions and $n+1$ states.
    If $n = 0$, then trivially the path consisting of only the initial state of $M_{\frRep}$ has the same actions. Additionally, as witnessed by the transformation from \cref{app:thread-transform}, the initial states of all the threads are equal.
    Let $i \geq 0$ and let $s = (s_1,\ldots,s_k)$ be the final state of $\pi_i$. Assume we have a path $\pi'_i$ with final state $s' = (s'_1, \ldots, s'_k)$ from the initial state of $M_{\frRep}$ such that $\ell(\pi'_i) = \itof(\ell(\pi_i))$, for all $\tid\in\TID$ $s'_{\idx{\tid}} \in \itofstate(s_{\idx{\tid}})$ and all the appropriate prefixes as described in the proposition exist.
    Consider the $(i+1)^{\text{th}}$ action, $a$, of $\ell(\pi)$.
    Then $a$ is naturally enabled in $s$.
    We must show that $a$ is enabled in $s'$.
    We do a case distinction on what action $a$ is:
    \begin{itemize}
        \item If $a$ is a thread-local action with $\thrmap{a}=\tid$ for some $\tid\in\TID$, then it is sufficient to establish that $a$ is enabled in $s'_{\idx{\tid}}$. This follows from $s'_{\idx{\tid}}$ being in $\itofstate(s_{\idx{\tid}})$, $a$ being enabled in $s_{\idx{\tid}}$ and \cref{lem:itof-preserve}.
        
        \item If $a$ is a register-local action, then it must be $\orderwrite[\tid,\rid]$ for some $\tid\in\TID, \rid\in\RID$, since these are the only register-local actions in the instant-read models. It is sufficient to establish that $a$ is enabled in $s'_{\idx{\rid}}$. In all relevant models (\cref{fig:fullread-procregular,fig:fullread-procatomic,fig:instantread-procregular,fig:instantread-procatomic}), $\orderwrite[\tid,\rid]$ is enabled exactly in those states $s$ where $\tid \in \writers{s} \land \tid \in \pending{s}$\. By \cref{lem:irtofr-registers}, $\tid \in \writers{s_{\idx{\rid}}} \land \tid \in \pending{s_{\idx{\rid}}}$ implies $\tid \in \writers{s'_{\idx{\rid}}} \land \tid \in \pending{s'_{\idx{\rid}}}$. Therefore, $a$ being enabled in $s$ implies it is enabled in $s'$. 
        
        \item If $a$ is $\startwrite[\tid,\rid]{\data}$ or $\finishwrite[\tid,\rid]$ for some $\tid\in\TID,\rid\in\RID,\data\in\Data[\rid]$, then we must establish that it is enabled in both $s'_{\idx{\tid}}$ and $s'_{\idx{\rid}}$. The reasoning is similar to the previous two cases: for $\tid$, this again follows from $s'_{\idx{\tid}}$ being in $\itofstate(s_{\idx{\tid}})$ and \cref{lem:itof-preserve}.
        For $\rid$, we observe from \cref{fig:fullread-procsafe,fig:fullread-procregular,fig:fullread-procatomic,fig:instantread-procsafe,fig:instantread-procregular,fig:instantread-procatomic} that $\startwrite[\tid,\rid]{\data}$ is enabled in a state $s$ exactly when $\tid \notin (\readers{s} \cup \writers{s})$ for the full-read models and $\tid \notin \writers{s}$ for the instant-read models. For both model types, $\finishwrite[\tid,\rid]$ is enabled exactly in those states $s$ where $\tid\in \writers{s}$ for the safe models and $\tid\notin \writers{s} \land \tid \notin \pending{s}$ for the regular and atomic models.
        By \cref{lem:irtofr-registers}, $\tid \in \writers{s_{\idx{\rid}}}$ implies $\tid\in\writers{s'_{\idx{\rid}}}$ and vice versa. Similarly, $\tid \in \pending{s_{\idx{\rid}}}$ implies $\tid \in \pending{s'_{\idx{\rid}}} \cap \writers{s'_{\idx{\rid}}}$ and vice versa.
        Hence, for those conditions, $a$ being enabled in $s_{\idx{\rid}}$ implies it is enabled in $s'_{\idx{\rid}}$. 
        It remains to establish that  $\tid\notin\readers{s'_{\idx{\rid}}}$ when $a$ is $\startwrite[\tid,\rid]{\data}$.
        This follows from \cref{thread-prop-2} from \cref{sec:threads} and that we have already established that $a$ is enabled in $s'_{\idx{\tid}}$: a thread cannot attempt to start writing when it is already actively reading, thus it will not be actively reading when it has $\startwrite[\tid,\rid]{\data}$ enabled.
        Thus, $a$ is enabled by $s'_{\idx{\rid}}$.
        
        \item If $a$ is $\readop[\tid,\rid]{\data}$ for some $\tid\in\TID,\rid\in\RID,\data\in\Data[\rid]$ then it is sufficient to show that from $s'$ the actions $\startread[\tid,\rid]$, optionally $\orderread[\tid,\rid]$ and finally $\finishread[\tid,\rid]{\data}$ can be performed directly in-sequence. Since $\readop[\tid,\rid]{\data}$ is enabled in $s_{\idx{\tid}}$ and $s'_{\idx{\tid}} \in \itofstate(s_{\idx{\tid}})$, it must be that in $s'_{\idx{\tid}}$ $\startread[\tid,\rid]$ is enabled and immediately afterward $\finishread[\tid,\rid]{\data}$ is enabled (by \cref{lem:itof-read}). For $\rid$, we can see from \cref{fig:fullread-procsafe,fig:fullread-procregular,fig:fullread-procatomic} that the only condition for $\startread[\tid,\rid]$ being enabled is that $\tid$ is not already reading or writing to $\rid$. By \cref{thread-prop-2} of \cref{sec:threads}, this is certainly not the case since we have already argued that $\startread[\tid,\rid]$ is enabled in $s'_{\idx{\tid}}$ and a thread never enables a start read when it is already involved in an operation. Thus, $\startread[\tid,\rid]$ is possible. In the atomic model, $\orderread[\tid,\rid]$ is enabled immediately after $\startread[\tid,\rid]$, since $\usrsym$ sets the relevant values correctly (see \cref{app:fullread-status}). Similarly as a consequence of the update functions, for all three models it is straightforward that $\finishread[\tid,\rid]{\data'}$ is enabled next for some $\data' \in \Data[\rid]$ after $\startread[\tid,\rid]$ or $\orderread[\tid,\rid]$. It remains to establish that the specific value $\data$ can be read. Consider that there are no actions, besides possibly $\orderread[\tid,\rid]$, between $\startread[\tid,\rid]$ and $\finishread[\tid,\rid]{\data'}$, thus the read operation still overlaps with exactly the same write operations in both the full-read and instant-read variants. From the definitions of the register models, it follows that the value $\data$ must then be possible to read. Thus $\finishread[\tid,\rid]{\data}$ is also possible.
    \end{itemize}
    Thus, $a$ is enabled in $s$ and there therefore exists a path $\pi'_{i+1}$ such that $\ell(\pi'_{i+1}) = \itof(\ell(\pi_{i+1}))$.
    Additionally, it follows from the transformation described in \cref{app:thread-transform} and $s'_{\idx{\tid}} \in \itof(s_{\idx{\tid}})$ that we can take transition(s) from $s'$ in such a way that the resulting full-read thread states are still in the mapping from instant-read thread states. 
    Thus, by induction we conclude that the claim holds.
\end{proof}
\noindent
It is straightforward to observe that the transformation $\itof$ does not alter the presence or order of actions $\crit[\tid]$ or $\noncrit[\tid]$ for all $\tid\in\TID$. 
\begin{cor}\rm\label{cor:ir2fr}
    Given a path $\pi$ from the initial state of $M_{\irRep}$, there exists a path $\pi'$ from the initial state of $M_{\frRep}$ such that $\ell^-(\pi') = \ell^-(\pi)$.
\end{cor}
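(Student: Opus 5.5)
The plan is to obtain the corollary directly from \cref{lem:ir2fr}. Given a path $\pi$ from the initial state of $M_{\irRep}$, that proposition yields a path $\pi'$ from the initial state of $M_{\frRep}$ with $\ell(\pi') = \itof(\ell(\pi))$. It then remains to show $\ell^-(\itof(\ell(\pi))) = \ell^-(\ell(\pi))$. Together these give $\ell^-(\pi') = \ell^-(\pi)$.

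For that equality, I would observe that $\itof$ acts letter by letter on the sequence $\ell(\pi)$. It replaces each occurrence of an action $\readop[\tid,\rid]{\data}$ by a fixed finite block: $\startread[\tid,\rid]\finishread[\tid,\rid]{\data}$, or, for atomic $\rid$, $\startread[\tid,\rid]\orderread[\tid,\rid]\finishread[\tid,\rid]{\data}$. Every other action is left unchanged and stays in place.

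The operation $\ell^-$ deletes every action other than the $\crit[\tid]$ and $\noncrit[\tid]$. It is a letter-wise erasing homomorphism, including on infinite sequences, since it just filters positions while keeping their order. No $\readop$ action is a $\crit$ or $\noncrit$ action, and none of the start, order or finish read actions in the replacement blocks are either. Hence applying $\ell^-$ after $\itof$ deletes exactly the replacement blocks together with everything $\ell^-$ would delete anyway. The surviving $\crit[\tid]$ and $\noncrit[\tid]$ occurrences appear in the same relative order in both sequences.

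The only point needing slight care is the infinite case. Here I would note that $\itof$ maps each position of $\ell(\pi)$ to a nonempty finite block, so it preserves the order of non-read actions and leaves the set and ordering of $\crit$/$\noncrit$ occurrences intact. This holds even when there are infinitely many reads, so the filtered sequences coincide as finite or infinite strings. I expect no real obstacle here; the substantive work is already in \cref{lem:ir2fr}.
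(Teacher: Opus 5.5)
Your proposal is correct and is the paper's argument. The corollary follows from \cref{lem:ir2fr} together with the observation that $\itof$ only expands read actions into start/order/finish read blocks, so the occurrences and relative order of the $\crit[\tid]$ and $\noncrit[\tid]$ actions are unchanged. Your remark on infinite sequences makes explicit what the paper calls straightforward.
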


\subsection{Paths: full-read to instant-read}\label{app:path-ftoi}
Similar to the transformation $\itof$ in \cref{app:path-itof}, we here introduce the transformations on sequences of actions $\pftoi$ and $\ftoi$. 
Let $\pi$ be a path from the initial state of $M_{\frRep}$, then $\pftoi(\ell(\pi))$ is defined as $\ell(\pi)$ but with the following transformation applied to every read operation in $\ell(\pi)$: 
    We describe the transformation for one read operation. To this end, let $s$ and $f$ be indices in $\ell(\pi)=a_1 a_2 \dots$ such that $a_s=\startread[\tid,\rid]$ and $a_f=\finishread[\tid,\rid]{\data}$ (with $\tid\in\TID,\,\rid\in\RID,\,\data\in\Data[\rid]$) are part of one read operation. If $\rid$ is an atomic register, also let $a_o=\orderread[\tid,\rid]$ be the matching order read action. If $a_f$ does not exist (because a read operation is invoked but never completed in this particular path), nothing further is done for this read operation.
    If these indices do exist, the continuation of the transformation depends on the type of $\rid$:
    \begin{itemize}
        \item If $\rid$ is safe, we further consider whether there is an overlapping write with this read operation. If there is, the action $\readop[\tid,\rid]{\data}$ is either directly after the invocation of the first overlapping write or directly after $a_s$, whichever comes later.\footnote{By inserting an action between $a_s$ and $a_f$, the index of $a_f$ should be incremented by one. For simplicity, we let $a_f$ still refer to the index of the same finish read action.} If there is not an overlapping write with this read operation, $\readop[\tid,\rid]{\data}$ is inserted directly after $a_s$. 
        \item If $\rid$ is regular, we further consider whether $\data$ is the value written by any overlapping write with this read operation. If it is, we insert $\readop[\tid,\rid]{\data}$ directly after the invocation of the first overlapping write that writes $\data$ or directly after $a_s$, whichever comes later. Otherwise, $\readop[\tid,\rid]{\data}$ is inserted immediately after $a_s$.
        \item If $\rid$ is atomic, $\readop[\tid,\rid]{\data}$ is placed immediately after $a_o$.
    \end{itemize}
Finally, $\ftoi$ is defined by taking $\pftoi$ and dropping all occurrences of $\startread[\tid,\rid]$, $\orderread[\tid,\rid]$ and $\finishread[\tid,\rid]{\data}$, for all $\tid\in\TID,\,\rid\in\RID,\,\data\in\Data[\rid]$.

We need some further auxiliary definitions.
Say we have a path $\pi$ from the initial state of $M_{\frRep}$.
Consider $\ftoi(\ell(\pi))$ and specifically the prefix $\ftoi(\ell(\pi))_n$ of its first $n$ actions, for some natural number $n$.
Recall that $\ftoi(\ell(\pi))$ is constructed by dropping all start, order and finish read actions from $\pftoi(\ell(\pi))$. 
Let $\pftoi(\ell(\pi))_n$ be the longest prefix of $\pftoi(\ell(\pi))$ that contains all actions of $\ftoi(\ell(\pi))_n$ but does not contain the $(n+1)^{\text{th}}$ action of $\ftoi(\ell(\pi))$.\footnote{If there is no $(n+1)^{\text{th}}$ action because $\pi$ is finite, then this longest prefix of $\pftoi(\ell(\pi))$ is $\pftoi(\ell(\pi))$ itself.} Thus, the next action of $\pftoi(\ell(\pi))$ after $\pftoi(\ell(\pi))_n$, if not a read action, is the same as the next action of $\ftoi(\ell(\pi))$ after $\ftoi(\ell(\pi))_n$.
We can obtain a prefix of $\ell(\pi)$ by dropping all read actions from $\pftoi(\ell(\pi))_n$; let $\textit{pre}_n(\pi)$ be the prefix of $\pi$ such that $\ell(\textit{pre}_n(\pi))$ is $\pftoi(\ell(\pi))_n$ with all read actions dropped.
Note that it may be that $\ftoi(\ell(\textit{pre}_n(\pi))) \neq \ftoi(\ell(\pi))_n$: if, for example, the last action of $\ftoi(\ell(\pi))_n$ is $\readop[\tid,\rid]{\data}$, then $\pftoi(\ell)\pi))_n$ contains the matching $\startread[\tid,\rid]$ but need not contain the $\finishread[\tid,\rid]{\data}$ and thus $\ftoi(\ell(\textit{pre}_n(\pi))$ may not contain the $\readop[\tid,\rid]{\data}$ action.

Similar to \cref{app:path-itof}, we prove some additional lemmas before the main result.

\begin{lem}\rm\label{lem:frtoir-projections}
    Let $\pi$ be a path from the initial state of $M_{\frRep}$ and let $\pi'_n$ be a path from the initial state of $M_{\irRep}$ such that $\ell(\pi'_n) = \ftoi(\ell(\pi))_n$ for some natural number $n$.
    Let $\pi_n$ be $\textit{pre}_n(\pi)$.
    Let $a$ be the $(n+1)^{\text{th}}$ action of $\ftoi(\ell(\pi))$ and let $\tid=\thrmap{a}$. Then 
    $\proj_{\rid}(\pi_n)$ is equal to  $\proj_{\rid}(\pi'_n)$ with respect to start, order and finish write actions for all $\rid\in\RID$.
\end{lem}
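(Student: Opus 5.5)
The plan is to compare the two projections as sequences of write-related actions and show that each arises from a common source sequence by deleting only read-related actions. First, $\ell(\pi'_n)=\ftoi(\ell(\pi))_n$ by assumption. Next, $\ftoi(\ell(\pi))$ is obtained from $\pftoi(\ell(\pi))$ by deleting exactly the actions $\startread[\tid,\rid]$, $\orderread[\tid,\rid]$ and $\finishread[\tid,\rid]{\data}$. The transformation $\pftoi$ itself only inserts actions $\readop[\tid,\rid]{\data}$ into $\ell(\pi)$. So neither step ever deletes, inserts or reorders actions $\startwrite[\tid',\rid']{\data'}$, $\orderwrite[\tid',\rid']$ or $\finishwrite[\tid',\rid']$. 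It follows that the subsequence of write actions of $\ftoi(\ell(\pi))$ equals that of $\ell(\pi)$, and likewise for $\pftoi(\ell(\pi))$.

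Second, I compare the prefixes. By definition, $\pftoi(\ell(\pi))_n$ is the longest prefix of $\pftoi(\ell(\pi))$ that contains all of $\ftoi(\ell(\pi))_n$ but not the $(n{+}1)^{\text{th}}$ action of $\ftoi(\ell(\pi))$. Every write action in $\pftoi(\ell(\pi))$ survives into $\ftoi(\ell(\pi))$. Hence the write actions occurring in $\pftoi(\ell(\pi))_n$ are exactly those occurring in $\ftoi(\ell(\pi))_n$, in the same order. Indeed, any write action beyond those in $\ftoi(\ell(\pi))_n$ is either the $(n{+}1)^{\text{th}}$ action $a$ or comes after it, and is therefore excluded. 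Dropping the read actions to obtain $\ell(\textit{pre}_n(\pi))$ removes no write actions. So the write subsequence of $\ell(\pi_n)$ coincides with that of $\ell(\pi'_n)$.

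Finally, I restrict to a fixed register $\rid$. Both $\proj_{\rid}(\pi_n)$ and $\proj_{\rid}(\pi'_n)$ keep exactly those actions $a$ with $\regmap{a}=\rid$. Write actions with register $\rid$ are kept in both, so the two filtered write subsequences agree. This proves the claim for every $\rid\in\RID$. Note that the hypothesis on $\tid=\thrmap{a}$ plays no role here.

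The main obstacle is the bookkeeping around the boundary in the second step. One must make precise that $\pftoi(\ell(\pi))_n$ cannot contain a write action that lies beyond the $n$-th action of $\ftoi(\ell(\pi))$. The argument is that the $(n{+}1)^{\text{th}}$ action of $\ftoi(\ell(\pi))$ is the first surviving action after position $n$, and any write is surviving. In the finite case, where no $(n{+}1)^{\text{th}}$ action exists, the prefix is the whole sequence and the claim is immediate.
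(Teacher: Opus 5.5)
Your proof is correct and follows the same route as the paper's. Both rest on two observations: $\pftoi$ only inserts actions $\readop[\tid,\rid]{\data}$, and $\ftoi$ only deletes start, order and finish read actions, so write actions are never added, removed or reordered; moreover, $\ell(\textit{pre}_n(\pi))$ arises from $\pftoi(\ell(\pi))_n$ by deleting read actions only. You additionally spell out the boundary argument that the paper dismisses as trivial: $\pftoi(\ell(\pi))_n$ contains no write beyond those of $\ftoi(\ell(\pi))_n$, because every write survives into $\ftoi(\ell(\pi))$.
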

\begin{proof}
    We obtain $\textit{pre}_n(\pi)$, so $\pi_n$, by dropping from $\pftoi(\ell(\pi))_n$ all read actions.
    In turn $\pftoi(\ell(\pi))_n$ is obtained as the longest prefix of $\pftoi(\ell(\pi))$ that contains the first $n$ actions of $\ftoi(\ell(\pi))$ but not the $(n+1)^{\text{th}}$ action.
    Since start, finish and order write actions are unaltered by the transformation $\ftoi$, the claim is trivially true.
\end{proof}

\begin{lem}\rm\label{lem:frtoir-threads}
    Let $\pi$ be a path from the initial state of $M_{\frRep}$ and let $\pi'_n$ with final state $s'=(s'_1,\ldots,s'_k)$ be a path from the initial state of $M_{\irRep}$ such that $\ell(\pi'_n) = \ftoi(\ell(\pi))_n$ for some natural number $n$.
    Let $\pi_n$ be $\textit{pre}_n(\pi)$ and let $s=(s_1,\ldots,s_k)$ be the final state of $\pi_n$. Assume that $s'_{\idx{\tid}} \in\ftoistate(s_{\idx{\tid}})$ for all $\tid\in\TID$.
    Let $a$ be the $(n+1)^{\text{th}}$ action of $\ftoi(\ell(\pi))$ and let $\tid=\thrmap{a}$.
    If a is a thread-local or register interface action, $a$ is enabled in $s'_{\idx{\tid}}$.
\end{lem}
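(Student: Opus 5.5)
The plan is to read off what thread $\tid$ does in the full-read path up to $\pi_n$ and then pass to the instant-read thread state through $\ftoistate$. Since $a$ is a thread-local or register interface action of thread $\tid$, it is also an action of $\ftoi(\ell(\pi))$. I therefore look at the next action $a^\ast$ of $\pftoi(\ell(\pi))$ after $\pftoi(\ell(\pi))_n$. By the definition of $\pftoi(\ell(\pi))_n$, either $a^\ast = a$, or $a^\ast$ is a start, order or finish read action that $\ftoi$ drops. The argument splits on whether $a$ is a read action $\readop[\tid,\rid]{\data}$.

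\emph{Case 1: $a$ is not a read action.} Then $a$ is thread-local, or $a \in \{\startwrite[\tid,\rid]{\data}, \finishwrite[\tid,\rid]\}$, and it occurs unchanged in $\ell(\pi)$. All read actions are dropped to form $\textit{pre}_n(\pi)$, and $a$ is the next non-read action. So the actions of $\ell(\pi)$ lying between $\textit{pre}_n(\pi)$ and the occurrence of $a$ are all start, order or finish read actions.

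I claim that none of these belongs to thread $\tid$. If one did, then by requirements \cref{thread-prop-2,thread-prop-4} a complete read operation of $\tid$ would precede $a$. Its $\readop$ would then be inserted before $a$ in $\pftoi$, and hence would be an action of $\ftoi(\ell(\pi))$ preceding $a$ but not in $\ftoi(\ell(\pi))_n$. That contradicts $a$ being the $(n{+}1)$th action. The one subtle possibility is a pending start read of $\tid$ already inside $\textit{pre}_n(\pi)$ whose finish has not occurred; this also cannot precede a non-read action of $\tid$, because after a start read the thread only awaits the finish. Therefore $s_{\idx{\tid}}$ enables $a$ in $T^{\frRep}_\tid$. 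Since $a$ is neither a start nor a finish read, \cref{lem:ftoi-preserve} transfers this to $s'_{\idx{\tid}}$.

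\emph{Case 2: $a = \readop[\tid,\rid]{\data}$.} By construction of $\pftoi$, the matching $\startread[\tid,\rid]$ occurs at or before the insertion point, and the insertion point lies after $\pftoi(\ell(\pi))_n$. Hence either $\startread[\tid,\rid]$ is enabled in $s_{\idx{\tid}}$ (it has not yet happened), or $\startread[\tid,\rid]$ lies in $\textit{pre}_n(\pi)$ and its finish does not. In the latter sub-case, \cref{thread-prop-2} gives that $s_{\idx{\tid}}$ is the intermediate state enabling $\finishread[\tid,\rid]{\data'}$ for all $\data'$. The first sub-case again needs the argument of Case 1: no other action of $\tid$ intervenes in $\ell(\pi)$ before that start read, since any such action would appear earlier in $\ftoi(\ell(\pi))$.

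In both sub-cases I apply \cref{lem:ftoi-read}. If $s_{\idx{\tid}}$ enables $\startread[\tid,\rid]$, then $s'_{\idx{\tid}}$ enables $\readop[\tid,\rid]{\data}$ directly. In the intermediate-state sub-case, $\ftoistate(s_{\idx{\tid}}) = \{s_b\}\cup S_a$. I must argue that $s'_{\idx{\tid}}$ is $s_b$ and not a post-read state. This should follow from the assumption $s'_{\idx{\tid}} \in \ftoistate(s_{\idx{\tid}})$ together with the fact that the $\readop$ of this read is not yet in $\pi'_n$: the projection of $\pi'_n$ onto $\tid$ ends before the read. Then $s_b$ enables all reads by \cref{thread-prop-2}.

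I expect the main obstacle to be the bookkeeping showing that no action of thread $\tid$ sits, in the full-read path, strictly between $\textit{pre}_n(\pi)$ and the position of $a$. This relies on the fact that $\pftoi$ only moves $\readop$ actions within their own read interval, so thread $\tid$'s own action order is preserved. I would isolate this as a one-line observation: $\proj_\tid$ of $\ftoi(\ell(\pi))$ equals $\proj_\tid(\ell(\pi))$ with each start/order/finish read triple collapsed to a single $\readop$.
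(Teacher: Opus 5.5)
Your proposal is correct and follows the paper's proof. You split on whether $a$ is a read action $\readop[\tid,\rid]{\data}$. In the non-read case you use that $a$ is enabled in $s_{\idx{\tid}}$ together with \cref{lem:ftoi-preserve}. In the read case you use that the matching start read lies in $\pi_n$ but its finish does not, so $s'_{\idx{\tid}}$ must be the pre-read state in $\ftoistate(s_{\idx{\tid}})$, because this read action has not yet occurred in $\pi'_n$.

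The extra cases you guard against are vacuous. By definition $\pftoi(\ell(\pi))_n$ is the \emph{longest} prefix avoiding $a$, so the next action of $\pftoi(\ell(\pi))$ is always $a$ itself. Hence nothing lies between $\textit{pre}_n(\pi)$ and $a$ in your Case~1, and in your Case~2 the matching start read has always already happened.
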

\begin{proof}
    Since $a$ is the $(n+1)^{\text{th}}$ action of $\ftoi(\ell(\pi))$, $a$ is the next action in $\pftoi(\ell(\pi))$ after $\pftoi(\ell(\pi))_n$.
    We do a case distinction on whether $a = \readop[\tid,\rid]{\data}$ for some $\rid\in\RID,\data\in\Data[\rid]$.
    \begin{itemize}
        \item If $a$ is not a read action, then $a$ is also the next action in $\ell(\pi)$ after $\pi_n$ and is thus enabled in $s$. Since $a$ is a thread-local or register interface action, $a$ is enabled in $s_{\idx{\tid}}$. Then it follows from \cref{lem:ftoi-preserve} that $s'_{\idx{\tid}}$ enables $a$.
        \item If $a$ is a read action $\readop[\tid,\rid]{\data}$, then since $a$ is the next action of $\pftoi(\ell(\pi))$ after $\pftoi(\ell(\pi))_n$, $\pftoi(\ell(\pi))_n$ contains the matching $\startread[\tid,\rid]$ but not the matching $\finishread[\tid,\rid]{\data}$, since $\readop[\tid,\rid]{\data}$ is always placed between them. Thus, $\proj_{\tid}(\pi_n)$ ends on $\startread[\tid,\rid]$.
        This means that $s'_{\idx{\tid}}$, by being in $\ftoistate(s_{\idx{\tid}})$ could be either the state before $\readop[\tid,\rid]{\data}$, or after.But since this read action is the $(n+1)^{\text{th}}$ action on $\ftoi(\ell(\pi))$, it cannot have occurred yet in $\proj_{\tid}(\pi'_n)$, so $s'_{\idx{\tid}}$ must be the state where read actions are enabled, including $a$.\qedhere
    \end{itemize}
\end{proof}

\begin{lem}\rm\label{lem:frtoir-registers}
    Let $\pi$ be a path from the initial state of $M_{\frRep}$ and let $\pi'_n$ with final state $s'=(s'_1,\ldots,s'_k)$ be a path from the initial state of $M_{\irRep}$ such that $\ell(\pi'_n) = \ftoi(\ell(\pi))_n$ for some natural number $n$.
    Let $\pi_n$ be $\textit{pre}_n(\pi)$ and $s=(s_1,\ldots,s_k)$ be the final state of $\pi_n$. 
    Then it holds for all $\rid\in\RID$ that $\writers{s_{\idx{\rid}}} = \writers{s'_{\idx{\rid}}}$ and $\pending{s_{\idx{\rid}}}\cap\writers{s_{\idx{\rid}}} = \pending{s'_{\idx{\rid}}}$.
\end{lem}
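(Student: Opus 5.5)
The plan is to mirror the proof of \cref{lem:irtofr-registers}. The argument rests on the observation that, for either model type, the values $\writers{\cdot}$ and $\pending{\cdot}\cap\writers{\cdot}$ of a register state are fully determined by the subsequence of start, order and finish write actions in its projection. This is because the update functions $\usrsym$, $\ufrsym$, $\uorsym$ (full-read) and $\ursym$ (instant-read) leave $\writerssym$ untouched, while on $\pendingsym$ they only add or remove the reading thread $\tid$.

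First, fix $\rid\in\RID$. Since $\pi_n=\textit{pre}_n(\pi)$ and $\ell(\pi'_n)=\ftoi(\ell(\pi))_n$, \cref{lem:frtoir-projections} tells us that $\proj_{\rid}(\pi_n)$ and $\proj_{\rid}(\pi'_n)$ agree on their start, order and finish write actions. The remaining actions are read-related: $\startread$, $\orderread$ and $\finishread$ in $\pi_n$, and $\readop$ in $\pi'_n$.

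Second, I argue by induction along the two projections that $\writerssym$ coincides. By \cref{obs:access-fullread,obs:access-instantread}, $\writerssym$ changes only at $\startwrite$ and $\finishwrite$, and in the same way in both models. The initial states both have $\writerssym=\emptyset$. Hence $\writers{s_{\idx{\rid}}}=\writers{s'_{\idx{\rid}}}$.

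Third, I treat $\pendingsym$. In the instant-read model, $\pendingsym$ changes only at $\startwrite$ (add $\tid$) and $\orderwrite$ (remove $\tid$), so $\pending{s'_{\idx{\rid}}}$ consists of the threads whose latest start write has no subsequent order write. Such a thread is also in $\writerssym$, because by \cref{lem:1}'s ordering argument, finish write always comes after order write. In the full-read model, $\pendingsym$ is additionally modified by $\startread[\tid,\rid]$ and $\orderread[\tid,\rid]$. Intersecting with $\writerssym$ eliminates these effects. If $\tid\in\writers{s_{\idx{\rid}}}$, then the last $\rid$-operation of $\tid$ in $\pi_n$ is a write, since a thread performs only one operation at a time (\cref{thread-prop-2,thread-prop-3,thread-prop-4}). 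Consequently no read action of $\tid$ on $\rid$ occurs after that $\startwrite$. So $\tid\in\pending{s_{\idx{\rid}}}$ holds exactly when that start write is not yet followed by its order write, which is the same criterion as in the instant-read model. Conversely, if $\tid\notin\writers{s_{\idx{\rid}}}$, then $\tid$ lies in neither set. This is the case since $\writerssym$ agrees across the models and $\pending{s'_{\idx{\rid}}}\subseteq\writers{s'_{\idx{\rid}}}$.

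I expect the main obstacle to be the third step: showing rigorously that a read by $\tid$ cannot interleave between $\tid$'s start write and order write. This should follow from the strict ordering of same-thread register actions (as in the proof of \cref{lem:1}) together with the full-read enabling condition $\tid\notin\readers{s}\cup\writers{s}$ for $\startread$.
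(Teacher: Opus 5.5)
Your proposal is correct and follows the paper's own route. You invoke \cref{lem:frtoir-projections} and then argue, as in \cref{lem:irtofr-registers}, that $\writerssym$ and $\pendingsym\cap\writerssym$ are determined solely by the $\uswsym$, $\uowsym$ and $\ufwsym$ calls; your third step just makes explicit (via the exclusivity of $\readerssym$ and $\writerssym$ for a given thread) what the paper asserts in one line.

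One caveat, shared with the paper's statement: the $\pendingsym$ clause, and in particular your claim $\pending{s'_{\idx{\rid}}}\subseteq\writers{s'_{\idx{\rid}}}$, only holds for regular and atomic registers. A safe register has no $\orderwrite[\tid,\rid]$ and so never removes anyone from $\pendingsym$. This is harmless, because for safe registers only the $\writerssym$ part is ever used.
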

\begin{proof}
    Consider an arbitrary $\rid\in\RID$.
    From \cref{lem:frtoir-projections}, we know that $\proj_{\rid}(\pi_n)$ is equal to $\proj_{\rid}(\pi'_n)$ with respect to those actions that are tied to write operations. The argument then proceeds the same as the one for \cref{lem:irtofr-registers}; since the same calls to $\uswsym$, $\uowsym$ and $\ufwsym$ are done, the claim indeed holds.
\end{proof}

\begin{lem}\rm\label{lem:fr2ir-support-end}
    Let $\pi$ be a path from the initial state of $M_{\frRep}$ and let $\pi'_i$ be a path from the initial state of $M_{\irRep}$ such that $\ell(\pi'_i) = \ftoi(\ell(\pi))_i$, for some natural number $i$.
    Let $\pi_i = \textit{pre}_i(\pi)$ and $\pi_{i+1} = \textit{pre}_{i+1}(\pi)$.
    Let $s = (s_1,\ldots,s_k)$ be the final state of $\pi_i$, $s^+ = (s^+_1,\ldots,s^+_k)$ the final state of $\pi_{i+1}$ and $s' = (s'_1,\ldots,s'_k)$ the final state of $\pi'_i$.
    Assume that $s'_{\idx{\tid'}} \in \ftoistate(s_{\idx{\tid'}})$ for all $\tid'\in\TID$.
    Assume that it is possible to pick a path $\pi'_{i+1}$ from the initial state of $M_{\irRep}$ such that $\ell(\pi'_{i+1}) = \ftoi(\ell(\pi))_{i+1}$ and $\pi'_i$ is a prefix of $\pi'_{i+1}$.
    We prove that it is possible to pick such a path $\pi'_{i+1}$ with final state $s^{+\prime} = (s^{+\prime}_1,\ldots,s^{+\prime}_k)$ such that 
     $s^{+\prime}_{\idx{\tid'}} \in \ftoistate(s^+_{\idx{\tid'}})$ for all $\tid'\in\TID$.
\end{lem}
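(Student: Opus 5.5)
The plan is to pick $\pi'_{i+1}$ as $\pi'_i$ extended by one transition $s' \xrightarrow{a} s^{+\prime}$. Here $a$ is the $(i+1)^{\text{th}}$ action of $\ftoi(\ell(\pi))$, and the freedom lies only in which $a$-transition we take in each component involved in $a$. Because the hypothesis says some such extension exists, $a$ is enabled in $s'$. All components not involved in $a$ satisfy $s^{+\prime}_j = s'_j$. For a register component involved in $a$ we may take any $a$-successor, since the lemma constrains only thread components. The work is therefore in the thread components, and I would compare them with the difference between $\pi_i = \textit{pre}_i(\pi)$ and $\pi_{i+1} = \textit{pre}_{i+1}(\pi)$. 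By the definition of $\textit{pre}_n$, the segment of $\pi$ from $s$ to $s^+$ consists of possibly the action $a$ itself (if $a$ is not a read action) together with start, order and finish read actions. These are exactly the actions dropped between the $i^{\text{th}}$ and $(i+1)^{\text{th}}$ action of $\ftoi(\ell(\pi))$.

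For the thread $\tid = \thrmap{a}$ I would split on $a$. If $a$ is not a read action, then $a$ occurs on $\pi$ between $s$ and $s^+$ as a transition $s_{\idx{\tid}} \xrightarrow{a} u$ of $T^{\frRep}_\tid$. By \cref{thread-prop-2,thread-prop-4}, $\tid$ performs no read actions in that segment, because between $\pi_i$ and $\pi_{i+1}$ it has no pending read. Hence $s_{\idx{\tid}}$ and $u$ lie in $\states_{\irRep}$ and $\ftoistate$ is the identity on them, so $s'_{\idx{\tid}} = s_{\idx{\tid}}$. The transformation of \cref{app:thread-transform} keeps the same transition, so we choose it and get $s^{+\prime}_{\idx{\tid}} = u = s^+_{\idx{\tid}}$.

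If $a = \readop[\tid,\rid]{\data}$, then as in \cref{lem:frtoir-threads} the path $\pi_i$ ends $\proj_\tid$ with $\startread[\tid,\rid]$, so $s_{\idx{\tid}}$ is an intermediate state and $s'_{\idx{\tid}}$ is its source $s_b$. There are two subcases. If the matching $\finishread[\tid,\rid]{\data}$ already lies in $\pi_{i+1}$, then $s^+_{\idx{\tid}}$ is its target $u$, and we choose the transition $s_b \xrightarrow{\readop[\tid,\rid]{\data}} u$, which exists by construction. Otherwise $s^+_{\idx{\tid}}$ is still intermediate, and any $\readop$-target lies in $S_a \subseteq \ftoistate(s^+_{\idx{\tid}})$.

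For threads $\tid' \neq \tid$ nothing moves in $\pi'$, so we must check that $s'_{\idx{\tid'}} \in \ftoistate(s^+_{\idx{\tid'}})$. There are two cases. If $\tid'$ only starts a read in the segment, then $s'_{\idx{\tid'}} = s_{\idx{\tid'}}$ is the source $s_b$ of that start read, and $s_b \in \ftoistate(s^+_{\idx{\tid'}})$. If $\tid'$ does an order read, nothing changes. The remaining case is $\tid'$ finishing a read $\finishread[\tid',\rid']{\data'}$ in the segment. Its $\readop[\tid',\rid']{\data'}$ then precedes it in $\pftoi(\ell(\pi))$, and it is not the $(i+1)^{\text{th}}$ action, so it already lies in $\pi'_i$. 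Consequently $s'_{\idx{\tid'}}$ is a $\readop$-target in $S_a$, and we need it to be the \emph{specific} target $s^+_{\idx{\tid'}}$ reached by the finish read in $\pi$.

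This last case is the main obstacle: with a nondeterministic thread LTS, membership $s'_{\idx{\tid'}} \in \ftoistate(s_{\idx{\tid'}})$ alone does not force the right target. I would handle it by showing it is settled by the choice rule used when that earlier read action was added. In the application (the induction building the instant-read path), every $\readop$-transition is chosen to go to the target of the corresponding finish read in the fixed path $\pi$, as in the subcase above. I would therefore state and maintain this stronger invariant in the induction, namely that each thread component of $\pi'_i$ already agrees with the finish target in $\pi$ whenever that read is finished later. If necessary I would phrase it as a remark strengthening the hypothesis. With this in place, $s'_{\idx{\tid'}} = s^+_{\idx{\tid'}}$, which completes the case analysis.
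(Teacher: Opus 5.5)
Your route is the paper's: extend $\pi'_i$ by a single $a$-transition, split on whether $\tid'=\thrmap{a}$, and use two facts. A start read only enlarges $\ftoistate$, and a finish read by $\tid'\neq\thrmap{a}$ inside the segment has its $\readop$ already in $\pi'_i$.

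Your worry about which target that earlier $\readop$ reached is justified, and here you are more careful than the paper. Its step ``thus $\ftoistate(s_b)=\ftoistate(s_a)$'' is not literally true, and it tacitly assumes that the $\readop$-transition in $\pi'$ ended in the finish target used by $\pi$. Take a thread LTS with $m\xrightarrow{\finishread[\tid',\rid]{\data}}u_1$ and $m\xrightarrow{\finishread[\tid',\rid]{\data}}u_2$, where $\pi$ goes to $u_1$ inside the segment but $\pi'_i$ took $\readop[\tid',\rid]{\data}$ to $u_2$. All hypotheses hold, yet $u_2\notin\ftoistate(u_1)=\{u_1\}$. So the lemma as stated needs either per-label determinism of the thread LTSs, which the paper never states, or exactly the invariant you propose to carry through \cref{lem:fr2ir}.

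That choice is always available, because $\ftoi$ only inserts a $\readop$ when the matching finish read exists in the fixed path $\pi$. However, you must then make it in every case. That includes your subcase where the finish read lies beyond $\pi_{i+1}$, where you currently allow ``any'' $\readop$-target. The same care is needed for register components once you push the invariant through \cref{lem:fr2ir}, though it does not matter for this lemma. The value stored by an overlapping $\finishwrite[\tid,\rid]$ on a safe register is not fixed by the label, yet the safe case there uses $\trueval{s'_{\idx{\rid}}}=\data$.

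The real slip in your proposal is your description of the segment of $\pi$ from $s$ to $s^+$. $\pftoi(\ell(\pi))_n$ stops immediately before the $(n{+}1)^{\text{th}}$ action of $\ftoi(\ell(\pi))$. So the segment is $a$ (if $a$ is not a read) \emph{followed} by the start, order and finish read actions that precede the $(i{+}2)^{\text{th}}$ action of $\ftoi(\ell(\pi))$. It is not made of the actions dropped between the $i^{\text{th}}$ and $(i{+}1)^{\text{th}}$ actions; those already lie in $\pi_i$.

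Consequently $\tid=\thrmap{a}$ can issue a fresh $\startread[\tid,\rid']$ inside the segment, either after $a$ or after the finish read matching $a=\readop[\tid,\rid]{\data}$. Likewise, a thread $\tid'\neq\tid$ can finish a read and then immediately start another. In these cases your claims that $\tid$ ``performs no read actions in that segment'' and that $s^{+\prime}_{\idx{\tid}}=u=s^+_{\idx{\tid}}$ are false, because $s^+_{\idx{\tid}}$ is then an intermediate state.

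The fix is the paper's remark that such a start read does not further affect the matching. The state you reach in $\pi'$ is the source of that start read, so it lies in $\ftoistate(s^+_{\idx{\tid}})$; the same holds for $\tid'$. With this case added, your argument proves the lemma under your strengthened invariant.
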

\begin{proof}
    Let $a$ be the $(i+1)^{\text{th}}$ action of $\ftoi(\ell(\pi))$ with $\thrmap{a} = \tid$.
    Towards a contradiction, assume there exists a $\tid'\in\tid$ such that we cannot pick $\pi'_{i+1}$ such that $s^{+\prime}_{\idx{\tid}} \in \ftoistate(s^+_{\idx{\tid}})$.
    We do a case distinction on whether $\tid'=\tid$. We use that any $\pi'_{i+1}$ we pick satisfies $\ell(\pi'_{i+1}) = \ftoi(\ell(\pi))_{i+1}$ and has $\pi'_i$ as a prefix.
    \begin{itemize}
        \item If $\tid'\neq\tid$, then note that $s^{+\prime}_{\idx{\tid'}} = s'_{\idx{\tid'}}$, since $a$ is not a transition by $\tid'$ and it is the only difference between $\pi'_i$ and $\pi'_{i+1}$. Since $s'_{\idx{\tid'}} \in \ftoistate(s_{\idx{\tid'}})$ and $s^{+\prime}_{\idx{\tid'}} \notin \ftoistate(s^+_{\idx{\tid'}}$, there must be at least one transition between $s_{\idx{\tid'}}$ and $s^+{(\idx{\tid'}}$ such that $s'_{\idx{\tid'}} \notin \ftoistate(s^+_{\idx{\tid'}})$.
        Transitions labelled with actions by $\tid'$ between $s$ and $s^+$ that are thread-local, start write or finish write actions would also appear in $\ftoi(\ell(\pi))_{i+1}$ after $\ftoi(\ell(\pi))_i$, but the only transition there is labelled with $a$ and $\thrmap{a} \neq \tid'$.  Similarly, if both a start read and finish read action by $\tid'$ had occurred between $s$ and $s^+$, then a read action by $\tid'$ would appear in $\ftoi(\ell(\pi))_{i+1}$ that did not appear in $\ftoi(\ell(\pi))_i$. 
        Thus, the only actions that can occur between $s_{\idx{\tid'}}$ and $s^+_{\idx{\tid'}}$ are start read and finish read actions.
        Consider a transition $(s_b,\startread[\tid',\rid],s_a)$ in the LTS of $\tid'$: if $s'_{\idx{\tid'}} \in \ftoistate(s_b)$ then also $s'_{\idx{\tid'}} \in \ftoistate(s_a)$, since from the definition of $\ftoistate$ in \cref{app:thread-transform}, it follows that $\ftoistate(s_b) \subseteq \ftoistate(s_a)$.
        Thus, a start read transition cannot cause $s'_{\idx{\tid'}} \notin \ftoistate(s^+_{\idx{\tid'}})$.
        Next, consider a transition $(s_b, \finishread[\tid',\rid]{\data},s_a)$ in the LTS of $\tid'$. Then a state in $\ftoistate(s_b)$ could be both a state where $\readop[\tid',\rid]{\data}$ is enabled or the target of that transition. However, since $\ftoi$ inserts the read actions between matching start read and finish read actions, if $\finishread[\tid',\rid]{\data}$ occurs after $s$, then the matching $\readop[\tid',\rid]{\data}$ has to have happened by $s'$.
        Thus, we only need to consider the case that $\ftoistate(s_b)$ maps to the state after the read action, and thus $\ftoistate(s_b) = \ftoistate(s_a)$.
        Thus, a finish read transition also cannot cause $s'_{\idx{\tid'}} \notin \ftoistate(s^+_{\idx{\tid'}})$.
        We conclude that $s^{+\prime}_{\idx{\tid}} = s'_{\idx{\tid'}} \in \ftoistate(s^+_{\idx{\tid'}}$
        
        \item If $\tid'=\tid$, then $s^{+\prime}_{\idx{\tid}}$ is $s'_{\idx{\tid}}$ after a single transition, namely $a$.
        There could potentially be other transitions by $\tid$ between $s$ and $s^+$, specifically there could be a start read action, but by the same reasoning as the $\tid\neq\tid'$ case, this would not further affect $s^{+\prime}_{\idx{\tid}}$.
        If $a$ was a thread-local, start write or finish write action, then that same transition also occurred on the full-write side, and thus by the definition of the mapping we can pick transitions such that $s^{+\prime}_{\idx{\tid}} \in \ftoistate(s^+_{\idx{\tid}})$.
        If $a$ is $\readop[\tid,\rid]{\data}$, then $\pi_{i+1}$ must contain the matching $\startread[\tid,\rid]$ but may or may not contain the matching $\finishread[\tid,\rid]{\data}$. However, since $s^{+\prime}_{\idx{\tid}}$ is contained both in the full-read state before and after the $\finishread[\tid,\rid]{\data}$, it is still in $\ftoistate(s^+_{\idx{\tid}})$.
        \qedhere
    \end{itemize}
\end{proof}

\begin{prop}\rm\label{lem:fr2ir}
    Given a path $\pi$ from the initial state of $M_{\frRep}$, there exists a path $\pi'$ from the initial state of $M_{\irRep}$ such that $\ell(\pi') = \ftoi(\ell(\pi))$, and for all prefixes $\textit{pre}_i(\pi)$ of $\pi$, ending in state $s=(s_1,\ldots,s_k)$, there exists a prefix $\pi'_i$ of $\pi'$, ending in $s'=(s'_1,\ldots,s'_k)$, such that $\ell(\pi_i') = \ftoi(\ell(\pi))_i$ and for all $\tid\in\TID$, $s'_{\idx{\tid}} \in \ftoistate(s_{\idx{\tid}})$.
\end{prop}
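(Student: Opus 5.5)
We argue by induction on $n$, mirroring the proof of \cref{lem:ir2fr}, and build $\pi'$ as the limit of an increasing chain of finite paths $\pi'_0 \sqsubseteq \pi'_1 \sqsubseteq \dots$ in $M_{\irRep}$ with $\ell(\pi'_n) = \ftoi(\ell(\pi))_n$. Alongside, we maintain the invariant that for the final state $s'$ of $\pi'_n$ and the final state $s$ of $\textit{pre}_n(\pi)$ one has $s'_{\idx{\tid}} \in \ftoistate(s_{\idx{\tid}})$ for all $\tid\in\TID$. For $n=0$ both paths consist of the initial state, and since the thread transformation of \cref{app:thread-transform} preserves initial states, the invariant holds trivially. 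If $\ftoi(\ell(\pi))$ is finite, the chain stabilises; otherwise the union of the chain is an infinite path whose label sequence is $\ftoi(\ell(\pi))$.

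For the inductive step, let $a$ be the $(n+1)^{\text{th}}$ action of $\ftoi(\ell(\pi))$, with $\tid = \thrmap{a}$. The first task is to show that $a$ is enabled in $s'$. If $a$ is thread-local, \cref{lem:frtoir-threads} gives enabledness in $s'_{\idx{\tid}}$, which is enough. If $a$ is $\orderwrite[\tid,\rid]$, $\startwrite[\tid,\rid]{\data}$ or $\finishwrite[\tid,\rid]$, then \cref{lem:frtoir-threads} covers the thread component. For the register component, enabledness in the instant-read models depends only on $\writerssym$ and $\pendingsym$. Because $a$ is the next non-read action of $\pftoi(\ell(\pi))$ after $\pftoi(\ell(\pi))_n$, it is enabled in the full-read state $s$. \cref{lem:frtoir-registers} then transfers the relevant conditions, namely $\writerssym$ equal and $\pendingsym\cap\writerssym$ equal, from $s_{\idx{\rid}}$ to $s'_{\idx{\rid}}$. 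Once $a$ is enabled, we append it to obtain $\pi'_{n+1}$, and \cref{lem:fr2ir-support-end} re-establishes the invariant for $\textit{pre}_{n+1}(\pi)$.

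\textbf{The main obstacle} is the case $a = \readop[\tid,\rid]{\data}$. Here we must show that the register component $s'_{\idx{\rid}}$ permits reading exactly the value $\data$ that the full-read path returns; note that \cref{lem:frtoir-threads} already handles the thread side. First we observe that $\tid\notin\writers{s'_{\idx{\rid}}}$, by \cref{thread-prop-2} together with \cref{lem:frtoir-registers}. We then distinguish on the type of $\rid$ and use the placement rules of $\pftoi$.

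For an \emph{atomic} register, $\readop$ is placed right after $a_o = \orderread[\tid,\rid]$, and at $a_o$ the full-read model stores $\vals{\cdot,\tid} = \trueval{\cdot}$, which is later returned as $\data$. So we need $\trueval{s'_{\idx{\rid}}} = \trueval{s_{\idx{\rid}}}$ at that point. This follows because $\truevalsym$ in the atomic models is changed only by order-write actions, and these occur in the same order in both projections by \cref{lem:frtoir-projections}. We therefore also need a small strengthening of \cref{lem:frtoir-registers} that includes $\truevalsym$.

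For a \emph{regular} register there are two subcases. If $\data$ is the value of an overlapping write, the read is placed right after that write's start (or after $a_s$), so that write is active in $s'_{\idx{\rid}}$ with $\vals{\cdot} = \data$. Otherwise $\data$ came from the initial $\posvalsym$, i.e.\ it is the stored value or the value of a write active at $a_s$. In the second subcase the read is placed right after $a_s$, and the same $\truevalsym$/$\valssym$ agreement argument applies.

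For a \emph{safe} register there are again two subcases. If an overlapping write exists, the read is placed while that write is active, so $\writerssym\neq\emptyset$ and any $\data$ is readable. Otherwise $\overlapsym$ stayed false, so $\data$ equals the stored value at $a_s$. In this subcase we must check that $\truevalsym$ agrees, which again goes through the equal write projections; for overlapping writes the stored value depends on the chosen data in $\ufwsym$, and we match these choices step by step.

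Establishing this $\truevalsym$ agreement cleanly (an extended form of \cref{lem:frtoir-registers}) is where we expect most of the care to be needed.
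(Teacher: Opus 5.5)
Your proposal is correct and follows essentially the same route as the paper: induction on the prefix of $\ftoi(\ell(\pi))$, enabledness of the next action via \cref{lem:frtoir-threads} and \cref{lem:frtoir-registers}, a type-by-type analysis of the read case based on where $\pftoi$ places $\readop[\tid,\rid]{\data}$, and \cref{lem:fr2ir-support-end} to restore the thread-state correspondence. Your explicit treatment of $\truevalsym$-agreement makes precise a step the paper simply asserts. For safe registers, however, it cannot be an extension of \cref{lem:frtoir-registers}: that lemma ranges over all paths with the given labels, and $\finishwrite[\tid,\rid]$ carries no data. It must instead be carried as an extra invariant of the construction, which works because the $\overlapsym$ flags of active writers are determined by start-write actions alone and hence agree in both models.
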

\begin{proof}
    Let $\pi$ be a path from the initial state of $M_{\frRep}$.
    We do induction on the first $n$ actions of $\ftoi(\ell(\pi))$.
    For $n=0$, the path consisting of just the initial state of $M_{\irRep}$ exists and exactly contains the first $0$ actions of $\ftoi(\ell(\pi))$.
    Additionally, $\textit{pre}_0(\pi)$ is just its initial state, and from the transformations described in \cref{app:thread-transform} it follows that the initial states of $M_{\irRep}$ and $M_{\frRep}$ are the same. Clearly, $\initstate_{\irRep} \in \ftoistate(\initstate_{\frRep})$.
    
    Consider the first $i$ actions of $\ftoi(\ell(\pi))$ for $i\geq 0$.
    Let $\pi_i = \textit{pre}_i(\pi)$ and let $s = (s_{1}, \ldots, s_{k})$ be the last state of $\pi_i$.
    By induction there exists a path $\pi'_i$ from the initial state of $M_{\irRep}$ ending in a state $s'=(s'_1,\ldots,s'_k)$ such that $\ell(\pi'_i) = \ftoi(\ell(\pi))_i$ and for all $\tid\in\TID$, $s'_{\idx{\tid}} \in \ftoistate(s_{\idx{\tid}})$. Also assume the requirement on all prefixes $\textit{pre}_m(\pi)$ with $m < i$ holds. 
    We must show that the $(i+1)^{\text{th}}$ action of $\ftoi(\ell(\pi))$, $a$, is enabled in $s'$.
    
    We consider what action $a$ is:
    \begin{itemize}
        \item If $a$ is a thread-local action of a thread $\tid$, we need to establish that $a$ is enabled in $s'_{\idx{\tid}}$. From \cref{lem:frtoir-threads}, it follows that this is the case.
        
        \item If $a$ is a register-local action, then it must be $\orderwrite[\tid,\rid]$ for some $\tid,\rid$ since this is the only register-local action that appears in the instant-read models. It then suffices to prove that it is enabled in $s'_{\idx{\rid}}$. Since $a$ is not a read action and it is the next action enabled in $\pftoi(\ell(\pi))$ after $\pftoi(\ell(\pi))_i$, it follows from the construction of $\textit{pre}_i(\pi)$ that $a$ is enabled in $s$ and thus in $s_{\idx{\rid}}$. Hence, from \cref{fig:fullread-procregular,fig:fullread-procatomic} it follows that $\tid\in\writers{s_{\idx{\rid}}}$ and $\tid\in(\pending{s_{\idx{\rid}}}\cap\writers{s_{\idx{\rid}}})$. By \cref{lem:frtoir-registers}, $\tid \in \writers{s'_{\idx{\rid}}}$ and $\tid\in \pending{s'_{\idx{\rid}}}$ and thus it follows from \cref{fig:instantread-procregular,fig:instantread-procatomic} that $a$ is enabled in $s'_{\idx{\rid}}$.

        \item The cases that $a$ is $\startwrite[\tid,\rid]{\data}$ or $\finishwrite[\tid,\rid]$ for some $\tid\rid$ proceed similarly to the previous two cases: that $a$ is enabled in $s'_{\idx{\tid}}$ follows directly from \cref{lem:frtoir-threads}; that $a$ is enabled in $s'_{\idx{\rid}}$ follows from $a$ being enabled in $s_{\idx{\rid}}$ (which is argued in the same way as for the order write case above), \cref{lem:frtoir-registers} and \cref{fig:fullread-procsafe,fig:fullread-procregular,fig:fullread-procatomic,fig:instantread-procsafe,fig:instantread-procregular,fig:instantread-procatomic}.
        
        \item If $a$ is $\readop[\tid,\rid]{\data}$ for some $\tid\in\TID,\rid\in\RID,\data\in\Data[\rid]$, 
        then we need to prove that $a$ is enabled in $s'_{\idx{\tid}}$ and $s'_{\idx{\rid}}$.
        The former follows directly from \cref{lem:frtoir-threads}.
        
        For $s'_{\idx{\rid}}$, we first establish that it enables $\readop[\tid,\rid]{\data'}$ for some $\data'\in\Data[\rid]$.
        By \cref{fig:instantread-procsafe,fig:instantread-procregular,fig:instantread-procatomic}, it is sufficient to establish that $\tid\notin\writers{s'_{,\idx{\rid}}}$.
        
        We have established that $s'_{\idx{\rid}}$ enables $\readop[\tid,\rid]{\data}$. Thus, by \cref{lem:ftoi-read}, $s_{\idx{\tid}}$ enables $\startread[\tid,\rid]$ or $\finishread[\tid,\rid]{\data}$. Because $\pi_i = \textit{pre}_i(\pi)$ and this is the longest prefix that does not contain too many actions, it must contain the $\startread[\tid,\rid]$, meaning that in $s_{\idx{\tid}}$, $\tid$ is actively reading which by \cref{thread-prop-2} of \cref{sec:threads} means that it cannot be actively writing.         
        Thus, $\tid\notin\writers{s_{\idx{\rid}}}$ and by \cref{lem:frtoir-registers}, $\tid\notin\writers{s'_{\idx{\rid}}}$.
        It remains to show that the value $\data$ can be read.
        We do a case distinction on the type of $\rid$:
        \begin{itemize}
            \item If $\rid$ is a safe register and $\writers{s'_{\idx{\rid}}} \neq \emptyset$, then $\rid$ enables $\readop[\tid,\rid]{\data'}$ for all $\data'\in\Data[\rid]$, including $\data$. Hence, that $\readop[\tid,\rid]{\data}$ is enabled by $s'_{\idx{\rid}}$ is trivially true. We proceed under the assumption that $\writers{s'_{\idx{\rid}}} = \emptyset$. Then by our construction of $\pftoi(\ell(\pi))$, and thus $\ftoi(\ell(\pi))$, we know there is no overlapping write with this read operation at all in $\pi$. And since $\finishread[\tid,\rid]{\data}$ appears (for this specific read operation) in $\pi$, it must be that $\data$ is the stored value of the register in $\pi$ when $\finishread[\tid,\rid]{\data}$ occurs. And since there is no overlapping write, it must also be the case that $\data$ is the stored value of the register when $\startread[\tid,\rid]$ occurs. When constructing $\pftoi(\ell(\pi))$, $\readop[\tid,\rid]{\data}$ was inserted right after $\startread[\tid,\rid]$. It follows that $\trueval{s'_{\idx{\rid}}} = \data$ and thus $\readop[\tid,\rid]{\data}$ is indeed enabled.
            \item If $\rid$ is a regular register, then it is sufficient to establish that $\data \in \{\trueval{s'_{\idx{\rid}}}\} \cup \{\data' \mid \exists_{\tid'\in\writers{s'_{\idx{\rid}}}}.\vals{s'_{\idx{\rid}},\tid'} = \data'\}$. Recall that in the construction of $\pftoi(\ell(\pi))$, if there is a write that writes $\data$ and that overlaps this read operation, then $\readop[\tid,\rid]{\data}$ was placed so that when it occurs, this write is active. Hence, in that case, $\data \in \{\data' \mid \exists_{\tid'\in\writers{s'_{\idx{\rid}}}}.\vals{s'_{\idx{\rid}},\tid'} = \data'\}$. If there is no such write, then $\readop[\tid,\rid]{\data}$ was inserted immediately after $\startread[\tid,\rid]$. In the full-read model, a read of a regular register can only return a value that is not written by an overlapping write if this value was the stored value of the register when the read began. Thus, $\trueval{s'_{\idx{\rid}}} = \data$. In both cases, we have shown that $\data$ is in the set of values that can be returned, and thus $\readop[\tid,\rid]{\data}$ is enabled.
            \item If $\rid$ is an atomic register, then $\readop[\tid,\rid]{\data}$ was placed directly after $\orderread[\tid,\rid]$ in $\pftoi(\ell(\pi))$. In the full-read model, when $\finishread[\tid,\rid]{\data}$ occurs in $\pi$ it must be because that was the stored value of the register when $\orderread[\tid,\rid]$ occurred. Thus, $\trueval{s'_{\idx{\rid}}} = \data$ and $\readop[\tid,\rid]{\data}$ is enabled.
        \end{itemize}
    \end{itemize}
    Thus, $a$ is enabled in $s'$ and therefore there exists a path $\pi'_{i+1}$ such that $\ell(\pi'_{i+1}) = \ftoi(\ell(\pi))_{i+1}$.
    Additionally, it follows from \cref{lem:fr2ir-support-end} that we can pick transitions so that the thread states still match.
    Thus, the claim holds.
\end{proof}
\noindent
It is straightforward to observe that the transformation $\ftoi$ does not alter the presence or order of actions $\crit[\tid]$ or $\noncrit[\tid]$ for all $\tid\in\TID$. 
\begin{cor}\rm\label{cor:fr2ir}
    Given a path $\pi$ from the initial state of $M_{\frRep}$, there exists a path $\pi'$ from the initial state of $M_{\irRep}$ such that $\ell^-(\pi') = \ell^-(\pi)$.
\end{cor}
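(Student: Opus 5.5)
The statement is a direct corollary of \cref{lem:fr2ir}, so the plan is simply to take the path it provides and check that the projection $\ell^-$ is unchanged. Fix a path $\pi$ from the initial state of $M_{\frRep}$. By \cref{lem:fr2ir} there is a path $\pi'$ from the initial state of $M_{\irRep}$ with $\ell(\pi') = \ftoi(\ell(\pi))$. The remaining task is to show that $\ell^-(\ftoi(\ell(\pi))) = \ell^-(\ell(\pi))$, that is, that $\ftoi$ neither removes, adds nor reorders any $\crit[\tid]$ or $\noncrit[\tid]$ action.

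To see this, I would unfold $\ftoi$ as the composition of $\pftoi$ with a deletion. First, $\pftoi$ only \emph{inserts} actions of the form $\readop[\tid,\rid]{\data}$ into $\ell(\pi)$, at positions determined by the matching start read, order read or overlapping start write actions. Every original action stays in place, and relative order among original actions is preserved. Second, $\ftoi$ \emph{deletes} only actions of the forms $\startread[\tid,\rid]$, $\orderread[\tid,\rid]$ and $\finishread[\tid,\rid]{\data}$.

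None of the inserted or deleted actions is a $\crit[\tid]$ or $\noncrit[\tid]$ action; the latter are thread-local and hence distinct from all interface and register-local actions. So the subsequence of $\crit/\noncrit$ actions is the same in $\ell(\pi)$ and in $\ftoi(\ell(\pi))$. This holds for infinite $\pi$ as well, since each insertion or deletion is local to one read operation and the relative order of the surviving original actions is preserved. Hence $\ell^-(\pi') = \ell^-(\pi)$.

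The only point needing care is the infinite case. The sequence $\ftoi(\ell(\pi))$ must be well defined as an $\omega$-sequence, which means each finite prefix of the output must be determined by a finite prefix of the input. This follows because each inserted read action lies between its own start read and finish read, so only boundedly far from original positions. \cref{lem:fr2ir} already delivers $\pi'$ with this labelling, however, so I expect no real obstacle: the corollary is essentially the observation that $\ftoi$ only inserts or deletes non-$\crit/\noncrit$ actions, mirroring \cref{cor:ir2fr}.
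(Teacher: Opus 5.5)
Your proposal is correct and follows the paper's argument exactly: take the path $\pi'$ from \cref{lem:fr2ir} with $\ell(\pi') = \ftoi(\ell(\pi))$, and observe that $\ftoi$ only inserts read actions and deletes start, order and finish read actions, so the subsequence of $\crit[\tid]$ and $\noncrit[\tid]$ actions is unchanged. One small imprecision does not affect the argument: an inserted read is not ``boundedly far'' from original positions, since a read operation may span arbitrarily many steps; what suffices is that each insertion point lies after its own start read, so only finitely many insertions precede any given position.
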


\subsection{Completeness}\label{app:path-complete}
We have established that the concurrency relations are valid for both full-read and instant-read thread-register models (\cref{lem:concT-val,lem:ai-val}), and the complete traces in these models with respect to a concurrency relation $C \in \{T,S,I,A\}$ have been characterised in \cref{pr:T-just paths,pr:S-just paths,pr:I-just paths,pr:A-just paths}. We first prove several lemmas about how the concept of thread-enabledness is preserved by the operations $\itof$ and $\ftoi$.

\begin{lem}\rm\label{lem:never-fr}
    Let $\pi$ be a path from the initial state of $M_{\irRep}$. Consider a path $\pi'$ from the initial state of $M_{\frRep}$ such that $\ell(\pi') = \itof(\ell(\pi))$. Path $\pi'$ thread-enables no actions $a$ of the form $\finishread[\tid,\rid]{\data}$. 
\end{lem}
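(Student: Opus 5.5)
We argue by contradiction, unfolding \cref{def:thread-enabled}. Suppose $\pi'$ thread-enables some $a=\finishread[\tid,\rid]{\data}$. Then $\pi'$ contains only finitely many actions $b$ with $\thrmap{b}=\tid$, and $\finish_\tid(\pi')$ enables $a$ in $T^{\frRep}_\tid$. The plan is to show that in $\pi'$ every $\finishread$-enabled state of thread $\tid$ is immediately left again by a $\tid$-action, so it cannot be the state where $\tid$ stalls forever.

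First step: by \cref{thread-prop-4} of \cref{sec:threads}, a thread can only enable $\finishread[\tid,\rid]{\data}$ right after performing $\startread[\tid,\rid]$. Moreover, by \cref{thread-prop-2}, it does nothing else until the matching response. Since $\finish_\tid(\pi')$ enables $a$, the last $\tid$-action on $\pi'$ must be an occurrence of $\startread[\tid,\rid]$, and the matching finish read never occurs on $\pi'$.

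Second step: use the structure of $\ell(\pi')=\itof(\ell(\pi))$. In the instant-read model there are no start read actions. Every occurrence of $\startread[\tid,\rid]$ in $\itof(\ell(\pi))$ therefore arises from replacing some $\readop[\tid,\rid]{\data'}$. That replacement is the contiguous block $\startread[\tid,\rid]\finishread[\tid,\rid]{\data'}$, or $\startread[\tid,\rid]\orderread[\tid,\rid]\finishread[\tid,\rid]{\data'}$ for atomic $\rid$. Hence every start read on $\pi'$ is followed, within at most two positions, by its matching $\finishread[\tid,\rid]{\data'}$, which is itself an action with $\thrmap{}=\tid$. This contradicts the first step, because the start read in question was supposed to be the last $\tid$-action. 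Even when $\pi$ is finite, $\itof$ emits the whole block, so no start read is ever left dangling at the end of $\pi'$.

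The only delicate point is the inference from ``$\finish_\tid(\pi')$ enables a finish read'' to ``the last $\tid$-action was the matching start read''. Here we need \cref{thread-prop-4} and the fact that states of $T_\tid$ reached by a start read enable only finish reads. We also need that $\tid$'s local state changes exactly on $\tid$-actions. Both facts are immediate from the definition of parallel composition and the transformation in \cref{app:thread-transform}, so I expect no real obstacle beyond stating this carefully. One case remains: $\tid$ performs no actions at all on $\pi'$. Then $\finish_\tid(\pi')$ is the initial thread state, and \cref{thread-prop-4} forbids it from enabling a finish read.
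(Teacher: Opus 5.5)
Your proof is correct and follows essentially the same argument as the paper: a thread-enabled finish read would require a dangling $\startread[\tid,\rid]$ (by \cref{thread-prop-4}), and $\itof$ never produces one, since each start read is emitted together with its matching finish read. Your version is more explicit, for instance in handling the case where $\tid$ never acts. One small imprecision: $\orderread[\tid,\rid]$ and $\orderwrite[\tid,\rid]$ have $\thrmap{\cdot}=\tid$ but do not change $\tid$'s local state, so your claim that $\tid$'s state ``changes exactly on $\tid$-actions'' is slightly off. This is harmless here, because the matching finish read is a genuine action of $T_\tid$.
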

\begin{proof}
    The transformation $\itof$ replaces every occurrence of an action of the shape $\readop[\tid,\rid]{\data}$ with $\startread[\tid,\rid]\finishread[\tid,\rid]{\data}$ if $\rid$ is safe or regular, or $\startread[\tid,\rid]\orderread[\tid,\rid]\finishread[\tid,\rid]{\data}$ if $\rid$ is atomic.
    For an action of the shape $\finishread[\tid,\rid]{\data}$ to be thread-enabled in $\pi'$, it must be the case that there is an occurrence of $\startread[\tid,\rid]$ in $\pi'$ that is not followed by $\finishread[\tid,\rid]{\data}$ for some $\data$, since by \cref{thread-prop-4} of \cref{sec:threads}, a thread can only await the response of an operation after invoking it. But the transformation $\itof$ cannot have created an occurrence of $\startread[\tid,\rid]$ that was not subsequently followed by $\finishread[\tid,\rid]{\data}$ for some $\data$. Thus, no such action can be thread-enabled in $\pi'$.
\end{proof}

\begin{lem}\rm\label{lem:thread-enabled-irtofr}
    Let $\pi$ be a path from the initial state of $M_{\irRep}$. Consider a path $\pi'$ from the initial state of $M_{\frRep}$ as obtained from \cref{lem:ir2fr}. If $\pi'$ thread-enables an action $a \in \nonblock$ with $a \notin \{\startread[\tid,\rid],\finishread[\tid,\rid]{\data}\mid\tid\in\TID,\rid\in\RID,\data\in\Data[\rid]\}$ then $\pi$ thread-enables $a$. Additionally, if $\pi'$ thread-enables an action $a = \startread[\tid,\rid]$ for some $\tid\in\TID,\rid\in\RID$ then $\pi$ thread-enables an action $b \in \{\readop[\tid,\rid]{\data}\mid\data\in\Data[\rid]\}$.
\end{lem}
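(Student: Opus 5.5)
We argue from the definition of thread-enabledness (\cref{def:thread-enabled}) together with the state correspondence supplied by \cref{lem:ir2fr}. Let $t=\thrmap{a}$. If $\pi'$ thread-enables $a$, then $\pi'$ contains only finitely many actions $b$ with $\thrmap{b}=t$. By construction, $\ell(\pi')=\itof(\ell(\pi))$, and $\itof$ maps each action of $\pi$ with thread $t$ to one, two or three actions of $\pi'$ with thread $t$, while actions of other threads map to actions of other threads. Hence $\pi$ also contains only finitely many actions of thread $t$. This first step lets us compare $\finish_t(\pi)$ with $\finish_t(\pi')$.

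The second step identifies these two end states. Choose $i$ large enough that the prefix $\pi_i$ of $\pi$ already contains all actions of thread $t$, and take the corresponding prefix $\pi'_i$ of $\pi'$ given by \cref{lem:ir2fr}. Because $\ell(\pi'_i)=\itof(\ell(\pi_i))$ and $\itof$ expands every read completely (including its $\finishread[\tid,\rid]{\data}$), $\pi'_i$ contains all actions of thread $t$ in $\pi'$. The reason is that $\proj_t(\pi')=\itof(\proj_t(\pi))$ as sequences, so no $t$-action of $\pi'$ can lie beyond $\pi'_i$. Hence $\finish_t(\pi)$ is the $t$-component of the last state of $\pi_i$, and $\finish_t(\pi')$ is the $t$-component of the last state of $\pi'_i$. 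By \cref{lem:ir2fr}, the latter lies in $\itofstate$ of the former, which is a singleton, so $\finish_t(\pi')\in\itofstate(\finish_t(\pi))$.

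Finally, \cref{lem:itof-preserve} transfers enabledness for the required cases. Its second clause shows that any $a\notin\{\startread[\tid,\rid],\finishread[\tid,\rid]{\data}\}$ enabled in $\finish_t(\pi')$ is also enabled in $\finish_t(\pi)$, so $\pi$ thread-enables $a$. Note that $\orderwrite$ and $\orderread$ are register-local and never enabled in a thread LTS, so they need no separate treatment. For $a=\startread[\tid,\rid]$, the second clause of \cref{lem:itof-read} gives that $\finish_t(\pi)$ enables $\readop[\tid,\rid]{\data}$ for all $\data$, and so $\pi$ thread-enables such a $b$.

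The main obstacle is the second step: making precise that the prefix correspondence from \cref{lem:ir2fr} reaches past the last $t$-action in both paths simultaneously. I would resolve this by the projection argument on $\proj_t$ together with the fact that $\itof$ never splits a read across the prefix boundary. The boundary case where $\pi$ is finite is handled by taking $i=|\pi|$.
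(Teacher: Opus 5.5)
Your proposal is correct and follows essentially the same route as the paper. You transfer finiteness of the $t$-actions from $\pi'$ to $\pi$ via $\itof$, take the matching prefix from \cref{lem:ir2fr} that already contains all $t$-actions of both paths (so that $\finish_t(\pi')\in\itofstate(\finish_t(\pi))$), and conclude with \cref{lem:itof-preserve,lem:itof-read}. (One cosmetic slip: $\proj_t$ as defined omits register-local actions such as $\orderread[\tid,\rid]$, so $\proj_t(\pi')=\itof(\proj_t(\pi))$ is not literally true for atomic registers; what you need follows directly from $\ell(\pi')=\itof(\ell(\pi_i))\,\itof(\rho)$, where $\rho$, the remainder of $\ell(\pi)$, contains no $t$-actions.)
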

\begin{proof}
    Let $\pi$ be a path from the initial state of $M_{\irRep}$ and let $\pi'$ be a path from the initial state of $M_{\frRep}$ such that $\ell(\pi') = \itof(\ell(\pi))$ and for all prefixes of $\pi$, there is a matching prefix of $\pi'$ where all the threads are in states related by the mapping $\itofstate$.
    Assume that $\pi'$ thread-enables an action $a \in \nonblock$ with $\thrmap{a} = \tid$. Then $\pi'$ has a suffix $\pi_s'$ from some state $s'$ such that $\pi_s'$ contains no actions $b$ with $\thrmap{b} = \tid$ and $s'_{\idx{\tid}}$ enables $a$. 

    Consider that path $\pi$ has the same actions as path $\pi'$, except that to construct $\pi'$ we replaced every occurrence of $\readop[\tid',\rid]{\data}$ with $\startread[\tid',\rid]\finishread[\tid',\rid]{\data}$ if $\rid$ is a safe or regular register and $\startread[\tid',\rid]\orderread[\tid',\rid]\finishread[\tid',\rid]{\data}$ if $\rid$ is an atomic register. Since actions by a thread $\tid'$ are replaced by actions also by thread $\tid'$ for all $\tid'\in\TID$, it is the case that if $\pi'$ contains a suffix where no actions $b$ with $\thrmap{b} = \tid$ are enabled, then so does $\pi$. Let $\pi_s$ be such a suffix of $\pi$ and let $s$ be the first state of $\pi_s$.

    Let $\pi_p$ be the prefix of $\pi$ that ends in $s$.
    Then since $\pi'$ was obtained via \cref{lem:ir2fr}, there exists a prefix $\pi'_p$ of $\pi'$ ending in a state $s''$ such that $\ell(\pi'_p) = \itof(\ell(\pi_p))$ and $s''_{\idx{\tid}} \in \itofstate(s_{\idx{\tid}})$ for all $\tid\in\TID$.
    Since $\pi_p$ contains all actions $b$ with $\thrmap{b} = \tid$ that are in all of $\pi$, $\pi'_p$ must contain all actions $b$ with $\thrmap{b} = \tid$ (this follows from the definition of transformation $\itof$).
    Thus, $s''_{\idx{\tid}} = s'_{\idx{\tid}}$. And thus $s'_{\idx{\tid}} \in \itofstate(s_{\idx{\tid}})$.
    Therefore, if $a \notin \{\startread[\tid,\rid],\finishread[\tid,\rid]{\data}\mid\tid\in\TID,\,\rid\in\RID,\,\data\in\Data[\rid]\}$ then $a$ is also enabled in $s_{\idx{\tid}}$ (by \cref{lem:itof-preserve}). If $a \in \{\startread[\tid,\rid]{\data}\mid\tid\in\TID,\,\rid\in\RID\}$ then by \cref{lem:itof-read} a read action is enabled in $s_{\idx{\tid}}$.
\end{proof}

\begin{lem}\rm\label{lem:thread-enabled-frtoir}
    Let $\pi$ be a path from the initial state of $M_{\frRep}$. Consider a path $\pi'$ from the initial state of $M_{\irRep}$ such that $\ell(\pi') = \ftoi(\ell(\pi))$, as obtained by \cref{lem:fr2ir}. If $\pi'$ thread-enables an action $a \in \nonblock$ with $a \notin \{\readop[\tid,\rid]{\data}\mid\tid\in\TID,\rid\in\RID,\data\in\Data[\rid]\}$ then $\pi$ thread-enables $a$ or thread-enables an action of the shape $\finishread[\tid,\rid]{\data}$. Additionally, if $\pi'$ thread-enables an action $a = \readop[\tid,\rid]{\data}$ for some $\tid\in\TID,\rid\in\RID,\data\in\Data[\rid]$ then $\pi$ thread-enables an action of the form $\startread[\tid,\rid]$ or $\finishread[\tid,\rid]{\data}$.
\end{lem}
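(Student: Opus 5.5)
The plan mirrors the proof of \cref{lem:thread-enabled-irtofr}, now run in the opposite direction with $\ftoistate$ in place of $\itofstate$.

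\textbf{Step 1: pick suffixes with no $\tid$-actions.} Suppose $\pi'$ thread-enables $a\in\nonblock$ with $\thrmap{a}=\tid$. Then $\pi'$ contains only finitely many actions $b$ with $\thrmap{b}=\tid$, and $\finish_\tid(\pi')$ enables $a$. We first show that $\pi$ also contains only finitely many $\tid$-actions. The transformation $\ftoi$ drops $\startread[\tid,\rid]$, $\orderread[\tid,\rid]$ and $\finishread[\tid,\rid]{\data}$, keeps every other action, and inserts one read action per completed read operation. So between two consecutive start reads of $\tid$ in $\pi$ there is either a finish read, which yields a $\readop$ in $\pi'$, or nothing. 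By \cref{thread-prop-2} and \cref{thread-prop-4}, $\tid$ cannot start a new read while one is pending. Hence infinitely many $\tid$-actions in $\pi$ would give infinitely many $\tid$-actions in $\pi'$, except for one final pending start read, which is a single action. Therefore $\pi$ has a suffix $\pi_s$, starting in a state $s$, containing no $\tid$-actions.

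\textbf{Step 2: relate the end states.} Choose $n$ so large that $\textit{pre}_n(\pi)$ already contains every $\tid$-action of $\pi$. This is possible because each action of $\pi$ not removed by $\ftoi$ lies before the next remaining one, and the trailing read actions of $\tid$ are placed between their start and finish. Then, with the matching prefix $\pi'_n$ from \cref{lem:fr2ir}, we get $s'_{\idx{\tid}}\in\ftoistate(s_{\idx{\tid}})$. Moreover $s'_{\idx{\tid}}=\finish_\tid(\pi')$ and $s_{\idx{\tid}}=\finish_\tid(\pi)$, since neither later segment contains $\tid$-actions.

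\textbf{Step 3: transfer enabledness.} If $a$ is not a read action, \cref{lem:ftoi-preserve} applied to $s_{\idx{\tid}}$ and $s'_{\idx{\tid}}$ shows that $\finish_\tid(\pi)$ enables either $a$ or some $\finishread[\tid,\rid]{\data}$. Since $\finishread$ actions lie in $\nonblock$, this gives the first claim. If $a=\readop[\tid,\rid]{\data}$, \cref{lem:ftoi-read} shows that $\finish_\tid(\pi)$ enables $\startread[\tid,\rid]$ or $\finishread[\tid,\rid]{\data}$, which gives the second claim.

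\textbf{Main obstacle.} The delicate point is Step 2: making sure the chosen prefix $\textit{pre}_n(\pi)$ really captures the final thread state of $\pi$. The problematic case is when $\pi$ ends with a pending start read of $\tid$, which has no matching $\readop$ in $\pi'$. Then $\textit{pre}_n(\pi)$ might stop just before that start read. To handle this, I would use the observation from the proof of \cref{lem:fr2ir-support-end} that $\ftoistate(s_b)\subseteq\ftoistate(s_a)$ for a start-read transition from $s_b$ to $s_a$. With it, membership in $\ftoistate$ is preserved when the prefix is extended over that start read, so the relation still holds for $\finish_\tid(\pi)$.
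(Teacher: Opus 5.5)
Your proposal is correct and follows essentially the same route as the paper: find a $\tid$-free suffix of $\pi$, pick an index whose prefix $\textit{pre}_n(\pi)$ contains all $\tid$-actions (the paper takes the least $i$ with $\pi_p$ a prefix of $\textit{pre}_i(\pi)$), carry $s'_{\idx{\tid}}\in\ftoistate(s_{\idx{\tid}})$ from \cref{lem:fr2ir} over to the final thread states, and conclude with \cref{lem:ftoi-preserve,lem:ftoi-read}; your Step~1 justifies the finiteness of the $\tid$-actions in $\pi$ more carefully than the paper does. The obstacle you flag does not actually arise, because $\textit{pre}_n(\pi)$ is the \emph{longest} admissible prefix, so a trailing pending start read is included for large $n$ (and $\textit{pre}_N(\pi)=\pi$ when $\ftoi(\ell(\pi))$ has $N$ actions); your fallback via $\ftoistate(s_b)\subseteq\ftoistate(s_a)$ is nonetheless sound.
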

\begin{proof}
    Let $\pi$ be a path from the initial state of $M_{\frRep}$ and let $\pi'$ be a path from the initial state of $M_{\irRep}$ such that $\ell(\pi') = \ftoi(\ell(\pi))$, as obtained by \cref{lem:fr2ir}. 
    Assume that $\pi'$ thread-enables an action $a \in \nonblock$ with $\thrmap{a} = \tid$. Then $\pi'$ has a suffix $\pi_s'$ from state $s'$ such that $\pi_s'$ contains no actions $b$ with $\thrmap{b} = \tid$ and $s'_{\idx{\tid}}$ enables $a$.

    Consider that path $\pi$ consists of the same actions as path $\pi'$, except that to construct $\pi'$ we replaced every occurrence of $\startread[\tid',\rid]$, $\finishread[\tid',\rid]{\data}$ and (if $\rid$ is atomic) $\orderread[\tid',\rid]$ with $\readop[\tid',\rid]{\data}$ somewhere between the start and finish. Since actions by a thread $\tid'$ are replaced by actions also by thread $\tid'$, it is the case that if $\pi'$ contains a suffix where no actions $b$ with $\thrmap{b} = \tid$ are enabled, then so does $\pi$. Let $\pi_s$ be such a suffix of $\pi$ and let $s$ be the first state of $\pi_s$. 

    Let $\pi_p$ be the prefix of $\pi$ that ends in $s$. Consider the smallest number $i$ such that $\pi_p$ is a prefix of $\textit{pre}_{i}(\pi)$. Let $s''$ be the final state of $\textit{pre}_i(\pi)$.
    Since there are no occurrences of actions $b$ with $\thrmap{b} = \tid$ in $\pi_s$, $s_{\idx{\tid}} = s''_{\idx{\tid}}$.
    From \cref{lem:fr2ir}, there exists a prefix $\pi_i'$ of $\pi'$ ending in $s'''$ such that $\ell(\pi_i') = \ftoi(\ell(\pi))_i$ and $s'''_{\idx{\tid'}} \in \ftoistate(s''_{\idx{\tid'}})$ for all $\tid'\in\TID$. Thus $s'''_{\idx{\tid}} \in \ftoistate(s_{\idx{\tid}})$.
    Since after $\textit{pre}_i(\pi)$, $\pi$ does not contain any actions $b$ with $\thrmap{b} = \tid$, $\pi'$ does not contain any actions $b$ with $\thrmap{b} = \tid$ after $\pi'_i$: the only actions in $\pi'$ that are not in $\pi$ are read actions, and no read by $\tid$ could come after $\pi'_i$ unless at least a finish read by $\tid$ came after $\textit{pre}_i(\pi)$.
    Thus, $\pi_s'$ and the suffix of $\pi'$ starting in $s'''$ are both suffixes of $\pi'$ such that no actions $b$ with $\thrmap{b} = \tid$ occur.
    Thus, $s'_{\idx{\tid}} = s'''_{\idx{\tid'}}$. Therefore, we have that $s'_{\idx{\tid}} \in \ftoistate(s_{\idx{\tid}})$. 

    If $a$ is an action $\readop[\tid,\rid]{\data}$ for some $\rid\in\RID,\data\in\Data[\rid]$ then from $s'_{\idx{\tid}} \in \ftoistate(s_{\idx{\tid}})$, it follows that $s_{\idx{\tid}}$ either enables $\startread[\tid,\rid]$ or $\finishread[\tid,\rid]{\data}$ (by \cref{lem:ftoi-read}).
    If $a \notin \{\readop[\tid,\rid]{\data} \mid \rid\in\RID,\data\in\Data[\rid]\}$, then by \cref{lem:ftoi-preserve}, $s_{\idx{\tid}}$ enables $a$ or some action $\finishread[\tid,\rid]{\data}$.
\end{proof}

Using these lemmas, we now prove weak completed trace equivalence in both directions.

\begin{lem}\rm\label{lem:conc-irtofr}
    Let $C \in \{T,S,I,A\}$. If $\pi$ is a $C$-complete path from the initial state of $M_{\irRep}$, then there exists a $C$-complete path $\pi'$ from the initial state of $M_{\frRep}$ such that $\ell^-(\pi) = \ell^-(\pi')$.
\end{lem}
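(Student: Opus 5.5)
Take $\pi'$ to be the path given by \cref{lem:ir2fr}, so that $\ell(\pi') = \itof(\ell(\pi))$; by \cref{cor:ir2fr} it has the required labels, $\ell^-(\pi')=\ell^-(\pi)$. It then remains to show that $\pi'$ is $C$-complete. I will do this by contraposition using the characterisations \cref{pr:T-just paths,pr:S-just paths,pr:I-just paths,pr:A-just paths}: assume $\pi'$ is not $C$-just and derive that $\pi$ is not $C$-just either. These propositions hold for both model types, since they were stated for an arbitrary $M$. Each one reduces justness to two ingredients: the set of thread-enabled actions, and the number of occurrences of actions from $\textit{start}(r)$ or of start-write actions on $r$.

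\textbf{Relating thread-enabled actions.} Let $a\in\nonblock$ be an action that $\pi'$ thread-enables and that witnesses the failure of the relevant condition. By \cref{lem:never-fr}, $a$ is not a finish read. If $a$ is not a start read, then \cref{lem:thread-enabled-irtofr} says $\pi$ thread-enables $a$ itself. If $a=\startread[\tid,\rid]$, the same lemma gives that $\pi$ thread-enables some $\readop[\tid,\rid]{\data}$. Both $a$ and this read action lie in $\textit{start}(r)$, and both are reads, so each clause of the characterisations classifies them the same way. For instance, clause (c) of \cref{pr:I-just paths} names start reads and read actions together.

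\textbf{Relating occurrence counts.} The transformation $\itof$ does not touch write actions. It replaces each $\readop[\tid',\rid]{\data}$ by a block containing exactly one $\startread[\tid',\rid]$. Hence the number of start-write actions on $r$ is the same in $\pi$ and $\pi'$. Likewise the number of occurrences of actions from $\textit{start}(r)$ is the same in both: each read action in $\pi$ corresponds to exactly one start read in $\pi'$. In particular, ``finitely many'' in $\pi'$ transfers to ``finitely many'' in $\pi$. So whichever clause (a), (b) or (c) fails for $\pi'$ also fails for $\pi$, and $\pi$ is not $C$-just, a contradiction.

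\textbf{Expected obstacle.} The delicate part is checking, for each $C$ separately, that replacing a thread-enabled start read by a thread-enabled read action keeps us in the same clause. For $C=T$ this is immediate, since only clause (a) of \cref{pr:T-just paths} is involved. For $S$, $I$ and $A$ one has to match the clauses individually, using that $\isread{}$ and $\iswrite{}$ agree on $\startread[\tid,\rid]$ and $\readop[\tid,\rid]{\data}$. A further subtlety is that \cref{pr:A-just paths} is phrased using the labels of the respective model. I need that $\textit{start}(r)$ is interpreted in each model with its own actions: start reads in the full-read model and read actions in the instant-read model.
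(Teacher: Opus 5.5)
Your proposal is correct and follows the paper's own proof essentially step for step. You take $\pi'$ from \cref{lem:ir2fr} and argue by contraposition via the characterisations \cref{pr:T-just paths,pr:S-just paths,pr:I-just paths,pr:A-just paths}. You rule out thread-enabled finish reads with \cref{lem:never-fr}, and transfer thread-enabledness with \cref{lem:thread-enabled-irtofr}, a start read becoming a read action. Finally you note that $\itof$ preserves both the number of start writes on $r$ and the number of $\textit{start}(r)$ actions.

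One minor point. For this particular $\pi'$, the equality $\ell^-(\pi')=\ell^-(\pi)$ follows directly from $\ell(\pi')=\itof(\ell(\pi))$ together with the fact that $\itof$ leaves every $\crit[\tid]$ and $\noncrit[\tid]$ untouched. It does not follow from \cref{cor:ir2fr}, which only asserts that some such path exists.
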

\begin{proof}
    Let $\pi$ be a $C$-complete path from the initial state of $M_{\irRep}$. From \cref{lem:ir2fr}, we know that there exists a path $\pi'$ from the initial state of $M_{\frRep}$ such that $\ell(\pi') = \itof(\ell(\pi))$ and for every prefix of $\pi$, there is a prefix of $\pi'$ where all the threads are in matching states according to the mapping $\itofstate$. Since $\ell(\pi') = \itof(\ell(\pi))$, also $\ell^-(\pi')=\ell^-(\pi))$.
 It remains to establish that $\pi'$ is $C$-complete in $M_{\frRep}$.
    Towards a contradiction, assume that $\pi'$ is not $C$-complete.
    We do a case distinction on what $C$ is:
    \begin{itemize}
        \item If $C =T$, then by \cref{pr:T-just paths}, it follows that $\pi'$ must thread-enable some action $a \in \nonblock$. By \cref{lem:thread-enabled-irtofr}, it follows that $\pi$ also thread-enables some action $b\in\nonblock$. Thus $\pi$ is not $T$-complete by \cref{pr:T-just paths}. 

        \item If $C =S$, then by \cref{pr:S-just paths}, it follows that $\pi'$ thread-enables some action $a \in \nonblock$ such that either (a) $a \notin \textit{start}(\rid)$ for all $\rid\in\RID$ or (b) $a \in \textit{start}(\rid)$ for some $\rid\in\RID$ but $\pi'$ contains only finitely many occurrences of actions of the form $\startwrite[\tid,\rid]{\data}$. In case (a), $a$ is a thread-local, finish read or finish write action. By \cref{lem:never-fr}, we know it must be a thread-local or finish write action and thus by \cref{lem:thread-enabled-irtofr}, $\pi$ also thread-enables $a$ and thus $\pi$ is not $S$-complete.
        In case (b), $a$ is an action in $\textit{start}(\rid)$ for some $\rid\in\RID$ that is thread-enabled by $\pi'$ with only finitely many occurrences of actions of the form $\startwrite[\tid,\rid]{\data}$ occurring in $\pi'$. Since $a$ is in $\textit{start}(\rid)$, it must be of the form $\startread[\tidtwo,\rid]$ or $\startwrite[\tidtwo,\rid]{\data'}$. In both cases, by \cref{lem:thread-enabled-irtofr} $\pi$ also thread-enables an action in $\textit{start}(\rid)$, namely $\startwrite[\tidtwo,\rid]{\data'}$ or $\readop[\tid,\rid]{\data''}$ for some $\data''$. Since $\pi$ also contains only finitely many occurrences of actions of the shape $\startwrite[\tid,\rid]{\data}$, $\pi$ is not $S$-complete.

        \item If $C = I$, then by \cref{pr:I-just paths}, it follows that $\pi'$ thread-enables some action $a\in\nonblock$ such that either (a) $a \notin \textit{start}(\rid)$ for all $\rid\in\RID$, (b) $a = \startwrite[\tid,\rid]{\data}$ and $\pi'$ contains finitely many occurrences of actions $b \in \textit{start}(\rid)$, or (c) $a = \startread[\tid,\rid]$ and $\pi'$ contains finitely many occurrences of actions of the form $\startwrite[\tidtwo,\rid]{\data}$.
        We cover each case:
        \begin{itemize}
        \item Case (a) proceeds just as case (a) when $C=S$. It follows that $\pi$ is not $I$-complete.
        \item If $\pi'$ thread-enables an action $a=\startwrite[\tid,\rid]{\data}$, then by \cref{lem:thread-enabled-irtofr}, $\pi$ also thread-enabled $a$. In $\pi'$, there are only finitely many occurrences of actions in $\textit{start}(\rid)$. Since the start write actions are also in $\pi$ and the start read actions must come from read actions in $\pi$, which are also in $\textit{start}(\rid)$, the same holds for $\pi$ and so $\pi$ is not $I$-complete.
        \item If $\pi'$ thread-enables an action $a= \startread[\tid,\rid]$, then by \cref{lem:thread-enabled-irtofr}, $\pi$ enables an action $\readop[\tid,\rid]{\data}$ for some $\data\in\Data[\rid]$. By \cref{pr:I-just paths}, this, combined with there being only finitely many occurrences of write actions to $\rid$ in both $\pi$ and $\pi'$, means that $\pi$ is not $I$-complete.
        \end{itemize}

        \item If $C = A$, then by \cref{pr:A-just paths}, it follows that $\pi'$ thread-enables some action $a\in\nonblock$ such that either (a) $a \notin \textit{start}(\rid)$ for all $\rid\in\RID$, or (b) $a \in \textit{start}(\rid)$ for some $\rid\in\RID$ and $\pi'$ contains finitely many occurrences of actions $b \in \textit{start}(\rid)$.
        We cover both cases:
        \begin{itemize}
        \item Case (a) proceeds just as case (a) for $C = S$. It follows that $\pi$ not $A$-complete.
        \item If $\pi'$ thread-enables an action $a\in \textit{start}(\rid)$ for some $\rid\in\RID$, then this is either $\startread[\tid,\rid]$ or $\startwrite[\tid,\rid]{\data}$ for some $\tid\in\TID,\,\data\in\Data[\rid]$. In both cases, $\pi$ also thread-enables an action in $\textit{start}(\rid)$, by \cref{lem:thread-enabled-irtofr} either $\startwrite[\tid,\rid]{\data}$ or $\readop[\tid,\rid]{\data'}$ for some $\data'\in\Data[\rid]$. Additionally, since there are only finitely many occurrences of actions in $\textit{start}(\rid)$ in $\pi'$, the same holds for $\pi$: the start write actions are the same, and the read actions are replaced with start read actions. Thus, $\pi$ is not $A$-complete.
        \end{itemize}
    \end{itemize}
    We have shown that in all cases, $\pi$ is not $C$-complete, which contradicts our choice of $\pi$. Thus, $\pi'$ must be $C$-complete.
\end{proof}

\begin{lem}\rm\label{lem:conc-frtoir}
    Let $C \in \{T,S,I,A\}$. If $\pi$ is a $C$-complete path from the initial state of $M_{\frRep}$, then there exists a $C$-complete path $\pi'$ from the initial state of $M_{\irRep}$ such that $\ell^-(\pi) = \ell^-(\pi')$.
\end{lem}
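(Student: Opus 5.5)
The proof mirrors that of \cref{lem:conc-irtofr}, with the roles of the two models swapped. Let $\pi$ be a $C$-complete path from the initial state of $M_{\frRep}$. By \cref{lem:fr2ir} there is a path $\pi'$ from the initial state of $M_{\irRep}$ with $\ell(\pi')=\ftoi(\ell(\pi))$, such that for every $\textit{pre}_i(\pi)$ there is a matching prefix of $\pi'$ whose thread states are related by $\ftoistate$. By \cref{cor:fr2ir}, $\ell^-(\pi')=\ell^-(\pi)$. It remains to show that $\pi'$ is $C$-complete. Suppose towards a contradiction that it is not. We then use the characterisations \cref{pr:T-just paths,pr:S-just paths,pr:I-just paths,pr:A-just paths} to transfer the witness of incompleteness back to $\pi$, via \cref{lem:thread-enabled-frtoir}.

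Case (a), common to all $C$: $\pi'$ thread-enables some $a\in\nonblock$ outside every $\textit{start}(\rid)$. Then $a$ is thread-local or a finish write, since instant-read models contain no finish or start reads. By \cref{lem:thread-enabled-frtoir}, $\pi$ thread-enables either $a$ itself or some $\finishread[\tid,\rid]{\data}$. Both lie in $\nonblock$ and outside every $\textit{start}(\rid)$, so $\pi$ is not $C$-complete; for $C=T$ this already settles everything.

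Case (b): $\pi'$ thread-enables an action in $\textit{start}(\rid)$, namely $\startwrite[\tid,\rid]{\data}$ or $\readop[\tid,\rid]{\data}$, and $\pi'$ contains only finitely many occurrences of the relevant kind of action: start writes on $\rid$ for $S$ and for reads under $I$, and all of $\textit{start}(\rid)$ for $A$ and for writes under $I$. By \cref{lem:thread-enabled-frtoir}, $\pi$ thread-enables one of the following:
\begin{itemize}
\item the same start write, or a finish read, when $a$ is a start write;
\item $\startread[\tid,\rid]$ or $\finishread[\tid,\rid]{\data}$, when $a$ is a read.
\end{itemize}
If a finish read is thread-enabled, case (a) applies to $\pi$ and we are done. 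Otherwise, $\pi$ thread-enables an action of the same category (start write, respectively start read) in $\textit{start}(\rid)$.

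The main obstacle is the counting step: showing that $\pi$ also has only finitely many occurrences of the relevant actions. For start writes this is immediate, because $\ftoi$ leaves start writes untouched. For $\textit{start}(\rid)$ in the $A$ and $I$ cases, $\ftoi$ maps each completed read operation's $\startread[\tid',\rid]$ to exactly one $\readop[\tid',\rid]{\cdot}$. The only start reads of $\pi$ not matched in $\pi'$ are those whose finish read never occurs, and each thread has at most one such pending read, by \cref{thread-prop-2,thread-prop-4}. Hence infinitely many start reads on $\rid$ in $\pi$ would yield infinitely many reads on $\rid$ in $\pi'$, contradicting finiteness there. So $\pi$ violates condition (b) or (c) of the corresponding proposition, and is therefore not $C$-complete, contradicting the assumption on $\pi$.
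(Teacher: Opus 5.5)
Your proof is correct and follows the paper's argument. You obtain $\pi'$ from \cref{lem:fr2ir}, assume it is not $C$-complete, and transfer the witness back to $\pi$ via \cref{lem:thread-enabled-frtoir} and the characterisations of just paths. Grouping the cases as (a)/(b) rather than by $C$ is only cosmetic. Your counting step (at most one never-finished read per thread, so infinitely many start reads in $\pi$ force infinitely many reads in $\pi'$) spells out what the paper states only briefly.

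One small imprecision: case (a) alone does not settle $C=T$, since the witness may lie in some $\textit{start}(\rid)$. This is easily fixed, because \cref{lem:thread-enabled-frtoir} always yields some action in $\nonblock$ thread-enabled by $\pi$, which suffices for $T$.
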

\begin{proof}
     Let $\pi$ be a $C$-complete path from the initial state of $M_{\frRep}$. From \cref{lem:fr2ir}, we know there exists a path $\pi'$ from the initial state of $M_{\irRep}$ such that $\ell(\pi') = \ftoi(\ell(\pi))$, where prefixes $\textit{pre}_{i}(\pi)$ of $\pi$ can be matched with prefixes from $\pi'$ that end in equivalent states for the threads. From $\ell(\pi') = \ftoi(\ell(\pi))$, it follows that $\ell^-(\pi')=\ell^-(\pi))$. It remains to establish that $\pi'$ is $C$-complete in $M_{\irRep}$.
     Towards a contradiction, assume that $\pi'$ is not $C$-complete.
     We do a case distinction on what $C$ is:
     \begin{itemize}
        \item If $C =T$, by \cref{pr:T-just paths}, it follows that $\pi'$ must thread-enable some action $a \in \nonblock$. By \cref{lem:thread-enabled-frtoir}, it follows that $\pi$ also thread-enables some action $b\in\nonblock$.
        Thus $\pi$ is not $T$-complete.

        \item If $C =S$, then by \cref{pr:S-just paths}, it follows that $\pi'$ thread-enables some action $a \in \nonblock$ such that either (a) $a \notin \textit{start}(\rid)$ for all $\rid\in\RID$ or (b) $a \in \textit{start}(\rid)$ for some $\rid\in\RID$ but $\pi'$ contains only finitely many occurrences of actions of the form $\startwrite[\tid,\rid]{\data}$. In case (a), $a$ is a thread-local or finish write action. By \cref{lem:thread-enabled-frtoir}, $\pi$ also thread-enables $a$ or a finish read action, which is also not in $\textit{start}(\rid)$ for any $\rid\in\RID$, and thus $\pi$ is not $S$-complete.
        In case (b), $a$ is an action in $\textit{start}(\rid)$ for some $\rid\in\RID$ that is thread-enabled by $\pi'$ with only finitely many occurrences of actions of the form $\startwrite[\tid,\rid]{\data}$ occurring in $\pi'$. Since $a$ is in $\textit{start}(\rid)$, it must be of the form $\readop[\tidtwo,\rid]{\data'}$ or $\startwrite[\tidtwo,\rid]{\data'}$. In the latter case, by \cref{lem:thread-enabled-frtoir} $\pi$ either thread-enables an action not in $\textit{start}(\rid)$ for any $\rid\in\RID$, or also thread-enables $a$ and there are still only finitely many write actions to $\rid$ in $\pi$, so $\pi$ is not $S$-complete. If $a$ is $\readop[\tidtwo,\rid]{\data'}$, then by \cref{lem:thread-enabled-frtoir}, $\pi$ thread-enables $\startread[\tidtwo,\rid]$ or $\finishread[\tidtwo,\rid]{\data'}$. In the former case, it is again an action in $\textit{start}(\rid)$ that is thread-enabled; in the latter case it is an action that is not in $\textit{start}(\rid)$ at all. In both cases, $\pi$ is not $S$-complete. 

        \item If $C = I$, then by \cref{pr:I-just paths},
        it follows that $\pi'$ thread-enables some action $a\in\nonblock$ such that either (a) $a \notin \textit{start}(\rid)$ for all $\rid\in\RID$, (b) $a = \startwrite[\tid,\rid]{\data}$ and $\pi'$ contains finitely many occurrences of actions $b \in \textit{start}(\rid)$, or (c) $a = \readop[\tid,\rid]{\data}$ and $\pi'$ contains finitely many occurrences of actions of the form $\startwrite[\tidtwo,\rid]{\data}$.
        We cover each case:
        \begin{itemize}
        \item Case (a) proceeds just as case (a) if $C=S$, thus $\pi$ is not $I$-complete.
        \item If $\pi'$ thread-enables an action $a=\startwrite[\tid,\rid]{\data}$, then by \cref{lem:thread-enabled-frtoir}, $\pi$ thread-enables an action not in $\textit{start}(\rid)$ for any $\rid\in\RID$, or also thread-enabled $a$. In the former case, $\pi$ is trivially not $I$-complete. In the latter case, note that in $\pi'$, there are only finitely many occurrences of actions in $\textit{start}(\rid)$. Since the start write actions are also in $\pi$ and the read actions must come from, amongst others, start read actions in $\pi$, which are also in $\textit{start}(\rid)$, the same holds for $\pi$ and so $\pi$ is not $I$-complete.
        \item If $\pi'$ thread-enables an action $a= \readop[\tid,\rid]{\data}$, then by \cref{lem:thread-enabled-frtoir}, $\pi$ thread-enables $\startread[\tid,\rid]$ or $\finishread[\tid,\rid]{\data}$. In the latter case, this action is not in $\textit{start}(\rid)$ and thus $\pi$ is not $I$-complete. In the former, there being only finitely many occurrences of start write actions to $\rid$ in $\pi'$ and thus also $\pi$ means that $\pi$ is not $I$-complete.
        \end{itemize}

        \item If $C = A$, then by \cref{pr:A-just paths},
        it follows that $\pi'$ thread-enables some action $a\in\nonblock$ such that either (a) $a \notin \textit{start}(\rid)$ for all $\rid\in\RID$, or (b) $a \in \textit{start}(\rid)$ for some $\rid\in\RID$ and $\pi'$ contains finitely many occurrences of actions $b \in \textit{start}(\rid)$.
        We cover both cases:
        \begin{itemize}
        \item Case (a) proceeds just as case (a) for $C=S$, thus $\pi$ is not $A$-complete.
        \item If $\pi'$ thread-enables an action $a\in \textit{start}(\rid)$ for some $\rid\in\RID$, then this is either $\readop[\tid,\rid]{\data}$ or $\startwrite[\tid,\rid]{\data}$ for some $\tid\in\TID,\data\in\Data[\rid]$. 
        If it is $\readop[\tid,\rid]{\data}$, then by \cref{lem:thread-enabled-frtoir} $\pi$ thread-enables $\startread[\tid,\rid]$ or $\finishread[\tid,\rid]{\data}$. In the latter case, it thread-enables an action not in $\textit{start}(\rid)$ so $\pi$ is not $A$-complete. In the former case, or if $a$ is $\startwrite[\tid,\rid]{\data}$, $\pi$ thread-enables an action in $\textit{start}(\rid)$. Additionally, since there are only finitely many occurrences of actions in $\textit{start}(\rid)$ in $\pi'$, the same holds for $\pi$: the start write actions are the same, and the start read actions are replaced with read actions. Thus, $\pi$ is not $A$-complete.
        \end{itemize}
    \end{itemize}
    We have shown that in all cases, $\pi$ is not $C$-complete, which contradicts our choice of $\pi$. Thus, $\pi'$ must be $C$-complete.
\end{proof}

\begin{thm}\rm\label{thm:wct-ir-fr}
    $M_{\frRep} =^C_\WCT M_{\irRep}$ for all $C \in \{T,S,I,A\}$.
\end{thm}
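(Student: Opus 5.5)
The theorem follows almost immediately from \cref{lem:conc-irtofr,lem:conc-frtoir}, so the plan is to unfold \cref{def:wct} and establish the two inclusions $\WCT_C(M_{\irRep}) \subseteq \WCT_C(M_{\frRep})$ and $\WCT_C(M_{\frRep}) \subseteq \WCT_C(M_{\irRep})$ separately, for an arbitrary fixed $C \in \{T,S,I,A\}$. Here $C$-complete means satisfying $\justact{\conc_C}{\block}$ with $\block = \{\noncrit[\tid] \mid \tid\in\TID\}$. The concurrency relations are valid for both model types by \cref{lem:concT-val,lem:ai-val}, so this notion of completeness is meaningful in each model.

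For the first inclusion, take $\sigma \in \WCT_C(M_{\irRep})$. By definition there is a $C$-complete path $\pi$ from the initial state of $M_{\irRep}$ with $\ell^-(\pi) = \sigma$. \Cref{lem:conc-irtofr} then yields a $C$-complete path $\pi'$ from the initial state of $M_{\frRep}$ with $\ell^-(\pi') = \ell^-(\pi) = \sigma$, so $\sigma \in \WCT_C(M_{\frRep})$.

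The reverse inclusion is symmetric, using \cref{lem:conc-frtoir}: a $C$-complete path of $M_{\frRep}$ is matched by a $C$-complete path of $M_{\irRep}$ with the same $\ell^-$-projection.

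Since both inclusions hold for every $C$, the sets are equal, which is exactly $M_{\frRep} =^C_\WCT M_{\irRep}$.

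All the substantive work lies in \cref{lem:conc-irtofr,lem:conc-frtoir}, which in turn rest on the path translations of \cref{lem:ir2fr,lem:fr2ir} and the justness characterisations of \cref{pr:T-just paths,pr:S-just paths,pr:I-just paths,pr:A-just paths}. The only point to check is that both the finite and the infinite strings in $\WCT_C$ are covered. This holds because those lemmas are stated for arbitrary (possibly infinite) paths and preserve $\ell^-$ exactly, as $\itof$ and $\ftoi$ neither reorder nor drop $\crit[\tid]$ and $\noncrit[\tid]$ actions.
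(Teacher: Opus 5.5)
Your proposal is correct and takes the same approach as the paper, which proves the theorem directly from \cref{lem:conc-irtofr,lem:conc-frtoir} together with \cref{def:wct}, i.e.\ by exactly the two set inclusions you spell out.
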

\begin{proof}
    This follows directly from \cref{lem:conc-irtofr,lem:conc-frtoir,def:wct}.
\end{proof}

\section{Register models that avoid overlap}\label{app:avoidoverlap}

We do our verification on atomic registers with different models of blocking by using the instant-read atomic register model with four different concurrency relations.
It is not immediately obvious that the resulting verifications are indeed valid for memory models where certain operations cannot overlap, since our instant-read atomic register model (\cref{fig:instantread-procatomic}) does allow read and write operations to overlap.
In this appendix, we show that the verification results we obtain are the same as if we had modelled operations that truly do not overlap.
For this, we use \cref{def:wct}.

We will argue that our verifications are valid by first considering models that truly avoid overlap, and then proving weak complete trace equivalence with our instant-read models.
Since the models that truly avoid overlap are more closely related to the full-read models, we use those as an intermediate step.

To accurately capture the memory model of blocking reads and writes, we need not only the concurrency relation $\conc_{\!A}$, but also a register model that does not allow any read or write to start when another read or write to the same register is in progress. This can be achieved by changing, in \cref{fig:fullread-procatomic}, the condition $\tid\notin(\readers{s}\cup\writers{s})$, occurring in the first and fourth lines, into
$(\readers{s}\cup\writers{s}) = \emptyset$.

Similarly, in the blocking model with concurrent reads, where reads and writes have to await the completion of in-progress writes, but only writes have to await the completion of in-progress reads, these two lines of \cref{fig:fullread-procatomic} become
  \[\begin{array}{lr@{~\then~}l}
            & (\tid\mathbin{\notin}\readers{s} \wedge \writers{s}=\emptyset) &
                 \startread[\tid,\rid]\co\Reg[\frRep,\atoRep](\rid,\usr{s,\tid}) \\
          + & ((\readers{s}\cup\writers{s}) = \emptyset) &
                 \startwrite[\tid,\rid]{\data}\co\Reg[\frRep,\atoRep](\rid,\usw{s,\tid,\data})
  \end{array}\]

To capture the model with blocking writes and non-blocking reads, we make the same alterations as above, except that writes do not need to wait for in-progress reads. Hence, these two lines become 
\[\begin{array}{lr@{~\then~}l}
            & (\tid\mathbin{\notin}\readers{s} \wedge \writers{s}=\emptyset) &
                 \startread[\tid,\rid]\co\Reg[\frRep,\atoRep](\rid,\usr{s,\tid}) \\
          + & (\writers{s} = \emptyset) &
                 \startwrite[\tid,\rid]{\data}\co\Reg[\frRep,\atoRep](\rid,\usw{s,\tid,\data})
  \end{array}\]
Note that this only models the blocking behaviour described in \cref{sec:introduction}; we do not model that a write causes a read to be aborted and subsequently resumed.

For a given mutual exclusion algorithm, let $M_{\frRep}$ be the LTS of its full-read thread-register model, using the full-read atomic register model from \cref{fig:fullread-procatomic} for all registers, and let $M_{\irRep}$ be its instant-read equivalent. Moreover, let $M_A$ be the variant of $M_{\frRep}$ that employs the modified register model for blocking reads and writes; let $M_I$ be the variant for the blocking model with concurrent reads, and let $M_S$ be the one for the model of blocking writes and non-blocking reads.

Now, using $A$, $I$ and $S$ as abbreviations for the completeness criteria $\justact{\conc_{\!A}}{\block}$, $\justact{\conc_I}{\block}$ and $\justact{\conc_S}{\block}$, we will show that $M_A =^A_\WCT M_{\frRep}$, $M_I =^I_\WCT M_{\frRep}$ and $M_S =^S_\WCT M_{\frRep}$. This implies that it makes no difference whether, for the verifications based on the (partly) blocking memory models, we use the register models fine-tuned for that memory model as described above, or the model for full-read atomic registers from \cref{fig:fullread-procatomic}.

In the subsequent argument, let $C \in \{A, I, S\}$.

First, we must establish that $\conc_C$ is a valid concurrency relation for $M_C$. That the relations are disjoint with respect to $\actequivsym$ solely depends on the definitions of these relations and set actions of an LTS, and hence still clearly holds within the context of $M_A, M_I$ and $M_S$.
It is less immediately obvious that the \hyperlink{second}{second property of concurrency relations} still holds, particularly because the fine-tuned models are not thread-consistent: it is for example possible for $\startread[\tid,\rid]$ to be disabled by the occurrence of $\startwrite[\tidtwo,\rid]{\data}$ in all three models, even when $\tid \neq \tidtwo$.
However, in all three cases, if an action $a$ is disabled by the occurrence of an action $b$, then $a \nconc_C b$:
\begin{itemize}
    \item In $M_A$, $\startread[\tid,\rid]$ and $\startwrite[\tid, \rid]{\data}$ can be disabled by the occurrence of $\startread[\tidtwo, \rid]$ or $\startwrite[\tidtwo, \rid]{\data'}$. Accordingly, in $\conc_{\!A}$ start read actions and start write actions both interfere with both start read and start write actions on the same register.
    \item In $M_I$, $\startread[\tid,\rid]$ can be disabled by the occurrence of $\startwrite[\tidtwo,\rid]{\data}$. Accordingly, $\conc_I$ allows start write actions to interfere with start read actions on the same register. Additionally, $\startwrite[\tid,\rid]{\data}$ can be disabled by the occurrence of $\startread[\tidtwo,\rid]$ or $\startwrite[\tidtwo,\rid]{\data'}$. Indeed, $\conc_I$ allows start reads and start writes to interfere with start writes on the same register.
    \item In $M_S$, $\startread[\tid,\rid]$ and $\startwrite[\tid, \rid]{\data}$ can both be disabled by an occurrence of $\startwrite[\tidtwo, \rid]{\data'}$. Accordingly, $\conc_S$ allows start writes to interfere with both start reads and start writes on the same register.
\end{itemize}
Thus, in all three register models an action can only be disabled by the occurrence of an action that interferes with it according to the appropriate concurrency relation.\vspace{-2pt} Consequently, if there is a transition $s \xrightarrow{b} s'$ such that an action $a$ is enabled in $s$ and $a \conc_C b$, then $a$ is enabled in $s'$.\footnote{Recall that for the actions present in these models $\actequivsym$ equals $\mathit{id}$.} By induction, similarly to the proof of \cref{lem:concT-val} in \cref{app:concTproof}, \hyperlink{second}{the second property of concurrency relations} is satisfied, and we can apply these concurrency relations to their respective models.

Next, we will show that the most relevant results of \cref{app:characterise} apply also to the register models introduced above.
First observe that \cref{lem:1,lem:2} apply equally well, with the same proofs, when reading $M_A$, $M_I$ or $M_S$ for $M_{\frRep}$, since the only difference between these models is when start actions are enabled, and these lemmas refer only to order and finish actions.

\begin{lem}\rm\label{lem:register enablings}
Let $R_r$ be the LTS of any full-read register $r$, either defined as in one of the
\cref{fig:fullread-procsafe,fig:fullread-procregular,fig:fullread-procatomic}, or using one of the three modified register models defined at the beginning of this appendix.
Then any state $s_r$ of $R_r$ enables either
(i) $\startread[\tid,\rid]$ and $\startwrite[\tid,\rid]{d}$ for all $t\in\TID$ and all $d \in \Data$, or
(ii) $\orderread[\tid,\rid]$ or $\orderwrite[\tid,\rid]$ for some $t\in\TID$, or
(iii) $\finishwrite[\tid,\rid]$ or $\finishread[\tid,\rid]{d}$ for some $t\in\TID$ and $d \in \Data$.
\end{lem}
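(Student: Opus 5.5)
The plan is a case analysis on the state $s_r$ of $R_r$, driven by the two sets $\readers{s_r}$ and $\writers{s_r}$. The proof only reads off the enabling conditions of the summands in \cref{fig:fullread-procsafe,fig:fullread-procregular,fig:fullread-procatomic} and in the three modified atomic models. The split is whether $\readers{s_r}\cup\writers{s_r}$ is empty.

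\emph{Case 1: $\readers{s_r}\cup\writers{s_r}=\emptyset$.} Every start summand is then enabled for every $t\in\TID$ and $d\in\Data$. In the unmodified models the start guard is $\tid\notin(\readers{s_r}\cup\writers{s_r})$, which holds trivially. In the modified models the guards are $(\readers{s_r}\cup\writers{s_r})=\emptyset$, or $\tid\notin\readers{s_r}\wedge\writers{s_r}=\emptyset$, or $\writers{s_r}=\emptyset$; each of these is implied by emptiness of both sets. So option (i) holds.

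\emph{Case 2: some $t\in\readers{s_r}\cup\writers{s_r}$.} This case does not involve the start guards at all, which is why it applies uniformly to all six register models: they differ only in their start summands, and the order and finish summands are copied unchanged.
\begin{itemize}
\item If $t\in\writers{s_r}$, the safe model enables $\finishwrite[\tid,\rid]$ unconditionally, because its two finish-write summands are guarded by $\overlap{s_r,\tid}$ and $\neg\overlap{s_r,\tid}$. In the regular and atomic models (including the modified ones), either $t\in\pending{s_r}$ and $\orderwrite[\tid,\rid]$ is enabled, or $t\notin\pending{s_r}$ and $\finishwrite[\tid,\rid]$ is enabled.
\item If $t\in\readers{s_r}$, the safe model similarly enables some $\finishread[\tid,\rid]{d}$, splitting on $\overlap{s_r,\tid}$. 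The atomic models enable $\orderread[\tid,\rid]$ if $t\in\pending{s_r}$, and otherwise $\finishread[\tid,\rid]{\vals{s_r,\tid}}$.
\end{itemize}
Each of these gives option (ii) or (iii).

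The main obstacle is the regular model with $t\in\readers{s_r}$. There $\finishread[\tid,\rid]{d}$ is enabled only for $d\in\posval{s_r,\tid}$, so I must show this set is nonempty. I would prove an invariant by inspecting the update functions in \cref{obs:access-fullread}: whenever $t\in\readers{s}$, we have $\posval{s,\tid}\neq\emptyset$. When $t$ enters $\readerssym$ via $\startread[\tid,\rid]$, the update $\usrsym$ puts $\trueval{s}$ into the set. Until $t$ leaves $\readerssym$, the only other update to $\posval{\cdot,\tid}$ is $\uswsym$ by another thread, which only adds elements.

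Strictly speaking this invariant holds only for reachable states. If the lemma is meant for arbitrary states of the LTS, I would either restrict the claim to reachable states, matching how it is used alongside \cref{lem:1,lem:2}, or note that the status object can be taken to contain only reachable configurations.
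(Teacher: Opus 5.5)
Your proof is correct and follows the paper's own argument: the same split on whether $\readers{s_r}\cup\writers{s_r}$ is empty, with the order and finish summands (unchanged in the three modified atomic models) covering the nonempty case. The one nontrivial point is the invariant that $\posval{s,\tid}\neq\emptyset$ whenever $\tid\in\readers{s}$ for regular registers; the paper invokes it as ``easily checked'', and your verification of it, together with your caveat that it holds only for reachable states, is appropriate.
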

\begin{proof}
In case  $\readers{s_r} \cup \writers{s_r} = \emptyset$, option (i) applies.
This can be seen from \cref{fig:fullread-procsafe,fig:fullread-procregular,fig:fullread-procatomic}, and from the description of the three register models defined at the beginning of this appendix.  Likewise, in case $\readers{s_r} \cup \writers{s_r} \neq \emptyset$, one of (ii) or (iii) must apply.  (In the case of regular registers, we use the easily checked invariant that if $t \in \readers{s}$ then $\posval{s,t}\neq\emptyset$.)
\end{proof}

\noindent We further refine this for the model $M_S$.

\begin{lem}\rm\label{lem:register enablings S}
Let $R_r$ be the LTS of any register $r$, as defined at the beginning of this appendix for the register model with blocking writes and non-blocking reads, i.e.\ $M_S$.
Then any state $s_r$ of $R_r$ enables either
(i) $\startwrite[\tid,\rid]{d}$ for all $t\in\TID$ and all $d \in \Data$, or
(ii) $\orderwrite[\tid,\rid]$ for some $t\in\TID$, or
(iii) $\finishwrite[\tid,\rid]$ for some $t\in\TID$.
If case (i) applies then, for each $t'\in\TID$, $s_r$ enables either
(a) $\startread[\tid',\rid]$, or
(b) $\orderread[\tid',\rid]$, or
(c) $\finishread[\tid',\rid]{d}$ for some $d \in \Data$.
\end{lem}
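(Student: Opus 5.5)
The proof will be a direct case analysis on the status $s_r$, following the pattern of the proof of \cref{lem:register enablings}, but splitting first on $\writers{s_r}$ rather than on $\readers{s_r}\cup\writers{s_r}$. Recall that the $M_S$ variant of \cref{fig:fullread-procatomic} changes only the first and fourth summands. A start read is now guarded by $\tid\notin\readers{s}\wedge\writers{s}=\emptyset$, and a start write by $\writers{s}=\emptyset$. The order and finish summands are unchanged.

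\emph{Case $\writers{s_r}=\emptyset$.} The guard of the start-write summand holds for every $\tid\in\TID$ and $d\in\Data$, so option (i) of the first claim applies. We must then check the refinement for an arbitrary $\tid'\in\TID$. If $\tid'\notin\readers{s_r}$, the start-read guard $\tid'\notin\readers{s_r}\wedge\writers{s_r}=\emptyset$ holds, which gives (a). If $\tid'\in\readers{s_r}$, then either $\tid'\in\pending{s_r}$ or not. In the first situation the order-read summand, with guard $\tid'\in\readers{s}\wedge\tid'\in\pending{s}$, is enabled, giving (b). In the second, the finish-read summand is enabled with $d=\vals{s_r,\tid'}$, giving (c).

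\emph{Case $\writers{s_r}\neq\emptyset$.} Pick $\tid\in\writers{s_r}$. If $\tid\in\pending{s_r}$, then $\orderwrite[\tid,\rid]$ is enabled, which is (ii). Otherwise $\finishwrite[\tid,\rid]$ is enabled, which is (iii). Only the order-write and finish-write guards are involved here, and these are unchanged from \cref{fig:fullread-procatomic}.

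The only step requiring any care is making sure no invariant is needed. Unlike the regular case of \cref{lem:register enablings}, which needs $\posval{s,t}\neq\emptyset$, the atomic finish read always has a well-defined value $\vals{s_r,\tid'}$. The order and finish guards depend only on membership in $\readers{}$, $\writers{}$ and $\pending{}$, so each subcase is exhaustive for every state, reachable or not. I expect no real obstacle; the main risk is misremembering which guards the $M_S$ modification altered, so I would restate the two modified summands explicitly before doing the case split.
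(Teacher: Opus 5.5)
Your proof is correct and follows the same route as the paper. The paper splits on whether $\writers{s_r}=\emptyset$, and in that case on whether $\tid'\in\readers{s_r}$; your further split on membership in $\pendingsym$ just spells out the guard check that the paper leaves to ``the description of this register model''.
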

\begin{proof}
In case $\writers{s_r} = \emptyset$, option (i) applies.
This can be seen from the description of this register model.  Likewise, in case $\writers{s_r} \neq \emptyset$, one of (ii) or (iii) must apply.

Assuming $\writers{s_r} \mathbin= \emptyset$, if $t'\mathbin{\notin}\readers{s_r}$, option (a) applies and otherwise (b) or (c) apply.\!\!\!
\end{proof}

\begin{prop}\rm\label{pr:just paths}
The characterisations of the $\block$-$\conc_C$-\hyperlink{just}{just} paths of $M_{\frRep}$ from \cref{pr:A-just paths,pr:I-just paths,pr:S-just paths} apply equally well to $M_C$.
\end{prop}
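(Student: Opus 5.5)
The plan is to rerun the proofs of \cref{pr:A-just paths,pr:I-just paths,pr:S-just paths}, which themselves rerun the proof of \cref{pr:T-just paths}, with $M_C$ in place of $M_{\frRep}$. At each step I check which properties of $M_{\frRep}$ were actually used and supply the $M_C$ counterpart. Only three ingredients are used. First, \cref{lem:1,lem:2}; as already noted, these carry over unchanged to $M_C$, because they concern only order and finish actions. Second, the case split on what a register state enables; for $M_{\frRep}$ this was read off from \cref{app:registers}, and for $M_C$ it is now supplied by \cref{lem:register enablings}, refined by \cref{lem:register enablings S} when $C=S$. Third, the shape of $\nconc_C$, which is unchanged because $M_C$ has exactly the same actions as $M_{\frRep}$.

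For the direction ``not just $\Rightarrow$ a thread-enabled action violates (a)/(b)/(c)'', nothing about enabledness in the register is needed beyond \cref{lem:1}. Suppose a suffix $\pi'$ starts in a state $s$ enabling $a\in\nonblock$, with no $b$ in $\pi'$ such that $a\nconc_C b$. Then $\pi'$ has no actions of $\thrmap{a}$, so $\finish_t(\pi)=s_{\idx{t}}$. If $a$ is an order action, I pass to the matching finish action via \cref{lem:1}; otherwise $a$ itself is thread-enabled. The side conditions (b)/(c) on occurrences of $\textit{start}(r)$ actions then follow from the absence of interfering actions in $\pi'$, exactly as before. This direction is verbatim.

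The converse direction is the main obstacle. Suppose $\pi$ thread-enables some $a\in\nonblock$ with $\thrmap{a}=t$ and $\regmap{a}=r$, and the relevant side condition fails. Choose a suffix $\pi'$ with no actions of $t$ and, where the characterisation requires it, no actions in $\textit{start}(r)$ (for $A$), or no start writes on $r$ (for $S$, and for reads under $I$). I then split on the state $s_{\idx{r}}$ at the start of $\pi'$ using \cref{lem:register enablings}.
\begin{itemize}
\item In cases (ii) and (iii), some order or finish action $c$ of some thread $t''$ on $r$ is enabled. For this $c$ I want a just-violation, but $t''$ need not equal $t$, unlike in \cref{pr:T-just paths}. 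The first thing I would try is to pick $\pi'$ further along, so that it also contains no actions of $t''$; the enabled $c$ then persists and is interfered with by nothing. If $t''$ acts infinitely often, I instead argue that each pending operation on $r$ of a thread acting infinitely often eventually completes. Once no operation by other threads is pending on $r$, we are in case (i). The trick is to choose $\pi'$ starting at a point where every thread with a pending operation on $r$ is either frozen forever or has completed it.
\item In case (i), the start action $a$ (or, for $C=S$ and a read, the relevant start read) is enabled in $s_{\idx{r}}$, and by thread-consistency of $t$'s component it is also enabled in $s$. Since $\pi'$ contains no interfering action, $\pi$ is not just.
\end{itemize}
For $C=S$, I use \cref{lem:register enablings S}: with no start writes on $r$ in $\pi'$, $\writers{}$ eventually empties. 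At that point $t$'s start read or start write is enabled in the register, since reads never block in $M_S$, and the same argument applies.
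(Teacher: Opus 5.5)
Your overall plan matches the paper's: rerun the proofs of \cref{pr:A-just paths,pr:I-just paths,pr:S-just paths} using \cref{lem:1,lem:2} and the register-enabling lemmas. Your direction from unjustness to a violating thread-enabled action is fine. The paper gets it even more cheaply: $M_C$ has the states of $M_{\frRep}$ and a subset of its transitions, so a $\block$-$\conc_C$-unjust path of $M_C$ is already unjust in $M_{\frRep}$.

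The weak point is cases (ii)/(iii) in the other direction. The idea you are missing is that the actions of a thread $t'$ on $r$ occur strictly in the order start--order--finish. Hence finitely many $\textit{start}(r)$ actions in $\pi$ means finitely many actions $b$ with $\regmap{b}=r$ at all; for $S$, finitely many start writes means finitely many write actions on $r$. Choose $\pi'$ free of these. Then any thread $t'$ whose order or finish action is enabled by $s_{\idx{r}}$ can, by \cref{lem:1} and the thread requirements, only perform the matching finish action, so it does nothing in $\pi'$. Its enabled order/finish action, with which only actions of $t'$ interfere, witnesses unjustness at once. Your ``eventually completes or is frozen forever'' argument is a more laborious substitute for this. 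It does go through for $C=A$, for start writes under $C=I$, and for $C=S$ once restricted to pending writes via \cref{lem:register enablings S}.

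It fails, however, for $C=I$ with $a=\startread[t,r]$ (condition c), which you route through \cref{lem:register enablings}. There $\pi'$ excludes only start writes on $r$, so other threads may start reads on $r$ infinitely often. These reads can overlap so that $\readers{s'_{\idx{r}}}\neq\emptyset$ in every state $s'$ of $\pi'$. Then case (i) of \cref{lem:register enablings} is never reached, and the reader whose order or finish read you pick in case (ii)/(iii) is not frozen.

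What is needed is the observation behind the paper's $S$ case, which the paper reuses for condition c) of $I$. In $M_I$, as in $M_S$, the register enables $\startread[t,r]$ as soon as $\writers{s_{\idx{r}}}=\emptyset$ and $t\notin\readers{s_{\idx{r}}}$. The latter holds because $s_{\idx{t}}$ enables $\startread[t,r]$: a pending read of $t$ would, by \cref{lem:1} and the thread requirements, force $s_{\idx{t}}$ to enable only finish reads. So only write operations on $r$ need to be drained or frozen.

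Two smaller slips. ``Reads never block in $M_S$'' is not the operative reason there; $t\notin\readers{s_{\idx{r}}}$ is. And the enabledness of $a$ in $s$ comes from $s_{\idx{t}}=\finish_t(\pi)$ together with the definition of parallel composition, not from thread-consistency, which $M_C$ lacks.
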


\begin{proof}
Below we give a proof for the case $C=A$, most of which applies to all three cases $C \in \{A,I,S\}$. The parts that are specific to the case $C=A$ are coloured {\color{highlightColour}{\colourname}}, and afterwards we will give the case $C=S$ with the differences between it and $C=A$ similarly highlighted. The case $C=I$ is treated at the end.

We must prove that a path $\pi$ starting in the initial state of $M_A$, is $\block$-$\conc_A$-just if, and only if, a) $\pi$ thread-enables no actions $a \in \nonblock$ other than actions from $\textit{start}(r)$ for some $r \in \RID$, and b) if, for some $r \in \RID$, an action $a \in \textit{start}(r)$ is thread-enabled by $\pi$, then $\pi$ contains infinitely many occurrences of actions $b \in \textit{start}(r)$.

Suppose $\pi$ thread-enables an action $a \in \nonblock$, say with $t= \thrmap{a}$ and $r= \regmap{a}$, such that {\color{highlightColour}if $a \in \textit{start}(r)$ then $\pi$ contains only finitely many actions $b\in\textit{start}(r)$. In the latter case $\pi$ contains only finitely many actions $b$ with $\regmap{b}=r$},
because actions $c$ with $\thrmap{c}=t'$ and $\regmap{c}=r$ must occur strictly in the order $\startwrite[\tid',\rid]{d}$ -- $\orderwrite[\tid',\rid]$ -- $\finishwrite[\tid',\rid]$.

We have to show that $\pi$ is \hyperlink{just}{not $\block$-$\conc_{C}$-just}. Let $\pi'$ be a suffix of $\pi$ in which no actions $b$ with $\thrmap{b}=t$ occur; {\color{highlightColour}in case $a \in \textit{start}(r)$ we moreover choose $\pi'$ such that it contains no actions $b$ with $\regmap{b}=r$.} Let $s$ be the initial state of $\pi'$. So $s_{\idx{t}} = \finish_t(\pi)$ enables $a$.

In case $r \mathbin=\undefsymb$, as $s_{\idx{t}}$ enables $a$ and $a$ does not require synchronisation with any register, also $s$ enables $a$. As $\pi'$ does not contain actions $b$ with $a \nconc_{C} b$, the path $\pi$ is \hyperlink{just}{not just}.

So assume that $r \in \RID$. We proceed with a case distinction on the action $a$, which must be of the form $\startread[\tid,\rid]$, $\startwrite[\tid,\rid]{d}$, $\finishread[\tid,\rid]{d}$ or $\finishwrite[\tid,\rid]$, since it is an action that occurs in a thread LTS and $r \neq \undefsymb$.

First assume that $a = \finishwrite[\tid,\rid]$ or $a= \finishread[\tid,\rid]{d}$ for some $d \in \Data$. By \cref{lem:2}, either $\orderwrite[\tid,\rid]$ or $\orderread[\tid,\rid]$ or $\finishwrite[\tid,\rid]$ or $\finishread[\tid,\rid]{d'}$ for some $d'\in\Data$ is enabled by state $s_{\idx{r}}$ in $R_r$.
In case $s_{\idx{r}}$ enables $c=\orderread[\tid,\rid]$ or $c=\orderwrite[\tid,\rid]$, also $s$ enables $c$, since $c$ does not require synchronisation with thread $t$. As $\pi'$ contains no actions $b$ with $c \nconc_{C} b$, using that $\thrmap{c}=t$, the path $\pi$ is \hyperlink{just}{not just}.
In case $s_{\idx{r}}$ enables an action $c=\finishread[\tid,\rid]{d}$ or $c=\finishwrite[\tid,\rid]$, by \cref{lem:1} also state $s$ enables $c$. As $\pi'$ contains no actions $b$ with $c \nconc_C b$, the path $\pi$ is \hyperlink{just}{not just}.

Thus we may restrict attention to the case that $a = \startread[\tid,\rid]$ or $a = \startwrite[\tid,\rid]{d}$ for some $d \in \Data$, that is, $a\in \textit{start}(r)$. So henceforth we may use that $\pi'$ {\color{highlightColour}contains no actions $b$ with $\regmap{b}=r$.}

{\color{highlightColour}We proceed by making a case distinction on the three possibilities described by \cref{lem:register enablings} for actions enabled by $s_{\idx{r}}$.
In case (i) of \cref{lem:register enablings}, $s_{\idx{r}}$ enables $\startread[\tid',\rid]$ and $\startwrite[\tid',\rid]{d'}$ for all $t'\in\TID$ and all $d' \in \Data$, and thus in particular the action $a$.} Consequently, also  $s$ enables $a$. As $\pi'$ does not contain actions $b$ with $\thrmap{b}=t$ {\color{highlightColour}or actions $b \in \textit{start}(r)$}, it contains no action $b$ with $a \nconc_{C} b$. Hence, the path $\pi$ is \hyperlink{just}{not just}.

In case (ii) $s_{\idx{r}}$, and hence also $s$, enables {\color{highlightColour}$\orderread[\tid',\rid]$ or $\orderwrite[\tid',\rid]$} for some $t'\mathbin\in\TID$. Thus, by \cref{lem:1}, $s_{\idx{t'}}$ enables an action {\color{highlightColour}$\finishwrite[\tid',\rid]$ or $\finishread[\tid',\rid]{d}$}. By the assumptions on thread behaviour from \cref{sec:threads} (\cref{thread-prop-2,thread-prop-3,thread-prop-4}), this implies that $s_{\idx{t'}}$ only enables {\color{highlightColour}actions $b$ with $\regmap{b}=r$}. Since $\pi'$ does not contain such $b$, it cannot contain any action $c$ with $\thrmap{c}=t'$ either.
Consequently, $\pi$ is not just.

In case (iii) $s_{\idx{r}}$ enables an action {\color{highlightColour}$c=\finishwrite[\tid',\rid]$ or $c=\finishread[\tid',\rid]{d}$ for some $t'\in\TID$ and $d \in \Data$}.
Thus, by \cref{lem:1}, $c$ is enabled by $s_{\idx{t'}}$ as well as by $s$.
The rest of the argument proceeds just as for case (ii) above.

For the other direction, note that $M_C$ has the same states as $M_{\frRep}$ and can be obtained from $M_{\frRep}$ by leaving out some transitions. This implies that any path $\pi$ in $M_C$ is also a path in $M_{\frRep}$, and moreover, if $\pi$ is 
\hyperlink{just}{$\block$-$\conc_{C}$-unjust} in $M_C$ then it certainly is 
\hyperlink{just}{$\block$-$\conc_{C}$-unjust} in $M_{\frRep}$. Using this, the reverse direction of this proof is direct consequence of \cref{pr:A-just paths,pr:I-just paths,pr:S-just paths}.
\vspace{2ex}

\noindent
In the case $C = S$, we must prove that a path $\pi$ starting in the initial state of $M_S$, is $\block$-$\conc_S$-just if, and only if, a) $\pi$ thread-enables no actions $a \in \nonblock$ other than actions from $\textit{start}(r)$ for some $r \in \RID$, and b) if an action $a \in \textit{start}(r)$ is thread-enabled by $\pi$, then $\pi$ contains infinitely many occurrences of actions $b$ of the form $\startwrite[t',r]{d}$ for some $t' \in \TID$ and $d' \in \Data$. 
The proof is identical to the $C=A$ case, save for the {\colourname} phrases.

Suppose $\pi$ thread-enables an action $a \in \nonblock$, say with $t= \thrmap{a}$ and $r= \regmap{a}$, such that {\color{highlightColour}if $a \in \textit{start}(r)$ then $\pi$ contains only finitely many actions $b$ of the form $\startwrite[\tid',\rid]{d}$ for some $\tid'\in\TID$ and $d'\in\Data$. In the latter case $\pi$ also contains only finitely many actions $b$ of the form $\orderwrite[\tid',\rid]$ or $\finishwrite[\tid',\rid]$},
because actions $c$ with $\thrmap{c}=t'$ and $\regmap{c}=r$ must occur strictly in the order $\startwrite[\tid',\rid]{d}$ -- $\orderwrite[\tid',\rid]$ -- $\finishwrite[\tid',\rid]$.

We have to show that $\pi$ is \hyperlink{just}{not $\block$-$\conc_{C}$-just}. Let $\pi'$ be a suffix of $\pi$ in which no actions $b$ with $\thrmap{b}=t$ occur; {\color{highlightColour}in case $a \in \textit{start}(r)$ we moreover choose $\pi'$ such that it contains no actions $b$ of the form $\startwrite[\tid',\rid]{d}$, $\orderwrite[\tid',\rid]$ or $\finishwrite[\tid',\rid]$ for some $\tid'\in\TID$ and $d'\in\Data$.} Let $s$ be the initial state of $\pi'$. So $s_{\idx{t}} = \finish_t(\pi)$ enables $a$.

In case $r \mathbin=\undefsymb$, as $s_{\idx{t}}$ enables $a$ and $a$ does not require synchronisation with any register, also $s$ enables $a$. As $\pi'$ does not contain actions $b$ with $a \nconc_{C} b$, the path $\pi$ is \hyperlink{just}{not just}.

So assume that $r \in \RID$. We proceed with a case distinction on the action $a$, which must be of the form $\startread[\tid,\rid]$, $\startwrite[\tid,\rid]{d}$, $\finishread[\tid,\rid]{d}$ or $\finishwrite[\tid,\rid]$, since it is an action in a thread LTS and $r \neq \undefsymb$.

First assume that $a = \finishwrite[\tid,\rid]$ or $a= \finishread[\tid,\rid]{d}$ for some $d \in \Data$. By \cref{lem:2}, either $\orderwrite[\tid,\rid]$ or $\orderread[\tid,\rid]$ or $\finishwrite[\tid,\rid]$ or $\finishread[\tid,\rid]{d'}$ for some $d'\in\Data$ is enabled by state $s_{\idx{r}}$ in $R_r$.
In case $s_{\idx{r}}$ enables $c=\orderread[\tid,\rid]$ or $c=\orderwrite[\tid,\rid]$, also $s$ enables $c$, since $c$ does not require synchronisation with thread $t$. As $\pi'$ contains no actions $b$ with $c \nconc_{C} b$, using that $\thrmap{c}=t$, the path $\pi$ is \hyperlink{just}{not just}.
In case $s_{\idx{r}}$ enables an action $c=\finishread[\tid,\rid]{d}$ or $c=\finishwrite[\tid,\rid]$, by \cref{lem:1} also state $s$ enables $c$. As $\pi'$ contains no actions $b$ with $c \nconc_C b$, the path $\pi$ is \hyperlink{just}{not just}.

Thus we may restrict attention to the case that $a = \startread[\tid,\rid]$ or $a = \startwrite[\tid,\rid]{d}$ for some $d \in \Data$, that is, $a\in \textit{start}(r)$. So henceforth we may use that $\pi'$ {\color{highlightColour}contains no actions $b$ of the form $\startwrite[\tid',\rid]{d}$, $\orderwrite[\tid',\rid]$ or $\finishwrite[\tid',\rid]$ for some $\tid'\in\TID$ and $d'\in\Data$.}

{\color{highlightColour}We proceed by making a case distinction on the three possibilities described by \cref{lem:register enablings S} for actions enabled by $s_{\idx{r}}$.
In case (i) of \cref{lem:register enablings S}, we make a further case distinction, depending on whether $a = \startread[\tid,\rid]$ or $a = \startwrite[\tid,\rid]{d}$. In the latter case,
$s_{\idx{r}}$ enables $\startwrite[\tid',\rid]{d'}$ for all $t'\in\TID$ and all $d' \in \Data$, and thus in particular the action $a$. In the former case, options (b) and (c) for $t'=t$ of \cref{lem:register enablings S} are ruled out, because in those cases \cref{lem:1} would imply that an action $\finishread[\tid,\rid]{d}$ would be enabled by $s_\idx{t}$, but by assumption on thread behaviour (see \cref{sec:threads}, specifically \cref{thread-prop-2,thread-prop-4}) this cannot happen in a state enabling $\startread[\tid,\rid]$. Thus (a) applies, and also $s_{\idx{r}}$ enables action $a$.} Consequently, also  $s$ enables $a$. As $\pi'$ does not contain actions $b$ with $\thrmap{b}=t$ {\color{highlightColour}or actions $b\in\textit{start}(r)$ with $\iswrite{b}$}, it contains no action $b$ with $a \nconc_{C} b$. Hence, the path $\pi$ is \hyperlink{just}{not just}.

In case (ii) $s_{\idx{r}}$, and hence also $s$, enables {\color{highlightColour}$\orderwrite[\tid',\rid]$} for some $t'\mathbin\in\TID$. Thus, by \cref{lem:1}, $s_{\idx{t'}}$ enables an action {\color{highlightColour}$\finishwrite[\tid',\rid]$}. By the assumptions of \cref{sec:threads}, this implies that $s_{\idx{t'}}$ only enables {\color{highlightColour}the action $b=\finishwrite[\tid',\rid]$}. Since $\pi'$ does not contain such $b$, it cannot contain any action $c$ with $\thrmap{c}=t'$ either.
Consequently, $\pi$ is not just.

In case (iii) $s_{\idx{r}}$ enables an action {\color{highlightColour}$c=\finishwrite[\tid',\rid]$ for some $t'\in\TID$}.
Thus, by \cref{lem:1}, $c$ is enabled by $s_{\idx{t'}}$ as well as by $s$.
The rest of the argument proceeds just as for case (ii) above.

For the other direction, note that $M_C$ has the same states as $M_{\frRep}$ and can be obtained from $M_{\frRep}$ by leaving out some transitions. This implies that any path $\pi$ in $M_C$ is also a path in $M_{\frRep}$, and moreover, if $\pi$ is 
\hyperlink{just}{$\block$-$\conc_{C}$-unjust} in $M_C$ then it certainly is 
\hyperlink{just}{$\block$-$\conc_{C}$-unjust} in $M_{\frRep}$. Using this, the reverse direction of this proof is direct consequence of \cref{pr:A-just paths,pr:I-just paths,pr:S-just paths}.
\vspace{2ex}

\noindent
Finally, for the case $C=I$ we must prove that a path $\pi$ starting in the initial state of $M_I$ is $\block$-$\conc_I$-just if, and only if, a) $\pi$ thread-enables no actions $a \in \nonblock$ other than actions from $\textit{start}(r)$ for some $r \in \RID$, and b) if an action $\startwrite[\tid,\rid]{\data}$ is thread-enabled by $\pi$, then $\pi$ contains infinitely many occurrences of actions $b \in \textit{start}(r)$, and c) if an action $\startread[\tid,\rid]$ is thread-enabled by $\pi$, then $\pi$ contains infinitely many occurrences of actions $b$ of the form $\startwrite[\tidtwo,\rid]{\data}$.
Let $\pi$ be a path in $M_I$ that thread-enables an action $ a\in \nonblock$ such that one of these three conditions is violated. In the case that the violated condition is a) or b), the proof proceeds just as in the case $C=A$; in the case it is condition c) the proof proceeds just as in the case $C=S$. The reverse direction goes exactly as in the case $C=A$.
\end{proof}

\begin{lem}\rm\label{lem:MCtoM}
    Let $C \in \{A, I, S\}$. If $\pi$ is a $C$-complete path in $M_C$ starting in the initial state of $M_C$, then it is also a $C$-complete path in $M_{\frRep}$ starting in the initial state of $M_{\frRep}$.
\end{lem}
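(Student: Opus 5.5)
The plan is to split the claim into two parts: that $\pi$ is a path of $M_{\frRep}$, and that it is $C$-complete there.

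For the first part, we use the construction of $M_C$ from $M_{\frRep}$ at the start of this appendix. Only the enabling conditions of the summands for $\startread[\tid,\rid]$ and $\startwrite[\tid,\rid]{\data}$ of \cref{fig:fullread-procatomic} were strengthened. The update functions and all other summands are unchanged. Hence $M_C$ has the same states and initial state as $M_{\frRep}$, and its transition relation is a subset of that of $M_{\frRep}$, as already observed at the end of the proof of \cref{pr:just paths}. So every path $\pi$ of $M_C$ from the initial state is also a path of $M_{\frRep}$ from the initial state. It also has the same sequence of states and actions in both models.

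For completeness we use the characterisations. By \cref{pr:just paths}, $\pi$ being $\block$-$\conc_C$-just in $M_C$ is equivalent to conditions a)--b) (for $C=A,S$) or a)--c) (for $C=I$) of \cref{pr:A-just paths,pr:S-just paths,pr:I-just paths}. In $M_{\frRep}$ the same propositions apply directly. So it suffices to show that these conditions mean the same thing when read in $M_C$ and in $M_{\frRep}$.

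The conditions refer to only two things. The first is which actions are thread-enabled by $\pi$ (\cref{def:thread-enabled}). The second is how often certain actions occur on $\pi$. The occurrence counts are identical, because $\pi$ is literally the same sequence. Thread-enabledness depends only on the occurrences of actions $b$ with $\thrmap{b}=t$ on $\pi$, on the state $\finish_t(\pi)$, and on the transitions of the thread LTS $T_t$. All three are the same in both models, since $M_C$ and $M_{\frRep}$ use the same thread LTSs and differ only in the register components. Thus the conditions hold for $\pi$ in $M_{\frRep}$ exactly when they hold in $M_C$, and $\pi$ is $C$-complete in $M_{\frRep}$.

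The main obstacle is making sure \cref{pr:just paths} really transfers the exact characterisation, and that the characterisation in $M_{\frRep}$ carries no hidden dependence on register states. This is the point of having phrased thread-enabledness purely in terms of $T_t$. If a direct argument is preferred, an alternative route is available: take a suffix $\pi'$ witnessing unjustness in $M_{\frRep}$, so some non-blockable $a$ is enabled at its start with no interfering $b$ on $\pi'$. Then, via \cref{lem:1}, the thread LTS enables $a$ or a corresponding finish action. Finally, using \cref{lem:register enablings} or \cref{lem:register enablings S}, transfer that enabling to $M_C$, contradicting justness there. I expect the characterisation route to be shorter.
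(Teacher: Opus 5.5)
Your proposal is correct and follows the paper's proof. Transition inclusion gives that $\pi$ is a path of $M_{\frRep}$, and \cref{pr:just paths} transfers completeness; your explicit observation that thread-enabledness and occurrence counts depend only on the shared state/action sequence and the thread LTSs is the justification the paper leaves implicit.

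One small precision, which the paper also glosses over, concerns $M_S$. There the modified start-write guard $\writers{s}=\emptyset$ does not by itself imply $\tid\notin\readers{s}$. So the inclusion of transitions holds only on reachable states, where \cref{thread-prop-2} stops a reading thread from offering a start write, rather than literally at the register level.
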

\begin{proof}
    Let $\pi$ be a $C$-complete path in $M_C$ starting in its initial state. 
    Since for all three variants, $M_C$ has stricter conditions for actions being enabled than $M_{\frRep}$ does, $\pi$ is guaranteed to exist in $M_{\frRep}$. It remains to show that $\pi$ is $C$-complete in $M_{\frRep}$. This, however, is an immediate consequence of \cref{pr:just paths}, from which it follows that a path starting in the initial state of $M_C$ is $\block$-$\conc_C$-just exactly when that same path starting in the initial state of $M_{\frRep}$ is.
\end{proof}

We also wish to prove the other direction. However, it is not necessarily the case that any $C$-complete path in $M_{\frRep}$ is also a $C$-complete path in $M_C$, if only because not every path in $M_{\frRep}$ exists in $M_C$. Thus, we need a way to convert a path in $M_{\frRep}$ into a path that is guaranteed to exist in $M_C$.
We define a rewrite relation on paths, where $\pi \leadsto \pi'$ denotes that $\pi$ is rewritten into $\pi'$. Namely $\pi \leadsto \pi'$ if, and only if, there exists paths $\sigma$ and $\rho$, states $s$ and $s'$, thread $t\in\TID$, and an action $b$ with $\thrmap{b}\neq \tid$, such that
\begin{itemize}
\item either $\pi = \sigma ~\startread[\tid,\rid]~s~b~\rho$ and $\pi' = \sigma~b~s'~\startread[\tid,\rid]~\rho$,
\item or $\pi = \sigma ~\startwrite[\tid,\rid]{d}~s~b~\rho$ and $\pi' = \sigma~b~s'~\startwrite[\tid,\rid]{d}~\rho$,
\item or $\pi = \sigma ~b~s~\finishread[\tid,\rid]{d}~\rho$ and $\pi' = \sigma~\finishread[\tid,\rid]{d}~s'~b~\rho$,
\item or $\pi = \sigma ~b~s~\finishwrite[\tid,\rid]~\rho$ and $\pi' = \sigma~\finishwrite[\tid,\rid]~s'~b~\rho$.
\end{itemize}
Thus, this rewrite relation moves any start read or start write action forwards by swapping it with an action $b$ of another thread, and in the same manner moves any finish read or finish write action backwards. 
\cref{prop:swapping} together with \cref{obs:trivial-cong} show that whenever a path $\pi$ of any of the above forms exists in $M_{\frRep}$, the corresponding $\pi'$ also exists in $M_{\frRep}$.\footnote{For convenience, we here identify paths whose states differ merely up to $\cong$, as defined in \cref{app:additionalresults}.}
Since paths may be infinite, this rewrite system need not terminate, but in the limit it converts any path $\pi$ into a normal form $\norm{\pi}$ in which each action $\orderread[\tid,\rid]$ is immediately preceded by the corresponding $\startread[\tid,\rid]$ and immediately followed by the corresponding $\finishread[\tid,\rid]{d}$, and likewise for write actions. To convergence to this limit, one should give priority to rewrite steps that apply closer to the beginning of the path.
This normal form also exists in $M_A$, $M_I$ and $M_S$.

\begin{lem}\rm\label{lem:MtoMC}
    Let $C \in \{A, I, S\}$. If $\pi$ is a $C$-complete path in $M_{\frRep}$ starting in the initial state of $M_{\frRep}$, then $\norm{\pi}$ is a $C$-complete path in $M_C$ starting in the initial state of $M_C$.
\end{lem}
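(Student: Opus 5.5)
Three things must be shown: that $\norm{\pi}$ is a path of $M_{\frRep}$ starting in the initial state, that it is a path of $M_C$, and that it is $\block$-$\conc_C$-just there.

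\textbf{Step 1: $\norm{\pi}$ is a path of $M_{\frRep}$.} Each rewrite step $\leadsto$ yields a path of $M_{\frRep}$ up to $\cong$, by \cref{prop:swapping} and \cref{obs:trivial-cong}. The excluded swaps (i)--(iii) of order actions never arise, since $\leadsto$ only moves start and finish actions. For the limit I fix the priority order (leftmost rewrite first). Then every finite prefix of $\norm{\pi}$ stabilises after finitely many steps, because each operation's start and finish travel only a bounded distance towards its order action. Hence the limit is a well-defined path of $M_{\frRep}$. The initial state is unchanged, since rewrites never act before the first action.

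\textbf{Step 2: $\norm{\pi}$ is a path of $M_C$.} In the normal form, every completed operation on register $r$ occupies a contiguous block: start, then order, then finish. I argue that at every start action the extra guard of $M_C$ holds. If $\readers{s}\cup\writers{s}\neq\emptyset$ at that point (or only $\writers{s}\neq\emptyset$, for $S$ and $I$), then another operation would be open there. That operation's block would then be interrupted by the new start, contradicting normality.

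One caveat concerns operations that never finish in $\pi$: a start with no matching order, or an order with no matching finish. These are pushed forward indefinitely and vanish in the limit, or are placed at the end. I must check that such leftover open operations cannot block a later start. I expect this to follow from completeness of $\pi$: by \cref{pr:T-just paths}-style reasoning, a pending order or finish action that never occurs makes $\pi$ unjust, since order and finish actions are not blockable and only same-thread actions interfere with them. Hence every started operation in the $C$-complete path $\pi$ completes, and the normal form is fully blockwise.

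\textbf{Step 3: justness in $M_C$.} By \cref{pr:just paths}, justness in $M_C$ has the same characterisation (\cref{pr:A-just paths,pr:I-just paths,pr:S-just paths}) as in $M_{\frRep}$. That characterisation depends only on the thread-enabled actions and on the occurrence counts $\textit{occ}$. So it suffices to show that $\pi$ and $\norm{\pi}$ agree on both. Occurrence counts are preserved because rewriting only permutes actions; here Step 2 ensures nothing is dropped in the limit. Each thread's projection $\proj_t$ is also preserved, since swaps only involve actions of different threads. Therefore $\finish_t$ and thread-enabledness coincide for $\pi$ and $\norm{\pi}$.

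\textbf{Main obstacle.} The main obstacle is making the infinite limit rigorous. This means showing that each action is moved only finitely often, so that the limit is a genuine permutation of $\pi$ with the same projections. It also means handling $\cong$-identified states consistently across the limit. I would attack this by bounding, for each position of $\pi$, the number of rewrites that can affect it, using that every operation in $\pi$ completes.
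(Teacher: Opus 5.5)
Your proposal is correct and is essentially the paper's proof, which likewise takes $\norm{\pi}$ to be a path of $M_C$ and concludes because $\pi$ and $\norm{\pi}$ have the same thread-enabled actions and the same $\textit{occ}$. The paper differs only in two minor ways: it applies \cref{pr:A-just paths,pr:I-just paths,pr:S-just paths} in $M_{\frRep}$ and then passes to the subgraph $M_C$ instead of invoking \cref{pr:just paths}, and it leaves implicit your correct and needed observation that a $C$-complete path has no pending operation, which is what keeps $\textit{occ}$ intact in the limit and makes the start guards of $M_C$ hold.

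One caution on the obstacle you flag: leftmost-first rewriting need not converge, because $\leadsto$ lets two adjacent start actions (or two adjacent finish actions) of different threads be swapped back and forth indefinitely. It is safer to define $\norm{\pi}$ directly: keep all actions other than starts and finishes in their original order and expand each order action to start--order--finish. Reach this target by swapping only adjacent pairs that are inverted relative to it. Such swaps are always legal $\leadsto$ steps, and since every operation completes and each thread has at most one open operation, each action takes part in only finitely many of them.
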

\begin{proof}
    Let $\pi$ be a $C$-complete path in $M_{\frRep}$ starting in its initial state. Trivially, $\norm{\pi}$ exists in $M_C$. It remains to show that $\norm{\pi}$ is $C$-complete in $M_C$.
    Note that $M_C$ is a subgraph of $M_{\frRep}$. Hence, if $\norm{\pi}$ is $C$-complete in $M_{\frRep}$, then it is also $C$-complete in $M_C$. 
    It therefore suffices to prove that $\norm{\pi}$ is $C$-complete in $M_{\frRep}$.
This is an immediate consequence of \cref{pr:A-just paths,pr:I-just paths,pr:S-just paths}, given that $\pi$ and $\norm{\pi}$ have the same sets of thread-enabled actions and $\textit{occ}_\pi = \textit{occ}_{\norm{\pi}}$ (i.e., each action occurs as often in $\norm{\pi}$ as it occurs in $\pi$).
\end{proof}

\begin{thm}\label{thm:avoid-fullread}\rm
$M_A =^A_\WCT M_{\frRep}$, $M_I =^I_\WCT M_{\frRep}$ and $M_S =^S_\WCT M_{\frRep}$.
\end{thm}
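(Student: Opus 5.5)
The plan is to derive \cref{thm:avoid-fullread} directly from \cref{lem:MCtoM,lem:MtoMC} via \cref{def:wct}, treating each $C\in\{A,I,S\}$ uniformly. We show the two inclusions $\WCT_C(M_C)\subseteq\WCT_C(M_{\frRep})$ and $\WCT_C(M_{\frRep})\subseteq\WCT_C(M_C)$, where the completeness criterion is $\justact{\conc_C}{\block}$ in both models. \cref{pr:just paths} and the validity argument for $\conc_C$ on $M_C$ ensure that this criterion makes sense in $M_C$.

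For the first inclusion, take any weak complete trace of $M_C$. It has the form $\ell^-(\pi)$ for a $C$-complete path $\pi$ from the initial state of $M_C$. By \cref{lem:MCtoM}, the same path $\pi$ is a $C$-complete path of $M_{\frRep}$ from its initial state, which has the same tuple of initial states. Since the path is literally identical, its label sequence is unchanged, so $\ell^-(\pi)\in\WCT_C(M_{\frRep})$.

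For the second inclusion, take a $C$-complete path $\pi$ of $M_{\frRep}$ from the initial state. By \cref{lem:MtoMC}, $\norm{\pi}$ is a $C$-complete path of $M_C$ from its initial state, so it remains to show $\ell^-(\norm{\pi})=\ell^-(\pi)$. The rewrite steps $\leadsto$ only swap a start or finish action of some thread $t$ with an action $b$ satisfying $\thrmap{b}\neq t$. The actions $\crit[\tid]$ and $\noncrit[\tid]$ are thread-local, and a swap never exchanges two such actions: a swapped pair always contains a start or finish interface action. Hence each single step preserves the relative order of all $\crit$ and $\noncrit$ occurrences. The same conclusion has to hold for the limit $\norm{\pi}$ of possibly infinitely many rewrite steps.

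The main obstacle is this limit step, since $\ell^-$ must be preserved in the limit and not only under finitely many rewrites. I would argue it as follows, using that rewrites closer to the beginning of the path are applied first. Each $\crit$/$\noncrit$ occurrence is displaced only by the finitely many rewrites involving interface actions of other threads that lie near it. Moreover, every finite prefix of $\norm{\pi}$ stabilises after finitely many steps, so each $\crit$/$\noncrit$ occurrence eventually has a fixed position. The relative order of these occurrences is an invariant of every finite stage, so the induced order in the limit agrees with the original one, and no such occurrence is lost or created. It follows that $\ell^-(\norm{\pi})=\ell^-(\pi)$, giving $\ell^-(\pi)\in\WCT_C(M_C)$. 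Together with the first inclusion this yields $M_C=^C_\WCT M_{\frRep}$ for each $C$.
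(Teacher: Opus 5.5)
Your proposal is correct and follows the paper's proof: it shows mutual inclusion of $\WCT_C(M_C)$ and $\WCT_C(M_{\frRep})$, using \cref{lem:MCtoM} for one direction and \cref{lem:MtoMC} together with $\ell^-(\norm{\pi})=\ell^-(\pi)$ for the other. The paper only asserts this last equality, so your limit argument is a useful addition; the finiteness you rely on holds because each thread has at most one operation in progress at any point, so only finitely many start/finish actions ever cross a given $\crit[\tid]$ or $\noncrit[\tid]$ occurrence.
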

\begin{proof}
We prove that $\WCT_C(M_{\frRep}) = \WCT_C(M_C)$ for all $C \in \{A, I, S\}$. We prove this by mutual set inclusion.

First, let $\pi$ be a $C$-complete path from the initial state of $M_C$, such that $\ell^-(\pi) \mathbin\in \WCT_C(M_C)$. By \cref{lem:MCtoM}, $\pi$ is also a $C$-complete path from the initial state of $M_{\frRep}$. Thus, $\ell^-(\pi) \in \WCT_C(M_{\frRep})$.

Second, let $\pi$ be a $C$-complete path from the initial state of $M_{\frRep}$, such that $\ell^-(\pi) \mathbin\in \WCT_C(M_{\frRep})$. Then by \cref{lem:MtoMC}, $\norm{\pi}$ is a $C$-complete path in $M_C$. Note that $\ell^-(\norm{\pi}) = \ell^-(\pi)$, hence $\ell^-(\pi) \in \WCT_C(M_C)$.

We have proven $M_C =^C_\WCT M_{\frRep}$ for all $C \in \{A, I, S\}$.
\end{proof}
\noindent
And thus, applying \cref{thm:wct-ir-fr}:
\begin{cor}\rm
    $M_A =^A_\WCT M_{\irRep}$, $M_I =^I_\WCT M_{\irRep}$ and $M_S =^S_\WCT M_{\irRep}$.
\end{cor}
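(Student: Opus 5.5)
The statement to prove is the final corollary, $M_C =^C_\WCT M_{\irRep}$ for $C\in\{A,I,S\}$. The plan is to obtain it by transitivity of weak completed trace equivalence, chaining \cref{thm:avoid-fullread} with \cref{thm:wct-ir-fr}. By \cref{def:wct}, $P =^C_\WCT Q$ just says $\WCT_C(P)=\WCT_C(Q)$, and set equality is transitive. So no new path-level reasoning is needed.

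Fix $C\in\{A,I,S\}$. We will use the following facts.

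First, \cref{thm:avoid-fullread} gives $\WCT_C(M_C)=\WCT_C(M_{\frRep})$. Here $M_{\frRep}$ is the full-read thread-register model with atomic registers from \cref{fig:fullread-procatomic}, and the completeness criterion is $\justact{\conc_C}{\block}$.

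Second, \cref{thm:wct-ir-fr}, instantiated for this atomic full-read model and its instant-read counterpart, gives $\WCT_C(M_{\frRep})=\WCT_C(M_{\irRep})$.

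Third, that theorem is stated for every $C\in\{T,S,I,A\}$, which covers our three cases.

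Composing the two equalities yields $\WCT_C(M_C)=\WCT_C(M_{\irRep})$, which is exactly $M_C=^C_\WCT M_{\irRep}$.

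The only point needing care is that the objects and criteria in the two theorems match. The $M_{\frRep}$ in \cref{app:avoidoverlap} is defined as the full-read model using only atomic registers, and $M_{\irRep}$ there is its instant-read equivalent. This is the pair $M_{\frRep}, M_{\irRep}$ of \cref{app:complete-same}, specialised to $\regtype=\atoRep$ for all registers. Likewise, the abbreviations $A,I,S$ denote the same criteria $\justact{\conc_C}{\block}$ in both appendices, with the same blockable set $\block=\{\noncrit[\tid]\}$.

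Once this identification is checked, the corollary follows immediately, so I expect no real obstacle. As a consequence, the mu-calculus verdicts obtained on the instant-read atomic models under $\conc_S$, $\conc_I$ and $\conc_{\!A}$ transfer to the genuinely blocking models $M_S$, $M_I$ and $M_A$.
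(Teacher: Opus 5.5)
Your proposal is correct and matches the paper: the corollary is obtained by combining \cref{thm:avoid-fullread} with \cref{thm:wct-ir-fr} (instantiated to atomic registers) through transitivity of equality of the sets $\WCT_C$. Your check that the two appendices refer to the same $M_{\frRep}$, $M_{\irRep}$ and the same criteria $\justact{\conc_C}{\block}$ is the only point that needs verifying.
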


\section{mCRL2 models}\label{app:mcrl2}
In this section, we comment on some of the details of how we translate our models of threads and registers as presented in this paper to the mCRL2 language. For information on the mCRL2 language, we refer to \cite{mCRL2language}.
e modelled both the full-read and instant-read register models, but the verifications were all done with the instant-read models.\footnote{The verifications in \cite{GlabbeekLS25}, however, were done with the full-read models.}
We made several alterations while translating the process-algebraic definitions of the registers (see \cref{app:registers}) to mCRL2 processes.
Some of these were necessary to obtain valid mCRL2 models, others were employed to reduce the state space of the models.
All models are available as supplementary material \cite{GLSzenodo}.

The most important differences between the definitions from \cref{app:registers} and the mCRL2 models are as follows:
\begin{itemize}
    \item Since the safe, regular and atomic models do not require exactly the same information from the status object, and tracking unused information unnecessarily increases the state space of the model, we separate the status object into different variants for each register type, that only track the required information for that type.
    \item We reset values to a pre-defined default whenever we know that the value is no longer relevant. For example, the safe register model only needs to know the value that a thread intends to write if this write operation does not encounter an overlapping write; if there are multiple concurrent writes, it will not matter what their intended values were. Hence, in the safe register model we only track a single value instead of a mapping from threads to values for $\valssym$, and reset this value to its default whenever we observe zero or more than one active writer.
    \item For full-read regular registers, instead of computing on the spot what the values of all active writes are whenever a read starts, we use a multiset to keep track of those values as writes start and finish; this is more straightforward and, since it adds no new information, does not expand the state-space.
    \item Similarly for the full-read regular registers, we further reduce the state space by adding the value of a new writer only to the $\posvalsym$ of active readers, rather than all threads. Since $\posvalsym$ is reset for a thread whenever it starts a read, this does not affect the behaviour of the model.
    \item We have moved summations inwards whenever possible, so that $\Data$ is only summed over when the resulting value is actually relevant.
\end{itemize}

In addition to the registers, the threads must also be modelled in mCRL2. 
We already mentioned the relevant choices made here in \cref{sec:modelmutex}.
It remains to discuss how the parallel composition as defined in \cref{sec:preliminaries} and employed in \cref{sec:thrregmodels} is modelled in mCRL2. In order to get the right communication between threads and registers, we create ``sending'' and ``receiving'' versions of all register interface actions, and define a communication function so that the two versions together form the correct action.
For instance, to start a write of value $\data$ to register $\rid$, thread $\tid$ does the action $\mathit{start\_write\_s}(\tid, \rid, \data)$. The register simultaneously does the action $\mathit{start\_write\_r}(\tid,\rid,\data)$, and this communication appears in the model as $\mathit{start\_write}(\tid,\rid,\data)$. 
The register-local and thread-local actions do not need such a modification, since they are only performed by a single component.

It is worth noting that the mCRL2 version of the modal $\mu$-calculus does not support quantification over actions directly, but does allow quantification over the data parameters of actions. In order to express the formulae we need (see \cref{app:mucalc}), we therefore add the action $\mathit{label}$ to every action in the model, creating multi-actions. We give the $\mathit{label}$ action a parameter over a set of labels $\mathbb{L}$, where each label corresponds to one of the action names used in \cref{app:registers}. We then hide the original actions, so that only $\mathit{label}$ is visible in the model. This way, we can refer to the labels in our formulae, which circumvents the issue of not being able to quantify over actions. This approach is based on \cite{bouwman2020off}.
We capture the equivalence relation $\actequivsym$ on actions by removing the data parameter from the labels used for the read actions in the instant-read model. Thus, $\mathit{read}(\tid,\rid,\data)$ and $\mathit{read}(\tid,\rid,\data')$ both get the label $\readopempty[\tid,\rid]$.

\section{Modal \texorpdfstring{$\mu$}{mu}-calculus formulae}\label{app:mucalc}
In \cite{spronck2024progress}, we presented template formulae that can be instantiated to capture liveness properties that fit Dwyer, Avrunin and Corbett's Property Specification Patterns (PSP) \cite{dwyer1999patterns}, incorporating a variety of completeness criteria, including justness.

In this appendix, we recap the three correctness properties we verified, and give the modal $\mu$-calculus formulae for these properties. 
To be able to use the template formulae with justness for the liveness properties, we must show how they fit into PSP.
We do not thoroughly explain the modal $\mu$-calculus here; instead, we refer to \cite{spronck2024progress} for more information on how these formulae should be understood.
We do not use the mCRL2 modal $\mu$-calculus syntax here, since it results in longer and less readable formulae. However, the files are all given in the supplementary material \cite{GLSzenodo}.

\paragraph{Mutual exclusion.}
The mutual exclusion property says that at any given time, at most one thread will be in its critical section.
In our models, a thread $\tid$ accessing its critical section is represented by the action $\crit[\tid]$.
We reformulate the mutual exclusion property for our models as: for all threads $\tid$ and $\tidtwo$ such that $\tid \neq \tidtwo$, at any time it is impossible that both $\crit[\tid]$ and $\crit[\tidtwo]$ are enabled.
This is captured by the following modal $\mu$-calculus formula:
\begin{equation}
    \bigwedge_{\tid, \tidtwo \in \TID}( (\tid \neq \tidtwo) \imps \boxm{\clos{\allact}}\neg(\diam{\crit[\tid]}\tp \land \diam{\crit[\tidtwo]}\tp))
\end{equation}

\paragraph{Deadlock freedom.}
The deadlock freedom property says that whenever at least one thread is running its entry protocol, eventually some thread will enter its critical section.
Since we want to apply the results from \cite{spronck2024progress}, we need to first show how this can be represented in PSP, using the actions in our model.
This property fits into the global response pattern of \cite{dwyer1999patterns}: whenever the trigger occurs, in this case one thread being in its entry protocol, we want the response to occur eventually, in this case some thread executing its critical section.
We cannot directly apply the results from \cite{spronck2024progress}, however, since there we require the trigger to be a set of actions and a thread ``being in the entry protocol'' cannot simply be captured by the occurrence of an action. 
Instead, a thread $\tid$ is in its entry protocol between the execution of $\noncrit[\tid]$ and the subsequent execution of $\crit[\tid]$. 
Fortunately, this requires only a minor deviation from the template given in \cite{spronck2024progress}, namely by allowing the trigger to be a regular expression over sets of actions instead.

Assume that for $C \in \{T, S, I, A\}$ and all actions $a$ in our model, we have a function $\elimf{C}{a}$ that maps an action to all actions that interfere with it according to concurrency relation $C$.
We have encoded these functions in our mCRL2 models.
For clarity, let $\crit[\mathit{all}] = \bigcup_{\tid \in \TID}\{\crit[\tid]\}$.

In the modal $\mu$-calculus formula, we capture that there \emph{does not} exist a path from the initial state of the model that violates deadlock freedom and is just. This is equivalent to checking that deadlock freedom is satisfied on all just paths.
The formula \[\nu X.(\bigwedge_{a \in \nonblock}( \diam{a}\tp \imps \diam{\clos{\comp{\crit[\mathit{all}]}}}(\diam{\elimf{C}{a} \setminus \crit[\mathit{all}]}X)))\] holds in a state $s$ iff there is a path starting in $s$ that (i) is just, in the sense that any state enabling an action $a \notin \block$ is followed by an action from $\elimf{C}{a}$, and (ii) does not contain any action from  $\crit[\mathit{all}]$.
In other words, this formula says that from a state $s$, there is a just path where the response never occurs. We then merely need to prepend the formula for an occurrence of the trigger, namely an occurrence of $\noncrit[\tid]$, for some $\tid\in\TID$, that is not (yet) followed by $\crit[\tid]$, and negate the resulting formula to capture that the initial state of the model does not admit just paths that violate deadlock freedom.

The final formula for deadlock freedom under $\justact{\conc_{C}}{\block}$ is:
\begin{equation}
    \neg\diam{\clos{\allact}}\bigvee_{\tid \in \TID}(\diam{\noncrit[\tid] \co \clos{\comp{\crit[\tid]}}} \nu X.(\bigwedge_{a \in \nonblock}( \diam{a}\tp \imps \diam{\clos{\comp{\crit[\mathit{all}]}}}(\diam{\elimf{C}{a} \setminus \crit[\mathit{all}]}X))))
\end{equation}

\paragraph{Starvation freedom.}
Starvation freedom says that whenever a thread leaves its non-critical section, it will eventually enter its critical section.
Unlike deadlock freedom, starvation freedom does fit into the global response pattern using only sets of actions: the trigger is the occurrence of $\noncrit[\tid]$ for some $\tid \in \TID$, and the response is the occurrence of $\crit[\tid]$.
Using the same functions $\elimf{C}{a}$ for $C \in \{T, S, I, A\}$ and actions $a$ as the deadlock freedom formula, we get the following modal $\mu$-calculus formula for starvation freedom under $\justact{\conc{_C}}{\block}$:
\begin{equation}
    \bigwedge_{\tid \in \TID}(\neg \diam{\clos{\allact} \co \noncrit[\tid]}\nu X. (\bigwedge_{a \in \nonblock}(\diam{a}\tp \imps \diam{\clos{\comp{\crit[\tid]}}}(\diam{\elimf{C}{a} \setminus \crit[\tid]}X))))
\end{equation}

\section{Additional pseudocode}\label{app:pseudocode}

In this appendix, we give the pseudocode for the algorithms mentioned in \cref{tab:results} that are not already given in \cref{sec:verification}.
We also include full pseudocode for the alternate versions of algorithms that are suggested in \cref{sec:verification}.

\subsection{Anderson's algorithm}\label{app:anderson}
Anderson's algorithm is presented in \cite{anderson1993fine}. It only works for 2 threads, and has different code for both threads.
For clarity, we break from our established shorthand of $j = 1\mathop{-}i$ for just this algorithm, and present the algorithms for $i = 0$ (\cref{alg:anderson-0}) and $j = 1$ (\cref{alg:anderson-1}) separately.
There are six Booleans total: $\varidx{p}{i}, \varidx{p}{j}, \varidx{q}{i}, \varidx{q}{j}, \varidx{t}{i}$ and $\varidx{t}{j}$. All are initialised to $\true$.

\begin{table}[ht!]
\vspace{-1em}
\noindent\begin{minipage}[t]{0.45\textwidth}
\begin{algorithm}[H]
\caption{Anderson's algorithm for $i = 0$}\label{alg:anderson-0}
\begin{algorithmic}[1]
    \State{$\varidx{p}{i} \writeop \false$}
    \State{$\varidx{q}{i} \writeop \false$}
    \State{$\varname{x} \writeop \varidx{t}{j}$}
    \State{$\varidx{t}{i} \writeop \varname{x}$}
    \If{$\varname{x} = \true$}
        \State{$\varidx{p}{i} \writeop \true$}
        \State{\textbf{await} $\varidx{p}{j} = \true$}
    \Else
        \State{$\varidx{q}{i} \writeop \true$}
        \State{\textbf{await} $\varidx{q}{j} = \true$}
    \EndIf
    \State{\textbf{critical section}}
    \State{$\varidx{p}{i} \writeop \true$}
    \State{$\varidx{q}{i} \writeop \true$}
\end{algorithmic}
\end{algorithm}
\end{minipage}
\hfill
\begin{minipage}[t]{0.45\textwidth}
\begin{algorithm}[H]
\caption{Anderson's algorithm for $j = 1$}\label{alg:anderson-1}
\begin{algorithmic}[1]
    \State{$\varidx{p}{j} \writeop \false$}
    \State{$\varidx{q}{j} \writeop \false$}
    \State{$\varname{x} \writeop \neg\varidx{t}{i}$}
    \State{$\varidx{t}{j} \writeop \varname{x}$}
    \If{$\varname{x} = \true$}
        \State{$\varidx{q}{j} \writeop \true$}
        \State{\textbf{await} $\varidx{p}{i} = \true$}
    \Else
        \State{$\varidx{p}{j} \writeop \true$}
        \State{\textbf{await} $\varidx{q}{i} = \true$}
    \EndIf
    \State{\textbf{critical section}}
    \State{$\varidx{p}{j} \writeop \true$}
    \State{$\varidx{q}{j} \writeop \true$}
\end{algorithmic}
\end{algorithm}
\end{minipage}
\end{table}

\subsection{Aravind's BLRU algorithm, alternate version}\label{app:aravind}
In \cref{sec:aravind}, we proposed an alternate version of Aravind's BLRU algorithm that does satisfy starvation freedom under $\conc_S$. We give the full pseudocode in \cref{alg:aravind-alt}.c
It is still the case that every thread has three registers: $\varname{flag}$ and $\varname{stage}$ (Booleans, initially $\false$), and $\varname{date}$ (natural number, initially the thread's own id).

\begin{minipage}[t]{0.8\textwidth}
\begin{algorithm}[H]
\caption{Aravind's BLRU algorithm, alternate version}\label{alg:aravind-alt}
\begin{algorithmic}[1]
    \State{$\varidx{flag}{i} \writeop \true$}
    \Repeat
        \State{$\varidx{stage}{i} \writeop \false$}
        \State{$\textbf{await } \forall_{j \neq i}:\varidx{flag}{j} = \false \lor (\varidx{date}{i} < \varidx{date}{j} \land \varidx{stage}{j} = \false)$}
        \State{$\varidx{stage}{i} \writeop \true$}
    \Until{$\forall_{j \neq i}:$ $\varidx{stage}{j} = \false$}
    \State{\textbf{critical section}}
    \State{$\varidx{date}{i} \writeop \max(\varidx{date}{0}, ..., \varidx{date}{N-1}) + 1$}
    \If{$\varidx{date}{i} \geq 2N-1$}
        \State{$\forall_{j \in [0...N-1]}: \varidx{date}{j} \writeop j$}
    \EndIf
    \State{$\varidx{stage}{i} \writeop \false$}
    \State{$\varidx{flag}{i} \writeop \false$}
\end{algorithmic}
\end{algorithm}
\end{minipage}

\subsection{Attiya-Welch's algorithm}\label{app:attiya-welch}
In \cref{sec:attiya-welch}, we suggest alternate versions for both versions of the Attiya-Welch algorithm. We give the pseudocode of these versions here.
The registers used are the same as what is stated in \cref{sec:attiya-welch}, except that in the variant version we add a local variable $t$.

\begin{table}[ht!]
\vspace{-1em}
\noindent\begin{minipage}[t]{0.47\textwidth}
\begin{algorithm}[H]
\caption{Attiya-Welch algorithm, orig. alt.}\label{alg:attiya-welch-alt}
\begin{algorithmic}[1]
        \State{$\varidx{flag}{i} \writeop \false$}
        \State{\textbf{await} $\varidx{flag}{j} = \false \lor \varname{turn} = j$}
        \State{$\varidx{flag}{i} \writeop \true$}
        \If{$\varname{turn} = i$}
            \If{$\varidx{flag}{j} = \true$}
                \State{\textbf{goto} \cref{attiya-welch-entry2}}
            \EndIf
        \Else
            \State{\textbf{await} $\varidx{flag}{j} = \false$}
        \EndIf
        \State{\textbf{critical section}}
        \If{$\varname{turn} \neq i$}
            \State{$\varname{turn} \writeop i$}
        \EndIf
        \State{$\varidx{flag}{i} \writeop \false$}
\end{algorithmic}
\end{algorithm}
\end{minipage}
\hfill
\begin{minipage}[t]{0.47\textwidth}
\begin{algorithm}[H]
\caption{Attiya-Welch algorithm, var. alt.}\label{alg:attiya-welch-var-alt}
\begin{algorithmic}[1]
    \Repeat
    \State{$\varidx{flag}{i} \writeop \false$}
    \State{\textbf{await} $\varidx{flag}{j} = \false \lor \varname{turn} = j$}
    \State{$\varidx{flag}{i} \writeop \true$}
    \State{$t \gets \varname{turn}$}
    \Until{$t = j \lor \varidx{flag}{j} = \false$}
    \If{$t = j$}
    \State{\textbf{await} $\varidx{flag}{j} = \false$}
    \EndIf
    \State{\textbf{critical section}}
    \If{$\varname{turn} \neq i$}
            \State{$\varname{turn} \writeop i$}
        \EndIf
    \State{$\varidx{flag}{i} \writeop \false$}
\end{algorithmic}
\end{algorithm}
\end{minipage}
\end{table}

\subsection{Burns-Lynch's algorithm and Lamport's 1-bit algorithm}\label{app:burns-lynch}
We give the Burns-Lynch algorithm as \cref{alg:burns-lynch} and Lamport's 1-bit algorithm as \cref{alg:lamport1bit}.
For the sake of easy comparison, we name the one Boolean that both algorithms use $\varidx{flag}{i}$ for each thread $i$ in both algorithms, although Lamport uses the name $\varidx{x}{i}$ and Burns and Lynch use $\varidx{Flag}{i}$.
In both cases, it is initially $\false$.

\begin{table}[ht!]
\vspace{-1em}
\noindent\begin{minipage}[t]{0.47\textwidth}
\begin{algorithm}[H]
\caption{The Burns-Lynch algorithm}\label{alg:burns-lynch}
\begin{algorithmic}[1]
    \Repeat
        \State{$\varidx{flag}{i} \writeop \false$}\label{bl-3}
        \State{\textbf{await} $\forall_{j < i}: \varidx{flag}{j} = \false$}
        \State{$\varidx{flag}{i} \writeop \true$}
    \Until{$\forall_{j < i}:$ $\varidx{flag}{j} = \false$}
    \State{\textbf{await} $\forall_{j > i}: \varidx{flag}{j} = \false$}
    \State{\textbf{critical section}}
    \State{$\varidx{flag}{i} \writeop \false$}
\end{algorithmic}
\end{algorithm}
\end{minipage}
\hfill
\begin{minipage}[t]{0.47\textwidth}
\begin{algorithm}[H]
\caption{Lamport's 1-bit algorithm}\label{alg:lamport1bit}
\begin{algorithmic}[1]
    \State{$\varidx{flag}{i} \writeop \true$}\label{lamport1-l}
    \For{$j$ \textbf{from} $0$ \textbf{to} $i - 1$}
        \If{$\varidx{flag}{j} = \true$}
            \State{$\varidx{flag}{i} \writeop \false$}
            \State{\textbf{await} $\varidx{flag}{j} = \false$}
            \State{\textbf{goto} \cref{lamport1-l}}
        \EndIf
    \EndFor
    \For{$j$ \textbf{from} $i + 1$ \textbf{to} $N - 1$}
        \State{\textbf{await} $\varidx{flag}{j} = \false$}
    \EndFor
    \State{\textbf{critical section}}
    \State{$\varidx{flag}{i} \writeop \false$}
\end{algorithmic}
\end{algorithm}
\end{minipage}
\end{table}

\subsection{Dekker's algorithm variants}\label{app:dekkers}
Two variants of Dekker's algorithm are discussed in \cref{sec:dekker}: the alternate version of the original algorithm we propose, and the RW-safe version from \cite{buhr2016dekker}.
We give the pseudocode for these variants here.

\begin{table}[ht!]
\vspace{-1em}
\noindent\begin{minipage}[t]{0.45\textwidth}
\begin{algorithm}[H]
\caption{Dekker's algorithm alt.}\label{alg:dekker-alt}
\begin{algorithmic}[1]
    \State{$\varidx{flag}{i} \writeop \true$}
    \While{$\varidx{flag}{j} = \true$}
        \If{$\varname{turn} = j$}
            \State{$\varidx{flag}{i} \writeop \false$}
            \State{\textbf{await} $\varname{turn} = i$}
            \State{$\varidx{flag}{i} \writeop \true$}
        \EndIf
    \EndWhile
    \State{\textbf{critical section}}
    \If{$\varname{turn} \neq j$}
        \State{$\varname{turn} \writeop j$}
    \EndIf
    \State{$\varidx{flag}{i} \writeop \false$}
\end{algorithmic}
\end{algorithm}
\end{minipage}
\hfill
\begin{minipage}[t]{0.5\textwidth}
    \begin{algorithm}[H]
\caption{Dekker's algorithm RW-safe}\label{alg:dekker-RW-safe}
\begin{algorithmic}[1]
    \State{$\varidx{flag}{i} \writeop \true$}
    \While{$\varidx{flag}{j} = \true$}
        \If{$\varname{turn} = j$}
            \State{$\varidx{flag}{i} \writeop \false$}
            \State{\textbf{await} $\varname{turn} = i \lor \varidx{flag}{j} = \false$}
            \State{$\varidx{flag}{i} \writeop \true$}
        \EndIf
    \EndWhile
    \State{\textbf{critical section}}
    \If{$\varname{turn} \neq j$}
        \State{$\varname{turn} \writeop j$}
    \EndIf
    \State{$\varidx{flag}{i} \writeop \false$}
\end{algorithmic}
\end{algorithm}
\end{minipage}
\end{table}

\subsection{Dijkstra's algorithm variant}\label{app:dijkstra}
As stated in \cref{sec:dijkstra}, the deadlock freedom violation under $\conc_S$ can be prevented by ensuring that certain writes to the $c$ variables only occur when they would actually change the value.
The resulting pseudocode is given in \cref{alg:dijkstra-alt}.

\begin{table}[ht!]
\vspace{-1em}
\noindent\begin{minipage}[t]{0.43\textwidth}
\begin{algorithm}[H]
\caption{Dijkstra's algorithm}\label{alg:dijkstra-alt}
\begin{algorithmic}[1]
    \State{$\varidx{b}{i} \writeop \false$}\label{dijkstra-Li0-alt}
    \If{$\varname{k} \neq i$}\label{dijkstra-Li1-alt}
        \If{$\varidx{c}{i} \neq \true$}
        \State{$\varidx{c}{i} \writeop \true$}\label{dijkstra-Li2-alt}
        \EndIf
        \If{$\varidx{b}{k} = \true$}\label{dijkstra-Li3-alt}
            \State{$\varname{k} \writeop i$}
        \EndIf
        \State{\textbf{goto} \cref{dijkstra-Li1-alt}}
    \Else
        \If{$\varidx{c}{i} \neq \false$}
        \State{$\varidx{c}{i} \writeop \false$}\label{dijkstra-Li4-alt}
        \EndIf
        \For{$j$ \textbf{from} $0$ \textbf{to} $N - 1$}
            \If{$j \neq i \land \varidx{c}{j} = \false$}
                \State{\textbf{goto} \cref{dijkstra-Li1-alt}}
            \EndIf
        \EndFor
    \EndIf
    \State{\textbf{critical section}}
    \State{$\varidx{c}{i} \writeop \true$}
    \State{$\varidx{b}{i} \writeop \true$}
\end{algorithmic}
\end{algorithm}
\end{minipage}
\end{table}

\subsection{Lamport's 3-bit algorithm}\label{app:lamport}
Lamport's 3-bit algorithm works for arbitrary $N$.
Every thread $i$ has three Booleans: $\varidx{x}{i}, \varidx{y}{i}$ and $\varidx{z}{i}$; the former two are initialised to $\false$, the latter is arbitrary.

This algorithm uses some auxiliary operations. 
The operation $\mathrm{ORD}(S)$ returns for a set $S \subseteq \TID$ a list of all elements in $S$ ordered from smallest to largest. Note that such an ordering is defined for $\TID$ since, \hyperlink{threadorder}{as we stated in} \cref{sec:verification}, we use natural numbers to represent thread identifiers. This list is formalised as an increasing function $\gamma:\{1,\dots,M\} \rightarrow S$, where $M=|S|$. We write $\textrm{domain}(\gamma)$ for $\{1,\ldots,M\}$ and $\textrm{range}(\gamma)$ for $S$.
Additionally, there is the Boolean function $\mathrm{CG}(v, l)$, where $v: \{1,\dots,M\} \rightarrow \mathbbm{B}$ is a Boolean function mapping an index in $\{1,\dots,M\}$ (denoting an element of $S$) to either $\true$ or $\false$, and $l \in \{1,\dots,M\}$.
\begin{align*}
   \mathrm{CG}(v,l) \defeq&~  (v(l) = \mathrm{CGV}(v, l)) \\
   \mathrm{CGV}(v, l) \defeq&\begin{cases}
        \neg v(l{-}1) &\text{if $l > 1$}\\
        v(M) &\text{if $l = 1$}
        \end{cases}
\end{align*}

We write ``\textbf{for} $i$ \textbf{from} $j$\ \textbf{cyclically to} $k$'' to mean an iteration that begins with $i = j$, then increments $i$ by 1, taking the result modulo $N$. The iteration terminates when $i = k$, without executing the loop with $i = k$.
We use $\oplus$ for addition modulo $N$.

The definitions of $\mathrm{ORD}$, $\mathrm{CG}$ and $\mathrm{CGV}$ given above differ from those given by Lamport in \cite{Lamport86Mutex2}. He works with \emph{cycles}, rather than lists, and defines $\mathrm{CG}$ and $\mathrm{CGV}$ based on indexed elements of those cycles, rather than using the indices directly as we do. We believe that our presentation is equivalent to the one given by Lamport, but find it more straightforward to explain. This presentation also aligns better with how the algorithm would be implemented, and indeed aligns more closely with our mCRL2 model.

In addition to these changes in presentation, we also made one change to the algorithm itself, as suggested in \cref{sec:lamport-3bit}: \cref{l-new} was added by us to emphasise that the mapping $\zeta$ is a snapshot of the variables $\varidx{z}{i}$ for all $i \in \textrm{range}(\gamma)$, and that the registers themselves do not get repeatedly read on \cref{lamport-3bit-functioncall}.
For this, we introduce the operation $\mathrm{SAVE_{\varname{z}}^{\gamma}}$ that creates a mapping from $\textrm{domain}(\gamma)$ to the Booleans, such that for all $h \in \textrm{domain}(\gamma)$, $\mathrm{SAVE^{\gamma}_{\varname{z}}}(h) = \varidx{z}{\gamma(h)}$.
Lamport does not state explicitly in \cite{Lamport86Mutex2} that the $\varname{z}$ registers must not be read repeatedly when computing the minimum, but if the registers are non-atomic and re-read during the computation it is possible that no minimum will be found.

\begin{table}[ht!]
\vspace{-1em}
\noindent\begin{minipage}[t]{0.64\textwidth}
\begin{algorithm}[H]
\caption{Lamport's 3-bit algorithm}
\label{alg:lamport3bit}
\begin{algorithmic}[1]
  \State{$\varidx{y}{i} \writeop \true$} 
  \State{$\varidx{x}{i} \writeop \true$} \label{l1}
  \State{$\gamma \writeop \mathrm{ORD}\{j\mid \varidx{y}{j} = \true\}$} \label{l2}
  \State{$\zeta \writeop \mathrm{SAVE}^{\gamma}_{\varname{z}}$}\label{l-new}
  \State{$f\writeop \gamma(\mathrm{minimum}\{h\in\textrm{domain}(\gamma)\mid \mathrm{CG}(\zeta,h)=\true\})$}\label{lamport-3bit-functioncall}
  \For{$j$ \textbf{from} $f$\ \textbf{cyclically to} $i$}
  \If{$\varidx{y}{j} = \true$}
    \If{$\varidx{x}{i} = \true$} {$\varidx{x}{i} \writeop \false$}\EndIf
  \State{\textbf{goto} \cref{l2}}
  \EndIf
  \EndFor
  \If{$\varidx{x}{i} = \false$} {\textbf{goto} \cref{l1}} \EndIf
  \For{$j$ \textbf{from} $i\oplus 1$ \textbf{cyclically to} $f$}
  \If{$\varidx{x}{j} = \true$} {\textbf{goto} \cref{l2}}\EndIf
  \EndFor
  \State{\textbf{critical section}}
  \If{$\varidx{z}{i} = \true$}
    \State{$\varidx{z}{i} \writeop \false$}
  \Else 
    \State{$\varidx{z}{i} \writeop \true$}
  \EndIf
  \State{$\varidx{x}{i} \writeop \false$}
  \State{$\varidx{y}{i} \writeop \false$}
 \end{algorithmic}
\end{algorithm}
\end{minipage}
\end{table}

\subsection{Szymanski's flag algorithm, alternate version}\label{app:szymanski-flag}
In \cref{sec:szymanski-flag} we observed that by changing the exit protocol of Szymanski's flag algorithm with Booleans slightly, we could have it satisfy the same properties as the integer version of the same algorithm.
The resulting algorithm is given as \cref{alg:Szy-flag-bits-alt}.

\begin{table}[ht!]
\vspace{-1em}
\noindent\begin{minipage}[t]{0.85\textwidth}
\begin{algorithm}[H]
\caption{Szymanski's flag algorithm implemented with Booleans, alt.}\label{alg:Szy-flag-bits-alt}
\begin{algorithmic}[1]
  \State{$\varidx{intent}{i} \writeop \true$}
  \State{\textbf{await} $\forall_j:\ \varidx{intent}{j} = \false \lor \varidx{door\_in}{j} = \false$}
  \State{$\varidx{door\_in}{i}\writeop \true$}
  \If{$\exists_j:\ \varidx{intent}{j} = \true \land \varidx{door\_in}{j} = \false$}
    \State{$\varidx{intent}{i} \writeop \false$}
    \State{\textbf{await} $\exists_j:\ \varidx{door\_out}{j} = \true$}
  \EndIf
  \If{$\varidx{intent}{i} = \false$} {$\varidx{intent}{i} \writeop \true$}\EndIf
  \State{$\varidx{door\_out}{i} \writeop \true$}
  \State{\textbf{await} $\forall_{j < i}:\ \varidx{door\_in}{j} = \false$}
  \State{\textbf{critical section}}
  \State{\textbf{await} $\forall_{j>i}:\ \varidx{door\_in}{j} = \false \lor \varidx{door\_out}{j} = \true$}
  \State{$\varidx{door\_out}{i}\writeop \false$}
  \State{$\varidx{intent}{i} \writeop \false$}
  \State{$\varidx{door\_in}{i} \writeop \false$}
\end{algorithmic}
\end{algorithm}
\end{minipage}
\end{table}

\subsection{Szymanski's 3-bit linear wait algorithm, alternate}\label{app:szymanski-3bit}
In \cref{sec:Szymanski}, we give a slightly altered version of Szymanski's 3-bit linear wait algorithm for which, with just two threads, all three properties are satisfied with both non-atomic and atomic registers.
This algorithm is given as \cref{alg:szymanski-3bit-alt}.

\begin{table}[ht!]
\vspace{-1em}
\noindent\begin{minipage}[t]{0.8\textwidth}
\begin{algorithm}[H]
    \caption{Szymanski's 3-bit linear wait algorithm, alt.}
    \label{alg:szymanski-3bit-alt}
    \begin{algorithmic}[1]
      \State{$\varidx{a}{i} \writeop \true$}
      \lFor{$j$ \textbf{from} $0$\ \textbf{to}\ $N{-}1$} {\textbf{await} $\varidx{s}{j} = \false$}
      \State{$\varidx{w}{i} \writeop \true$}
      \State{$\varidx{a}{i} \writeop \false$}
      \While{$\varidx{s}{i} = \false$}
        \State{$j \writeop 0$}
        \lWhile{$j < N \land \varidx{a}{j} = \false$}{$j\writeop j+1$}
          \If{$j=N$}
            \State{$\varidx{s}{i} \writeop \true$}
            \State{$j\writeop 0$}         
            \lWhile{$j<N\land \varidx{a}{j} = \false$}{$j\writeop j+1$}
            \If{$j< N$} {$\varidx{s}{i} \writeop \false$}
           \Else
             \State{$\varidx{w}{i} \writeop \false$}
             \lFor{$j$ \textbf{from} $0$\ \textbf{to} $N-1$}{\textbf{await} $\varidx{w}{j} = \false$}
           \EndIf
         \EndIf
         \If{$j<N$}
           \State{$j\writeop 0$}
           \lWhile{$j<N \land (\varidx{s}{j} = \false \lor \varidx{w}{j} = \true)$}{$j\writeop j+1$}
         \EndIf
         \If{$j \neq i \land j<N$}
         \State{$\varidx{s}{i} \writeop \true$}
         \State{$\varidx{w}{i} \writeop \false$}
         \EndIf
     \EndWhile
      \lFor{$j$ \textbf{from} $0$ \textbf{to} $i-1$}{\textbf{await} $\varidx{s}{j} = \false$}
      \State{\textbf{critical section}}
      \State{$\varidx{s}{i} \writeop \false$}
     \end{algorithmic}
  \end{algorithm}
\end{minipage}
\end{table}

\end{document}